\tikzstyle{inline text}=[text height=1.5ex, text depth=0.25ex, yshift=0.5mm]
\tikzstyle{upground}=[circuit ee IEC, thick, ground, rotate=90, scale=2]
\tikzstyle{downground}=[circuit ee IEC, thick, ground, rotate=-90, scale=1.5]
\tikzstyle{point}=[regular polygon, regular polygon sides=3, draw, scale=0.75, inner sep=-0.5pt, minimum width=9mm, fill=white, regular polygon rotate=180, tikzit fill={rgb,255: red,242; green,255; blue,92}]
\tikzstyle{wide copoint}=[fill=white, draw, shape=isosceles triangle, shape border rotate=90, isosceles triangle stretches=true, inner sep=0pt, minimum width=1.5cm, minimum height=6.12mm]
\tikzstyle{wide point}=[fill=white, draw, shape=isosceles triangle, shape border rotate=-90, isosceles triangle stretches=true, inner sep=0pt, minimum width=1.5cm, minimum height=6.12mm, yshift=-0.0mm]
\tikzstyle{wide dpoint}=[wide point, doubled]
\tikzstyle{copoint}=[regular polygon, regular polygon sides=3, draw, scale=0.75, inner sep=-0.5pt, minimum width=9mm, fill=white, tikzit fill={rgb,255: red,255; green,128; blue,0}, tikzit draw={rgb,255: red,255; green,128; blue,0}]
\tikzstyle{dot}=[inner sep=0mm, minimum width=2mm, minimum height=2mm, draw, shape=circle]
\tikzstyle{black dot}=[dot, fill={gray!30}, text depth=-0.2mm]
\tikzstyle{white dot}=[dot, fill=white, text depth=-0.2mm]
\tikzstyle{small box}=[rectangle, inline text, fill=white, draw, minimum height=5mm, yshift=-0.5mm, minimum width=5mm, font={\small}]
\tikzstyle{small gray box}=[small box, fill={gray!30}]
\tikzstyle{medium box}=[rectangle, inline text, fill=white, draw, minimum height=5mm, yshift=-0.5mm, minimum width=10mm, font={\small}]
\tikzstyle{square box}=[small box]
\tikzstyle{medium gray box}=[small box, fill={gray!30}]
\tikzstyle{semilarge box}=[rectangle, inline text, fill=white, draw, minimum height=5mm, yshift=-0.5mm, minimum width=12.5mm, font={\small}]
\tikzstyle{large box}=[rectangle, inline text, fill=white, draw, minimum height=5mm, yshift=-0.5mm, minimum width=15mm, font={\small}]
\tikzstyle{large gray box}=[small box, fill={gray!30}]
\tikzstyle{dpoint}=[point, doubled]
\tikzstyle{dcopoint}=[copoint, doubled]
\tikzstyle{boldedge}=[doubled, shorten <=-0.17mm, shorten >=-0.17mm]
\tikzstyle{normal}=[line width=0.9pt]
\tikzstyle{doubled}=[line width=1pt]
\tikzstyle{boldedge}=[doubled, shorten <=-0.17mm, shorten >=-0.17mm]
\tikzstyle{small dbox}=[small box, doubled]
\tikzstyle{white ddot}=[white dot, doubled]
\tikzstyle{black ddot}=[black dot, doubled, tikzit fill=black]
\tikzstyle{map}=[draw, shape=NEbox, inner sep=2pt, minimum height=6mm, fill=white]
\tikzstyle{box}=[draw, shape=rectangle, inner sep=2pt, minimum height=6mm, minimum width=6mm, fill=white]
\tikzstyle{dbox}=[draw, doubled, shape=rectangle, inner sep=2pt, minimum height=6mm, minimum width=6mm, fill=white]
\tikzstyle{dmap}=[draw, doubled, shape=NEbox, inner sep=2pt, minimum height=6mm, fill=white]
\tikzstyle{dmapdag}=[draw, doubled, shape=SEbox, inner sep=2pt, minimum height=6mm, fill=white]
\tikzstyle{dmapadj}=[draw, doubled, shape=SEbox, inner sep=2pt, minimum height=6mm, fill=white]
\tikzstyle{dmaptrans}=[draw, doubled, shape=SWbox, inner sep=2pt, minimum height=6mm, fill=white]
\tikzstyle{dmapconj}=[draw, doubled, shape=NWbox, inner sep=2pt, minimum height=6mm, fill=white]
\tikzstyle{map}=[draw, shape=NEbox, inner sep=2pt, minimum height=6mm, fill=white]
\tikzstyle{dashedmap}=[draw, dashed, shape=NEbox, inner sep=2pt, minimum height=6mm, fill=white]
\tikzstyle{mapdag}=[draw, shape=SEbox, inner sep=2pt, minimum height=6mm, fill=white]
\tikzstyle{mapadj}=[draw, shape=SEbox, inner sep=2pt, minimum height=6mm, fill=white]
\tikzstyle{maptrans}=[draw, shape=SWbox, inner sep=2pt, minimum height=6mm, fill=white]
\tikzstyle{mapconj}=[draw, shape=NWbox, inner sep=2pt, minimum height=6mm, fill=white]
\tikzstyle{semilarge map}=[draw, shape=NEbox, inner sep=2pt, minimum height=6mm, fill=white, minimum width=9.5mm]
\tikzstyle{semilarge dmap}=[draw, doubled, shape=NEbox, inner sep=2pt, minimum height=6mm, fill=white, minimum width=9.5mm]
\tikzstyle{kpointdag}=[kpoint adjoint]
\tikzstyle{kpointadj}=[kpoint adjoint]
\tikzstyle{kpointconj}=[kpoint conjugate]
\tikzstyle{kpointtrans}=[kpoint transpose]
\tikzstyle{kpoint common}=[draw, fill=white, inner sep=1pt, minimum height=4mm]
\tikzstyle{kpoint sc}=[shape=cornerpoint, kpoint common]
\tikzstyle{kpoint adjoint sc}=[shape=cornercopoint, kpoint common]
\tikzstyle{kpoint}=[shape=cornerpoint, shorten left=5pt, kpoint common, tikzit fill={rgb,255: red,255; green,128; blue,0}]
\tikzstyle{kpoint adjoint}=[shape=cornercopoint, shorten left=5pt, kpoint common, tikzit fill={rgb,255: red,255; green,128; blue,0}]
\tikzstyle{kpoint conjugate}=[shape=cornerpoint, shorten right=5pt, kpoint common]
\tikzstyle{kpoint transpose}=[shape=cornercopoint, shorten right=5pt, kpoint common]
\tikzstyle{kpoint symm}=[shape=cornerpoint, shorten left=5pt, shorten right=5pt, kpoint common]
\tikzstyle{wide kpoint}=[kpoint, minimum width=1 cm, inner sep=2pt]
\tikzstyle{wide kpointdag}=[kpointdag, minimum width=1 cm, inner sep=2pt]
\tikzstyle{wide kpointconj}=[kpointconj, minimum width=1 cm, inner sep=2pt]
\tikzstyle{wide kpointtrans}=[kpointtrans, minimum width=1 cm, inner sep=2pt]
\tikzstyle{wider kpoint}=[kpoint, minimum width=1.25 cm, inner sep=2pt]
\tikzstyle{wider kpointdag}=[kpointdag, minimum width=1.25 cm, inner sep=2pt]
\tikzstyle{wider kpointconj}=[kpointconj, minimum width=1.25 cm, inner sep=2pt]
\tikzstyle{wider kpointtrans}=[kpointtrans, minimum width=1.25 cm, inner sep=2pt]
\tikzstyle{dkpoint}=[kpoint, doubled, tikzit fill={rgb,255: red,255; green,85; blue,210}]
\tikzstyle{wide dkpoint}=[wide kpoint, doubled, tikzit fill={rgb,255: red,68; green,255; blue,0}]
\tikzstyle{dkpointdag}=[kpoint adjoint, doubled]
\tikzstyle{wide dkpointdag}=[wide kpointdag, doubled]
\tikzstyle{label}=[fill=white, draw=white, shape=circle, tikzit draw={rgb,255: red,10; green,26; blue,255}, tikzit fill={rgb,255: red,0; green,12; blue,255}, font={\small}]
\tikzstyle{squarelabel}=[fill=white, draw=white, shape=rectangle, tikzit draw=black]
\tikzstyle{eslabel}=[tikzit draw={rgb,255: red,255; green,191; blue,191}, tikzit fill={rgb,255: red,255; green,191; blue,191}, font={\tiny}]
\tikzstyle{large dmap}=[draw, doubled, shape=NEbox, inner sep=2pt, minimum height=6mm, fill=white, minimum width=12mm]
\tikzstyle{gray point}=[point, fill={gray!40!white}]
\tikzstyle{gray dpoint}=[gray point, doubled, tikzit draw={rgb,255: red,128; green,128; blue,128}, tikzit fill={rgb,255: red,128; green,128; blue,128}]
\tikzstyle{gray copoint}=[copoint, fill={gray!40!white}, tikzit fill={rgb,255: red,128; green,128; blue,128}]
\tikzstyle{gray dcopoint}=[gray copoint, doubled, tikzit fill={rgb,255: red,128; green,128; blue,128}]
\tikzstyle{circlenew}=[draw=black, shape=circle, inner sep=1pt]
\tikzstyle{blue label}=[text=NavyBlue, tikzit draw={rgb,255: red,0; green,96; blue,167}, tikzit fill={rgb,255: red,35; green,68; blue,255}]
\tikzstyle{big dot}=[fill=white, draw=black, shape=circle, minimum width=6mm, minimum height=6mm]
\tikzstyle{3d box}=[fill=white, draw=black, shape=trapezium, trapezium left angle=-70, trapezium right angle=70, rotate=10]
\tikzstyle{slant red box}=[fill={rgb,255: red,191; green,0; blue,64}, draw={rgb,255: red,191; green,0; blue,64}, shape=rectangle, xslant=0.5, font={\tiny}, text={rgb,255: red,191; green,0; blue,64}, fill opacity=0.5, line width=1pt]
\tikzstyle{slant point}=[regular polygon, regular polygon sides=3, draw, scale=0.75, inner sep=-0.5pt, minimum width=9mm, fill white, regular polygon rotate=180, yslant=-0.3]
\tikzstyle{tiny orange label}=[font={\tiny}, text={rgb,255: red,255; green,128; blue,0}, tikzit draw={rgb,255: red,255; green,128; blue,0}]
\tikzstyle{tiny red label}=[font={\tiny}, text={rgb,255: red,191; green,0; blue,64}, tikzit draw={rgb,255: red,191; green,0; blue,64}, draw=none]
\tikzstyle{red label}=[text={rgb,255: red,191; green,0; blue,64}, tikzit draw={rgb,255: red,191; green,0; blue,64}]
\tikzstyle{slant label black}=[font={\tiny}, xslant=0.5, tikzit draw=black]
\tikzstyle{slant label red}=[font={\tiny}, xslant=0.5, text={rgb,255: red,191; green,0; blue,64}, tikzit draw={rgb,255: red,191; green,0; blue,64}]
\tikzstyle{slant label orange}=[font={\tiny}, xslant=0.5, text={rgb,255: red,255; green,128; blue,0}, tikzit draw={rgb,255: red,255; green,128; blue,0}]
\tikzstyle{slanted point}=[fill={rgb,255: red,191; green,0; blue,64}, draw={rgb,255: red,191; green,0; blue,64}, shape=triangle, regular polygon, regular polygon sides=3, scale=0.75, inner sep=-0.5pt, minimum width=5mm, regular polygon rotate=90, xslant=0.5, fill opacity=0.5, font={\tiny}, line width=1pt, text={rgb,255: red,191; green,0; blue,64}]
\tikzstyle{slanted point black}=[draw=black, shape=triangle, regular polygon, regular polygon sides=3, scale=0.75, inner sep=-0.5pt, minimum width=5mm, regular polygon rotate=90, xslant=0.5, font={\tiny}, line width=0.2pt, text=black, fill=white, tikzit fill=white]
\tikzstyle{red dot}=[fill={rgb,255: red,191; green,0; blue,64}, draw={rgb,255: red,191; green,0; blue,64}, shape=circle, inner sep=0, minimum width=1.5mm, minimum height=1.5mm]
\tikzstyle{black dot}=[fill=black, draw=black, shape=circle, inner sep=0, minimum width=1.5mm, minimum height=1.5mm]
\tikzstyle{orange dot}=[fill={rgb,255: red,255; green,128; blue,0}, draw={rgb,255: red,255; green,128; blue,0}, shape=circle, inner sep=0, minimum width=1.5mm, minimum height=1.5mm]
\tikzstyle{blue dot}=[fill={rgb,255: red,0; green,0; blue,228}, draw={rgb,255: red,0; green,0; blue,228}, shape=circle, inner sep=0, minimum width=1.5mm, minimum height=1.5mm]
\tikzstyle{slant white}=[fill=white, draw=black, shape=rectangle, xslant=0.5, font={\tiny}, line width=1pt]
\tikzstyle{slant small map}=[fill=white, draw=black, xslant=0.5, shape=rectangle, font={\tiny}, line width=1pt, inner sep=0.6mm]
\tikzstyle{slanted copoint black}=[draw=black, shape=triangle, regular polygon, regular polygon sides=3, scale=0.75, inner sep=-0.5pt, minimum width=5mm, regular polygon rotate=-90, xslant=0.5, font={\tiny}, line width=0.2pt, text=black, fill=white, tikzit fill=white]
\tikzstyle{purple dot}=[fill={rgb,255: red,128; green,0; blue,128}, draw={rgb,255: red,128; green,0; blue,128}, shape=circle, inner sep=0, minimum width=1.5mm, minimum height=1.5mm]
\tikzstyle{white dot 2}=[fill=white, draw=black, shape=circle]
\tikzstyle{horizontal point}=[style=point, rotate=-90, tikzit shape=rectangle, tikzit fill={rgb,255: red,191; green,128; blue,64}]
\tikzstyle{pslant orange}=[style=slanted point black, fill={rgb,255: red,255; green,128; blue,0}, draw={rgb,255: red,255; green,128; blue,0}, tikzit fill={rgb,255: red,255; green,128; blue,0}, tikzit draw={rgb,255: red,255; green,128; blue,0}]
\tikzstyle{upground horizontal}=[style=upground, rotate=-90]
\tikzstyle{double horizontal point}=[style=horizontal point, line width=1pt]
\tikzstyle{double point}=[style=point, line width=1pt]
\tikzstyle{double copoint}=[style=copoint, line width=1pt]
\tikzstyle{horizontal copoint}=[style=double copoint, rotate=-90]
\tikzstyle{slant label purple}=[style=slant label black, tikzit draw={rgb,255: red,128; green,0; blue,128}, text={rgb,255: red,128; green,0; blue,128}]
\tikzstyle{orange copoint}=[style=pslant orange, rotate=-180, tikzit fill={rgb,255: red,255; green,128; blue,0}]
\tikzstyle{new style 0}=[style=slant white, draw={rgb,255: red,0; green,0; blue,228}, fill={rgb,255: red,0; green,0; blue,228}, fill opacity=0.5, shape=rectangle]
\tikzstyle{wide slanted point}=[style=wide point, xslant=0.5, fill=white, rotate=-90, minimum width=0.8cm, fill={rgb,255: red,128; green,128; blue,128}, fill opacity=0.5, line width=1pt]
\tikzstyle{black dot white}=[style=black dot, text=white, draw=none, tikzit draw={rgb,255: red,191; green,255; blue,0}, shape=circle]
\tikzstyle{new edge style 1}=[-, line width=1pt, shorten <=-0.17mm, shorten >=-0.17mm, tikzit draw={rgb,255: red,204; green,0; blue,3}]
\tikzstyle{diredge}=[-, postaction=decorate, decoration={markings, mark=at position 0.55 with \edgearrow}]
\tikzstyle{bold diredge}=[-, diredge, line width=1pt, tikzit draw={rgb,255: red,128; green,0; blue,128}]
\tikzstyle{grey}=[-, draw={rgb,255: red,188; green,188; blue,188}]
\tikzstyle{classical}=[-, dashed, tikzit draw={rgb,255: red,255; green,128; blue,0}]
\tikzstyle{reddashed}=[-, dashed, draw={rgb,255: red,0; green,128; blue,128}, postaction=decorate, decoration={markings, mark=at position 0.55 with \edgearrow}]
\tikzstyle{reddahednoarrow}=[-, dashed, draw={rgb,255: red,179; green,40; blue,40}]
\tikzstyle{arrow edge}=[-, ->, draw={rgb,255: red,191; green,191; blue,191}, tikzit draw={rgb,255: red,191; green,191; blue,191}, ultra thick]
\tikzstyle{tarrow edge}=[-, ->, draw={rgb,255: red,191; green,191; blue,191}, tikzit draw={rgb,255: red,191; green,191; blue,191}]
\tikzstyle{gray edge}=[-, draw={rgb,255: red,191; green,191; blue,191}, tikzit draw={rgb,255: red,191; green,191; blue,191}, ultra thick]
\tikzstyle{lightgrayedge}=[-, draw={rgb,255: red,207; green,207; blue,207}]
\tikzstyle{green edge}=[-, tikzit draw={rgb,255: red,128; green,128; blue,0}, draw={rgb,255: red,128; green,128; blue,0}]
\tikzstyle{red edge}=[-, draw={rgb,255: red,191; green,0; blue,64}, tikzit draw={rgb,255: red,191; green,0; blue,64}]
\tikzstyle{arrow edge black}=[-, ->]
\tikzstyle{solid blue}=[-, draw={rgb,255: red,0; green,96; blue,167}, tikzit draw={rgb,255: red,0; green,96; blue,167}]
\tikzstyle{classical blue}=[-, draw={rgb,255: red,0; green,96; blue,167}, tikzit draw={rgb,255: red,0; green,96; blue,167}, dashed]
\tikzstyle{fill gray}=[-, fill=gray]
\tikzstyle{bold gray}=[-, line width=1pt, tikzit draw={rgb,255: red,128; green,128; blue,128}]
\tikzstyle{fill pink}=[-, fill={rgb,255: red,193; green,100; blue,94}, fill opacity=0.5, draw={rgb,255: red,134; green,68; blue,65}, line width=1pt, tikzit draw={rgb,255: red,134; green,68; blue,65}, tikzit fill={rgb,255: red,193; green,100; blue,94}]
\tikzstyle{fill carta da zucchero}=[-, fill={rgb,255: red,129; green,158; blue,219}, fill opacity=0.5, line width=0.4mm]
\tikzstyle{fill white}=[-, fill=white]
\tikzstyle{fill purple}=[-, fill={rgb,255: red,113; green,69; blue,128}, fill opacity=0.5, draw={rgb,255: red,79; green,48; blue,90}, tikzit fill={rgb,255: red,113; green,69; blue,128}, tikzit draw={rgb,255: red,79; green,48; blue,90}, line width=1pt]
\tikzstyle{fill green}=[-, fill={rgb,255: red,62; green,128; blue,120}, fill opacity=0.5, draw={rgb,255: red,33; green,68; blue,63}, tikzit fill={rgb,255: red,62; green,128; blue,120}, tikzit draw={rgb,255: red,33; green,68; blue,63}, line width=1pt]
\tikzstyle{bold orange}=[-, draw={rgb,255: red,255; green,128; blue,0}, fill=none, line width=1pt]
\tikzstyle{bold black}=[-, line width=1pt, draw=black, fill=none, tikzit draw=black]
\tikzstyle{bold red}=[-, draw={rgb,255: red,191; green,0; blue,64}, fill=none, line width=1pt]
\tikzstyle{fill light green}=[-, fill={rgb,255: red,166; green,166; blue,112}, fill opacity=0.5, draw={rgb,255: red,121; green,121; blue,81}, line width=1pt]
\tikzstyle{new edge style 0}=[-, fill=yellow, fill opacity=0.5, draw={rgb,255: red,146; green,146; blue,0}, tikzit fill=yellow, tikzit draw={rgb,255: red,146; green,146; blue,0}]
\tikzstyle{bold dashed red}=[-, draw={rgb,255: red,191; green,0; blue,64}, fill=none, line width=1pt, dashed]
\tikzstyle{bold dashed orange}=[-, draw={rgb,255: red,255; green,128; blue,0}, dashed, line width=1pt]
\tikzstyle{bold blue}=[-, draw={rgb,255: red,0; green,0; blue,228}, line width=1pt]
\tikzstyle{arrow red}=[draw={rgb,255: red,191; green,0; blue,64}, ->, line width=1pt]
\tikzstyle{new edge style 2}=[-, draw={rgb,255: red,191; green,0; blue,64}, line width=1pt]
\tikzstyle{boldish}=[-, line width=0.6mm, fill=cyan]
\tikzstyle{white edge}=[-, draw=white]
\tikzstyle{purple edge}=[-, draw={rgb,255: red,128; green,0; blue,128}, line width=1pt]
\tikzstyle{light gray}=[-, fill={rgb,255: red,191; green,191; blue,191}, draw={rgb,255: red,191; green,191; blue,191}, tikzit fill={rgb,255: red,191; green,191; blue,191}, tikzit draw={rgb,255: red,191; green,191; blue,191}, fill opacity=0.3]
\tikzstyle{invisible edge}=[-, fill opacity=0, fill=none]
\tikzstyle{carta da zucchero thin}=[-, style=fill carta da zucchero, line width=0.1pt, fill={rgb,255: red,129; green,158; blue,219}, tikzit fill={rgb,255: red,129; green,158; blue,219}]
\tikzstyle{pink thin}=[-, style=fill pink, line width=0.1pt, fill={rgb,255: red,193; green,100; blue,94}]
\tikzstyle{fill green thin edge}=[-, style=fill green, tikzit fill={rgb,255: red,62; green,128; blue,120}, line width=0.1pt]
\definecolor{evred}{rgb}{0.996, 0.403, 0.537}
\definecolor{evgreen}{rgb}{0.501, 1.0, 0.505}
\definecolor{evblue}{rgb}{0.2, 0.588, 1.0}
\theoremstyle{definition}
\newtheorem{definition}{Definition}[section]
\newtheorem{remark}{Remark}[section]
\newtheorem{theorem}{Theorem}[section]
\newtheorem{proposition}[theorem]{Proposition}
\newtheorem{observation}[theorem]{Observation}
\newtheorem{corollary}[theorem]{Corollary}
\newcommand{\opapp}[2]{\ensuremath{#1\left(#2\right)}} 
\newcommand{\opapptxt}[2]{\ensuremath{\text{#1}\left(#2\right)}} 
\newcommand{\tsuchthat}[2]{\ensuremath{\left\{#1\middle|#2\right\}}} 
\newcommand{\suchthat}[2]{\tsuchthat{\,#1\,}{\,#2\,}} 
\newcommand{\downset}[1]{\ensuremath{#1\!\downarrow}}
\newcommand{\upset}[1]{\ensuremath{#1\!\uparrow}}
\newcommand{\domSym}{\text{dom}}
\newcommand{\dom}[1]{\opapp{\domSym}{#1}}
\newcommand{\restrict}[2]{#1|_{#2}}
\newcommand{\Subsets}[1]{\opapp{\mathcal{P}\!}{#1}} 
\newcommand{\PFun}[1]{\opapptxt{PFun}{#1}} 
\newcommand{\ev}[1]{\text{#1}} 
\newcommand{\discrete}[1]{\opapptxt{discrete}{#1}} 
\newcommand{\indiscrete}[1]{\opapptxt{indiscrete}{#1}} 
\newcommand{\total}[1]{\opapptxt{total}{#1}} 
\newcommand{\seqcomposeSym}{\rightsquigarrow}
\newcommand{\causeqcls}[1]{\ensuremath{\left[#1\right]_{\simeq}}}
\newcommand{\LsetsSym}{\Lambda} 
\newcommand{\Lsets}[1]{\opapp{\LsetsSym}{#1}} 
\newcommand{\Hist}[1]{\opapptxt{Hist}{#1}} 
\newcommand{\ExtHist}[1]{\opapptxt{ExtHist}{#1}} 
\newcommand{\Ext}[1]{\opapptxt{Ext}{#1}} 
\newcommand{\Prime}[1]{\opapptxt{Prime}{#1}} 
\newcommand{\allJoinsSym}{\;\dot{\vee}\;}
\newcommand{\tips}[2]{\opapp{\text{tips}_{#1}}{#2}} 
\newcommand{\tip}[2]{\opapp{\text{tip}_{#1}}{#2}} 
\newcommand{\CausCompl}[1]{\opapptxt{CausCompl}{#1}} 
\newcommand{\Events}[1]{{E}^{#1}} 
\newcommand{\Inputs}[1]{{I}^{#1}} 
\newcommand{\AllSpaces}{\ensuremath{\text{Spaces}}} 
\newcommand{\Spaces}[1]{\opapptxt{Spaces}{#1}} 
\newcommand{\SpacesFC}[1]{\opapp{\text{Spaces}_{\text{FC}}}{#1}} 
\newcommand{\CCSpaces}[1]{\opapptxt{CCSpaces}{#1}} 
\newcommand{\CSwitchSpaces}[1]{\opapptxt{CSwitchSpaces}{#1}} 
\newcommand{\hist}[1]{
    \ensuremath{
        \left\{
            \foreach \i\j [count=\idx] in {#1}{%
                \ifnum\idx=1%
                    \ev{\i}\!:\!\j%
                \else%
                    ,\,\ev{\i}\!:\!\j%
                \fi%
            }
        \right\}
    }
}
\newcommand{\evset}[1]{
    \ensuremath{
        \left\{
            \foreach \i [count=\idx] in {#1}{%
                \ifnum\idx=1%
                    \ev{\i}%
                \else%
                    ,\ev{\i}%
                \fi%
            }
        \right\}
    }
}
\begin{document}

\title{The Combinatorics of Causality}

\author{Stefano Gogioso$^{1,2}$ and Nicola Pinzani$^{1,3}$}

\address{$^1$Hashberg Ltd, London, UK}
\address{$^2$Department of Computer Science, University of Oxford, Oxford, UK}
\address{$^3$QuIC, Universit\'{e} Libre de Bruxelles, Brussels, BE}
\ead{$^1$stefano.gogioso@cs.ox.ac.uk, $^2$nicola.pinzani@ulb.be}
\vspace{10pt}

\begin{abstract}
    We introduce and explore the notion of ``spaces of input histories'', a broad family of combinatorial objects which can be used to model input-dependent, dynamical causal order.
    We motivate our definition with reference to traditional partial order- and preorder-based notions of causal order---adopted by the majority of previous literature on the subject---and we proceed to explore the novel landscape of combinatorial complexity made available by our generalisation of those notions.

    In the process, we discover that the fine-grained structure of causality is significantly more complex than we might have previously believed: in the simplest case of binary inputs, the number of available ``causally complete'' spaces grows from 7 on 2 events, to 2644 on 3 events, to an unknown number---likely around a billion---on 4 events.
    For perspective, previous literature on non-locality and contextuality used a single one of the 2644 available spaces on 3 events, work on definite causality used 19 spaces---derived from partial orders---and work on indefinite causality used only 6 more, for a grand total of 25.

    This paper is the first instalment in a trilogy: sthe sheaf-theoretic treatment of causal distributions will be detailed in ``The Topology of Causality'', while the polytopes formed by the associated empirical models will be studied in ``The Geometry of Causality''.
    An exhaustive classification of the 2644 causally complete spaces on 3 events with binary inputs is provided in the supplementary work ``Classification of causally complete spaces on 3 events with binary inputs'', together with the algorithm used for the classification and partial results from the ongoing search on 4 events.
\end{abstract}

\maketitle







\section{Introduction}
\label{section:introduction}

The attempt of providing a description of spacetime involving discrete atomic components is as old as special relativity itself: the first appearance of a structure resembling a causal order dates back at least as far as 1914, in a book by Alfred Robb titled ``A Theory of Time and Space'' \cite{robb1914theory}.
Robb attempts an axiomatic derivation of special relativity from temporal succession of individual events: the relativity of simultaneity is captured by the introduction of a different type of ordering, taking into account the possibility that individual events may be incomparable with respect to the succession of time.
This desideratum is embodied in the description of a partial order, leading to the question: What is the physical meaning of this new kind of connections between events?
Robb's response is that such connections represent the ``possibility'' of causal influence:
\begin{quotation}
\noindent If an instant \ev{B} be distinct from an instant \ev{A}, then \ev{B} will be said to be \textit{after} \ev{A}, if, and only if, it be abstractly possible for a person, at the instant \ev{A}, to produce an effect at the instant \ev{B}.\\
\phantom{.}\hfill--- Alfred Robb, ``A Theory of Time and Space'' \cite{robb1914theory}, p.7
\end{quotation}
It is a celebrated result by David Malament \cite{Malament1977} that the reconstruction of Robb can be extended, at least in spirit, from special relativity to general relativity.
Specifically, Malament showed that an isomorphism of causal structures between two sufficiently well-behaved spacetimes $(\mathcal{M},g_{ab})$ and $(\mathcal{M'},g'_{ab})$ can be extended to a smooth conformal isometry.
In other words, the underlying combinatorial structure of causality fully constrains the topological, differential and conformal structures built on top of it.

Norbert Weiner, reviewing the work by Robb in \cite{wiener1916thoery}, found Robb's description of causal order in terms of agency and the ``abstract possibility'' of influence to be ``utterly pointless'', feeling that the notion of causality itself was at least as obscure to that of time succession.
In this work, instead, we take Robb's perspective as the core for a broad combinatorial generalisation of causal orders: instead of talking about causal relation between events as the presence---or even the necessity---of causal influence, we take the \textit{absence} of causal relation to impose the \textit{impossibility} of causal influence.

For example, consider the following causal relationships between three events \ev{A}, \ev{B} and \ev{C}, where arrows indicate that the head event is ``after'' the tail event in the causal order:
\begin{center}
    \includegraphics[height=2cm]{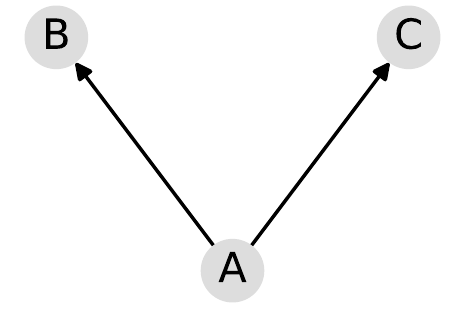}
\end{center}
The arrows $\ev{A} \rightarrow \ev{B}$ and $\ev{A} \rightarrow \ev{C}$ indicate the \textit{possibility} for an action at event \ev{A} to causally influence outcomes at events $\ev{B}$ and $\ev{C}$, but leaves open the possibility for such influence not to be exerted.
On the other hand, the absence of an arrow $\ev{B} \rightarrow \ev{C}$ \textit{mandates} the \textit{impossibility} for an action at event \ev{B} to causally influence outcomes at event \ev{C}, and analogously the absence of an arrow $\ev{C} \rightarrow \ev{B}$ mandates the impossibility for an action at event \ev{C} to causally influence outcomes at event \ev{B}.
In other words, events \ev{B} and \ev{C} are causally unrelated.

Both Robb's original perspective and Malament's result fall under the remit of \textit{static}---or \textit{definite}---causality: the causal order between events is fixed and cannot be influenced by the processes which it constrains.
On the other hand, proposals for generalised notions of computation consistent with a \textit{dynamical}---or \textit{indefinite}---causal background date back at least as far as a 2009 work on quantum information by Lucien Hardy \cite{Hardy2009}.
This was followed by the introduction of the \textit{quantum switch} \cite{chiribella2013quantum,goswami2018indefinite} and subsequent attempts to justify the experimental realisability of indefinite causal order \cite{procopio2015experimental,rubino2017experimental,goswami2018indefinite,rubino2019experimental,bavaresco2019semidevice,rubino2021experimental,dourdent2021semidevice}.
It was shown that superposition of the causal orders for quantum instruments is mathematically sound \cite{pinzani2020giving,wechs2021quantum}, overcoming known issues with more general notions of quantum control \cite{oi2003interference,chiribella2019quantum,abbott2020communication}.
The need for a generalisation of the notions of sequential and parallel composition of processes prompted the definition of \textit{quantum supermaps}, where quantum theory is enhanced by the possibility of indefinite causal order.
The associated notion of \textit{process matrices} was introduced in 2012 by Oreshkov, Costa and Bruckner \cite{oreshkov2012quantum} and it quickly developed into an active area of research \cite{branciard2015simplest,abbott2016multipartite,araujo2017purification,kissinger2017categorical,wechs2021quantum}.

In this work, we introduce and investigate a combinatorial generalisation of causal orders which explicitly allows for the possibility of indefinite and dynamical order (as well as a variety of exotic input-dependent causality constraints).
Events are replaced by \textit{input histories}, which are defined to be all possible combinations of inputs upon which the outputs at events are allowed to depend: ordering the input histories by extension yields combinatorial objects known to us as \textit{spaces of input histories}.
For example, below is the space of input histories for our previous causal order on 3 events, in the case where each event admits a binary input: 
\begin{center}
    \includegraphics[height=3cm]{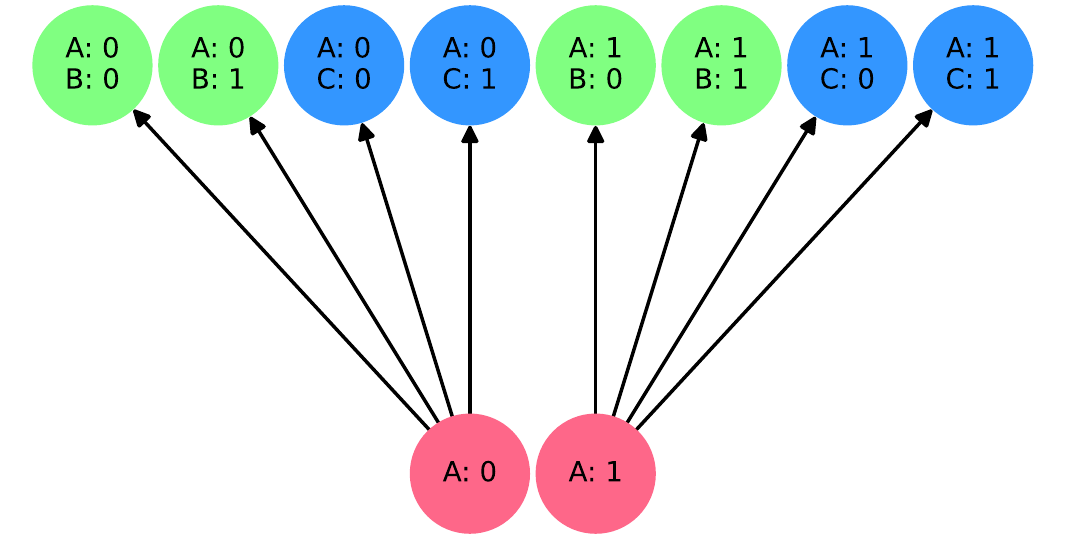}
\end{center}
The output at event \ev{A} is allowed to depend on input histories $\hist{A/i_A}$ for $i_A \in \{0,1\}$ (highlighted in red), the output at event \ev{B} is allowed to depend in input histories $\hist{A/i_A,B/i_B}$ for $i_A, i_B \in \{0,1\}$ and the output at event \ev{C} is allowed to depend in input histories $\hist{A/i_A,C/i_C}$ for $i_A, i_C \in \{0,1\}$.
As a second, more interesting example, the space of input histories below captures a form of \textit{dynamical} causal order, on 3 events with binary inputs:
\begin{center}
    \includegraphics[height=3cm]{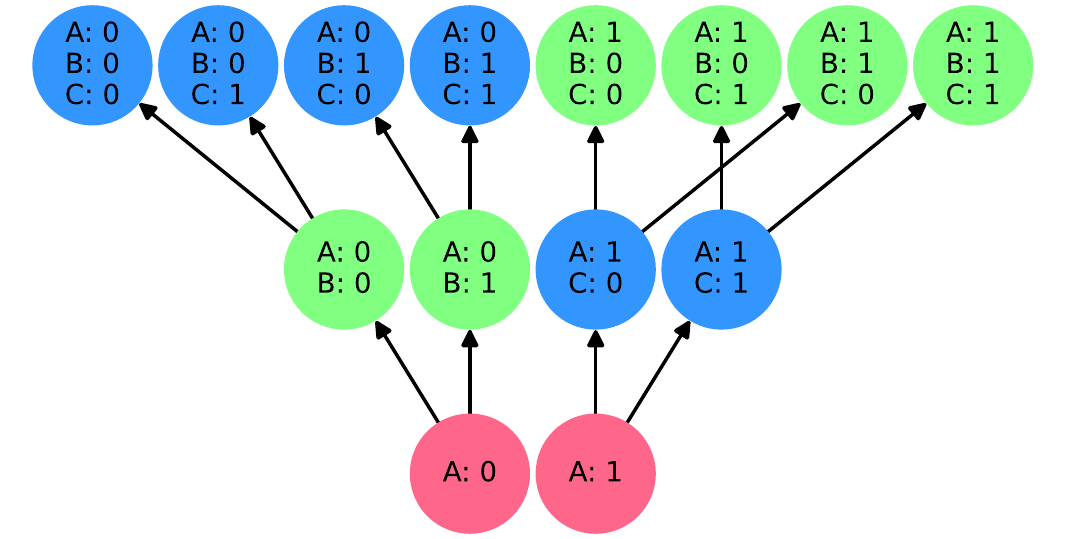}
\end{center}
The input histories indicate that the input $i_A \in \{0,1\}$ at event \ev{A} determines the causal order between events \ev{B,C}: input histories $\hist{A/0,B/i_B,C/i_C}$ (highlighted blue) show that the output at \ev{C} is allowed to depend on the input $i_B$ at event \ev{B} when $i_A = 0$, while input histories $\hist{A/1,C/i_C}$ (also highlighted blue) show that the same output is not allowed to depend on the input $i_B$ at event \ev{B} when $i_A=1$.














\section{Causal orders}
\label{section:causal-orders}

In this work, a \emph{causal order} models a discrete subset of events in some (suitably well-behaved) spacetime, where all details other than event identity and their mutual causal relationship have been abstracted away.
The possibility of abstracting from geometric to order-theoretic structures is consequence of a much celebrated result by Malament \cite{Malament1977}, based on previous work by Kronheimer, Penrose, Hawking, King and McCarthy \cite{Kronheimer1967,Hawking1976}: subject to a mild requirement of causal well-behaviour, Lorentzian manifolds are identified (up to conformal equivalence) by the causal order between their events.
Malament's result is the motivation behind many past and current lines of enquiry in causality: examples include the "causal sets" research programme \cite{Bombelli1987}, the domain-theoretic investigations of Martin and Panangaden \cite{Martin2010,Martin2012}, and the functorial approach to quantum field dynamics \cite{gogioso2021fields}.

The works mentioned above are all concerned with recovering relativistic structure or understanding quantum fields in an approximation of the spacetime continuum.
Here, instead, we will limit our efforts to the needs of quantum information protocols and experiments, where operations are performed locally at a finite set of spacetime events.
Furthermore, our approach is independent of both the theory underpinning the experiments and the concrete realisation of the local operations involved.
As a consequence, we will work directly with finite order structures, without the need for Lorentzian geometry to ever be involved.

\subsection{Causal Orders and Hasse Diagrams}
\label{subsection:causal-orders-intro}

\begin{definition}
A \emph{causal order} $\Omega$ is a preorder: a set $|\Omega|$ of events---finite, in this work---equipped with a reflexive transitive relation $\leq$, which we refer to as the \emph{causal relation}.
In cases where multiple cause orders are involved, we might also use the more explicit notation $\leq_{\Omega}$, to indicate that the relation is order-dependent.
\end{definition}

\begin{definition}
There are four possible ways in which two distinct events $\omega, \xi \in \Omega$ can relate to each other causally:
\begin{itemize}
\item $\omega$ \emph{causally precedes} $\xi$ if $\omega \leq \xi$ and $\xi \not \leq \omega$, which we write succinctly as $\omega \prec \xi$ (to distinguish it from $\omega < \xi$, meaning instead that $\omega \leq \xi$ and $\omega \neq \xi$)
\item $\omega$ \emph{causally succeeds} $\xi$ if $\xi \leq \omega$ and $\omega \not \leq \xi$, which we write succinctly as $\omega \succ \xi$ (to distinguish it from $\omega > \xi$, meaning instead that $\omega \geq \xi$ and $\omega \neq \xi$)
\item $\omega$ and $\xi$ are \emph{causally unrelated} if $\omega \not\leq \xi$ and $\xi \not \leq \omega$
\item $\omega$ and $\xi$ are in \emph{indefinite causal order} if $\omega \neq \xi$, $\omega \leq \xi$ and $\xi \leq \omega$, which we write succinctly as $\omega \simeq \xi$
\end{itemize}
We say that a causal order is \emph{definite} when the last case cannot occur, i.e. when $\leq$ is anti-symmetric ($\omega \leq \xi$ and $\omega \geq \xi$ together imply $\omega = \xi$); otherwise, we say that it is \emph{indefinite}.
A definite causal order is thus a \emph{partial order}, or \emph{poset}: in this case, $\omega\prec\xi$ is the same as $\omega<\xi$, and $\omega\succ\xi$ is the same as $\omega > \xi$.
\end{definition}

\begin{definition}
We say that two events $\omega, \xi$ are \emph{causally related} if they are not causally unrelated, i.e. if at least one of $\omega \leq \xi$ or $\omega \geq \xi$ holds.
We also define the \emph{causal past} $\downset{\omega}$ and \emph{causal future} $\upset{\omega}$ of an event $\omega \in \Omega$, as well as its \emph{causal equivalence class} $\causeqcls{\omega}$:
\begin{eqnarray}
    \downset{\omega} & := \suchthat{\xi \in \Omega}{\xi \leq \omega} \\
    \upset{\omega} & := \suchthat{\xi \in \Omega}{\xi \geq \omega} \\
    \causeqcls{\omega} & :=  \suchthat{\xi \in \Omega}{\xi \simeq \omega} = \downset{\omega} \cap\; \upset{\omega}
\end{eqnarray}
Note that the $\omega$ always lies in both its own causal future and its own causal past, but also that their intersection can comprise more events (if the order is indefinite).
\end{definition}

Our interpretation of causality is a "negative" one, as "no-signalling from the future": when $\omega$ causally precedes $\xi$, for example, we are not so much interested in the "possibility" of causal influence from $\omega$ to $\xi$ (because $\omega \leq \xi$) as we are in the "impossibility" of causal influence from $\xi$ to $\omega$ (because $\xi \not \leq \omega$).
This generalises the "spatial" no-signalling case, where one is interested in the statements $\omega \not\leq \xi$ and $\xi \not \leq \omega$.
Far from being merely an interpretation, such no-signalling approach to causality permeates the entirety of this work.
From a topological perspective, it points to the lattice of lowersets $\Lambda(\Omega)$ of a causal order $\Omega$ as the correct combinatorial object to consider. Indeed, the inclusion order $U \subseteq V$ of lowersets is defined by the follow condition.
\[
    U \subseteq V
    \;\Leftrightarrow\;
    \forall \xi \in V \backslash U.\;
    \forall \omega \in U.\;
    \xi \not \leq \omega
\]
From a geometric perspective, it leads to an alternative way to define causal polytopes, by restriction of a simpler higher-dimensional polytope rather than using causal inequalities directly \cite{gogioso2022geometry}.
Specifically, causal polytopes are defined by slicing the polytope of conditional probability distributions---a product of simplices---with hyperplanes determined by no-signalling equations.

Definite causal orders have an equivalent presentation as directed acyclic graphs (DAGs), known as \emph{Hasse diagrams}: vertices in the graph correspond to events $\omega \in \Omega$, while edges $x \rightarrow y$ correspond to those causally related pairs $\omega \leq \xi$ with no intermediate event (i.e. where there is no $\zeta \in \Omega$ such that $\omega < \zeta < \xi$).
For example, below are the Hasse diagrams for three definite causal orders on three events \ev{A}, \ev{B} and \ev{C}.
\begin{center}
    \includegraphics[height=2.5cm]{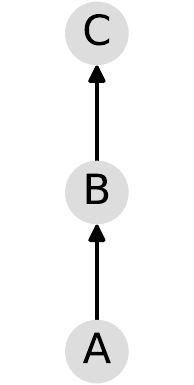}
    \hspace{1.5cm}
    \includegraphics[height=2cm]{svg-inkscape/vee-A-BC_svg-tex.pdf}
    \hspace{1.5cm}
    \includegraphics[height=2cm]{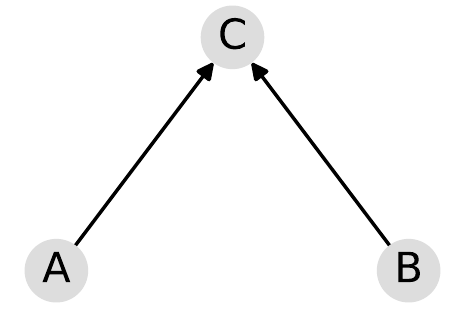}
\end{center}
On the left, \ev{A} causally precedes \ev{B}, which in turn causally precedes \ev{C}: this is an example of a \emph{total} order, one corresponding to a line Hasse diagram.
In the middle, \ev{A} causally precedes both \ev{B} and \ev{C}, which are causally unrelated to one another.
On the right, \ev{C} causally succeeds both \ev{A} and \ev{B}, which are causally unrelated to one another.

\begin{remark}
More precisely, there is a bijective correspondence between finite partial orders and finite \textit{intransitive} DAGs---loosely speaking, those without unnecessary edges \cite{pinzani2019categorical}.
The correspondence further generalises to \textit{locally finite} partial orders---where any two elements have finitely many elements in between---and arbitrary intransitive DAGs \cite{gogioso2021fields}.
\end{remark}

The Hasse diagram representation extends to arbitrary causal orders, by making vertices in the graph correspond to causal equivalence classes instead of individual events; in the case of definite orders, the equivalence classes are all singletons, and can be safely identified with the unique event they contain.
For example, below are the Hasse diagrams for three indefinite causal orders on three events \ev{A}, \ev{B} and \ev{C}.
\begin{center}
    \includegraphics[height=2.5cm]{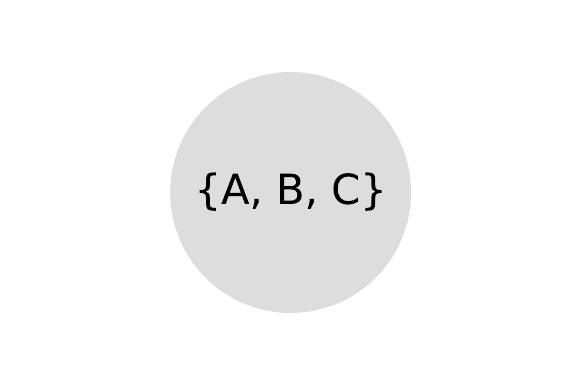}
    \hspace{1.5cm}
    \includegraphics[height=2.5cm]{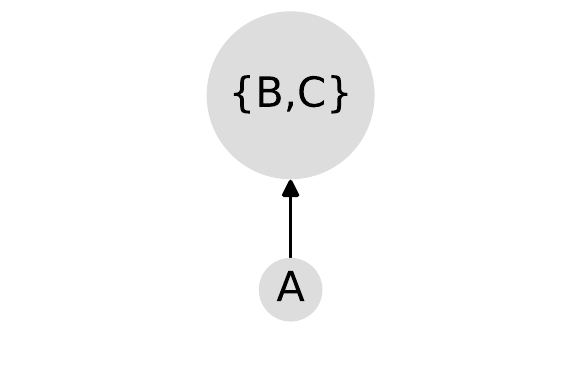}
    \hspace{1.5cm}
    \includegraphics[height=2.5cm]{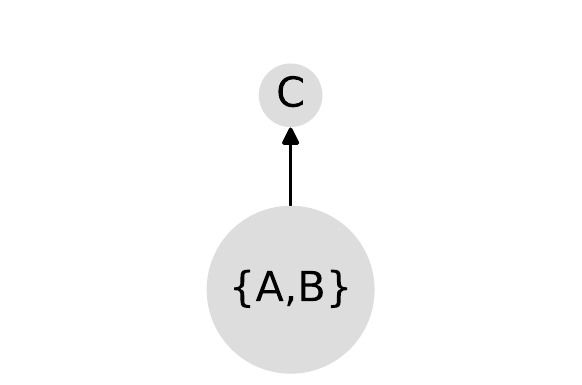}
\end{center}

\subsection{Operations on Causal Orders}
\label{subsection:causal-orders-ops}

Rather than defining the graphs themselves, it is often convenient to construct complex causal orders from simpler components, by combining smaller orders together, refining larger orders, or a combination of both.
On a given set of event, all orders lie between two extremes: the \emph{discrete order}, where all elements are causally unrelated, and the \emph{indiscrete order}, where all elements lie in a single causal equivalence class.
Below are the discrete and indiscrete orders on three events \ev{A}, \ev{B} and \ev{C}.
\begin{center}
    \includegraphics[height=2.5cm]{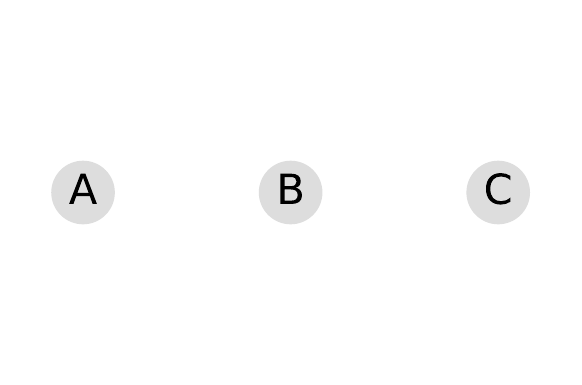}
    \hspace{3cm}
    \includegraphics[height=2.5cm]{svg-inkscape/indiscrete-ABC_svg-tex.pdf}
\end{center}
Additionally, the $n!$ possible \emph{total orders} on $n$ events are often of interest.
For example, below are the 6 total orders on three events \ev{A}, \ev{B} and \ev{C}.
\begin{center}
    \includegraphics[height=2.5cm]{svg-inkscape/total-ABC_svg-tex.pdf}
    \hspace{0.75cm}
    \includegraphics[height=2.5cm]{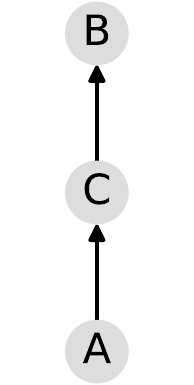}
    \hspace{0.75cm}
    \includegraphics[height=2.5cm]{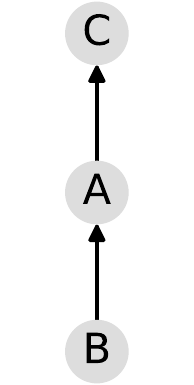}
    \hspace{0.75cm}
    \includegraphics[height=2.5cm]{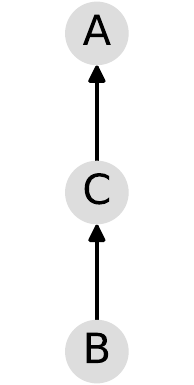}
    \hspace{0.75cm}
    \includegraphics[height=2.5cm]{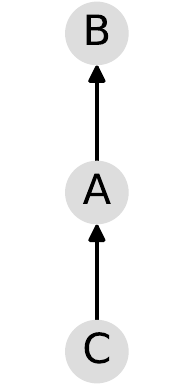}
    \hspace{0.75cm}
    \includegraphics[height=2.5cm]{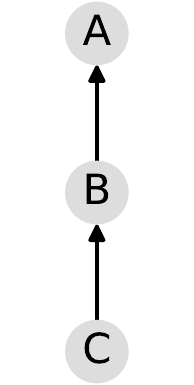}
\end{center}
\begin{definition}
For any finite set $X$ of events, we write $\discrete{X}$ for the discrete order on the events and $\indiscrete{X}$ for the indiscrete order.
For any finite sequence $\omega_1,...,\omega_n$ of events, we write $\total{\omega_1,...,\omega_n}$ for the total order on the events which matches the sequence order.
\end{definition}

\subsubsection{Join/union of causal orders}

\begin{definition}
The \emph{join} of a family $(\Omega_j)_{j=1}^n$ of causal orders, denoted by $\bigvee_{j=1}^{n} \Omega_j$, is the union of their events equipped with the transitive closure of the union of the respective causal relations.
Explicitly, two events $\omega$ and $\xi$ are related by $\omega \leq \xi$ in the join $\bigvee_{j=1}^{n} \Omega_j$ iff there is a sequence of events $(\omega_{k})_{k=0}^{m}$ and a sequence of causal orders $(\Omega_{j_k})_{k=1}^{m}$ such that $\omega_0 = \omega$, $\omega_m = \xi$ and $\omega_{k-1} \leq_{\Omega_{j_k}} \omega_{k}$ for all $k=1,...,m$.
\end{definition}

The join operation is commutative (order doesn't matter), associative (bracketing doesn't matter) and idempotent (repetition doesn't mater).
When two causal orders are \emph{disjoint}, i.e. when they share no common events, their join represents a scenario where all events from one order are causally unrelated to all events from the other, as in the following example.
\begin{center}
    \includegraphics[height=2.5cm]{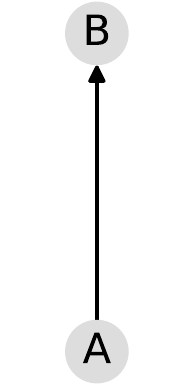}
    \hspace{0.75cm}
    \raisebox{1.15cm}{$\bigvee$}
    \hspace{0.75cm}
    \includegraphics[height=2.5cm]{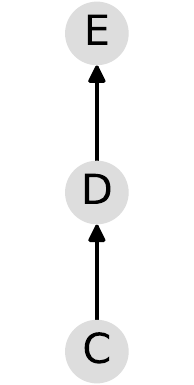}
    \hspace{0.75cm}
    \raisebox{1.15cm}{$=$}
    \hspace{0.75cm}
    \includegraphics[height=2.5cm]{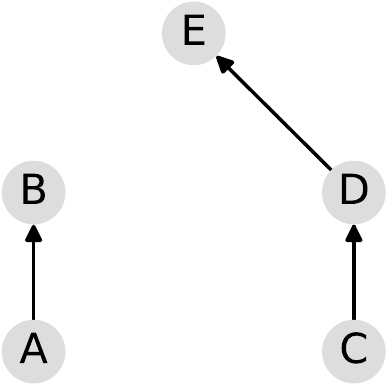}
\end{center}
Because of this behaviour, we will refer to disjoint joins as \emph{parallel composition}.
When two causal orders have events in common, their join ``glues'' them along the common events.
Below is an example of two orders sharing an initial totally-ordered segment $\ev{A} \rightarrow \ev{B}$, followed by two distinct events.
Their join then has the same initial totally ordered segment, with a fork at \ev{B} that leads to two causally unrelated events.
\begin{center}
    \includegraphics[height=2.5cm]{svg-inkscape/total-ABC_svg-tex.pdf}
    \hspace{0.75cm}
    \raisebox{1.15cm}{$\bigvee$}
    \hspace{0.75cm}
    \includegraphics[height=2.5cm]{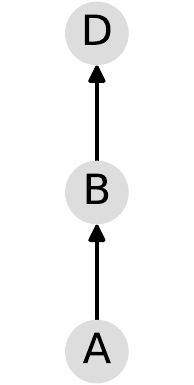}
    \hspace{0.75cm}
    \raisebox{1.15cm}{$=$}
    \hspace{0.75cm}
    \includegraphics[height=2.5cm]{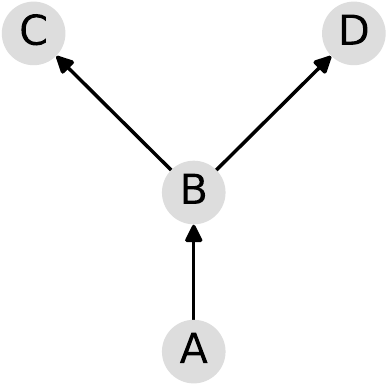}
\end{center}
Below is a second example, where the two order share a pair of causally unrelated events \ev{B} and \ev{C}.
Taking their join glues the two orders into a diamond, with causally unrelated events \ev{B} and \ev{C} separating the bottom event \ev{A} from the top event \ev{D}.
\begin{center}
    \includegraphics[height=2cm]{svg-inkscape/vee-A-BC_svg-tex.pdf}
    \hspace{0.75cm}
    \raisebox{1.15cm}{$\bigvee$}
    \hspace{0.75cm}
    \includegraphics[height=2cm]{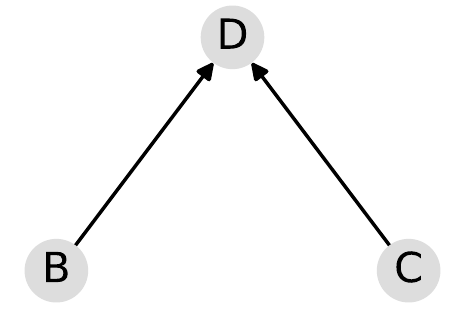}
    \hspace{0.75cm}
    \raisebox{1.15cm}{$=$}
    \hspace{0.75cm}
    \includegraphics[height=2.5cm]{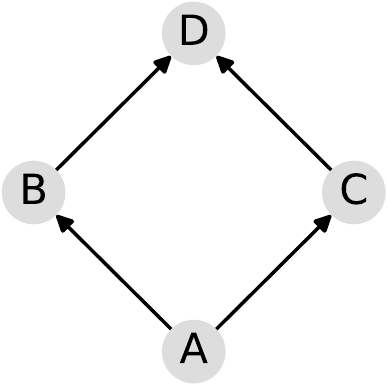}
\end{center}
In the two examples above, the common events had an identical mutual causal relation in both others.
However, it is generally the case for causal orders involved in a join to impose different causal relations on their common events.
In particular, if two events are causally related in different ways in two causal orders, then the same two events will be in indefinite causal order in the join.
For example, event \ev{B} causally precedes \ev{C} in the first causal order below, while the same event \ev{B} causally succeeds \ev{C} in the second causal order: in join (on the right), events \ev{B} and \ev{C} are therefore in indefinite causal order.
\begin{center}
    \includegraphics[height=3cm]{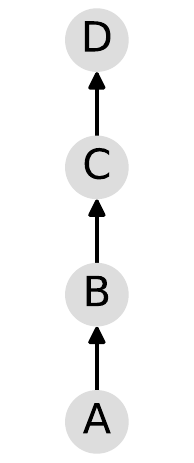}
    \hspace{0.75cm}
    \raisebox{1.4cm}{$\bigvee$}
    \hspace{0.75cm}
    \includegraphics[height=3cm]{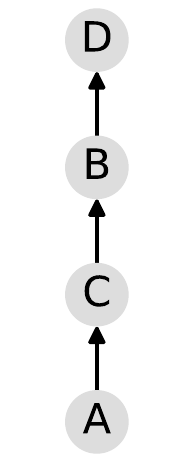}
    \hspace{0.75cm}
    \raisebox{1.4cm}{$=$}
    \hspace{0.75cm}
    \includegraphics[height=3cm]{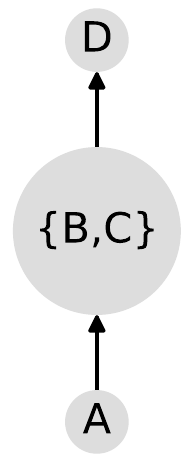}
\end{center}
The space above is not a total order, but its Hasse diagram takes the same shape, so we extend our notation slightly and write $\total{\ev{A}, \{\ev{B}, \ev{C}\}, \ev{D}}$ to denote it.

\subsubsection{Sequential composition of causal orders}

Discrete orders with two or more events can be decomposed into the join of their individual events.
Analogously, total orders with two or more events can be decomposed into the sequential composition of their individual events.
\begin{center}
    \includegraphics[height=2.5cm]{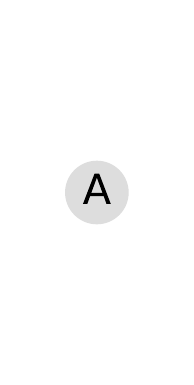}
    \hspace{0.5cm}
    \raisebox{1.15cm}{$\seqcomposeSym$}
    \hspace{0.5cm}
    \includegraphics[height=2.5cm]{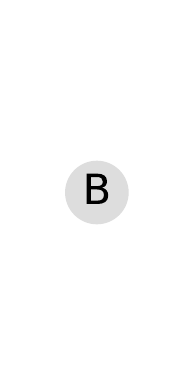}
    \hspace{0.5cm}
    \raisebox{1.15cm}{$\seqcomposeSym$}
    \hspace{0.5cm}
    \includegraphics[height=2.5cm]{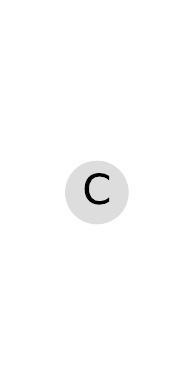}
    \hspace{0.75cm}
    \raisebox{1.15cm}{$=$}
    \hspace{0.75cm}
    \includegraphics[height=2.5cm]{svg-inkscape/total-ABC_svg-tex.pdf}
\end{center}

\begin{definition}
The \emph{sequential composition} of a family $(\Omega_j)_{j=1}^n$ of disjoint causal orders, denoted by $\Omega_1 \seqcomposeSym ... \seqcomposeSym \Omega_n$, is the (disjoint) union of their events and causal relations, with the additional stipulation that all events in $\Omega_i$ causally precede all events in $\Omega_{i+1}$ for all $i=1,...,n-1$.
Explicitly, two events $\omega \in \Omega_i$ and $\xi \in \Omega_j$ are related by $\omega \leq \xi$ in the sequential composition $\Omega_1 \seqcomposeSym ... \seqcomposeSym \Omega_n$ iff either $i < j$, or $i = j$ and $\omega \leq_{\Omega_i} \xi$. 
\end{definition}

Below is an example where a total order $\Omega$ on two events \ev{A} and \ev{B} is sequentially composed with a discrete order $\Omega'$ on two events \ev{C} and \ev{D}: this creates a fork connecting \ev{B} (the maximum of $\Omega$) to \ev{C} and \ev{D} (the minima of $\Omega'$).
\begin{center}
    \includegraphics[height=2.5cm]{svg-inkscape/total-AB_svg-tex.pdf}
    \hspace{0.75cm}
    \raisebox{1.15cm}{$\seqcomposeSym$}
    \hspace{0.75cm}
    \includegraphics[height=2.5cm]{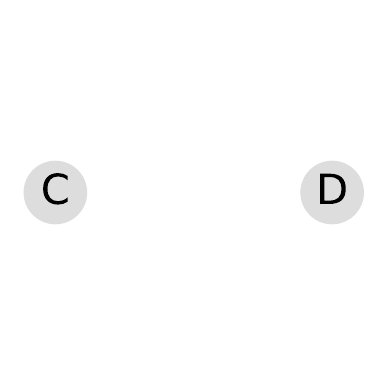}
    \hspace{0.75cm}
    \raisebox{1.15cm}{$=$}
    \hspace{0.75cm}
    \includegraphics[height=2.5cm]{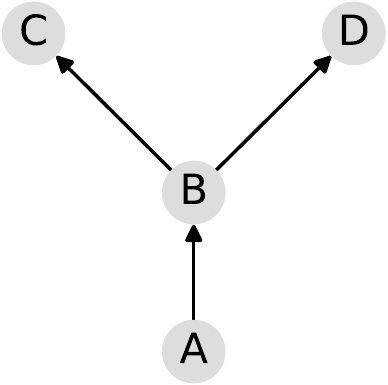}
\end{center}
Below is a sequential composition of the same total order $\Omega$ and discrete order $\Omega'$ used in the example above, but this time with $\Omega'$ preceding $\Omega$: this creates a wedge connecting \ev{C} and \ev{D} (the maxima of $\Omega'$) to \ev{A} (the minimum of $\Omega$).
\begin{center}
    \includegraphics[height=2.5cm]{svg-inkscape/discrete-CD_svg-tex.pdf}
    \hspace{0.75cm}
    \raisebox{1.15cm}{$\seqcomposeSym$}
    \hspace{0.75cm}
    \includegraphics[height=2.5cm]{svg-inkscape/total-AB_svg-tex.pdf}
    \hspace{0.75cm}
    \raisebox{1.15cm}{$=$}
    \hspace{0.75cm}
    \includegraphics[height=2.5cm]{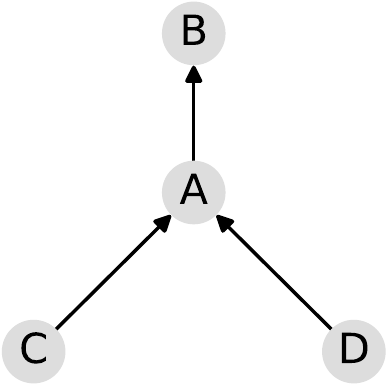}
\end{center}

\subsubsection{Meet/intersection of causal orders}

\begin{definition}
The \emph{meet} of a family $(\Omega_j)_{j=1}^n$ of causal orders, denoted by $\bigwedge_{j=1}^{n} \Omega_j$, is the intersection of the events from the individual orders, equipped with the intersection of the respective causal relations.
Explicitly, two events $\omega$ and $\xi$ are related by $\omega \leq \xi$ in the meet $\bigwedge_{j=1}^n \Omega_j$ iff they are related in all orders, i.e. if $\omega \leq_{\Omega_j} \xi$ for all $j=1,...,n$.
\end{definition}

For example, below is the intersection of two total orders on the same 4 events $\{\ev{A}, \ev{B}, \ev{C}, \ev{D}\}$: in both orders, we have that event \ev{A} causally precedes events \ev{B} and \ev{C}, which in turn causally precede event \ev{D}.
However, \ev{B} precedes \ev{C} in the first order, while it succeeds it in the second, resulting in \ev{B} and \ev{C} being causally unrelated in the meet.
We will sometimes refer to this as the \emph{diamond order}.
\begin{center}
    \includegraphics[height=3cm]{svg-inkscape/total-ABCD_svg-tex.pdf}
    \hspace{0.75cm}
    \raisebox{1.4cm}{$\bigwedge$}
    \hspace{0.75cm}
    \includegraphics[height=3cm]{svg-inkscape/total-ACBD_svg-tex.pdf}
    \hspace{0.75cm}
    \raisebox{1.4cm}{$=$}
    \hspace{0.75cm}
    \includegraphics[height=2.5cm]{svg-inkscape/diamond-ABCD_svg-tex.pdf}
\end{center}
Below is a more complicated example, involving events in indefinite causal order.
Events \ev{B} and \ev{C} are in indefinite causal order on the left, but \ev{B} causally precedes \ev{C} on the right, so \ev{B} causally precedes \ev{C} in the meet.
Similarly, events \ev{C} and \ev{D} are in indefinite causal order on the right, but \ev{C} causally precedes \ev{D} on the left, so \ev{C} causally precedes \ev{D} in the meet.
The situation leading to events \ev{A} and \ev{B} being causally unrelated in the meet is analogous to the one from the previous example.
\begin{center}
    \includegraphics[height=3cm]{svg-inkscape/total-AZBCZD_svg-tex.pdf}
    \hspace{0.75cm}
    \raisebox{1.4cm}{$\bigwedge$}
    \hspace{0.75cm}
    \includegraphics[height=4cm]{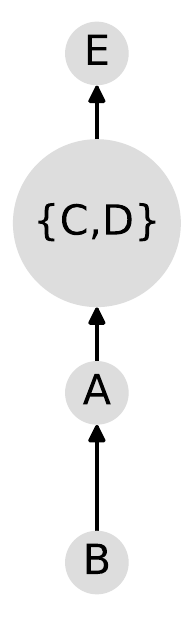}
    \hspace{0.75cm}
    \raisebox{1.4cm}{$=$}
    \hspace{0.75cm}
    \includegraphics[height=3cm]{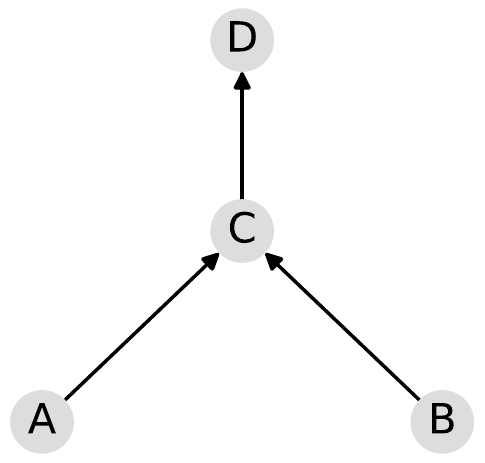}
\end{center}

\subsubsection{Replacement in causal orders}

Events in causal orders are an abstraction: they model spacetime regions which are sufficiently localised to be considered point-like, at least as far as causality is concerned.
However, events which looked ``causally atomic'' at a certain scale might turn out to comprise several sub-events when we look closer.
In terms of causal orders, this operation is known as ``replacement'', where one or more events are refined into causal orders, inheriting their causal relationships wholesale.

\begin{definition}
A \emph{replacement} in a causal order $\Omega$ is dictated by a family $(\Xi_{\omega})_{\omega \in \Omega}$ of pairwise disjoint causal orders $\Xi_{\omega}$ for each event $\omega \in \Omega$, where the trivial discrete case $\Xi_{\omega} := \{\omega\}$ is used to indicate that an event $\omega$ was not refined.
The result of the replacement is the causal order on $\bigcup_{\omega \in \Omega}|\Xi_{\omega}|$ where two events $\xi \in \Xi_{\omega}$ and $\xi' \in \Xi_{\omega'}$ are related by $\xi \leq \xi'$ if one of the following conditions holds:
\begin{itemize}
    \item $\omega < \omega'$
    \item $\omega = \omega'$ and $\xi \leq \xi'$ in $\Xi_\omega$
\end{itemize}
\end{definition}

Below is a simple example of replacement, where event \ev{B} in the total order $\ev{A} \rightarrow \ev{B} \rightarrow \ev{C}$ on the left is refined into the 5-event order $\Xi_{\ev{B}}$ on its right.
Event \ev{A} causally preceded \ev{B} in the total order, so it causally precedes all events of $\Xi_{\ev{B}}$ in the resulting order. Analogously, event \ev{C} causally succeeded \ev{B} in the total order, so it causally succeeds all events of $\Xi_{\ev{B}}$ in the resulting order.
\begin{center}
    \includegraphics[height=2.5cm]{svg-inkscape/total-ABC_svg-tex.pdf}
    \hspace{0.25cm}
    \raisebox{1.15cm}{\text{ where \textsf{B} $\mapsto$ }}
    \raisebox{1.15cm}{$\left(\rule{0cm}{1.35cm}\right.$}
    \hspace{0.1cm}
    \includegraphics[height=2.5cm]{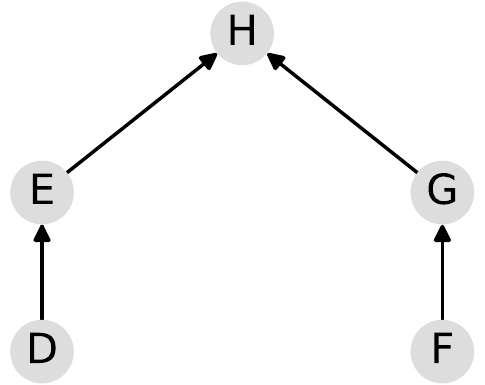}
    \hspace{0.1cm}
    \raisebox{1.15cm}{$\left.\rule{0cm}{1.35cm}\right)$}
    \hspace{0.75cm}
    \raisebox{1.15cm}{$=$}
    \hspace{0.75cm}
    \includegraphics[height=3.5cm]{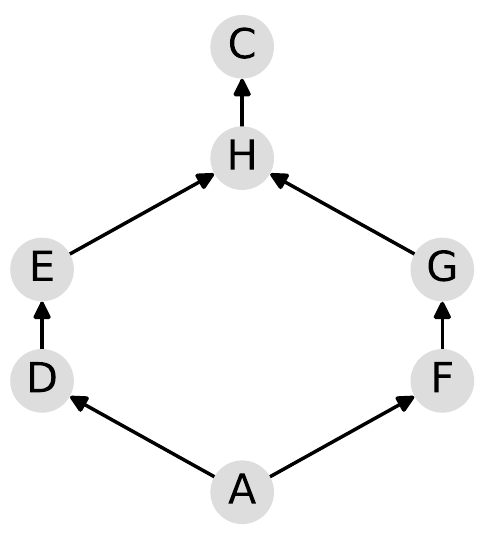}
\end{center}
A special case of replacement arises when all events in an order $\Omega$ are replaced with copies of the same order $\Xi$.
In this case, the elements of $\Xi_{\omega}$ are naturally labelled by pairs of events, as $\Xi_{\omega}=\suchthat{(\omega, \xi)}{\xi \in \Xi}$.
Below is an example where each event of the wedge order $\Omega$ on the left is replaced by a copy of the fork order $\Xi$ on the right, where events have been re-labelled to make copies disjoint (e.g. event \ev{A E} is the event \ev{E} of the copy of $\Xi$ that replaced event \ev{A} in $\Omega$).
\begin{center}
    \raisebox{1cm}{
        \includegraphics[height=2cm]{svg-inkscape/wedge-AB-C_svg-tex.pdf}
    }
    \hspace{-2mm}
    \raisebox{19mm}{$\times_{lex}$}
    \hspace{-2mm}
    \raisebox{1cm}{
        \includegraphics[height=2cm]{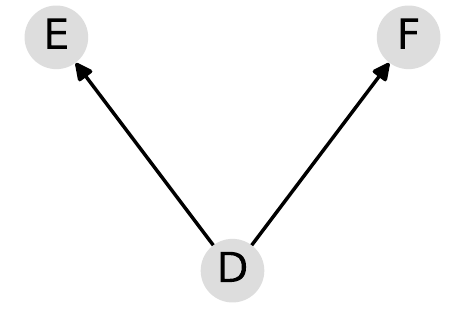}
    }
    \raisebox{19mm}{$=$}
    \hspace{0.5cm}
    \includegraphics[height=4cm]{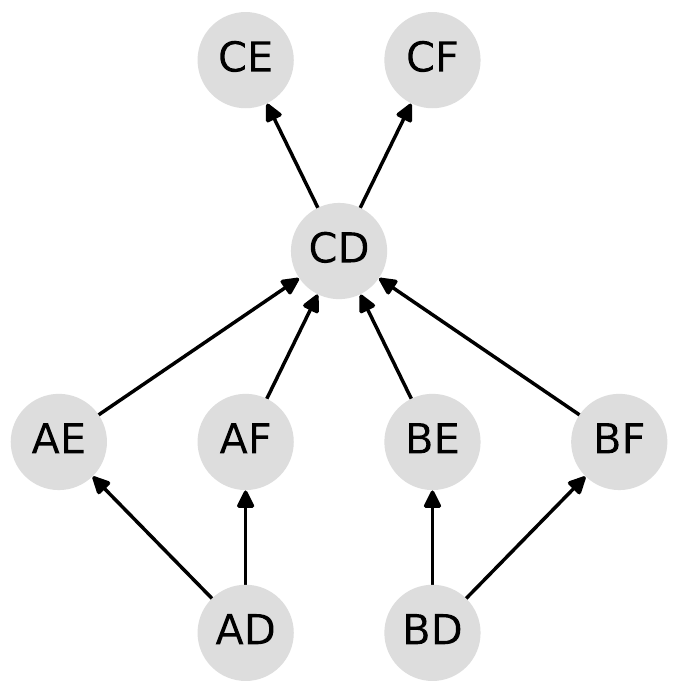}
\end{center}

\begin{definition}
The \emph{lexicographic product} $\Omega \times_{lex} \Xi$ is the replacement in $\Omega$ defined by the family $(\suchthat{(\omega, \xi)}{\xi \in \Xi})_{\omega \in \Omega}$.
\end{definition}

\subsubsection{Cartesian product of causal orders}

There is a certain algebraic significance to the operations described above.
Join $\bigvee$ and meet $\bigwedge$ are the lattice operations on the hierarchy of causal orders under inclusion, the main topic of the next subsection.
Sequential composition $\rightarrow$ is the natural sum operation on partial orders, from an algebraic and category-theoretic perspective.
There are also two distinct notions of product on causal orders: the lexicographic product $\times_{lex}$, arising as a special case of replacement, and the Cartesian product.

\begin{definition}
The \emph{Cartesian product} $\Omega \times \Xi$ is the order on the set $|\Omega| \times |\Xi|$ defined by setting $(\omega, \xi) \leq (\omega', \xi')$ exactly when $\omega \leq_\Omega \omega'$ and $\xi \leq_\Xi \xi'$.
\end{definition}

Cartesian products can be used to build causal lattices: for example, below is a (truncated) diamond lattice in 1+1 Minkowski spacetime, built by Cartesian product of two total orders on 3 events. 
\begin{center}
    \raisebox{0.75cm}{
        \includegraphics[height=2.5cm]{svg-inkscape/total-ABC_svg-tex.pdf}
    }
    \raisebox{19mm}{$\times$}
    \raisebox{0.75cm}{
        \includegraphics[height=2.5cm]{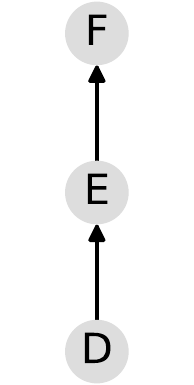}
    }
    \raisebox{19mm}{$=$}
    \hspace{0.5cm}
    \includegraphics[height=4cm]{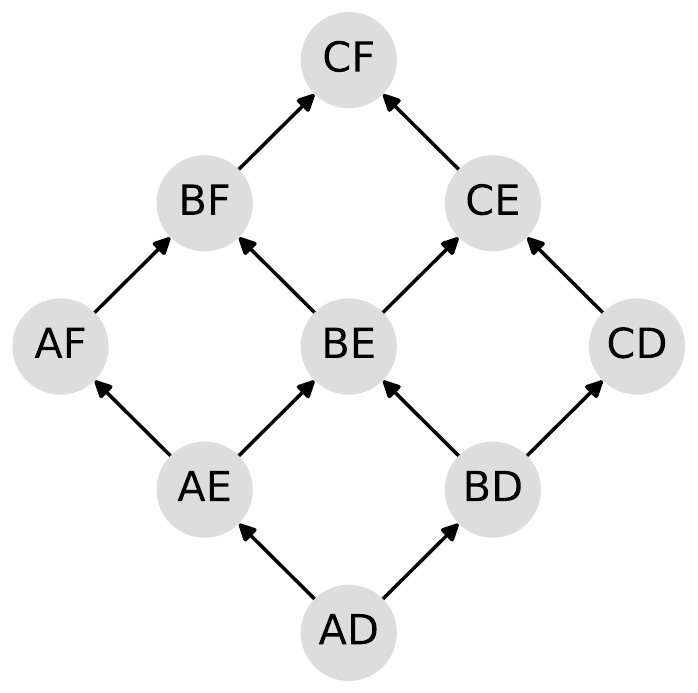}
\end{center}
The diamond lattice above arises naturally from the causal structure of regularly spaced events in $(1+1)$-dimensional Minkowski spacetime.
In general, however, products of causal orders require additional spatial dimensions: for example, below is the Cartesian product of a total order and a fork, which arises naturally in $(1+2)$ dimensional Minkowski spacetime.
\begin{center}
    \raisebox{0.75cm}{
        \includegraphics[height=2.5cm]{svg-inkscape/total-ABC_svg-tex.pdf}
    }
    \raisebox{1.9cm}{$\times$}
    \raisebox{1cm}{
        \includegraphics[height=2cm]{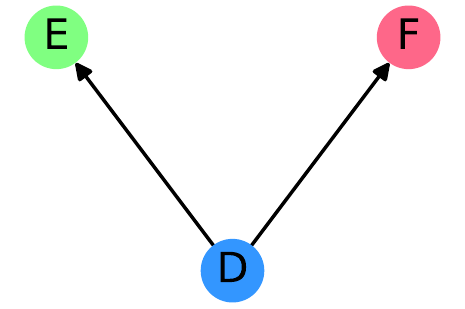}
    }
    \raisebox{1.9cm}{$=$}
    \hspace{2mm}
    \includegraphics[height=4cm]{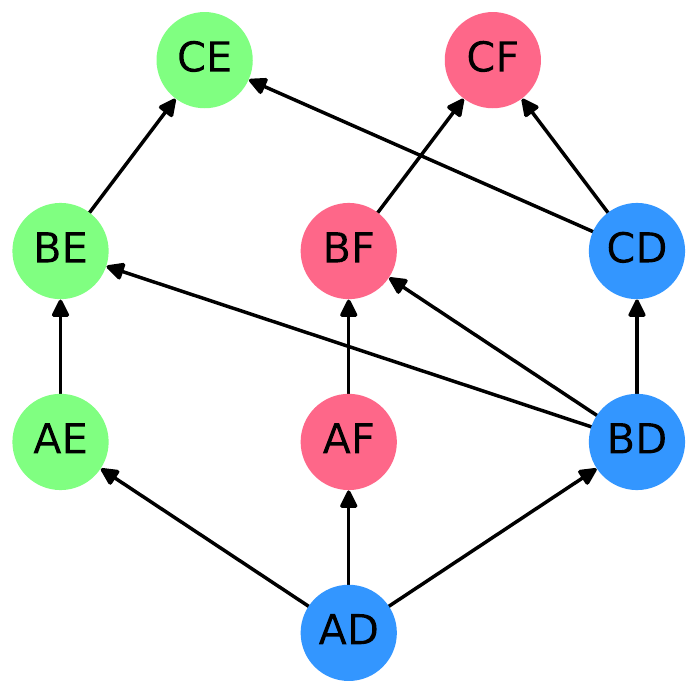}
    \hspace{2mm}
    \raisebox{1.9cm}{$=$}
    \includegraphics[height=4cm]{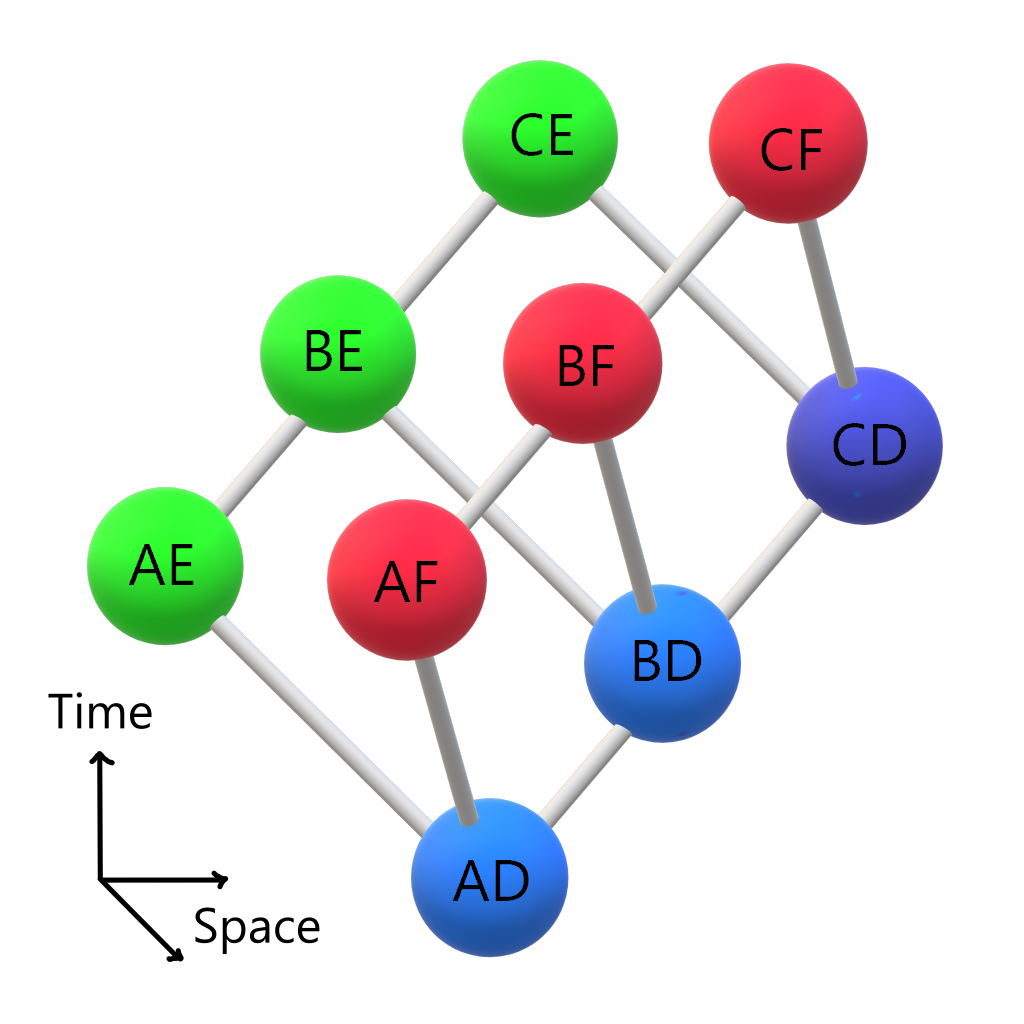}
\end{center}
Other common causal lattices can be obtained from Cartesian products by refining each event into a copy of some common causal order, using the lexicographic product.
For example, Figure \ref{fig:diamond-honeycomb-lattices} (p.\pageref{fig:diamond-honeycomb-lattices}) shows how a honeycomb lattice can be obtained from a diamond lattice by lexicographic product, expanding each event of the latter into a copy of the total order on two events.
\begin{figure}[h]
    \centering
    \raisebox{1.25cm}{
        \includegraphics[height=5cm]{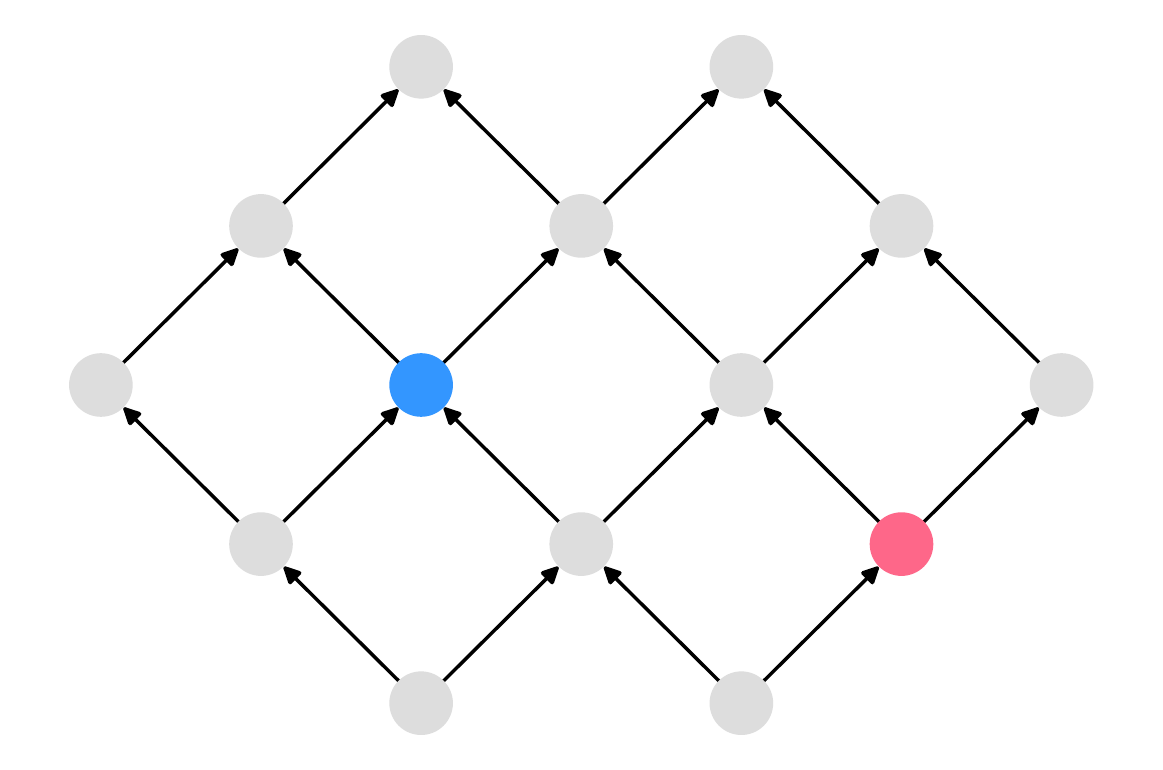}
    }
    \includegraphics[height=7.5cm]{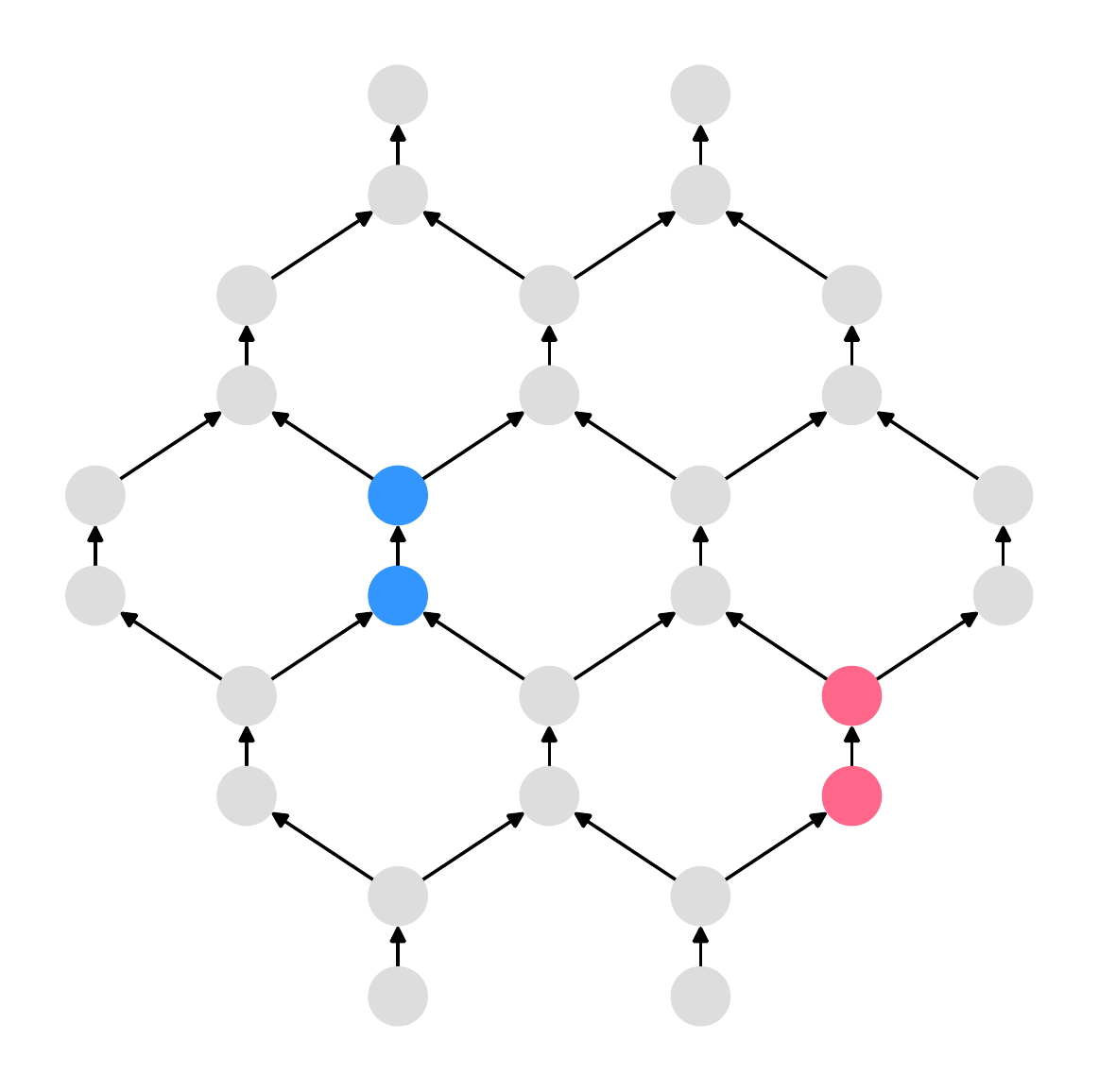}
    \caption{A fragment of diamond lattice (left) expanded into a corresponding fragment of honeycomb lattice (right) by lexicographic product. Events highlighted in blue and red on the left are replaced with correspondingly coloured sub-orders on the right.}
\label{fig:diamond-honeycomb-lattices}
\end{figure}
\begin{remark}
The sequential composition operation is also called ``series composition'' or ``ordinal sum''. It is sometimes denoted $\oplus$, but we shall not do so here.
Partial orders which can be built from single events using only sequential composition and parallel composition (i.e. disjoint joins) are known as ``series-parallel partial orders'': they are exactly the (non-empty) finite partial orders which are ``N-free'', that is, those which don't contain any 4 distinct elements $\omega_1, \omega_2, \omega_3, \omega_4$ forming an ``N shape'' $\omega_1 < \omega_2 > \omega_3 < \omega_4$.
If arbitrary joins are allowed, all finite partial orders can be built. 
\end{remark}

\subsection{Hierarchy of Causal Orders}
\label{subsection:causal-orders-hierarchy}

Causal orders are naturally ordered by inclusion: $\Omega \leq \Xi$ if $|\Omega| \subseteq |\Xi|$ as sets and $\leq_{\Omega} \subseteq \leq_{\Xi}$ as relations (i.e. as subsets $\suchthat{(\omega, \omega')}{\omega \leq_\Omega \omega'} \subseteq |\Omega|^2$ and $\suchthat{(\xi, \xi')}{\xi \leq_\Xi \xi'} \subseteq |\Xi|^2$).
The requirement that $\leq_{\Omega} \subseteq \leq_{\Xi}$ explicitly means that for all $\omega, \omega' \in \Omega$ the constraint $\omega \not \leq_\Xi \omega'$ in $\Xi$ implies the constraint $\omega' \not \leq_\Omega \omega'$.
Put in different words:
\begin{itemize}
    \item If $\omega$ and $\omega'$ are causally unrelated in $\Xi$ , then they are causally unrelated in $\Omega$.
    \item If $\omega$ causally precedes $\omega'$ in $\Xi$, then it can either causally precede $\omega'$ in $\Omega$ or it can be causally unrelated to $\omega'$ in $\Omega$; it cannot causally succeed $\omega'$ or be in indefinite causal order with it.
    \item If $\omega$ and $\omega'$ are in indefinite causal order in $\Xi$, then their causal relationship in $\Omega$ is unconstrained: $\omega$ can causally precede $\omega'$, causally succeed it, be causally unrelated to it or be in indefinite causal order with it.
\end{itemize}
From a causal standpoint, $\Omega \leq \Xi$ means that $\Omega$ imposes on its own events at least the same causal constraints as $\Xi$, and possibly more.
In particular, if $\Xi$ is definite (no two events in indefinite causal order) then so is $\Omega$; conversely, if $\Omega$ is indefinite, then so is $\Xi$.

\begin{observation}
Causal orders on a given set of events form a finite lattice, which we refer to as the \emph{hierarchy of causal orders}.
The join and meet operations on this lattice are those described in the previous subsection, the indiscrete order is the unique maximum (all events in indefinite causal order, i.e. no causal constraints), while the discrete order is the unique minimum (all events are causally unrelated).
\end{observation}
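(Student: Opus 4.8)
The plan is to reduce the statement to the standard fact that the preorders on a fixed finite set, ordered by relational inclusion, form a finite (indeed complete) lattice, and then to match the abstract lattice operations against the concrete join, meet, maximum and minimum described above. First I would note that, since the underlying event set $X$ is held fixed, a causal order on $X$ is precisely a preorder on $X$, i.e. a reflexive transitive relation $R \subseteq X \times X$, and that the inclusion order $\Omega \leq \Xi$ of causal orders coincides exactly with relational inclusion $\leq_{\Omega} \subseteq \leq_{\Xi}$ inside the powerset lattice $\mathcal{P}(X \times X)$. This reframes the whole question in terms of the subposet of preorders sitting inside a Boolean, hence complete, lattice.

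Next I would check that the preorders form a closure system (Moore family) inside $\mathcal{P}(X \times X)$. The full relation $X \times X$ is a preorder, and it is exactly the indiscrete order $\indiscrete{X}$; moreover an arbitrary intersection of preorders is again a preorder, since reflexivity (containment of the diagonal $\suchthat{(\omega,\omega)}{\omega \in X}$) and transitivity (if $(\omega,\xi)$ and $(\xi,\zeta)$ both lie in the relation then so does $(\omega,\zeta)$) are each preserved under intersection. The standard closure-system theorem then yields a complete lattice whose meet is intersection and whose join is the least preorder containing the union, i.e. its reflexive–transitive closure. Finiteness of the lattice is immediate, since $X \times X$ is finite and thus admits only finitely many subsets.

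It then remains to identify the operations. The lattice meet is intersection of relations, which is verbatim the definition of $\bigwedge$. For the join, the union of a family of preorders already contains the diagonal (each summand does), so its reflexive–transitive closure reduces to the plain transitive closure of the union, which is verbatim the definition of $\bigvee$; here I would record the small check that the transitive closure of a reflexive relation is again reflexive and transitive, so that the closure genuinely lands among causal orders. The minimum is the least preorder, namely the diagonal, in which no two distinct events are comparable, i.e. the discrete order $\discrete{X}$, and the maximum is $X \times X$, i.e. $\indiscrete{X}$.

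The one genuine subtlety I would flag is that the preorders do \emph{not} form a sublattice of $\mathcal{P}(X \times X)$ under set-theoretic union: the union of two transitive relations need not be transitive, so it need not be a causal order. This is exactly why the join of the hierarchy is defined via transitive closure rather than bare union, and it is the reason the two lattice operations are asymmetric in character, with meet realised by honest intersection but join requiring a closure step. Consequently the main work is not in establishing that a lattice exists, but in carefully matching the closure-of-the-union against the stated definition of $\bigvee$, and I expect that verification to be the only place demanding attention.
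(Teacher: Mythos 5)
Your proof is correct. The paper states this as an unproved \emph{Observation}, so there is no argument to compare against, but your route---identifying causal orders on a fixed event set with preorders, observing that these form a closure system (Moore family) in $\mathcal{P}(X\times X)$ since they contain the full relation and are closed under arbitrary intersection, and then matching the induced lattice operations against the paper's $\bigwedge$ (plain intersection) and $\bigvee$ (transitive closure of the union, with reflexivity coming for free from the diagonal)---is the standard justification and exactly what the authors leave implicit. The subtlety you flag, that preorders are not a sublattice of the powerset because unions of transitive relations need not be transitive, is indeed the one point worth recording, and it is consistent with the paper's choice to define the join via transitive closure.
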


The hierarchy of causal orders on three events $\{\ev{A},\ev{B},\ev{C}\}$ is displayed by Figure \ref{fig:hierarchy-orders-3} (p.\pageref{fig:hierarchy-orders-3}), with definite causal order coloured red and indefinite ones coloured blue:
\begin{enumerate}
    \item The discrete order \textsf{A B C}, the minimum of the hierarchy, is at the left of the diagram.
    \item Immediately after the discrete order, we find the 6 orders in the shape of \textsf{A→B C}.
    \item The 6 orders above are included into the 3 fork orders (e.g. \textsf{B←A→C}), the 3 wedge orders (e.g. \textsf{B→A←C}), and the 3 indefinite causal orders in the shape of \textsf{\{AB\} C}.
    \item The fork and wedge orders are included into the 6 total orders (e.g. \textsf{A→B→C}).
    \item The 6 total orders and the 3 indefinite causal orders in the shape of \textsf{\{AB\} C} are included into the the 3 indefinite causal orders in the shape of \textsf{\{AB\}→C} and the 3 ones in the shape of \textsf{A→\{BC\}}. 
    \item The indiscrete order \textsf{\{ABC\}}, the maximum of the hierarchy, is at the right of the diagram.
\end{enumerate}
\begin{figure}[h]
    \centering
    \includegraphics[height=10cm]{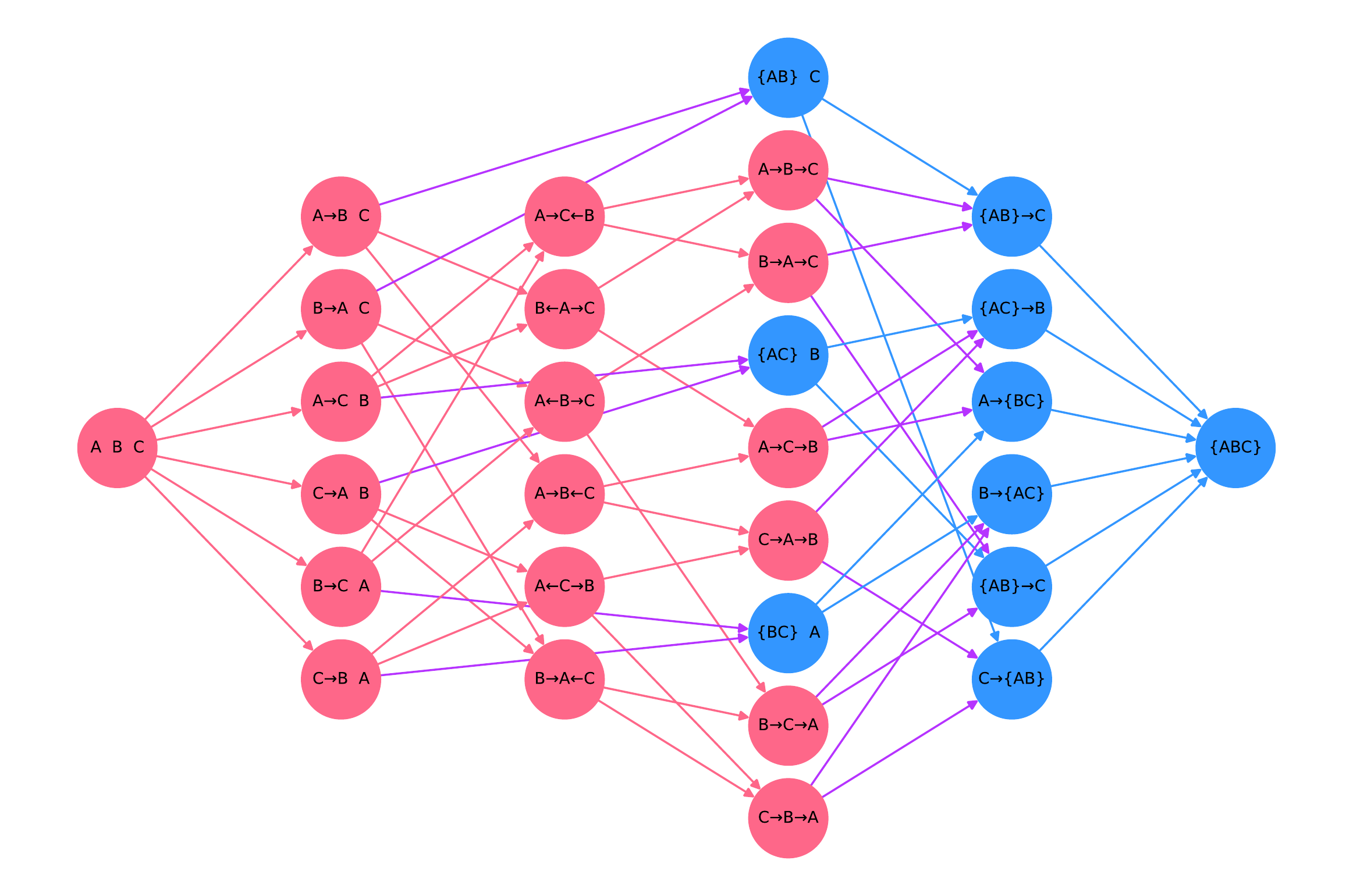}
    \caption{
    Hasse diagram for the hierarchy of causal orders on three events $\{\ev{A},\ev{B},\ev{C}\}$, left-to-right in inclusion order.
    Definite causal orders (left and middle) are coloured red, while indefinite causal orders (middle and right) are coloured blue.
    Inclusions between definite orders are coloured red, inclusions between indefinite orders are coloured blue, inclusions of a definite order into an indefinite one are coloured violet.
    In the order labels, a space is used to indicate causal unrelatedness, arrows are used to indicate that the event at the tail causally precedes the event at the head, and braces are used to indicate that the events contained are in indefinite causal order.
    }
\label{fig:hierarchy-orders-3}
\end{figure}
The definite causal order always form a lowerset in the hierarchy---if $\Xi$ is definite then all $\Omega \leq \Xi$ are also definite---and the maxima for this lowerset are exactly the total orders.
However, total orders don't form a separating set: there are inclusions of definite orders into indefinite ones that don't factor through a total order (cf. Figure \ref{fig:hierarchy-orders-3}, arrows from red notes in the second layer to blue nodes in the fourth layer).

The number of causal orders on $n$ events grows extremely rapidly, following OEIS sequence A000798: there are 4 orders on 2 events, 29 orders on 3 events, 355 orders on 4 events, 6942 on 5 events, and so on, until we reach 261492535743634374805066126901117203 orders on 18 events; the number of causal orders on 19 events is currently unknown.
As fast as this number grows, however, it pales in comparison to the number of possible causal spaces (2644 on 3 events with binary inputs \cite{gogioso2022classification}) and causal polytopes \cite{gogioso2022geometry}.

The super-exponential growth of causal orders is extremely problematic for unstructured causal decomposition tasks.
However, the issue can be significantly mitigated if some causal constraints are known about the scenario at hand.
For example, we might know that the actions of 4 parties \textsf{\{A,B,C,D\}} are constrained by a diamond causal order, and wish to divine how much of their experience can be explained without certain parties being able to signal others.
In such circumstances, we only need to investigate (certain) sub-orders of the diamond order: instead of all 355 orders on 4 events, we have limited our search to at most 25 of them.
Figure \ref{fig:hierarchy-suborders-diamond-ABCD} (p.\pageref{fig:hierarchy-suborders-diamond-ABCD}) shows the hierarchy of sub-orders of the diamond order: the maximum is now the diamond order itself, and all causal orders in the hierarchy are definite.
Some new interesting orders also make an appearance, such as the N-shaped \textsf{C←A→D←B}.
\begin{figure}[h]
    \centering
    \includegraphics[height=10cm]{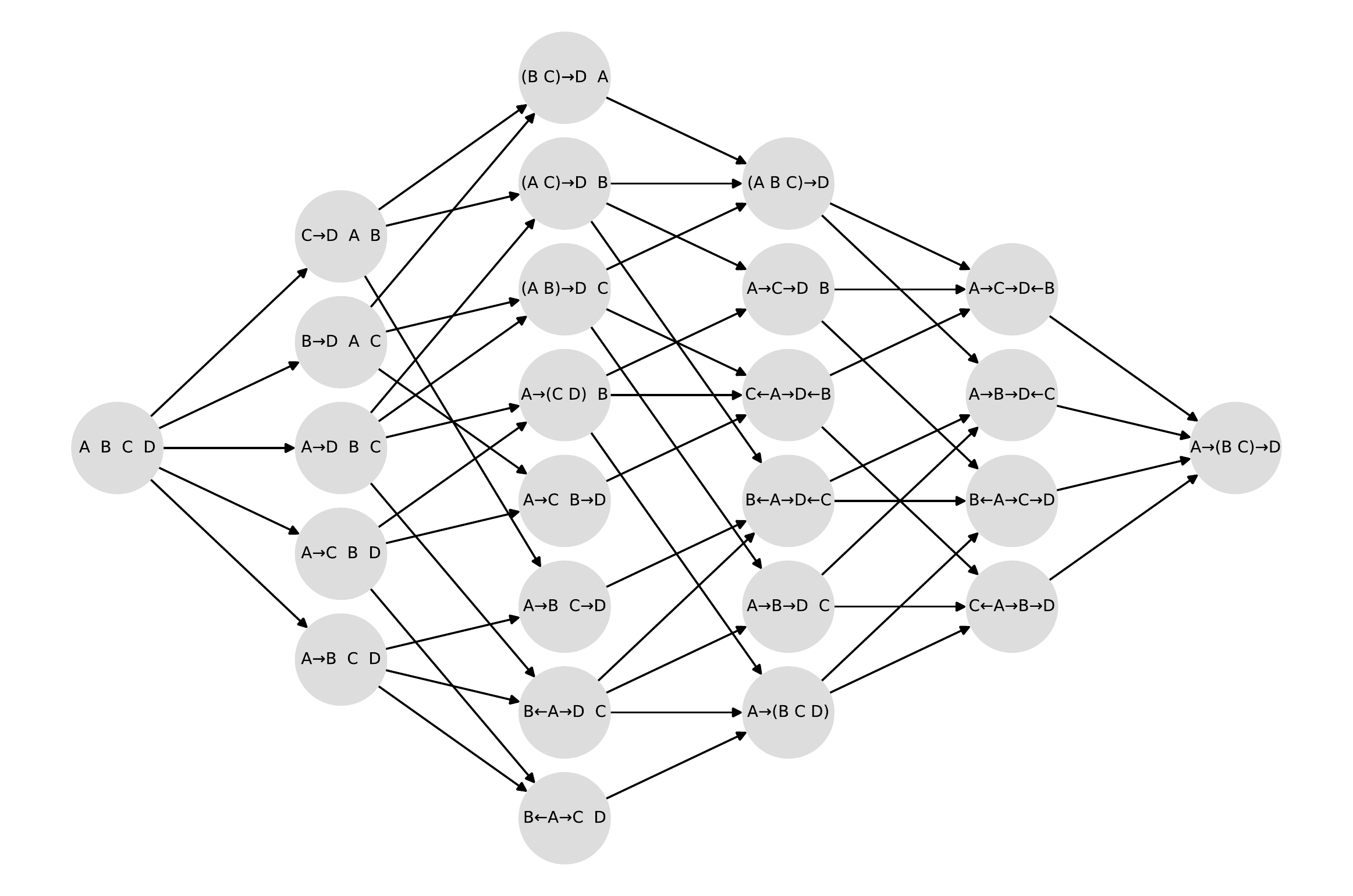}
    \caption{
    Hasse diagram for the hierarchy of sub-orders of the diamond order, left-to-right in inclusion order.
    All orders are definite, so no colour-coding of nodes and edges is necessary.
    In the order labels, a space is used to indicate causal unrelatedness, arrows are used to indicate that the event(s) at the tail causally precedes the event(s) at the head, and brackets are used to group multiple causally unrelated events together (for ease of notation).
    For example, \textsf{A→(B C)→D} on the right indicates that event \ev{A} precedes events \ev{B} and \ev{C}, which are causally unrelated to each other and both precede event \ev{D}.
    }
\label{fig:hierarchy-suborders-diamond-ABCD}
\end{figure}

\subsection{Lattice of Lowersets}
\label{subsection:causal-orders-lowersets}


As discussed in detail by Section \ref{section:spaces}, this work is concerned with a certain class of operational scenarios: blackbox devices are operated locally at events in spacetime, determining a probability distribution on their joint outputs conditional to their (freely chosen) joint inputs.
In such scenarios, causality constraints essentially state that the output at any subset of events cannot depend on inputs at events which causally succeed them or are causally unrelated to them.
Furthermore, the output at any event is only well-defined conditional to inputs for all events in its past: we are not interested in all sub-sets of events of a causal order, but rather in its lowersets.

The discussion above indicates that the object we seek to understand is not the causal order $\Omega$ itself, but rather its \emph{lattice of lowersets} $\Lsets{\Omega}$.
This is the subsets of events closed in the past, ordered by inclusion:
\begin{equation}
    \Lsets{\Omega}
    :=
    \suchthat{U \subseteq \Omega}{\forall \omega \in U.\,\downset{\omega} \subseteq U}
\end{equation}
In this case, being a lattice means that lowersets are closed under both intersection and union; we always omit the empty set from our Hasse diagrams, for clarity.

Inclusions between lowersets determine the causality constraints for the causal order: if $U, V \in \Lsets{\Omega}$ are such that $U \subseteq V$, then the output at events in $U$ cannot depend on the inputs at events in $V \backslash U$.
Consider the total order $\ev{A}\rightarrow\ev{B}\rightarrow\ev{C}$, and its associated lattice of lowersets: the inclusion $\{\ev{A},\ev{B}\} \subseteq \{\ev{A},\ev{B},\ev{C}\}$, for example, tells us that the outputs at events \ev{A} and \ev{B} cannot depend on the input at event \ev{C}; the inclusion $\{\ev{A}\} \subseteq \{\ev{A},\ev{B}\}$, additionally, tells us that the outputs at event \ev{A} cannot depend on the input at event \ev{B}.
\begin{center}
    \raisebox{1.40cm}{$\LsetsSym$}
    \raisebox{1.40cm}{$\left(\rule{0cm}{1.35cm}\right.$}
    \hspace{0.0cm}
    \raisebox{0.25cm}{
        \includegraphics[height=2.5cm]{svg-inkscape/total-ABC_svg-tex.pdf}
    }
    \hspace{0.0cm}
    \raisebox{1.40cm}{$\left.\rule{0cm}{1.35cm}\right)$}
    \hspace{0.75cm}
    \raisebox{1.40cm}{$=$}
    \hspace{0.5cm}
    \includegraphics[height=3cm]{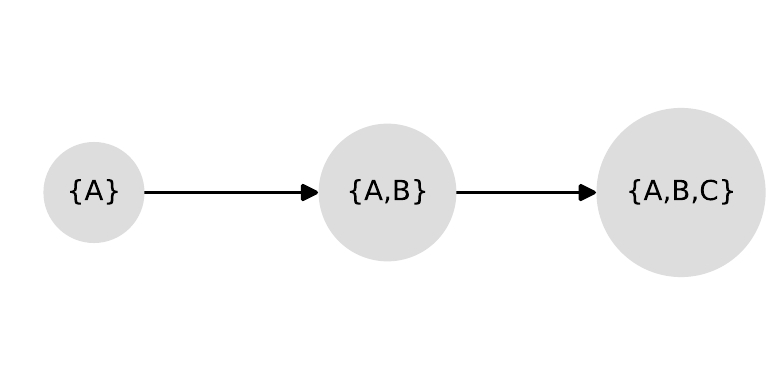}
\end{center}
Below is a more complicated example, for the diamond order: the inclusion $\{\ev{A},\ev{B}\} \subseteq \{\ev{A},\ev{B},\ev{C},\ev{D}\}$, for example, tells us that the outputs at events \ev{A} and \ev{B} cannot depend on the input at events \ev{C} and \ev{D}.
\begin{center}
    \raisebox{2.4cm}{$\LsetsSym$}
    \raisebox{2.4cm}{$\left(\rule{0cm}{1.35cm}\right.$}
    \hspace{0.0cm}
    \raisebox{1.25cm}{
        \includegraphics[height=2.5cm]{svg-inkscape/diamond-ABCD_svg-tex.pdf}
    }
    \hspace{0.0cm}
    \raisebox{2.4cm}{$\left.\rule{0cm}{1.35cm}\right)$}
    \hspace{0.75cm}
    \raisebox{2.4cm}{$=$}
    \hspace{0.5cm}
    \includegraphics[height=5cm]{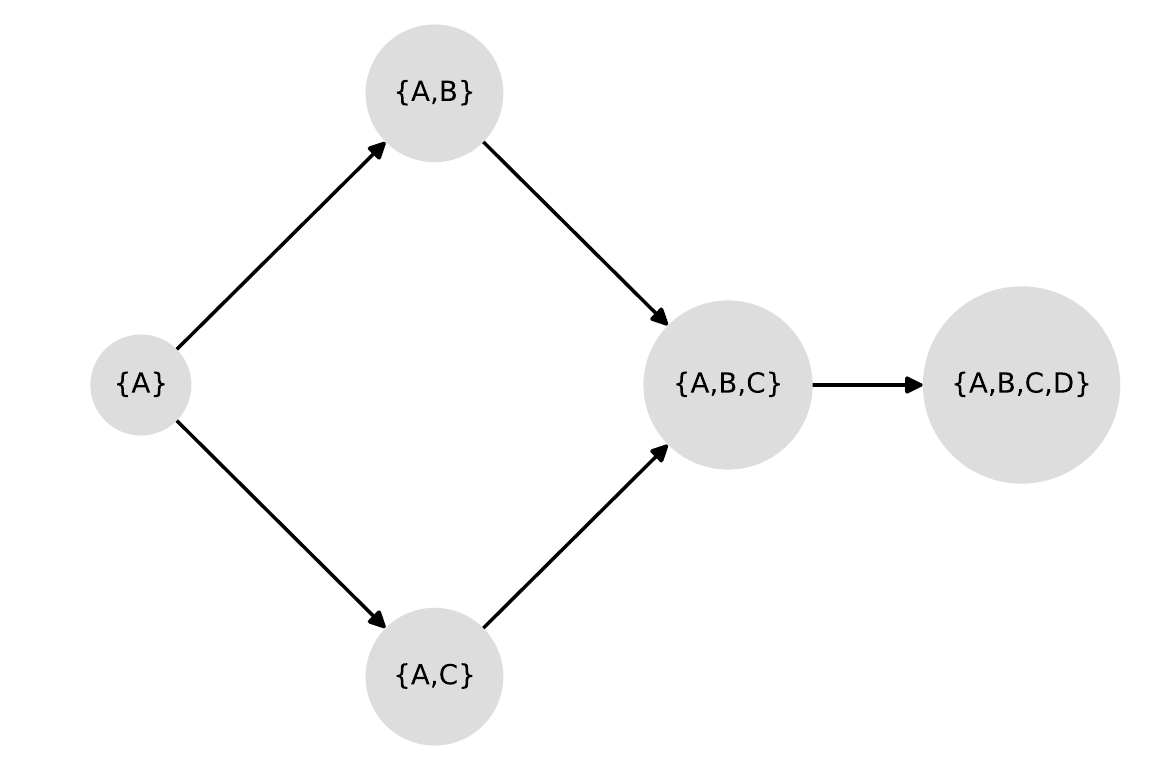}
\end{center}
Here, we note for the first time how lowersets are more general than downsets: we have $\downset{\ev{A}}=\{\ev{A}\}$, $\downset{\ev{B}}=\{\ev{A},\ev{B}\}$, $\downset{\ev{C}}=\{\ev{A},\ev{C}\}$ and $\downset{\ev{D}}=\{\ev{A},\ev{B},\ev{C},\ev{D}\}$, but lowerset $\{\ev{A},\ev{B},\ev{C}\}$ does not originate from any individual event.
Hence, lowersets strictly generalise the notion of causal past from individual events to arbitrary subsets of events:
\[
\{\ev{A},\ev{B},\ev{C}\}
=
\downset{\ev{B}} \cup \downset{\ev{C}}
=
\downset{\{\ev{B},\ev{C}\}}
\]
When the causal order is indefinite, lowersets cannot split causal equivalence classes: either no event from the class is in the lowerset, or all events are.
We can see this in the lattice of lowersets for the indefinite causal order $\ev{A}\rightarrow\{\ev{B},\ev{C}\}\rightarrow\ev{D}$, where events $\{\ev{B},\ev{C}\}$ form a causal equivalence class.
\begin{center}
    \raisebox{1.40cm}{$\LsetsSym$}
    \raisebox{1.40cm}{$\left(\rule{0cm}{1.35cm}\right.$}
    \hspace{0.0cm}
    \raisebox{0.25cm}{
        \includegraphics[height=2.5cm]{svg-inkscape/total-AZBCZD_svg-tex.pdf}
    }
    \hspace{0.0cm}
    \raisebox{1.40cm}{$\left.\rule{0cm}{1.35cm}\right)$}
    \hspace{0.75cm}
    \raisebox{1.40cm}{$=$}
    \hspace{0.5cm}
    \includegraphics[height=3cm]{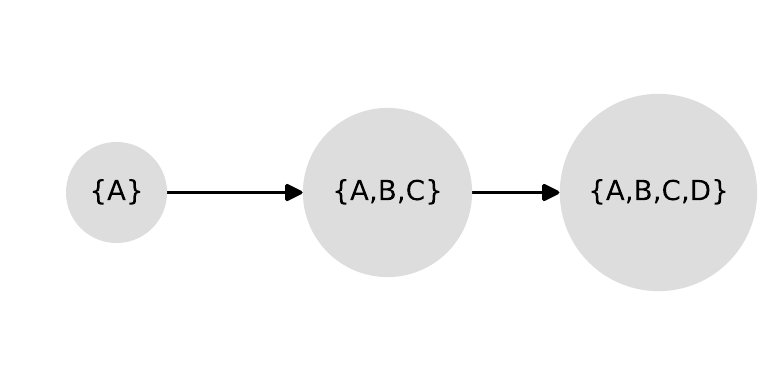}
\end{center}

An interesting question arises when we consider the interaction of causal constraints for multiple causal orders.
A scenario is explainable by two causal orders $\Omega$ and $\Omega'$ if it satisfies the causal constrains of both: in terms of lowersets, such constraints correspond to the union $\Lsets{\Omega}\cup\Lsets{\Omega'}$ of the lowersets for the individual orders.
Consider, for example, the total orders $\Omega = \ev{A}\rightarrow\ev{B}\rightarrow\ev{C}\rightarrow\ev{D}$ and $\Omega' = \ev{A}\rightarrow\ev{C}\rightarrow\ev{B}\rightarrow\ev{D}$, together with the associated lowersets.
\begin{center}
    \raisebox{1.40cm}{$\LsetsSym$}
    \raisebox{1.40cm}{$\left(\rule{0cm}{1.60cm}\right.$}
    \hspace{0.0cm}
    \raisebox{0.0cm}{
        \includegraphics[height=3cm]{svg-inkscape/total-ABCD_svg-tex.pdf}
    }
    \hspace{0.0cm}
    \raisebox{1.40cm}{$\left.\rule{0cm}{1.60cm}\right)$}
    \hspace{0.75cm}
    \raisebox{1.40cm}{$=$}
    \hspace{0.5cm}
    \includegraphics[height=3cm]{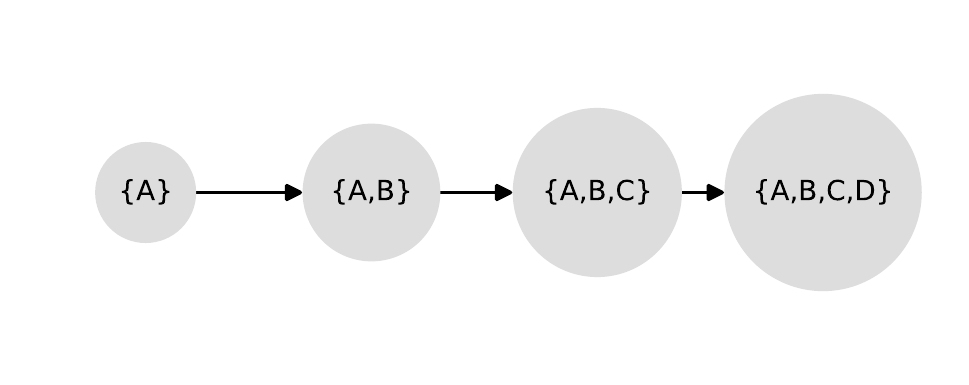}
\end{center}
\begin{center}
    \raisebox{1.40cm}{$\LsetsSym$}
    \raisebox{1.40cm}{$\left(\rule{0cm}{1.60cm}\right.$}
    \hspace{0.0cm}
    \raisebox{0.0cm}{
        \includegraphics[height=3cm]{svg-inkscape/total-ACBD_svg-tex.pdf}
    }
    \hspace{0.0cm}
    \raisebox{1.40cm}{$\left.\rule{0cm}{1.60cm}\right)$}
    \hspace{0.75cm}
    \raisebox{1.40cm}{$=$}
    \hspace{0.5cm}
    \includegraphics[height=3cm]{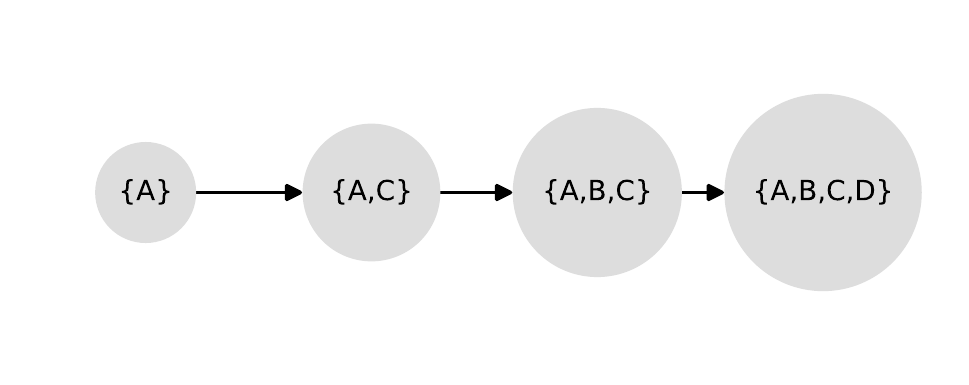}
\end{center}
For a scenario to satisfy both orders, it has to satisfy the constraints derived from $\Lsets{\ev{A}\rightarrow\ev{B}\rightarrow\ev{C}\rightarrow\ev{D}}\cup\Lsets{\ev{A}\rightarrow\ev{C}\rightarrow\ev{B}\rightarrow\ev{D}}$, depicted below.
\begin{center}
    \raisebox{1.90cm}{$\LsetsSym$}
    \raisebox{1.90cm}{$\left(\rule{0cm}{1.60cm}\right.$}
    \hspace{-0.4cm}
    \raisebox{0.5cm}{
        \includegraphics[height=3cm]{svg-inkscape/total-ABCD_svg-tex.pdf}
    }
    \hspace{-0.5cm}
    \raisebox{1.90cm}{$\left.\rule{0cm}{1.60cm}\right)$}
    \hspace{0.25cm}
    \raisebox{1.90cm}{$\bigcup$}
    \hspace{0.25cm}
    \raisebox{1.90cm}{$\LsetsSym$}
    \raisebox{1.90cm}{$\left(\rule{0cm}{1.60cm}\right.$}
    \hspace{-0.4cm}
    \raisebox{0.5cm}{
        \includegraphics[height=3cm]{svg-inkscape/total-ACBD_svg-tex.pdf}
    }
    \hspace{-0.5cm}
    \raisebox{1.90cm}{$\left.\rule{0cm}{1.60cm}\right)$}
    \hspace{0.25cm}
    \raisebox{1.90cm}{$=$}
    \hspace{0.0cm}
    \includegraphics[height=4cm]{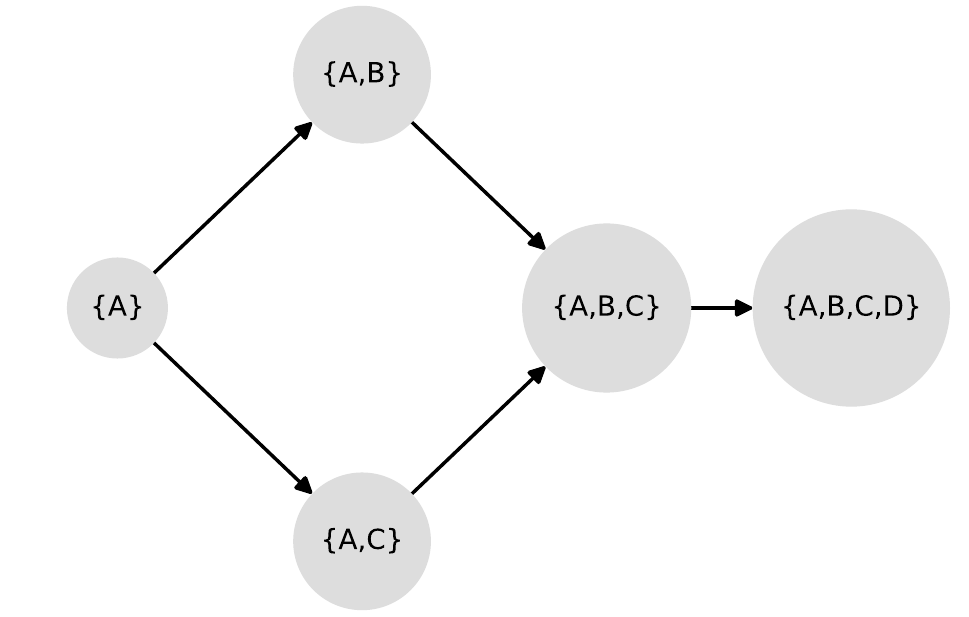}
\end{center}
We immediately recognise the lowersets as those of the diamond order, which we also know to take the form:
\[
\text{diamond}_{\textsf{ABCD}}
=
(\ev{A}\rightarrow\ev{B}\rightarrow\ev{C}\rightarrow\ev{D})
\wedge
(\ev{A}\rightarrow\ev{C}\rightarrow\ev{B}\rightarrow\ev{D})
\]
So the question arises: is simultaneously satisfying the causal constraints for two (or more) causal orders always the same as satisfying the causal constraints for their meet?
To answer it we first note that the hierarchy of causal orders is (contravariantly) related to the hierarchy formed by the corresponding lattices of lowersets under inclusion.

\begin{proposition}
\label{proposition:caus-ord-lsets-inclusion}
For any two causal orders $\Omega$ and $\Omega'$ we have:
\begin{equation}
    \Omega \leq \Omega'
    \;\Leftrightarrow\;
    \Lsets{\Omega}\supseteq\Lsets{\Omega'}
\end{equation}
\end{proposition}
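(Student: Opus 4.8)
The plan is to work within the hierarchy of causal orders on a fixed event set $X$, as in the surrounding discussion, so that the condition $\Omega \leq \Omega'$ reduces to the single requirement $\leq_{\Omega} \subseteq \leq_{\Omega'}$ on relations (the set-inclusion clause $|\Omega| \subseteq |\Omega'|$ being automatic when the event sets coincide). I would then prove the two implications separately, reasoning directly from the definition of lowerset in each case.

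For the forward direction ($\Rightarrow$), I would first record the elementary fact that inclusion of causal relations is inherited by causal pasts: if $\leq_\Omega \subseteq \leq_{\Omega'}$, then $\suchthat{\zeta \in X}{\zeta \leq_\Omega \omega} \subseteq \suchthat{\zeta \in X}{\zeta \leq_{\Omega'} \omega}$ for every $\omega$, since $\zeta \leq_\Omega \omega$ entails $\zeta \leq_{\Omega'} \omega$. Given any $U \in \Lsets{\Omega'}$ and any $\omega \in U$, the $\Omega'$-past of $\omega$ is contained in $U$ by lower-closure, and the $\Omega$-past of $\omega$ is contained in the $\Omega'$-past by the fact just noted; chaining these inclusions shows the $\Omega$-past of $\omega$ lies in $U$, so $U \in \Lsets{\Omega}$. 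Hence $\Lsets{\Omega'} \subseteq \Lsets{\Omega}$.

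For the reverse direction ($\Leftarrow$), the key device is to test lowerset containment against principal causal pasts. I would fix $\omega, \xi$ with $\omega \leq_\Omega \xi$ and aim to deduce $\omega \leq_{\Omega'} \xi$. The set $V := \suchthat{\zeta \in X}{\zeta \leq_{\Omega'} \xi}$ is a lowerset of $\Omega'$ — and this is precisely where transitivity of $\leq_{\Omega'}$ enters, since $\eta \leq_{\Omega'} \zeta \leq_{\Omega'} \xi$ forces $\eta \leq_{\Omega'} \xi$ — so by the standing hypothesis $V \in \Lsets{\Omega}$ as well. Now $\xi \in V$ by reflexivity, and $V$ being an $\Omega$-lowerset means it contains the $\Omega$-past of each of its elements; applied to $\xi$, this gives $\downset{\xi} \subseteq V$ (the past taken in $\Omega$). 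Since $\omega \leq_\Omega \xi$ places $\omega$ in that past, we conclude $\omega \in V$, i.e. $\omega \leq_{\Omega'} \xi$, as required.

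The argument is short, and the only genuine subtlety is event-set bookkeeping: the biconditional is false if the two orders are allowed to carry different event sets, since then the top lowerset $|\Omega'| \in \Lsets{\Omega'}$ is not even a subset of $|\Omega|$ and so cannot lie in $\Lsets{\Omega}$, breaking the $\Rightarrow$ direction. I would therefore make the standing assumption $|\Omega| = |\Omega'|$ explicit at the outset. The main conceptual step — more a pivot than an obstacle — is the realisation in the reverse direction that the principal pasts $\downset{\xi}$ form a separating family of lowersets, reducing a statement about the entire relation to a statement about these canonical test sets.
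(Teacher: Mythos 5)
Your proof is correct and follows essentially the same route as the paper's (much terser) argument: the forward direction is the chaining of past-inclusions, and your reverse direction via the principal pasts $\downset{\xi}$ as test lowersets is exactly the paper's appeal to the observation that $\omega \leq \xi$ is equivalent to $\downset{\omega} \subseteq \downset{\xi}$. Your explicit remark that the biconditional requires $|\Omega| = |\Omega'|$ is a worthwhile piece of bookkeeping that the paper leaves implicit (it is working within the hierarchy on a fixed event set).
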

\begin{proof}
See \ref{proof:proposition:caus-ord-lsets-inclusion}
\end{proof}

\begin{corollary}
For any two causal orders $\Omega$ and $\Omega'$ we have:
\begin{equation}
\Lsets{\Omega}\cup\Lsets{\Omega'} \subseteq \Lsets{\Omega \wedge \Omega'}
\end{equation}
\end{corollary}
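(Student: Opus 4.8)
The plan is to derive the corollary directly from Proposition~\ref{proposition:caus-ord-lsets-inclusion}, exploiting the fact that the meet is a lower bound of its arguments in the hierarchy of causal orders. The whole argument is a formal consequence of the contravariance already established, so the work is entirely in assembling the pieces.

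First I would observe that $\Omega \wedge \Omega' \leq \Omega$ and $\Omega \wedge \Omega' \leq \Omega'$. This is immediate from the definition of the meet together with the definition of the inclusion order on causal orders: the meet carries the event set $|\Omega| \cap |\Omega'|$ and the relation $\leq_\Omega \cap \leq_{\Omega'}$, both of which are contained in the corresponding data of $\Omega$ (and of $\Omega'$), so the two required inclusions $|\Omega \wedge \Omega'| \subseteq |\Omega|$ and $\leq_{\Omega \wedge \Omega'}\,\subseteq\,\leq_\Omega$ hold by construction. Equivalently, one may simply invoke the Observation that $\wedge$ is the lattice meet, hence a lower bound of its arguments. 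Next, applying Proposition~\ref{proposition:caus-ord-lsets-inclusion} (which reverses the order) to each of these two inequalities turns $\Omega \wedge \Omega' \leq \Omega$ into $\Lsets{\Omega} \subseteq \Lsets{\Omega \wedge \Omega'}$, and likewise $\Omega \wedge \Omega' \leq \Omega'$ into $\Lsets{\Omega'} \subseteq \Lsets{\Omega \wedge \Omega'}$. Taking the union of these two inclusions yields exactly $\Lsets{\Omega} \cup \Lsets{\Omega'} \subseteq \Lsets{\Omega \wedge \Omega'}$, completing the argument.

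There is essentially no obstacle here: the only content is the elementary observation that the meet sits below each of its arguments, after which the contravariant Proposition does all the work. I would stress, however, what the corollary does \emph{not} assert, since this is the point the surrounding discussion is driving at: the reverse inclusion fails in general, because the union $\Lsets{\Omega} \cup \Lsets{\Omega'}$ need not be closed under intersection and union, whereas $\Lsets{\Omega \wedge \Omega'}$ is by definition a genuine lattice of lowersets. Identifying precisely when equality holds is the substantive question motivating the section, and this corollary isolates the one inclusion that always does.
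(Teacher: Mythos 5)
Your proposal is correct and matches the paper's intent: the corollary is stated without a separate proof precisely because it follows, as you argue, by applying Proposition~\ref{proposition:caus-ord-lsets-inclusion} to the two inequalities $\Omega \wedge \Omega' \leq \Omega$ and $\Omega \wedge \Omega' \leq \Omega'$ and taking the union of the resulting inclusions. Your closing remarks on the failure of the reverse inclusion also agree with the counterexample the paper gives immediately afterwards.
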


Unfortunately, the above inclusion cannot be strengthened to an equality: in general, $\Lsets{\Omega}\cup\Lsets{\Omega'}$ is not even a lattice!
For a counterexample, we consider the following orders on four events and their associated lattices of lowersets.
\begin{center}
    \raisebox{1.40cm}{$\LsetsSym$}
    \raisebox{1.40cm}{$\left(\rule{0cm}{1.35cm}\right.$}
    \hspace{0.0cm}
    \raisebox{0.25cm}{
        \includegraphics[height=2.5cm]{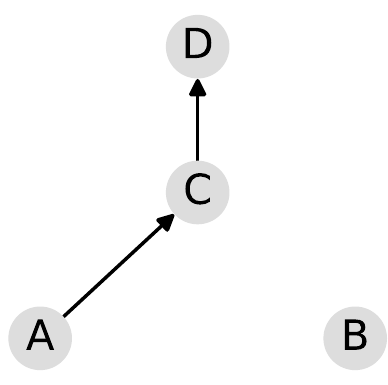}
    }
    \hspace{0.0cm}
    \raisebox{1.40cm}{$\left.\rule{0cm}{1.35cm}\right)$}
    \hspace{0.75cm}
    \raisebox{1.40cm}{$=$}
    \hspace{0.5cm}
    \includegraphics[height=3cm]{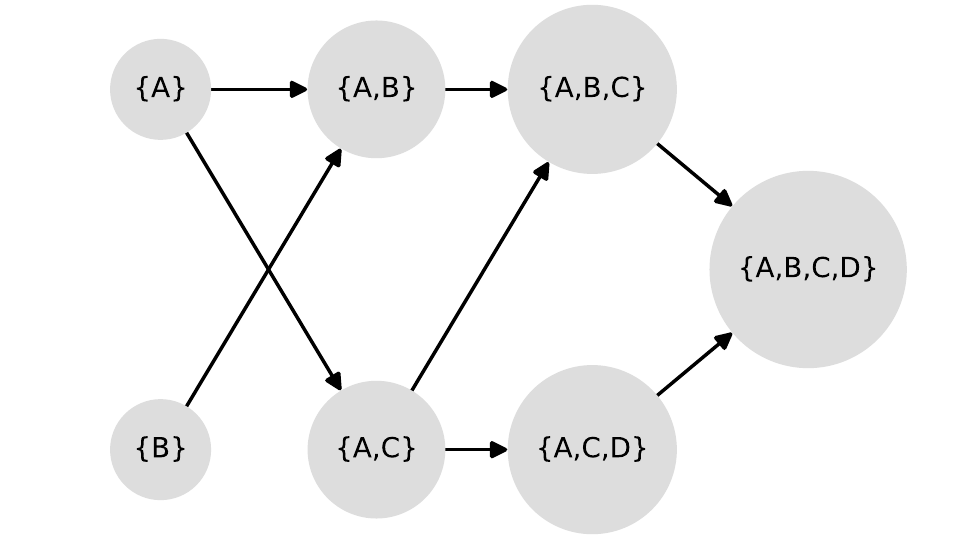}
\end{center}
\begin{center}
    \raisebox{1.40cm}{$\LsetsSym$}
    \raisebox{1.40cm}{$\left(\rule{0cm}{1.35cm}\right.$}
    \hspace{0.0cm}
    \raisebox{0.25cm}{
        \includegraphics[height=2.5cm]{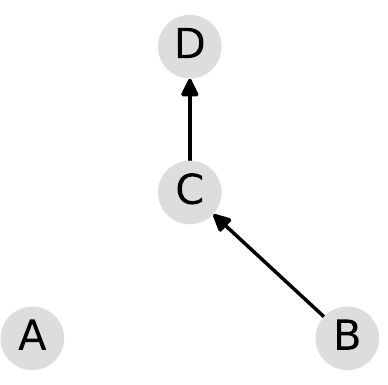}
    }
    \hspace{0.0cm}
    \raisebox{1.40cm}{$\left.\rule{0cm}{1.35cm}\right)$}
    \hspace{0.75cm}
    \raisebox{1.40cm}{$=$}
    \hspace{0.5cm}
    \includegraphics[height=3cm]{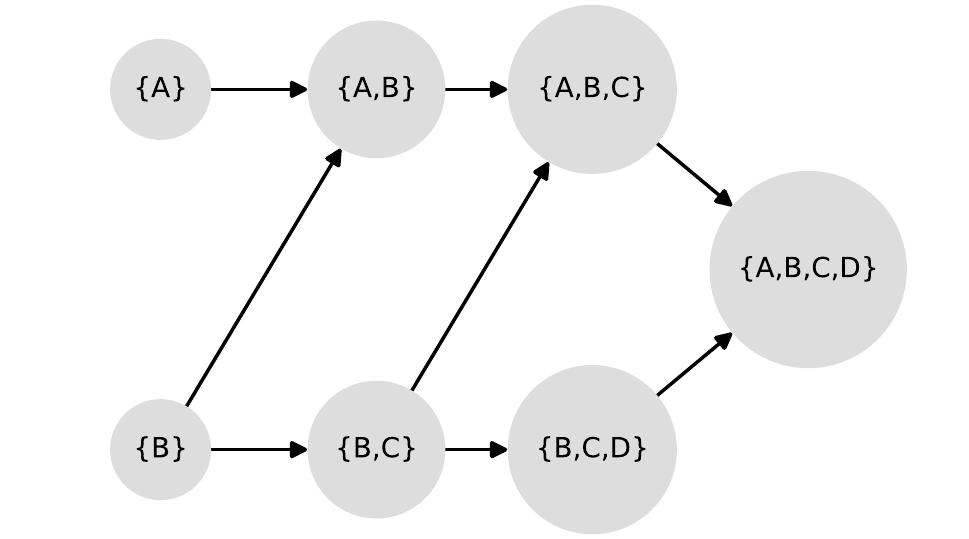}
\end{center}
The union of the corresponding lattices of lowersets is the following set, which is evidently not closed under intersection: the intersections $\{\ev{C}\}=\{\ev{A},\ev{C}\}\cap\{\ev{B},\ev{C}\}$ and $\{\ev{C},\ev{D}\}=\{\ev{A},\ev{C},\ev{D}\}\cap\{\ev{B},\ev{C},\ev{D}\}$ are both conspicuously missing.
\begin{center}
    \raisebox{1.90cm}{$\LsetsSym$}
    \raisebox{1.90cm}{$\left(\rule{0cm}{1.60cm}\right.$}
    \hspace{-0.4cm}
    \raisebox{0.75cm}{
        \includegraphics[height=2.5cm]{svg-inkscape/total-ACD-join-B_svg-tex.pdf}
    }
    \hspace{-0.5cm}
    \raisebox{1.90cm}{$\left.\rule{0cm}{1.60cm}\right)$}
    \hspace{0.0cm}
    \raisebox{1.90cm}{$\bigcup$}
    \hspace{0.0cm}
    \raisebox{1.90cm}{$\LsetsSym$}
    \raisebox{1.90cm}{$\left(\rule{0cm}{1.60cm}\right.$}
    \hspace{-0.4cm}
    \raisebox{0.75cm}{
        \includegraphics[height=2.5cm]{svg-inkscape/total-BCD-join-A_svg-tex.pdf}
    }
    \hspace{-0.5cm}
    \raisebox{1.90cm}{$\left.\rule{0cm}{1.60cm}\right)$}
    \hspace{0.25cm}
    \raisebox{1.90cm}{$=$}
    \hspace{0.0cm}
    \includegraphics[height=4cm]{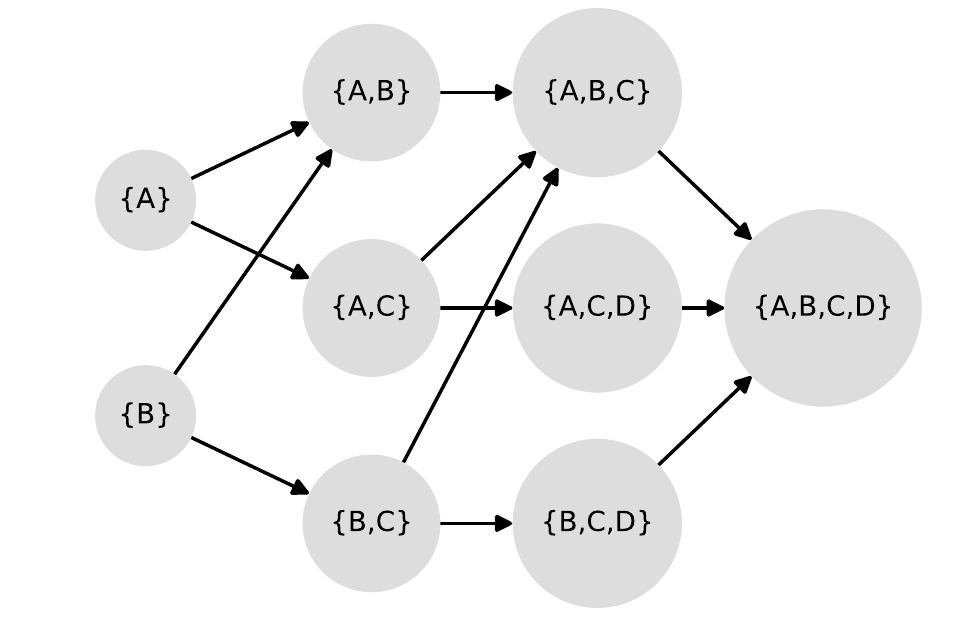}
\end{center}
In particular, the collection of lowersets displayed above is not the lattice of lowersets for the meet of the two orders, which (in this case) is obtained by including the two missing lowerset intersections.
\begin{center}
    \raisebox{2.40cm}{$\LsetsSym$}
    \raisebox{2.40cm}{$\left(\rule{0cm}{1.35cm}\right.$}
    \hspace{0.0cm}
    \raisebox{1.25cm}{
        \includegraphics[height=2.5cm]{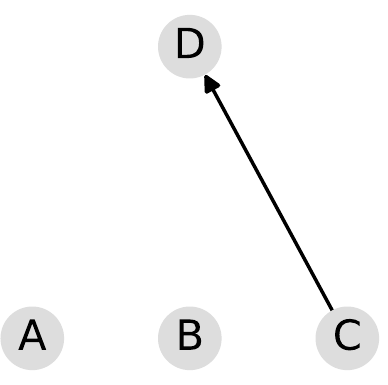}
    }
    \hspace{0.0cm}
    \raisebox{2.40cm}{$\left.\rule{0cm}{1.35cm}\right)$}
    \hspace{0.75cm}
    \raisebox{2.40cm}{$=$}
    \hspace{0.5cm}
    \includegraphics[height=5cm]{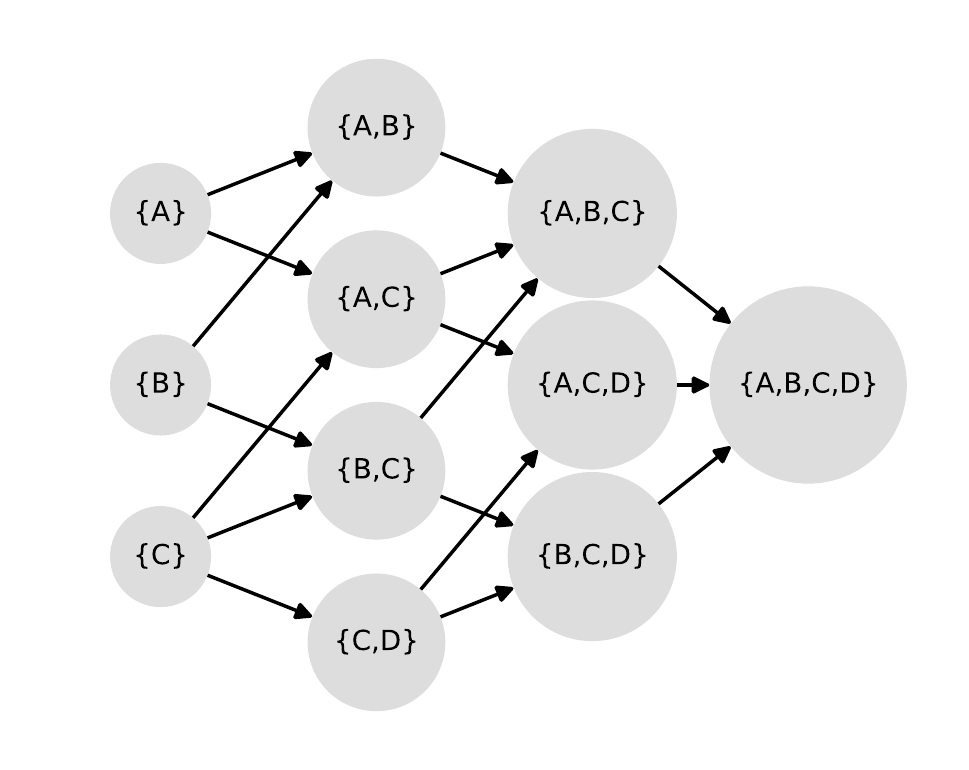}
\end{center}

Dually to the above, we can ask whether the intersection of the lattices of lowersets for two (or more) causal events is the lattice of lowersets for their join, i.e. whether it holds that $\Lsets{\Omega}\cap\Lsets{\Omega'} = \Lsets{\Omega \vee \Omega'}$.
This is more promising: at the very least, the intersection $\Lsets{\Omega}\cap\Lsets{\Omega'}$ is always a lattice!
As a motivating example, we go back to the total orders $\Omega = \ev{A}\rightarrow\ev{B}\rightarrow\ev{C}\rightarrow\ev{D}$ and $\Omega' = \ev{A}\rightarrow\ev{C}\rightarrow\ev{B}\rightarrow\ev{D}$.
\begin{center}
    \raisebox{1.40cm}{$\LsetsSym$}
    \raisebox{1.40cm}{$\left(\rule{0cm}{1.60cm}\right.$}
    \hspace{-0.4cm}
    \raisebox{0.0cm}{
        \includegraphics[height=3cm]{svg-inkscape/total-ABCD_svg-tex.pdf}
    }
    \hspace{-0.5cm}
    \raisebox{1.40cm}{$\left.\rule{0cm}{1.60cm}\right)$}
    \hspace{0.25cm}
    \raisebox{1.40cm}{$\bigcap$}
    \hspace{0.25cm}
    \raisebox{1.40cm}{$\LsetsSym$}
    \raisebox{1.40cm}{$\left(\rule{0cm}{1.60cm}\right.$}
    \hspace{-0.4cm}
    \raisebox{0.0cm}{
        \includegraphics[height=3cm]{svg-inkscape/total-ACBD_svg-tex.pdf}
    }
    \hspace{-0.5cm}
    \raisebox{1.40cm}{$\left.\rule{0cm}{1.60cm}\right)$}
    \hspace{0.25cm}
    \raisebox{1.40cm}{$=$}
    \hspace{0.0cm}
    \includegraphics[height=3cm]{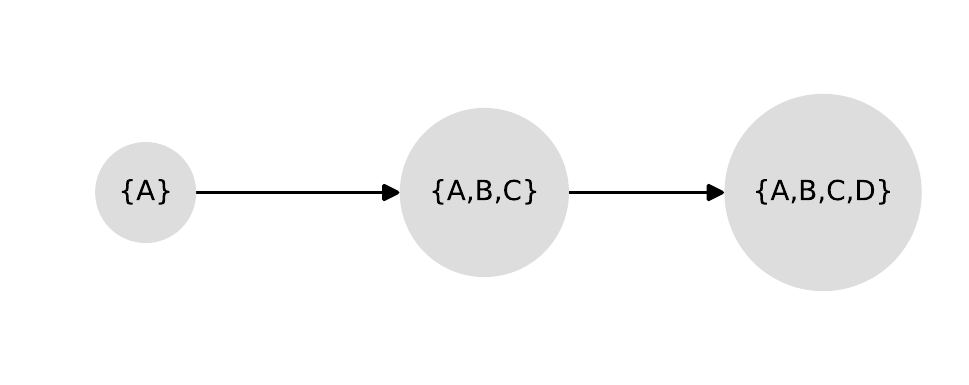}
\end{center}
Indeed, we immediately recognise the intersection as the lattice of lowersets for the join of the two total orders, where the events \ev{B} and \ev{C} are in indefinite causal order:
\begin{center}
    \raisebox{1.40cm}{$\LsetsSym$}
    \raisebox{1.40cm}{$\left(\rule{0cm}{1.60cm}\right.$}
    \hspace{-0.4cm}
    \raisebox{0.0cm}{
        \includegraphics[height=3cm]{svg-inkscape/total-ABCD_svg-tex.pdf}
        \hspace{-0.1cm}
        \raisebox{1.40cm}{$\bigvee$}
        \hspace{-0.1cm}
        \includegraphics[height=3cm]{svg-inkscape/total-ABCD_svg-tex.pdf}
    }
    \hspace{-0.5cm}
    \raisebox{1.40cm}{$\left.\rule{0cm}{1.60cm}\right)$}
    \hspace{0.2cm}
    \raisebox{1.40cm}{$=$}
    \hspace{0.2cm}
    \raisebox{1.40cm}{$\LsetsSym$}
    \raisebox{1.40cm}{$\left(\rule{0cm}{1.35cm}\right.$}
    \hspace{-0.4cm}
    \raisebox{0.25cm}{
        \includegraphics[height=2.5cm]{svg-inkscape/total-AZBCZD_svg-tex.pdf}
    }
    \hspace{-0.5cm}
    \raisebox{1.40cm}{$\left.\rule{0cm}{1.35cm}\right)$}
    \hspace{0.25cm}
    \raisebox{1.40cm}{$=$}
    \hspace{0.0cm}
    \includegraphics[height=3cm]{svg-inkscape/total-AZBCZD-lsets_svg-tex.pdf}
\end{center}
In this dual scenario, the contravariant relation between the hierarchy of causal orders and the associated lattices of lowersets implies that $\Lsets{\Omega}\cap\Lsets{\Omega'} \supseteq \Lsets{\Omega \vee \Omega'}$.
This time, the inclusion can be strengthened to an equality.

\begin{proposition}
\label{proposition:caus-ord-lsets-intersection}
For any two causal orders $\Omega$ and $\Omega'$ we have:
\begin{equation}
    \Lsets{\Omega}\cap\Lsets{\Omega'} = \Lsets{\Omega \vee \Omega'}
\end{equation}
\end{proposition}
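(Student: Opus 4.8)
The plan is to establish the two inclusions separately, treating $\Omega$ and $\Omega'$ as orders on a common event set (as the lattice/hierarchy framing guarantees). The inclusion $\Lsets{\Omega \vee \Omega'} \subseteq \Lsets{\Omega} \cap \Lsets{\Omega'}$ is immediate and is the direction already noted in the surrounding discussion: since the join is an upper bound in the hierarchy we have $\Omega \leq \Omega \vee \Omega'$ and $\Omega' \leq \Omega \vee \Omega'$, and feeding these into the contravariance of Proposition~\ref{proposition:caus-ord-lsets-inclusion} yields $\Lsets{\Omega} \supseteq \Lsets{\Omega \vee \Omega'}$ and $\Lsets{\Omega'} \supseteq \Lsets{\Omega \vee \Omega'}$; intersecting the two gives the claim.

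For the reverse inclusion $\Lsets{\Omega} \cap \Lsets{\Omega'} \subseteq \Lsets{\Omega \vee \Omega'}$ I would argue directly from the explicit chain description of the join relation. Fix a set $U$ which is a lowerset of both $\Omega$ and $\Omega'$; to show $U \in \Lsets{\Omega \vee \Omega'}$ it suffices to verify that $\downset{\omega} \subseteq U$, with the causal past taken in the join, for every $\omega \in U$. So fix $\omega \in U$ and any $\zeta$ with $\zeta \leq_{\Omega \vee \Omega'} \omega$. By the definition of the join there is a finite chain $\zeta = \omega_0, \omega_1, \dots, \omega_m = \omega$ in which each step $\omega_{k-1} \leq \omega_k$ holds in $\Omega$ or in $\Omega'$. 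I would then induct downward from $k = m$ to $k = 0$: the endpoint $\omega_m = \omega$ lies in $U$, and if $\omega_k \in U$ then, whichever of the two orders witnesses the step $\omega_{k-1} \leq \omega_k$, the fact that $U$ is a lowerset of that order forces $\omega_{k-1} \in U$. Hence $\zeta = \omega_0 \in U$, as required.

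The proof is short, and the only real subtlety lies in the structure of the join: a lowerset of $\Omega \vee \Omega'$ must be closed not merely under single $\Omega$-steps or single $\Omega'$-steps but under arbitrary alternating chains of such steps, since $\leq_{\Omega \vee \Omega'}$ is by definition the transitive closure of the union of the two relations. The induction above is precisely the observation that closure under each kind of single step propagates along any such chain, so the join imposes no genuinely new closure condition beyond the two separate ones. Two points are worth flagging explicitly: first, the argument relies on $\Omega$ and $\Omega'$ sharing an event set (otherwise the identity already fails, as one sees by comparing $\ev{A} \rightarrow \ev{B}$ with $\ev{B} \rightarrow \ev{C}$ on overlapping but distinct event sets); second, the same downward induction extends verbatim to a finite family $(\Omega_j)_j$, giving $\bigcap_j \Lsets{\Omega_j} = \Lsets{\bigvee_j \Omega_j}$ with no extra work.
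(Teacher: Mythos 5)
Your proof is correct and follows essentially the same route as the paper's: the easy inclusion $\Lsets{\Omega \vee \Omega'} \subseteq \Lsets{\Omega}\cap\Lsets{\Omega'}$ via the contravariance of Proposition~\ref{proposition:caus-ord-lsets-inclusion} (which the paper handles in the discussion preceding the statement), and the reverse inclusion by unfolding the join as a transitive closure and inducting backwards along the witnessing chain, exactly as in the paper's proof. Your added remarks on the common-event-set assumption and the extension to finite families are sound but do not change the argument.
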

\begin{proof}
See \ref{proof:proposition:caus-ord-lsets-intersection}
\end{proof}
The above result provides operational meaning to joins of causal orders: the causal constraints imposed by the join order are exactly the constraints common to all causal orders involved.

\newpage
\subsection{Proofs for Section \ref{section:causal-orders}}

\subsubsection{Proof of Proposition \ref{proposition:caus-ord-lsets-inclusion}}
\label{proof:proposition:caus-ord-lsets-inclusion}
\begin{proof}
Follows immediately from the definition of $\Omega \leq \Omega'$ ($\omega \leq_{\Omega} \xi$ implies $\omega \leq_{\Omega'} \xi$), the definition of lowersets (for all $\omega \in U$ we must have $\downset{\omega} \subseteq U$), and the observation that $\omega \leq \xi$ is equivalent to $\downset{\omega} \subseteq \downset{\xi}$.
\end{proof}

\subsubsection{Proof of Proposition \ref{proposition:caus-ord-lsets-intersection}}
\label{proof:proposition:caus-ord-lsets-intersection}
\begin{proof}
Any $U \in \Lsets{\Omega}\cap\Lsets{\Omega'}$ is a lowerset for both $\Omega$ and $\Omega'$: if $\xi \in U$, then either one of $\omega \leq_{\Omega} \xi$ or $\omega \leq_{\Omega'} \xi$ is enough to imply $\omega \in U$.
It is enough to show that $U \in \Lsets{\Omega \vee \Omega'}$.
Take a generic $\xi \in U$, and a generic $\omega$ such that $\omega \leq_{\Omega \vee \Omega'} \xi$: by definition of the join order, there is a sequence of events $(\omega_{k})_{k=0}^{m}$ such that $\omega_0 = \omega$, $\omega_m = \xi$ and for all $k=1,...,m$ either $\omega_{k-1} \leq_{\Omega} \omega_{k}$ or $\omega_{k-1} \leq_{\Omega'} \omega_{k}$.
We follow the sequence backwards, starting from $\omega_m = \xi$: for each $k=m,...,1$, $\omega_{k} \in U$ implies $\omega_{k-1} \in U$.
Hence $\omega = \omega_0 \in U$, so that $U$ is a lowerset for $\Omega \vee \Omega'$.
\end{proof}

\section{Spaces of input histories}
\label{section:spaces}

This work is concerned with the causal structure of a certain class of experiments or protocols, where events correspond to the local operation of black-box devices.
At each event, an input to the device is freely chosen from a finite input set, in response to which the device produces (probabilistically) an output in a finite output set.
The ensuing probability distribution on joint outputs for all devices, conditional on joint inputs for all devices, forms the basis of our causal analysis.
In the most general case, no causal constraints are given on the events.

When we say that the devices are operated locally at each event, we mean that no information about the other events is explicitly used in the operation: every dependence on the inputs and outputs at other events must be entirely mediated by the causal structure.
If event \ev{A} causally succeeds event \ev{B}, for example, then the output at \ev{A} is allowed depend on the input and output at \ev{B}: the devices being operated at the two events are black-box, and it is causally possible for one of them to signal the other.
However, the input at \ev{A} is still freely chosen, regardless of what happened at \ev{B}, and the input/output sets for the device at \ev{A} are fixed beforehand.
In the absence of causal constraints, it is therefore possible for the output at each event to arbitrarily depend on inputs at all events, and for the outputs at any set of events to be correlated.
As a consequence, the only conditional probability distribution that is well-defined in general is one on joint outputs for all events, conditional on joint inputs for all events.

We like to imagine that, in practice, such probability distribution could be inferred by performing experiments multiple times, subject to a guarantee that the (otherwise black-box) devices behave identically at each iteration---without memory, if iterations are causally sequential---and that the (otherwise unknown, possibly indefinite) causal structure on events is the same at each iteration.
Alternatively---as is the case for all examples in this work---the probability distribution could be derived from a theoretical model, with limited or no specification of causal structure.
Whatever the case may be, this work is not concerned with how such probability distributions are obtained: they are a purely mathematical given, decoupled from any practicalities.
We partition our causal study of such protocols and experiments into three distinct concerns:
\begin{itemize}
    \item The ``spaces of input histories'', presented in this work, describe the combinatorial structure of the sequences of local inputs that can determine the local output at each event.
    \item The ``empirical models'', presented in the companion work ``The Topology of Causality'' \cite{gogioso2022topology}, assign probability distributions to joint outputs over the events, conditional to joint inputs, in a way compatible with causal constraints.
    \item The ``causal polytopes'', presented in the companion work ``The Geometry of Causality'' \cite{gogioso2022geometry}, provide a geometric perspective which allows for the numerical calculation of interesting quantities, such as the fraction of an empirical jointly explainable by any given family of causal orders.
\end{itemize}
Spaces of input histories provide the ``causal canvas'' upon which empirical models are specified: they have a simple definition and a rich taxonomy, the exploration of which will fill the next three dozen pages.

\subsection{Operational Scenarios}
\label{subsection:spaces-scenarios}

\begin{definition}
An \emph{operational scenario} $(E, \underline{I}, \underline{O})$ specifies a finite non-empty set $E$ of \emph{events}, a finite non-empty set $I_\omega$ of \emph{inputs} for each event $\omega \in E$, and a finite non-empty set $O_\omega$ of \emph{outputs} for each event $\omega \in E$; we write $\underline{I} = (I_\omega)_{\omega \in E}$ and $\underline{O} = (O_\omega)_{\omega \in E}$.
The set of \emph{joint inputs} for all events is defined by:
\[
    \prod_{\omega \in E}I_\omega
    =
    \suchthat{(i_\omega)_{\omega \in E}}{i_\omega \in I_\omega}
\]
Similarly, the set of \emph{joint outputs} for all events is defined by:
\[
    \prod_{\omega \in E}O_\omega
    =
    \suchthat{(o_\omega)_{\omega \in E}}{i_\omega \in O_\omega}
\]
\end{definition}

It is sometimes convenient for the input sets $I_\omega = I$ and outputs sets $O_\omega = O$ to be the same for all events, in which case we have $\prod_{\omega \in E}I_\omega = I^E$ and $\prod_{\omega \in E}O_\omega = O^E$.
When $E = |\Omega|$ is the set of events underlying a given causal order $\Omega$, we will slightly abuse notation and write $(\Omega, \underline{I}, \underline{O})$ for the operational scenario, $\underline{I} = (I_\omega)_{\omega \in \Omega}$ for the family of input sets, and $\underline{O} = (O_\omega)_{\omega \in \Omega}$ for the family of output sets.

\begin{remark}
It is also possible to consider a more general definition in which the output set at each event is allowed to be \emph{dependent} on the input at the same event, so that $ \underline{O} = (O_{\omega, i})_{\omega \in E, i \in I_\omega}$.
Aside from providing additional type-theoretic sophistication to our framework, dependent output sets have an important practical effect in reducing the dimensionality of the embedding space for causal polytopes.
For sake of simplicity, we will not be using dependent output sets in our main text, but we will at times remark on how our definitions can be extended to include them.
\end{remark}

The full definition of operational scenarios---including both inputs and outputs---is useful to contextualise this work within the context of its companion works \cite{gogioso2022topology,gogioso2022geometry}, this Section is only be concerned with the input side of things.
Specifically, given an operational scenario and a causal order on its events, we will investigate the structure of ``input histories'', the possible assignments of joint inputs to subsets of events upon which individual outputs are allowed to depend.
Input histories are partial functions from events to inputs, so we start our investigation of the former by recapping some fundamental features of the latter.

\subsection{Partial Functions}
\label{subsection:spaces-pfuns}

\begin{definition}
Given a family $\underline{Y} = (Y_x)_{x \in X}$ of sets, the \emph{partial functions} $\PFun{\underline{Y}}$ on $\underline{Y}$ are defined to be all possible functions $f$ having subsets $D \subseteq X$ as their domain $\dom{f} := D$ and such that $f(x) \in Y_x$ for all $x \in D$.
\begin{equation}
    \PFun{\underline{Y}}
    :=
    \bigcup_{D \subseteq X}
    \prod_{x \in D}
    Y_x
\end{equation}
Partial functions are partially ordered by restriction:
\begin{equation}
    f \leq g
    \hspace{2mm}\stackrel{def}{\Leftrightarrow}\hspace{2mm}
    \dom{f} \subseteq \dom{g}
    \text{ and }
    \restrict{g}{\dom{f}} = f
\end{equation}
\end{definition}

In the special case where the set $Y_x = Y$ is the same for all $x \in X$, partial functions are exactly the functions $D \rightarrow Y$ for all $D \subset X$.
We can also interpret $\underline{Y} = (Y_x)_{x \in X}$ as a set-valued function $x \mapsto Y_x$, so that $\dom{\underline{Y}} = X$ and $\Subsets{\dom{\underline{Y}}}$ is the set of subsets of $X$, partially ordered by inclusion. With this notation, the domain function can be written as $\domSym:\PFun{\underline{Y}} \rightarrow \Subsets{\dom{\underline{Y}}}$, and it is always order-preserving:
\[
    f \leq g \Rightarrow \dom{f} \subseteq \dom{g}
\]

\begin{observation}
Under their restriction order, partial functions form a lower semilattice, with the empty function $\emptyset$ as its minimum and meets given by:
\begin{equation}
\begin{array}{rcl}
    \dom{f \wedge g}
    &=&
    \suchthat{x \in \dom{f}\cap\dom{g}}{f(x) = g(x)}
    \\
    f \wedge g
    &=&
    \restrict{f}{\dom{f \wedge g}}
    =
    \restrict{g}{\dom{f \wedge g}}
\end{array}
\end{equation}
\end{observation}

The domain function $\domSym:\PFun{\underline{Y}} \rightarrow \Subsets{\dom{\underline{Y}}}$ is not meet-preserving, but the following inclusion holds by definition of the meet:
\[
    \dom{f \wedge g}
    \subseteq
    \dom{f} \cap \dom{g}
\]

\begin{definition}
\label{definition:compatible-pfun}
We say that $f$ and $g$ are \emph{compatible} when the inclusion above is an equality:
\begin{equation}
    \text{$f$ and $g$ compatible}
    \hspace{2mm} \Leftrightarrow \hspace{2mm}
    \dom{f \wedge g}
    =
    \dom{f} \cap \dom{g}
\end{equation}
More generally, we say that a set $\mathcal{F} \subseteq \PFun{\underline{Y}}$ of partial functions is \emph{compatible} if $f$ and $g$ are compatible for all $f, g \in \mathcal{F}$.
\end{definition}

\begin{definition}
\label{definition:compatible-join}
The \emph{join} of a set $\mathcal{F}$ of partial functions exists exactly when the set is compatible, in which case it is given by:
\begin{equation}
\label{eq:definition:join}
\begin{array}{rcl}
    \dom{\bigvee \mathcal{F}}
    &=&
    \bigcup\limits_{f \in \mathcal{F}} \dom{f}
    \\
    \bigvee \mathcal{F}
    &=&
    x \mapsto f(x) \text{ for any $f$ such that } x \in \dom{f}
\end{array}
\end{equation}
The \emph{compatible joins} in a set $\mathcal{F}'$ of partial functions are all possible joins $\bigvee\mathcal{F}$ of compatible subsets $\mathcal{F} \subseteq \mathcal{F}'$.
\end{definition}

The definition of $\bigvee \mathcal{F}$ is sound because compatibility of $\mathcal{F}$ means that $f(x) = g(x)$ for all $f, g \in \mathcal{F}$ such that $x \in \dom{f}$ and $x \in \dom{g}$.
For two functions $f,g$, the join $f\vee g$ takes the following explicit form:
\[
\begin{array}{rcll}
    (f \vee g)(x)
    &=&
    f(x) &\hspace{2mm} \text{ if } x \in \dom{f}\backslash\dom{g}
    \\
    (f \vee g)(x)
    &=&
    g(x) &\hspace{2mm} \text{ if } x \in \dom{g}\backslash\dom{f}
    \\
    (f \vee g)(x)
    &=&
    f(x) = g(x) &\hspace{2mm} \text{ if } x \in \dom{f}\cap\dom{g}
\end{array}
\]
In a join, two or more partial functions are effectively ``stitched together'' around their common values, yielding a partial function which extends them all.
The maxima of $\PFun{\underline{Y}}$ are those where this stitching process cannot go any further; they are exactly the total functions, defined on the entirety of $\dom{\underline{Y}}$:
\[
\max\left(\PFun{\underline{Y}}\right)
=
\prod_{x \in \dom{Y}} Y_x
\]

\subsection{Input Histories for Causal Orders}
\label{subsection:spaces-order-induced}

Consider a causal order $\Omega$ and an associated operational scenario $(\Omega, \underline{I}, \underline{O})$; we only need the family of input sets $\underline{I}$ in what follows, but we mention the whole scenario for context.
Because of causality, the output at an event $\xi$ can only depend on choices of inputs for the events $\omega$ in $\downset{\xi}$, the causal past of $\xi$. This observation motivates the following definition.

\begin{definition}
The \emph{input histories} for a given choice of order $\Omega$ and inputs $\underline{I} = (I_\omega)_{\omega \in \Omega}$ are defined to be the partial functions in the following set:
\begin{equation}
    \Hist{\Omega, \underline{I}}
    \;:=\;
    \bigcup_{\xi \in \Omega}
    \prod_{\omega \in \downset{\xi}}
    I_\omega
    \;\;\subseteq\;\;
    \PFun{\underline{I}}    
\end{equation}
Input histories inherit the restriction order of partial functions, and we refer to the partially ordered set $\Hist{\Omega, \underline{I}}$ as a \emph{space of input histories}.
\end{definition}

As a simple concrete example, let $\Omega$ be the total order on 3 events and consider its associated lattice of lowersets $\Lsets{\Omega}$.
In the lattice, the causal pasts of individual events have been colour-coded.
\begin{center}
    \raisebox{1.15cm}{$\LsetsSym$}
    \raisebox{1.15cm}{$\left(\rule{0cm}{1.35cm}\right.$}
    \hspace{-0.3cm}
    \raisebox{0cm}{
        \includegraphics[height=2.5cm]{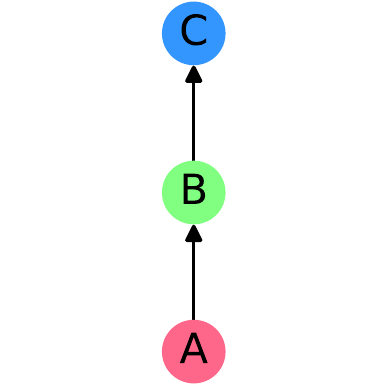}
    }
    \hspace{-0.3cm}
    \raisebox{1.15cm}{$\left.\rule{0cm}{1.35cm}\right)$}
    \hspace{0.5cm}
    \raisebox{1.15cm}{$=$}
    \hspace{0.25cm}
    \includegraphics[height=2.5cm]{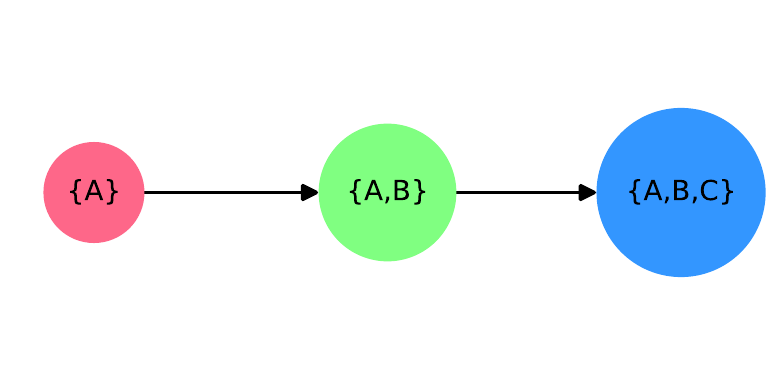}
\end{center}
Below is the Hasse diagram for the space of input histories, consisting of all binary functions on $\evset{A}$, $\evset{A,B}$ and $\evset{A,B,C}$.
\begin{center}
    \includegraphics[height=3cm]{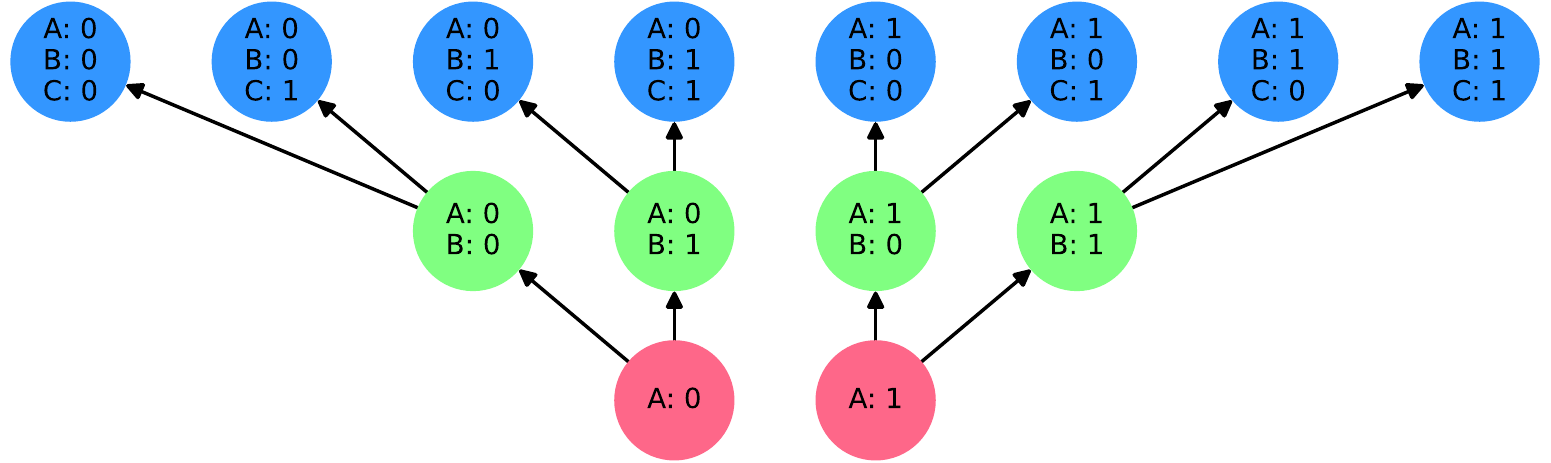}
\end{center}

Input histories are not generally closed under meets, and the only subsets closed under joins are chains (in which case the join is the maximum).
When talking about the meet of two (or more) input histories, we will always mean the meet in $\PFun{\underline{I}}$; similarly, when saying that a subset $\mathcal{F} \subseteq \Hist{\Omega, \underline{I}}$ of input histories is compatible, we will always mean that it is compatible in $\PFun{\underline{I}}$.
As a concrete example of the lack of closure under joins and meets, we consider the following ``M''-shaped causal order on 4 events.
In the lattice, the causal pasts of individual events have been colour-coded: we observe that both the intersection $\evset{A,B}$ and the union $\{\ev{A},\ev{B},\ev{C},\ev{D}\}$ of the causal pasts $\downset{\ev{C}}=\evset{A,B,C}$ and $\downset{\ev{D}}=\evset{A,B,D}$ are not causal pasts of events themselves.
\begin{center}
    \raisebox{2.4cm}{$\LsetsSym$}
    \raisebox{2.4cm}{$\left(\rule{0cm}{1.35cm}\right.$}
    \hspace{-0.3cm}
    \raisebox{1.25cm}{
        \includegraphics[height=2.5cm]{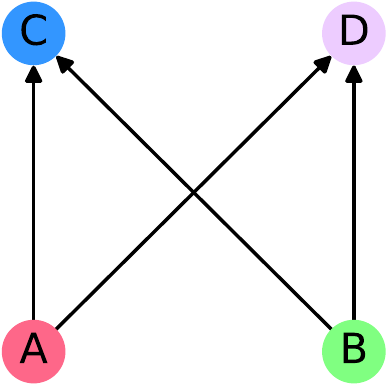}
    }
    \hspace{-0.3cm}
    \raisebox{2.4cm}{$\left.\rule{0cm}{1.35cm}\right)$}
    \hspace{0.25cm}
    \raisebox{2.4cm}{$=$}
    \hspace{-0.25cm}
    \includegraphics[height=5cm]{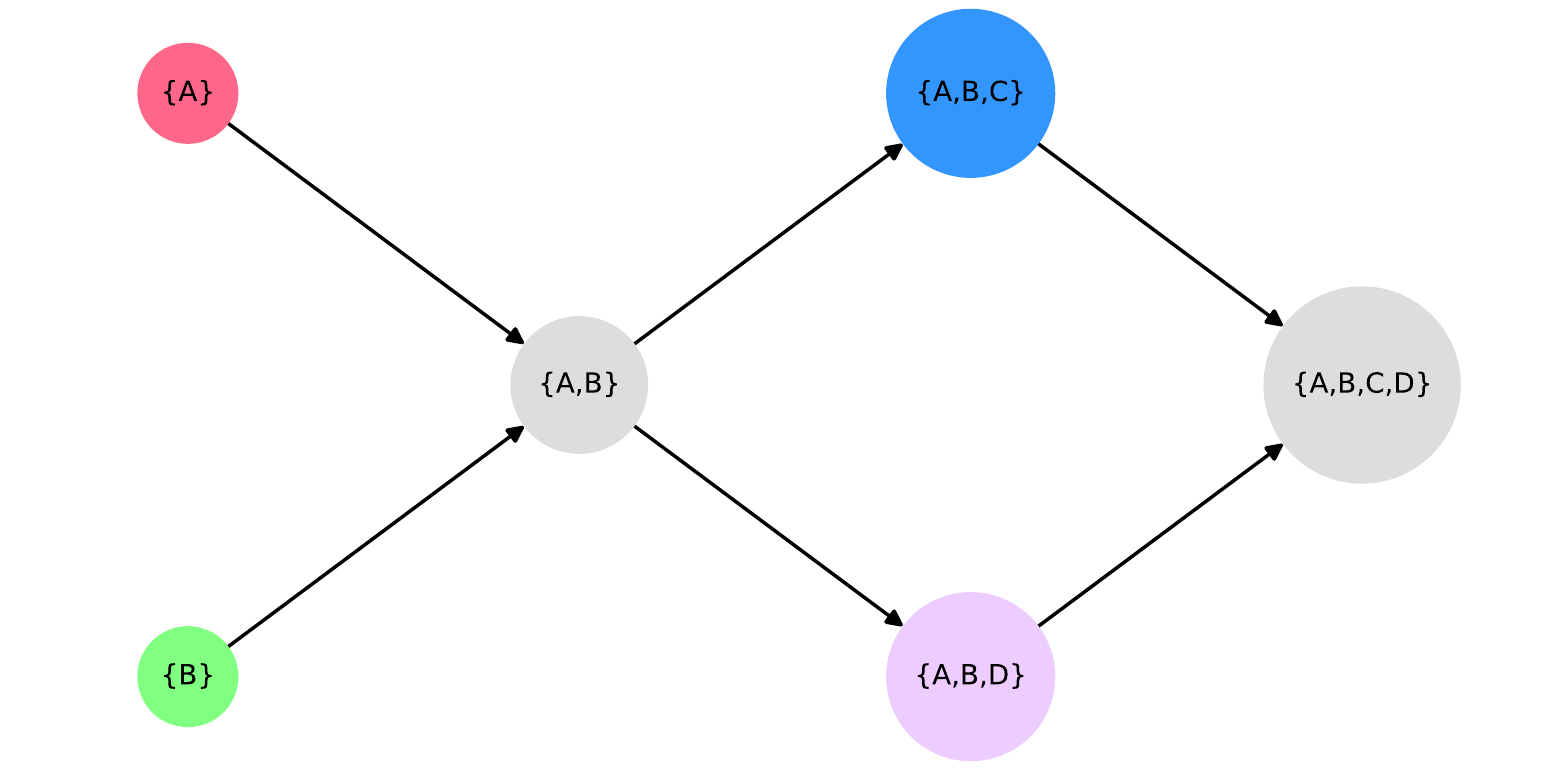}
\end{center}
The associated space of input histories doesn't feature any meets $f \wedge g$ or joins $f \vee g$ for compatible histories $f,g$ with domain $\downset{\ev{C}}$ and $\downset{\ev{D}}$ respectively (we remind the reader that the meets and joins being referred to are those in $\PFun{\underline{I}}$).
\begin{center}
    \includegraphics[height=4cm]{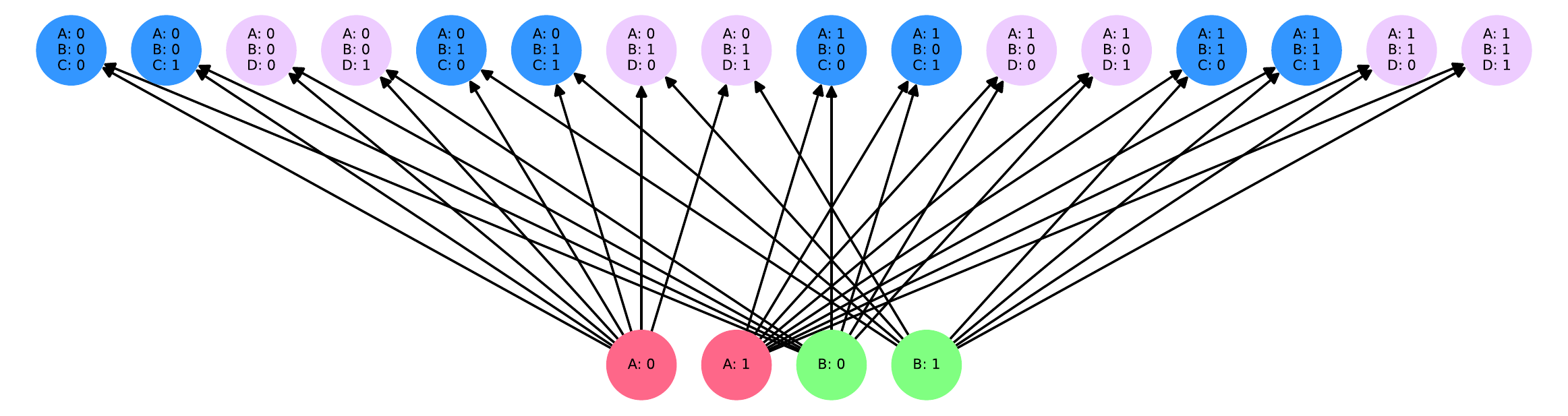}
\end{center}
The space above is also an example where the maxima of the space of input histories differ from those of $\PFun{\underline{I}}$: $\{\ev{A},\ev{B},\ev{C},\ev{D}\}$ is not the causal past of an event, so the total functions in $\PFun{\underline{I}}$ are not input histories.

To overcome the limitations of input histories in terms of meets and compatible joins, we also introduce a notion of ``extended'' input histories, defined on all causal lowersets.

\begin{definition}
The \emph{extended input histories} for a given choice of order $\Omega$ and inputs $\underline{I} = (I_\omega)_{\omega \in \Omega}$ are defined to be the partial functions in the following set:
\begin{equation}
    \ExtHist{\Omega, \underline{I}}
    \;:=\;
    \bigcup_{U \in \Lsets{\Omega}}
    \prod_{\omega \in U}
    I_\omega
    \;\;\subseteq\;\;
    \PFun{\underline{I}}    
\end{equation}
Extended input histories inherit the restriction order of partial functions, and we refer to the partially ordered set $\ExtHist{\Omega, \underline{I}}$ as a \emph{space of extended input histories}.
\end{definition}

The space of extended input histories $\ExtHist{\Omega, \underline{I}}$ contains the space of histories $\Hist{\Omega, \underline{I}}$, so that input histories are a special case of extended input histories.
Because $\Lsets{\Omega}$ is a lattice, all meets and compatible joins of input histories are also extended input histories.
The total functions $\prod_{\omega \in \Omega} I_\omega$ are also all in $\ExtHist{\Omega, \underline{I}}$; in certain circumstances, we will refer to them as the ``maximal extended input histories''.
In the case of total orders, where all lowersets are causal pasts of events, the spaces of input histories and extended input histories always coincide.

The spaces of input histories derived from causal orders work quite well when the orders are definite, but they do not quite capture the full desired gamut of possibilities for indefinite causal orders.
Indeed, consider the following indefinite causal order $\Omega$ on 3 events, and its associated lattice of lowersets $\Lsets{\Omega}$.
\begin{center}
    \raisebox{1.15cm}{$\LsetsSym$}
    \raisebox{1.15cm}{$\left(\rule{0cm}{1.35cm}\right.$}
    \hspace{-0.4cm}
    \raisebox{0cm}{
        \includegraphics[height=2.5cm]{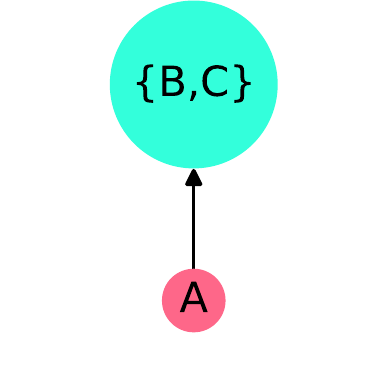}
    }
    \hspace{-0.5cm}
    \raisebox{1.15cm}{$\left.\rule{0cm}{1.35cm}\right)$}
    \hspace{0.5cm}
    \raisebox{1.15cm}{$=$}
    \hspace{0.25cm}
    \raisebox{0.0cm}{
        \includegraphics[height=2.5cm]{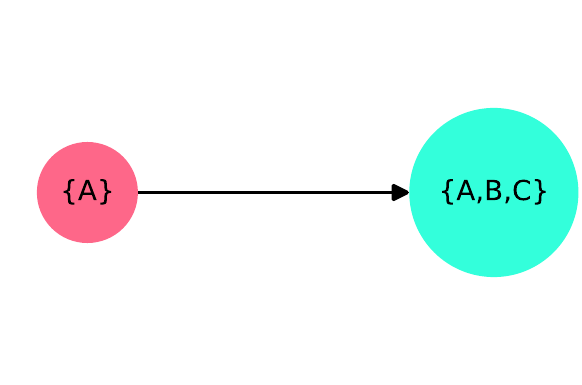}
    }
\end{center}
Because events \ev{B} and \ev{C} are in indefinite causal order, they have the same causal past, and hence they are never separated by input histories.
\begin{center}
    \includegraphics[height=2.5cm]{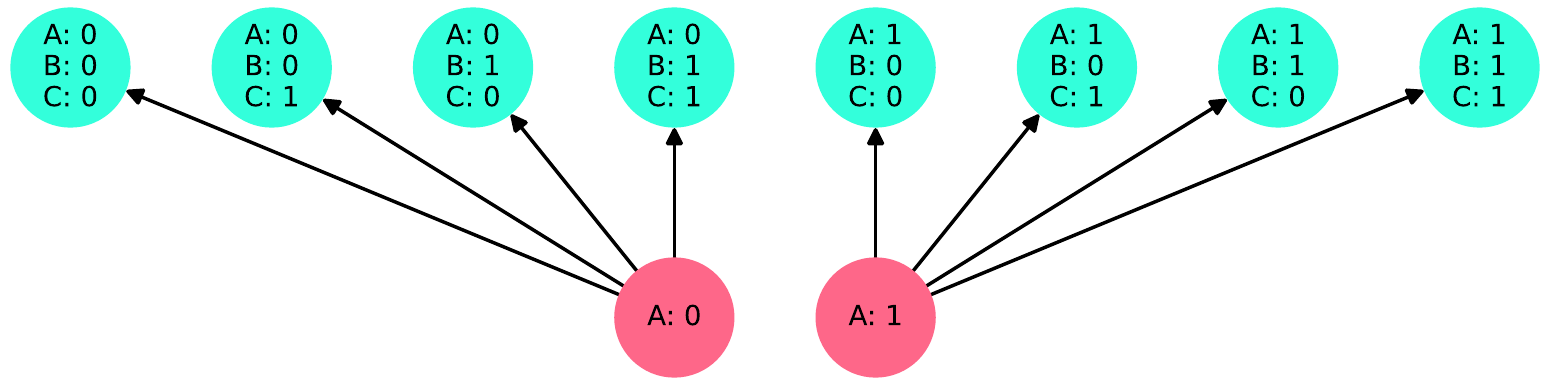}
\end{center}
We will revisit this specific issue later, when talking about ``causal completeness'', but it already prompts the question: can we extend our spaces of input histories to capture additional information about causal indefiniteness?

For example, we might want to consider a ``3-party causal switch'' space, in which event \ev{A} controls the order of events \ev{B} and \ev{C} with its input, e.g. by setting $\ev{B} < \ev{C}$ when the input is 0 and $\ev{C} < \ev{B}$ when the input is 1.
In such a setting, the output at \ev{B} is fully determined by the inputs at events \ev{A} and \ev{B} when the input at \ev{A} is 0, but the input at event \ev{C} is also needed---in the general case---when the input at \ev{A} is 1.
Taking this observation---and the analogous one about the output at \ev{C}---we obtain our desired space of input histories.
\begin{center}
    \includegraphics[height=3cm]{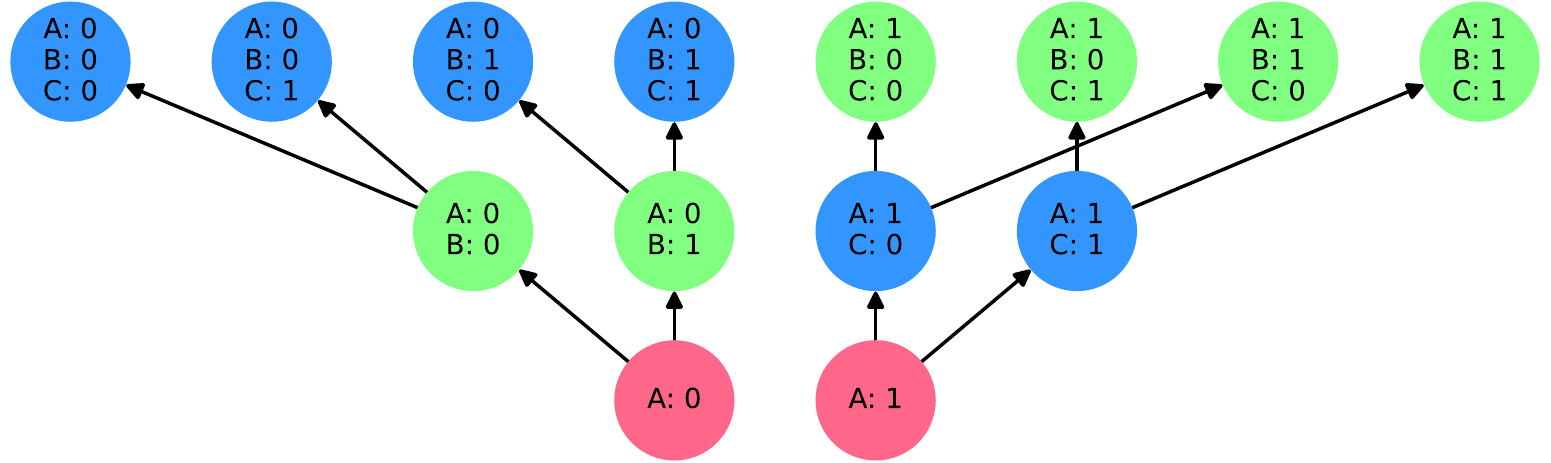}
\end{center}
The space above is a subset of $\PFun{\underline{I}}$, but does not arise from $\Hist{\Omega, \underline{I}}$ for any causal order $\Omega$: the order between \ev{B} and \ev{C} is indefinite overall, but the input histories are able to discriminate between the two events based on the input at \ev{A} (colour coding of input histories reflects this fact).
Clearly, we need a more general definition of the possible spaces of input histories for an operational scenario.

\subsection{Spaces of Input Histories}
\label{subsection:spaces-definition}

It might be tempting to define spaces of input histories as generic subsets of $\PFun{\underline{I}}$, but this definition turns out to be too broad: for one thing, it blurs the distinction between input histories (the data upon which outputs at events depend) and extended input histories (obtained by stitching together compatible input histories).
Instead, we observe that the spaces of input histories $\Hist{\Omega, \underline{I}}$ we defined for causal orders have a couple of special properties: they are $\vee$-prime (read ``join-prime'') and they satisfy the ``free-choice condition''.

\begin{definition}
A subset $\Theta \subseteq \PFun{\underline{I}}$ is said to be \emph{$\vee$-prime} (read ``join-prime'') if no $h \in \Theta$ can be written as the compatible join $h = \bigvee \mathcal{F}$ of a subset $\mathcal{F} \subseteq \Theta$ such that $h \notin \mathcal{F}$:
\[
\left(
    \mathcal{F} \subseteq \Theta \text{ compatible and }
    \bigvee \mathcal{F} \in \Theta
\right)
\Rightarrow \bigvee \mathcal{F} \in \mathcal{F}
\]
Dually, a subset $W \subseteq \PFun{\underline{I}}$ is said to be \emph{$\vee$-closed} (read ``join-closed'') if for every pair of compatible $h, k \in W$ the join $h \vee k$ is itself in $W$.
This implies that, more generally:
\[
\mathcal{F} \subseteq \Theta \text{ compatible }
\Rightarrow
\bigvee \mathcal{F} \in \Theta
\]
\end{definition}

\begin{proposition}
\label{proposition:hist-construction-is-join-prime-subset}
For any causal order $\Omega$, $\Hist{\Omega, \underline{I}} \subseteq \PFun{\underline{I}}$ is always a $\vee$-prime subset of $\PFun{\underline{I}}$.
\end{proposition}
\begin{proof}
See \ref{proof:proposition:hist-construction-is-join-prime-subset}
\end{proof}

The $\vee$-primality condition forms the basis for our generalisation of the notion of space of input histories: it gives a ``normal form'' to spaces, making their correspondence with causal constraints exact, and it removes ``redundant extended input histories'', so that causal functions can be defined freely.

\begin{definition}
A \emph{space of input histories} is a finite set $\Theta$ of partial functions which is $\vee$-prime.
The associated \emph{event set} $\Events{\Theta}$ and family of \emph{input sets} $\underline{\Inputs{\Theta}} = (\Inputs{\Theta}_\omega)_{\omega \in \Events{\Theta}}$ are defined as follows:
\begin{equation}
    \begin{array}{lcl}
        \Events{\Theta}
        &:=&
        \bigcup_{h \in \Theta} \dom{h}
        \\
        \Inputs{\Theta}_\omega
        &:=&
        \suchthat{h_\omega}{h \in \Theta, \omega \in \dom{h}}
    \end{array}
\end{equation}
We have $\Theta \subseteq \PFun{\underline{\Inputs{\Theta}}}$ and the space $\Theta$ is equipped with the partial order inherited from $\PFun{\underline{\Inputs{\Theta}}}$.
The \emph{space of extended input histories} $\Ext{\Theta}$ associated to $\Theta$ is defined to be its $\vee$-closure:
\begin{equation}
    \Ext{\Theta}
    :=
    \suchthat{\bigvee \mathcal{F}}{\mathcal{F} \subseteq \Theta \text{ compatible}}
\end{equation}
We have $\Ext{\Theta} \subseteq \PFun{\underline{\Inputs{\Theta}}}$ and the space $\Ext{\Theta}$ is equipped with the partial order inherited from $\PFun{\underline{\Inputs{\Theta}}}$.
\end{definition}

\begin{observation}
Given any subset $W \subseteq \PFun{\underline{I}}$, we can obtain a space of input histories by taking its $\vee$-prime elements:
\begin{equation}
    \Prime{W}
    :=
    \suchthat{w \in W}{
    \forall \mathcal{F} \subseteq W \text{ compatible}.\,
    w = \bigvee\mathcal{F} \Rightarrow w \in \mathcal{F}
    }
\end{equation}
In particular, we can recover a space of input histories from its corresponding space of extended input histories:
\[
\Prime{\Ext{\Theta}}
=
\Theta
\]
Conversely, any $\vee$-closed subset $W \subseteq \PFun{\underline{I}}$ is a space of extended input histories:
\[
W \text{ $\vee$-closed}
\Rightarrow
\Ext{\Prime{W}} = W
\]
\end{observation}

\begin{observation}
For a space of input histories $\Theta = \Hist{\Omega, \underline{I}}$ induced by given causal order $\Omega$ and input sets $\underline{I}$, we have $\Events{\Theta}=|\Omega|$, $\underline{\Inputs{\Theta}}=\underline{I}$ and $\Ext{\Hist{\Omega, \underline{I}}} = \ExtHist{\Omega, \underline{I}}$
\end{observation}

\begin{observation}
Because the extended input histories are all obtained as compatible joins of input histories, the event set and input set could have been equivalently defined on all extended input histories:
\begin{equation}
    \begin{array}{lcl}
        \Events{\Theta}
        &=&
        \bigcup_{h \in \Ext{\Theta}} \dom{h}
        \\
        \Inputs{\Theta}_\omega
        &=&
        \suchthat{h_\omega}{h \in \Ext{\Theta}, \omega \in \dom{h}}
    \end{array}
\end{equation}
\end{observation}

Below is an example of a space of input histories $\Theta$ (on the left) together with its corresponding space of extended input histories $\Ext{\Theta}$ (on the right).
Input histories have been colour-coded by the events whose output they refer to (more on this later), in both diagrams: grey coloured extended input histories on the right are those which are not input histories (i.e. they arise by join).
This space is a variation on the total order $\total{\ev{A}, \ev{B}, \ev{C}}$, where input 0 at event $\ev{B}$ causally disconnects $\ev{B}$ from $\ev{A}$ and input 0 at event $\ev{C}$ causally disconnects $\ev{C}$ from both $\ev{B}$ and $\ev{A}$.
\begin{center}
    \begin{tabular}{cc}
    \includegraphics[height=3.5cm]{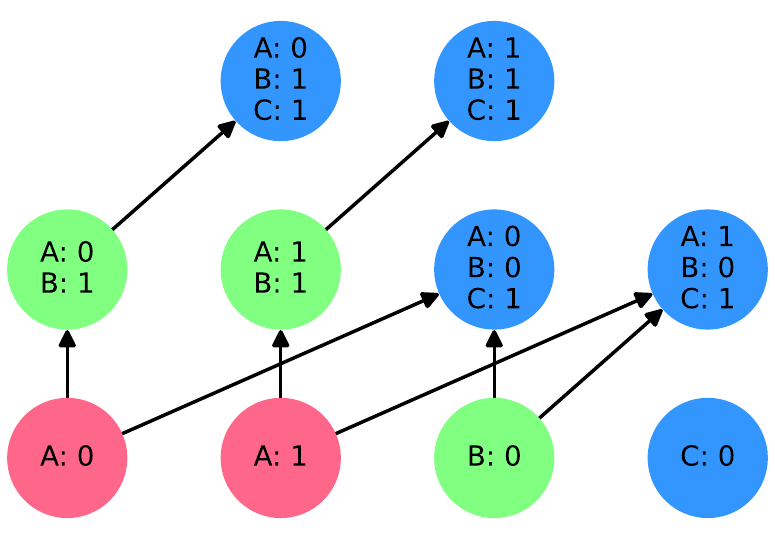}
    &
    \includegraphics[height=3.5cm]{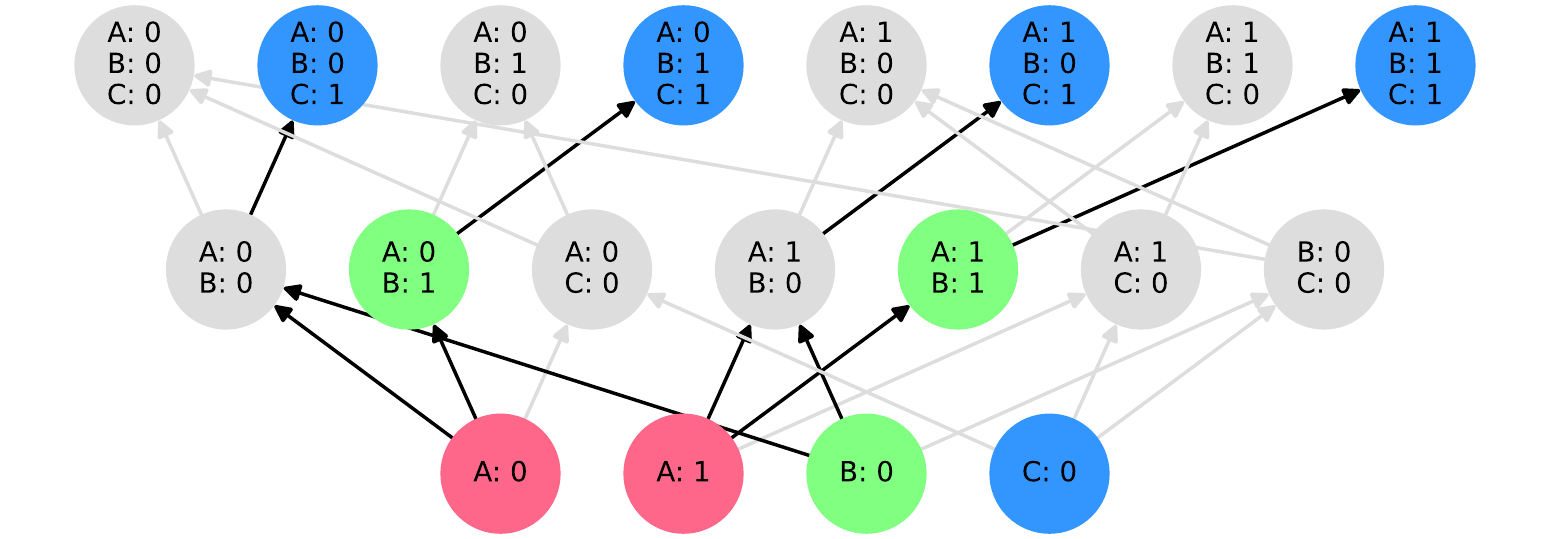}
    \\
    $\Theta$
    &
    $\Ext{\Theta}$
    \end{tabular}
\end{center}
To understand the $\vee$-primality and $\vee$-closure conditions, we study the Hasse diagram on the right, depicting the space of extended input histories $\Ext{\Theta}$.
With regards to the $\vee$-closure condition, note how all compatible input histories have some common successor in the graph: for some of them, such as \hist{A/1} and \hist{B/0}, this is an immediate common successor, namely \hist{A/1,B/0}; for others, such as \hist{B/0} and \hist{C/0}, this is a common successor further up the graph, e.g. \hist{A/1,B/0,C/0}.
Extended input histories without a join are always incompatible ones, such as \hist{A/0,C/0} and \hist{A/1,C/0} (differing in value on a common event $\ev{A}$).
With regards to the $\vee$-primality condition, note how all input histories (coloured nodes) have at most one predecessor in the graph: they cannot arise as joins of extended input histories below them.
This is a general fact: extended input histories arise as (possibly trivial) joins of input histories and they are $\vee$-closed, so the input histories can never have more than one predecessor in $\Ext{\Theta}$.

In our description of operational scenarios we stated that inputs at events are ``freely chosen'', i.e. without any local or global constraint.
By stitching together input histories, it must therefore be possible to obtain all possible combinations of joint input values over all events: in other words, the ``maximal extended input histories''---the total functions in $\PFun{\underline{\Inputs{\Theta}}}$---must all arise by joins of compatible subsets of the input histories in a space.

\begin{definition}
A space of input histories $\Theta$ is said to satisfy the \emph{free-choice condition} if:
\[
    \max\Ext{\Theta} = \prod_{\omega \in \Events{\Theta}} \Inputs{\Theta}_\omega
\]
In spaces satisfying the free-choice condition, we refer to the histories in $\prod_{\omega \in \Events{\Theta}} \Inputs{\Theta}_\omega$ as the \emph{maximal extended input histories}.
\end{definition}

\begin{proposition}
\label{proposition:order-induced-space-free-choice}
The spaces of input histories $\Hist{\Omega, \underline{I}}$ constructed from causal orders always satisfy the free-choice condition.
\end{proposition}
\begin{proof}
See \ref{proof:proposition:order-induced-space-free-choice}
\end{proof}

\begin{observation}
The ``minimal'' extended input histories $k \in \Ext{\Theta}$ are those without sub-histories, i.e. those such:
\[
\forall k' \in \Ext{\Theta}. k' \leq k \Rightarrow k' = k
\]
Such $k$ are necessarily $\vee$-prime, so we refer to them as the \emph{minimal input histories}.
\end{observation}

Recall now that causal orders form a hierarchy (a lattice) when ordered by inclusion.
We would like this hierarchy to generalise from causal orders $\Omega$ to their spaces of input histories $\Hist{\Omega, \underline{I}}$, and then to all spaces of input histories, including ones that don't arise from orders.
Unfortunately, this is not as simple as ordering the spaces themselves by inclusion: $\Omega \leq \Xi$ does not in general imply an inclusion relationship between $\Hist{\Omega, \underline{I}}$ and $\Hist{\Xi, \underline{I}}$.
Indeed, consider the following fork $\Omega$ and total order $\Xi$ on 3 events.
\begin{center}
\begin{tabular}{ccc}
    \raisebox{0.25cm}{
        \includegraphics[height=2cm]{svg-inkscape/vee-A-BC_svg-tex.pdf}
    }
    &
    \hspace{1cm}
    \raisebox{1.15cm}{$\leq$}
    \hspace{1cm}
    &
    \raisebox{0cm}{
        \includegraphics[height=2.5cm]{svg-inkscape/total-ABC_svg-tex.pdf}
    }
    \\
    $\Theta$
    &&
    $\Xi$
\end{tabular}
\end{center}
The corresponding spaces of input histories $\Hist{\Omega, \underline{I}}$ and $\Hist{\Xi, \underline{I}}$ are not related by inclusion in either direction.
To make this evident, no colour-coding is used for the input histories in these diagrams: instead, the common input histories have been highlighted with a darker colour.
\begin{center}
\begin{tabular}{ccc}
    \raisebox{0cm}{
        \includegraphics[height=2cm]{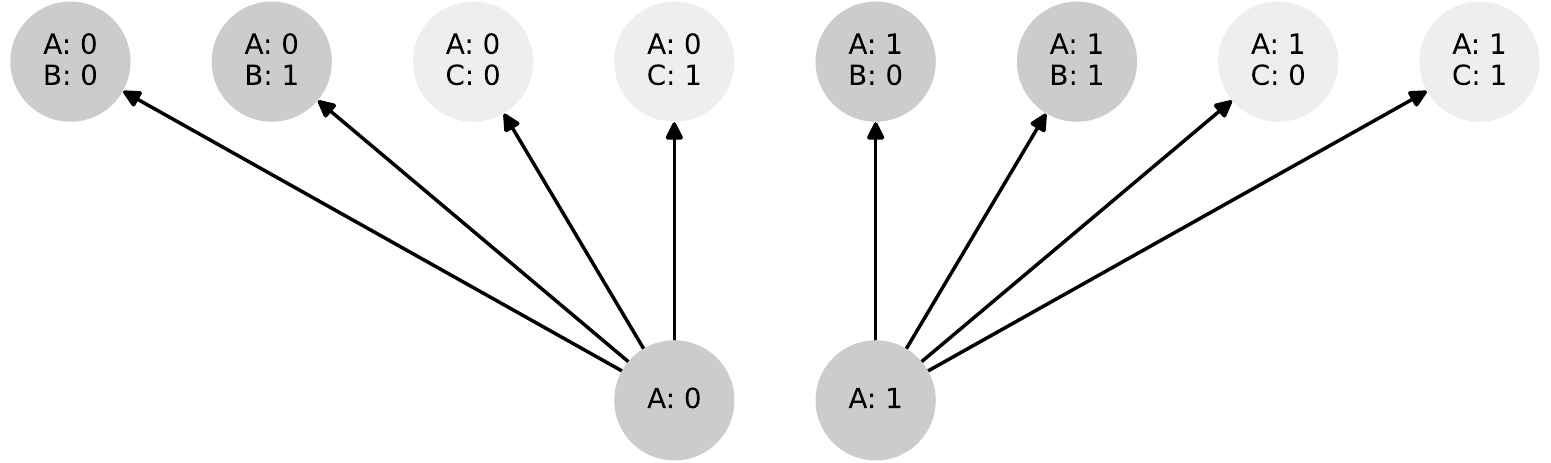}
    }
    &
    \hspace{0.1cm}
    \raisebox{0.9cm}{$\left/\rule{0cm}{0.35cm}\right.\hspace{-4.22mm}\supsub$}
    \hspace{0.1cm}
    &
    \raisebox{0cm}{
        \includegraphics[height=2cm]{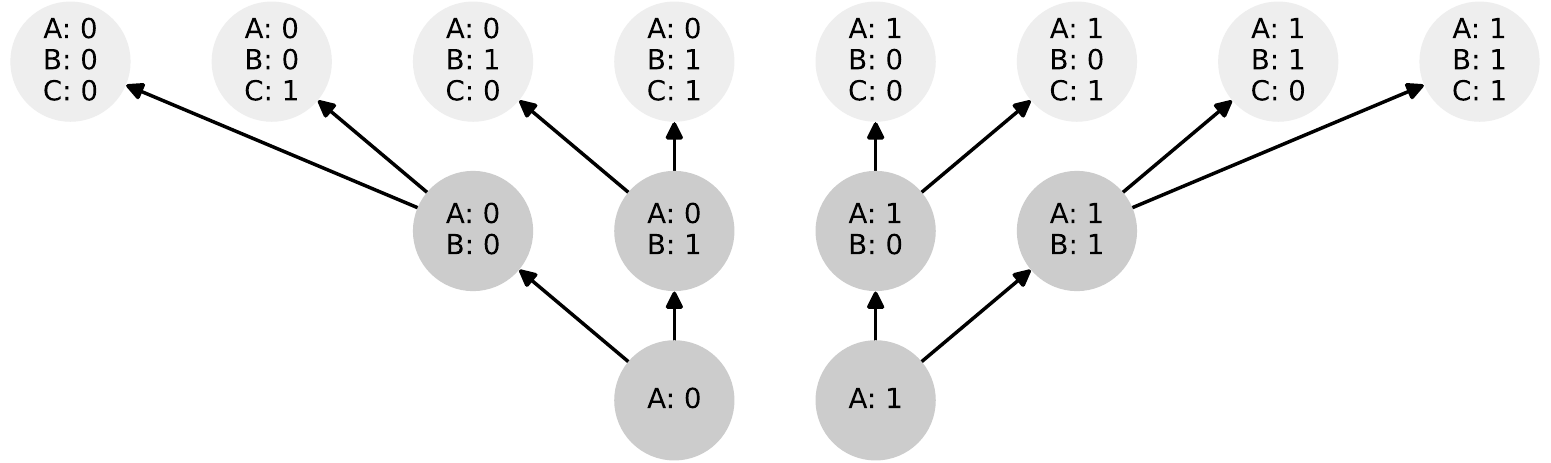}
    }
    \\
    $\Hist{\Omega, \underline{I}}$
    &&
    $\Hist{\Xi, \underline{I}}$
\end{tabular}
\end{center}
However, a suitable statement of inclusion holds for the corresponding spaces of extended input histories.
\begin{proposition}
\label{proposition:induced-spaces-order}
For any two causal orders $\Omega$ and $\Xi$ we have:
\begin{equation}
    \Omega \leq \Xi
    \;\;\Leftrightarrow\;\;
    \ExtHist{\Omega, \underline{I}} \supseteq \ExtHist{\Xi, \underline{I}}    
\end{equation}
\end{proposition}
\begin{proof}
See \ref{proof:proposition:induced-spaces-order}
\end{proof}

We take the result of Proposition \ref{proposition:induced-spaces-order} above as the basis to define a partial order on spaces of input histories, via the corresponding spaces of extended input histories.
We have a choice between two opposite order conventions for $\Theta \leq \Theta'$: we could define it to be $\Ext{\Theta} \subseteq \Ext{\Theta'}$ or we could define it to be $\Ext{\Theta} \supseteq \Ext{\Theta'}$.
The former choice is more straightforward to remember and understand, while the latter matches the convention for causal orders, where $\Omega \leq \Omega'$ means that $\Omega$ has more causal constraints than $\Omega'$.
As we discuss in \cite{gogioso2022geometry}, the latter choice is also aligned with the inclusion order for the corresponding causal polytopes, so we take it as our definition.

\begin{definition}
\label{definition:spaces-order}
We define the following partial order on spaces of input histories:
\begin{equation}
    \Theta' \leq \Theta
    \stackrel{def}{\Leftrightarrow}
    \Ext{\Theta'} \supseteq \Ext{\Theta}
\end{equation}
We say that $\Theta'$ is a \emph{causal refinement} of $\Theta$ (more causal constraints), or that $\Theta$ is a \emph{causal coarsening} of $\Theta'$ (fewer causal constraints).
Equivalently, sometimes we say that $\Theta'$ is a \emph{sub-space} of $\Theta$ or that $\Theta'$ is a \emph{super-space} of $\Theta$.
\end{definition}

\begin{proposition}
\label{proposition:subspace-input-histories-events-inputs}
If $\Theta$ is a space of input histories and $\Theta' \leq \Theta$ is a sub-space, then we have $\Events{\Theta'} \supseteq \Events{\Theta}$ and $\Inputs{\Theta'}_\omega \supseteq \Inputs{\Theta}_\omega$ for all $\omega \in \Events{\Theta}$.
\end{proposition}
\begin{proof}
See \ref{proof:proposition:subspace-input-histories-events-inputs}
\end{proof}

To exemplify the ordering on spaces of input histories that we just defined, we consider once again the causal fork and total order on three events.
\begin{center}
\begin{tabular}{ccc}
    \raisebox{0.25cm}{
        \includegraphics[height=2cm]{svg-inkscape/vee-A-BC_svg-tex.pdf}
    }
    &
    \hspace{1cm}
    \raisebox{1.15cm}{$\leq$}
    \hspace{1cm}
    &
    \raisebox{0cm}{
        \includegraphics[height=2.5cm]{svg-inkscape/total-ABC_svg-tex.pdf}
    }
    \\
    $\Theta$
    &&
    $\Xi$
\end{tabular}
\end{center}
We now show that the space of input histories for the causal fork lies below the space of input histories for the total order.
\begin{center}
\begin{tabular}{ccc}
    \raisebox{0cm}{
        \includegraphics[height=2cm]{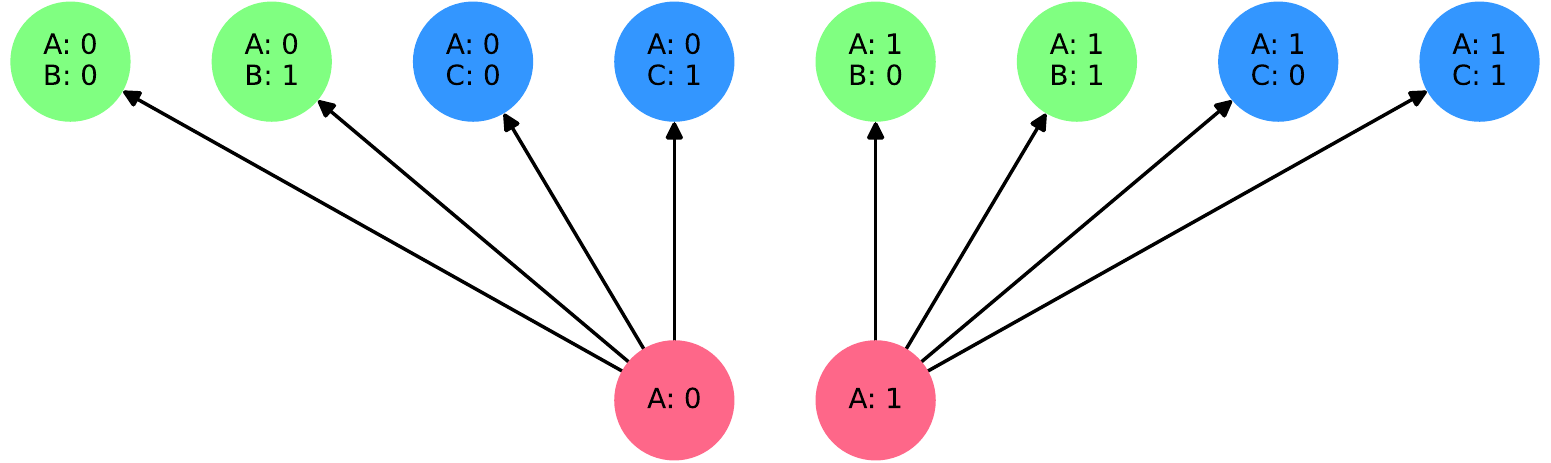}
    }
    &
    \hspace{0.1cm}
    \raisebox{0.9cm}{$\leq$}
    \hspace{0.1cm}
    &
    \raisebox{0.08cm}{
        \includegraphics[height=1.84cm]{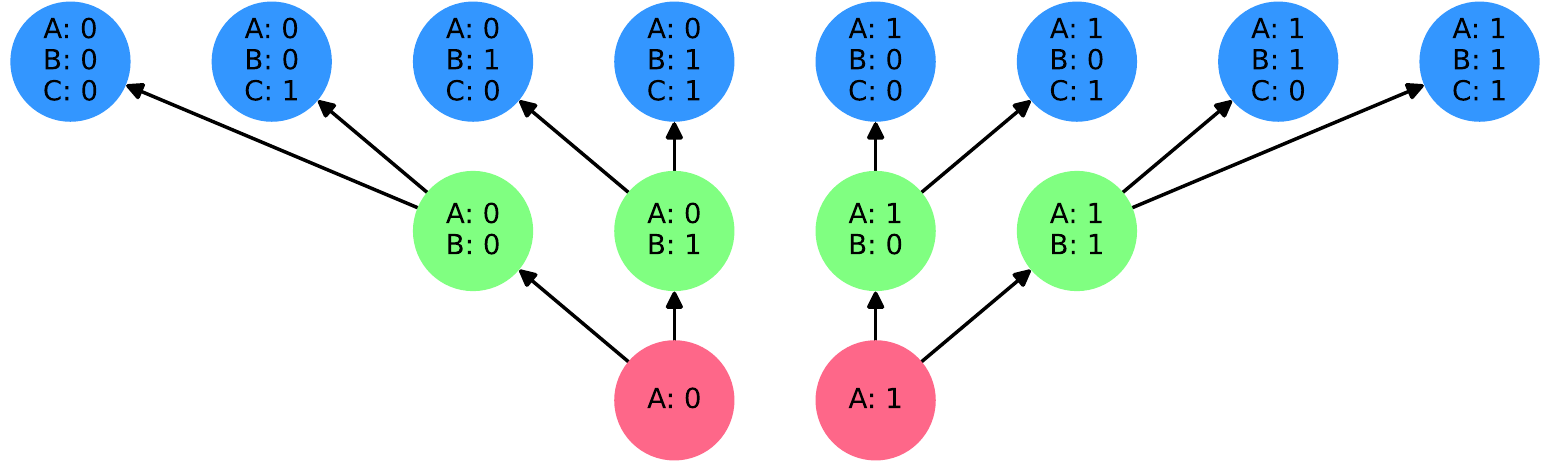}
    }
    \\
    $\Hist{\Omega, \underline{I}}$
    &&
    $\Hist{\Xi, \underline{I}}$
\end{tabular}
\end{center}
Indeed, all we have to check is that the reverse inclusion holds for the corresponding spaces of extended input histories, shown below.
To make it easier to spot the inclusion, no colour-coding is used for input histories in these diagrams: instead, the extended input histories from the space on the right have been highlighted in both spaces with a darker colour.
\begin{center}
\begin{tabular}{ccc}
    \raisebox{0cm}{
        \includegraphics[height=2cm]{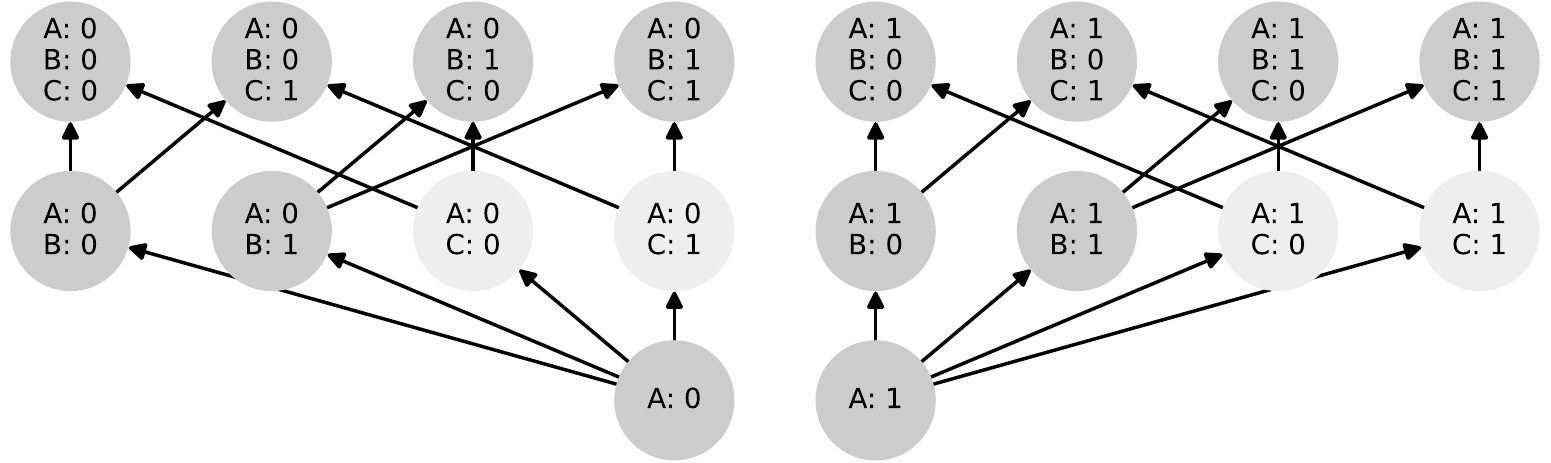}
    }
    &
    \hspace{0.1cm}
    \raisebox{0.9cm}{$\supseteq$}
    \hspace{0.1cm}
    &
    \raisebox{0.08cm}{
        \includegraphics[height=1.84cm]{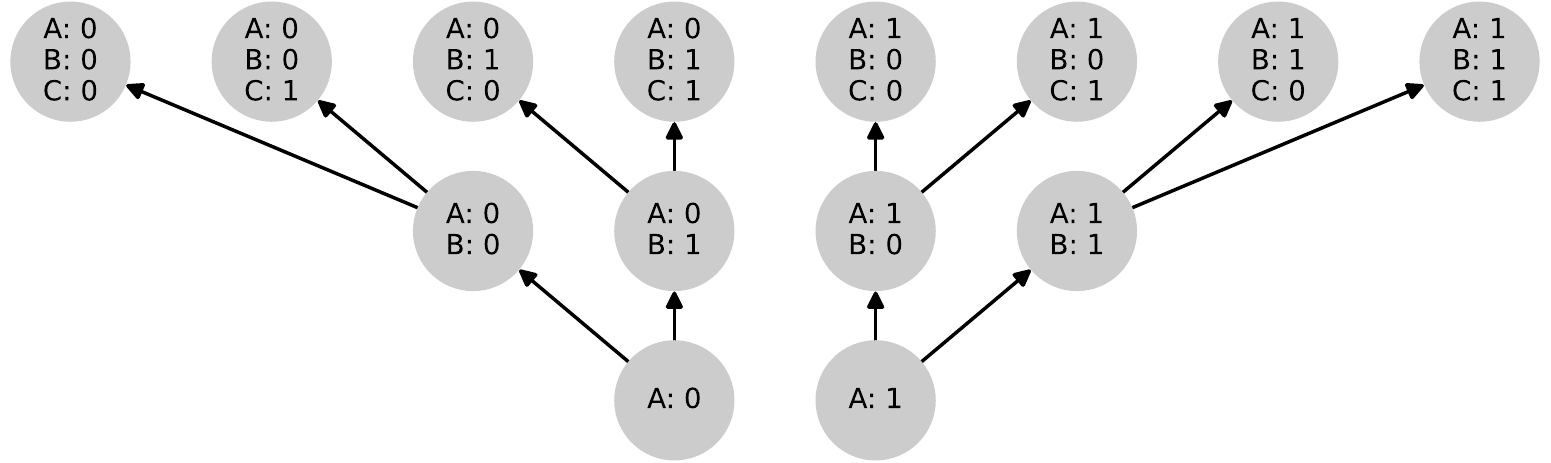}
    }
    \\
    $\ExtHist{\Omega, \underline{I}}$
    &&
    $\ExtHist{\Xi, \underline{I}}$
\end{tabular}
\end{center}
As another example, we look again at the space of input histories for the 3-party causal switch, and we compare it to the space of input histories for the indefinite causal order on 3 events which originally inspired it.
\begin{center}
\begin{tabular}{ccc}
    \raisebox{0cm}{
        \includegraphics[height=2cm]{svg-inkscape/switch-ABC-hists_svg-tex.pdf}
    }
    &
    \raisebox{0.9cm}{$\leq$}
    &
    \raisebox{0.08cm}{
        \includegraphics[height=1.84cm]{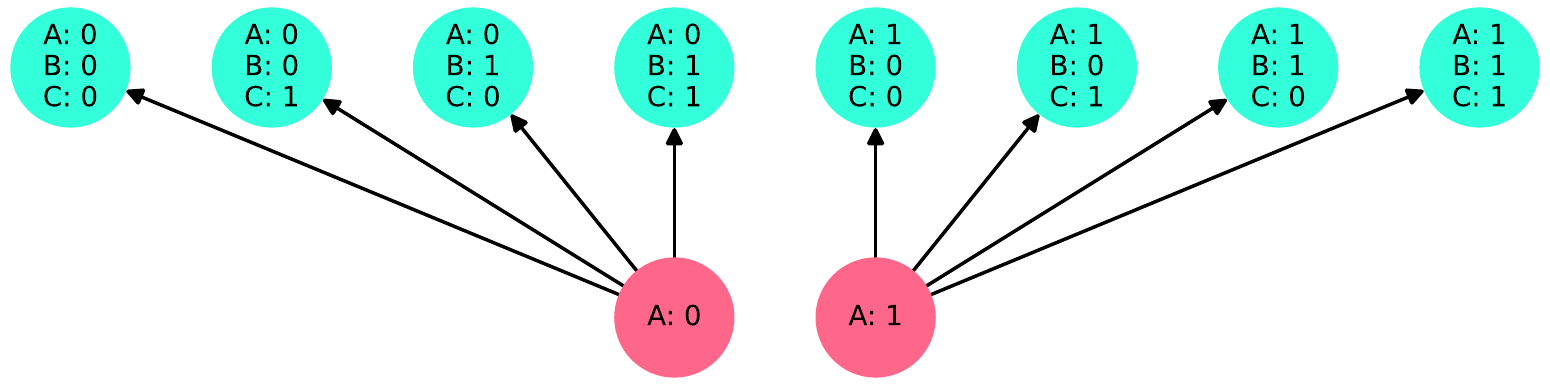}
    }
    \\
    $\Theta$
    &&
    $\Hist{\Omega, \underline{I}}$
\end{tabular}
\end{center}
Indeed, both spaces coincide with their own spaces of extended input histories, and it is easy to check that the reverse inclusion holds.
To make it easier to spot the inclusion, no colour-coding is used for input histories in these diagrams: instead, the extended input histories from the space on the right have been highlighted in both spaces with a darker colour.
\begin{center}
\begin{tabular}{ccc}
    \raisebox{0cm}{
        \includegraphics[height=2cm]{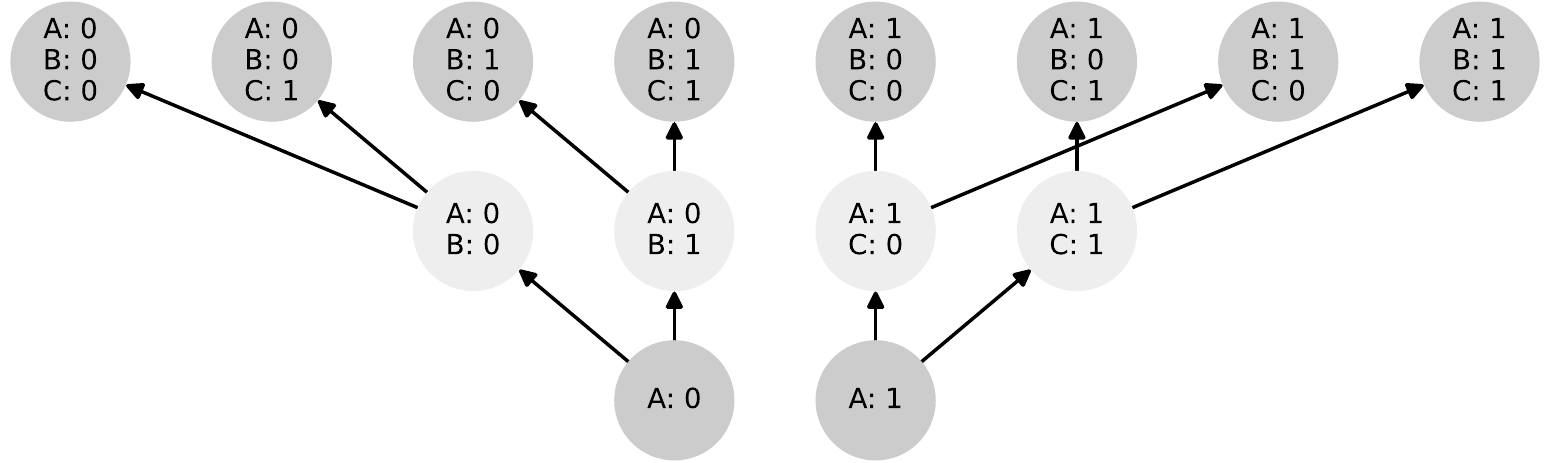}
    }
    &
    \raisebox{0.9cm}{$\supseteq$}
    &
    \raisebox{0.08cm}{
        \includegraphics[height=1.84cm]{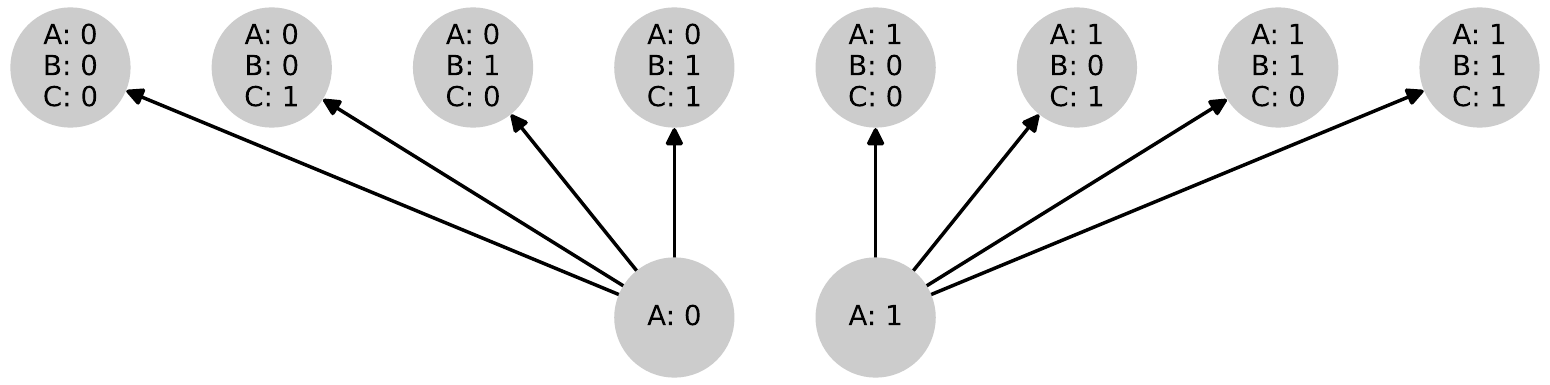}
    }
    \\
    $\Ext{\Theta}$
    &&
    $\ExtHist{\Omega, \underline{I}}$
\end{tabular}
\end{center}

Note that Definition \ref{definition:spaces-order} allows us to compare spaces of input histories with different underlying sets of events and inputs.
We refer to the partial order, or ``hierarchy'', of all spaces of input histories simply as $\AllSpaces$. Spaces $\Theta$ with $\underline{\Inputs{\Theta}} = \underline{I}$ for a specific choice $\underline{I} = (I_\omega)_{\omega \in E}$ form a sub-hierarchy $\Spaces{\underline{I}}$, and spaces satisfying the free-choice condition form a further sub-hierarchy $\SpacesFC{\underline{I}}$. All three hierarchies are lattices, sharing the same notion of join and meet.

\begin{proposition}
\label{proposition:spaces-input-histories-hierarchies}
All spaces of input histories---without restrictions on given on events or input values---form an infinite lattice $\AllSpaces$ under the partial order of Definition \ref{definition:spaces-order}, having the empty space $\emptyset$ as its maximum.
The join of two spaces is their \emph{closest common coarsening}, while their meet is their \emph{closest common refinement}.
Joins and meets take the following form:
\begin{equation}
\begin{array}{rcl}
    \Theta \vee \Theta' &=& \Prime{\Ext{\Theta} \cap \Ext{\Theta'}}\\
    \Theta \wedge \Theta' &=& \Prime{\Ext{\Theta} \cup \Ext{\Theta'}}
\end{array}
\end{equation}
When talking about the \underline{join} and \underline{meet} of spaces of input histories, we shall henceforth refer to the operations above.
Spaces $\Theta$ such that $\underline{\Inputs{\Theta}} = \underline{I}$ for some choice of $\underline{I}$ form a finite upperset of $\AllSpaces$:
\[
    \Spaces{\underline{I}}
    \hookrightarrow
    \AllSpaces
\]
The upperset $\Spaces{\underline{I}}$ has the empty space as its maximum and the \emph{discrete space} $\Hist{\discrete{E},\underline{I}}$ as its minimum, therefore it is a lattice.
The spaces $\Theta \in \Spaces{\underline{I}}$ which satisfy the free-choice condition form a lowerset $\SpacesFC{\underline{I}}$:
\[
    \SpacesFC{\underline{I}}
    \hookrightarrow\Spaces{\underline{I}}
    \hookrightarrow \AllSpaces
\]
The lowerset $\SpacesFC{\underline{I}}$ has the discrete space as its minimum and the \emph{indiscrete space} $\Hist{\indiscrete{E}, \underline{I}}$ as its maximum, therefore it is a lattice.
\end{proposition}
\begin{proof}
See \ref{proof:proposition:spaces-input-histories-hierarchies}
\end{proof}

We might now wonder how spaces induced by causal orders---those in the form $\Theta = \Hist{\Omega, \underline{I}}$---fit within these hierarchies.
We already know from Proposition \ref{proposition:order-induced-space-free-choice} that $\Hist{\Omega, \underline{I}} \in \SpacesFC{\underline{I}}$: we now complete the picture by showing that these spaces are closed under join, but not under meet.

\begin{proposition}
\label{proposition:order-induced-spaces-closed-under-join}
For any given $\underline{I} = (I_\omega)_{\omega \in E}$, the $\Omega \mapsto \Hist{\Omega, \underline{I}}$ function---sending the causal orders on $E$ to the associated spaces of input histories---commutes with the join operation:
\[
    \begin{array}{rcl}
    \ExtHist{\Omega,\underline{I}}
    \cap
    \ExtHist{\Omega',\underline{I}}
    &=&
    \ExtHist{\Omega\vee\Omega',\underline{I}}
    \\
    \Hist{\Omega,\underline{I}}
    \vee
    \Hist{\Omega',\underline{I}}
    &=&
    \Hist{\Omega\vee\Omega',\underline{I}}
    \end{array}
\]
\end{proposition}
\begin{proof}
See \ref{proof:proposition:order-induced-spaces-closed-under-join}
\end{proof}

\begin{proposition}
\label{proposition:order-induced-spaces-not-closed-under-meet}
Spaces of input histories induced by causal orders are not closed under meet.
\end{proposition}
\begin{proof}
See \ref{proof:proposition:order-induced-spaces-not-closed-under-meet}
\end{proof}

Further to the join and meet, we can also define the sequential and parallel composition of spaces of input histories on disjoint event sets, generalising the definition for causal orders previously given in Section \ref{section:causal-orders}.

\begin{definition}
Let $\Theta, \Theta'$ be spaces of input histories with $\Events{\Theta} \cap \Events{\Theta'} = \emptyset$.
The \emph{parallel composition} of $\Theta$ and $\Theta'$ is defined to be their union as sets:
\begin{equation}
    \Theta \cup \Theta'
\end{equation}
The \emph{sequential composition} of $\Theta$ before $\Theta'$ is defined as follows:
\begin{equation}
    \Theta \seqcomposeSym \Theta'
    :=
    \Theta \cup \left(\max{\Ext{\Theta}}\allJoinsSym\Theta'\right)
\end{equation}
where we adopted the symbol $\allJoinsSym$ to indicate all possible compatible joins between two sets (or families) of partial functions:
\begin{equation}
    \max{\Ext{\Theta}}\allJoinsSym\Theta'
    :=
    \suchthat{k\vee h'}{k \in \max{\Ext{\Theta}}, h' \in \Theta'}
\end{equation}
Note: because $\Events{\Theta} \cap \Events{\Theta'} = \emptyset$, all joins above are necessarily compatible.
\end{definition}

\begin{proposition}
\label{proposition:parallel-sequential-composition-spaces}
Let $\Theta, \Theta'$ be spaces of input histories with $\Events{\Theta} \cap \Events{\Theta'} = \emptyset$.
The parallel composition $\Theta \cup \Theta'$ and sequential composition $\Theta \seqcomposeSym \Theta'$ are well-defined spaces of input histories.
\end{proposition}
\begin{proof}
See \ref{proof:proposition:parallel-sequential-composition-spaces}
\end{proof}

As simple examples of sequential and parallel composition of spaces, we consider the spaces $\Theta$ and $\Theta'$ induced by the discrete order $\discrete{A,B}$ and total order $\total{C,D}$ respectively.
The two spaces are depicted below.
\begin{center}
    \begin{tabular}{cc}
    \includegraphics[height=2.5cm]{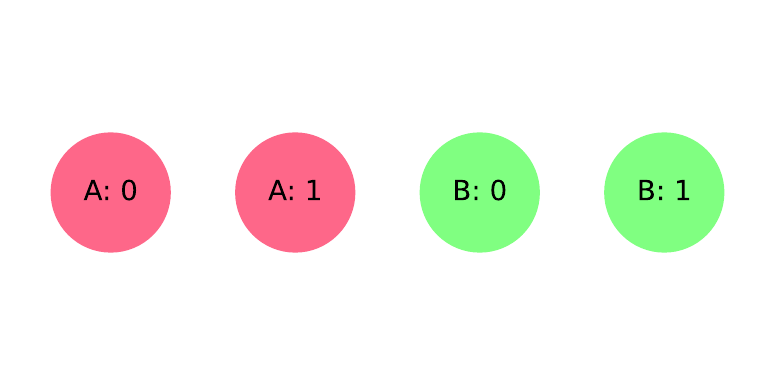}
    &
    \includegraphics[height=2.5cm]{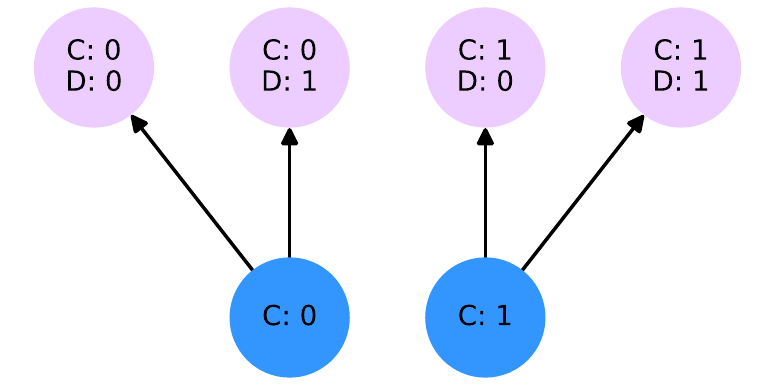}
    \\
    $\Theta$
    &
    $\Theta'$
    \end{tabular}
\end{center}
The parallel composition $\Theta \cup \Theta'$ of the two spaces is simply the disjoint union of their histories, with no additional causal relationship between them. 
\begin{center}
    \begin{tabular}{c}
    \includegraphics[height=2.5cm]{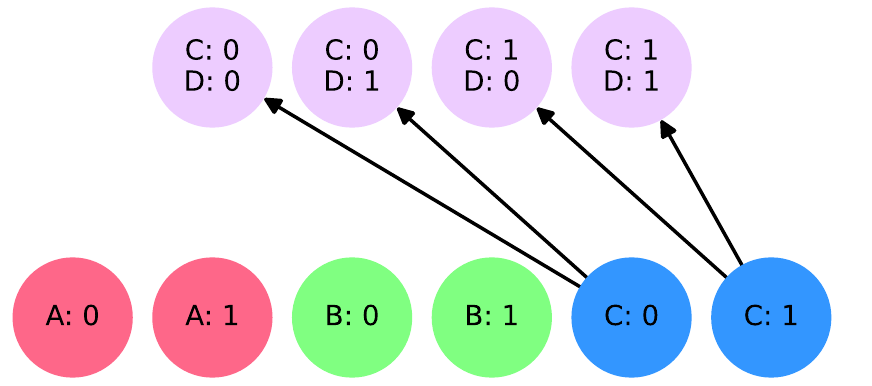}
    \\
    $\Theta \cup \Theta'$
    \end{tabular}
\end{center}
The sequential composition $\Theta \seqcomposeSym \Theta'$ of the two spaces consists of a copy of $\Theta'$ appearing after each maximal extended input history of $\Ext{\Theta}$, for a total of four copies.
\begin{center}
    \begin{tabular}{c}
    \includegraphics[height=5cm]{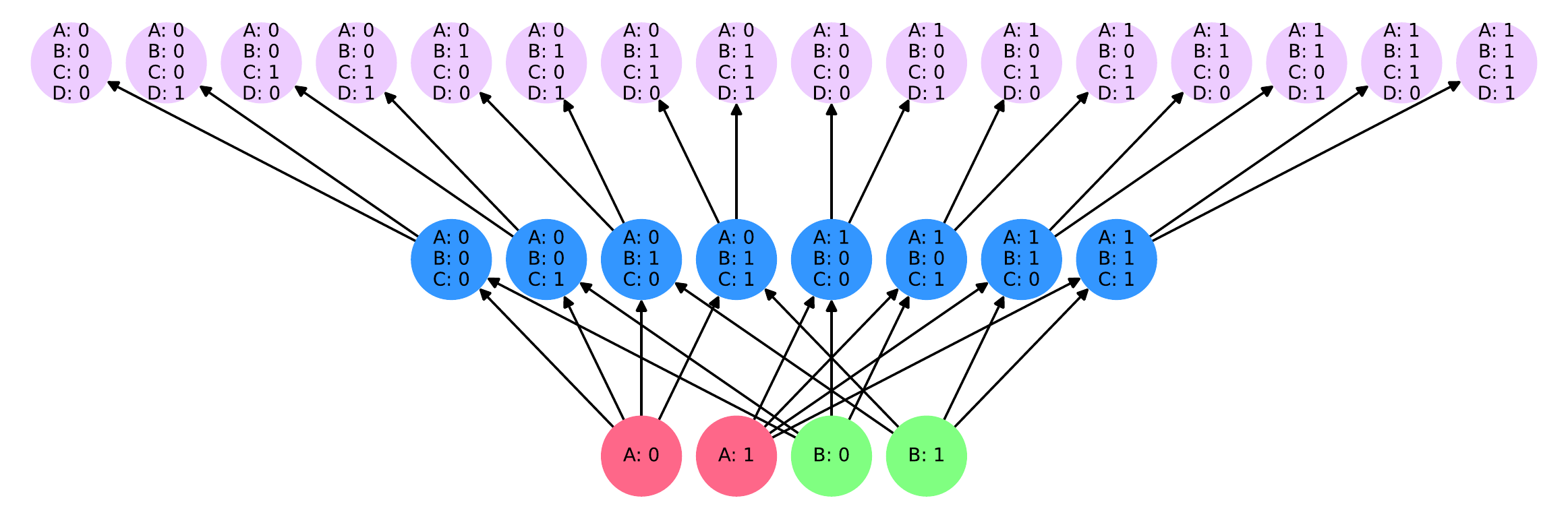}
    \\
    $\Theta \seqcomposeSym \Theta'$
    \end{tabular}
\end{center}

\begin{observation}
Let $\Theta, \Theta'$ be spaces of input histories with $\Events{\Theta} \cap \Events{\Theta'} = \emptyset$.
The parallel composition $\Theta \cup \Theta'$ and sequential composition $\Theta \seqcomposeSym \Theta'$ have the same input sets:
\[
\underline{\Inputs{\Theta\cup\Theta'}}
= 
\underline{\Inputs{\Theta\seqcomposeSym\Theta'}}
=
\underline{\Inputs{\Theta}}\allJoinsSym\underline{\Inputs{\Theta'}}
:=
\left(
\prod_{\omega \in \Events{\Theta}} \Inputs{\Theta}_\omega
\right)
\times
\left(
\prod_{\omega \in \Events{\Theta'}} \Inputs{\Theta'}_\omega
\right)
\]
\end{observation}

\begin{proposition}
\label{proposition:parallel-sequential-composition-free-choice}
Let $\Theta, \Theta'$ be spaces of input histories with $\Events{\Theta} \cap \Events{\Theta'} = \emptyset$.
If $\Theta$ and $\Theta'$ both satisfy the free-choice condition, then so do their parallel and sequential compositions.
\end{proposition}
\begin{proof}
See \ref{proof:proposition:parallel-sequential-composition-free-choice}
\end{proof}

In the previous examples, the spaces $\Theta$ and $\Theta'$ were induced by the causal orders $\discrete{A,B}$ and $\total{C,D}$ respectively: 
\begin{center}
    \begin{tabular}{cc}
    \includegraphics[height=2.5cm]{svg-inkscape/space-AB-0-highlighted_svg-tex.pdf}
    &
    \hspace{15mm}
    \includegraphics[height=2.5cm]{svg-inkscape/space-CD-5-highlighted_svg-tex.pdf}
    \\
    $\Theta = \Hist{\discrete{A,B}, \{0,1\}}$
    &
    \hspace{15mm}
    $\Theta' = \Hist{\total{C,D}, \{0,1\}}$
    \end{tabular}
\end{center}
The parallel composition $\Theta \cup \Theta'$ and sequential composition $\Theta \seqcomposeSym \Theta'$ of the two spaces $\Theta$ and $\Theta'$ above are the spaces of input histories induced by the parallel composition $\discrete{A,B} \vee \total{C,D}$ and sequential composition $\discrete{A,B} \seqcomposeSym \total{C,D}$, respectively, of the underlying causal orders:
\begin{center}
    \begin{tabular}{cc}
    \includegraphics[height=2cm]{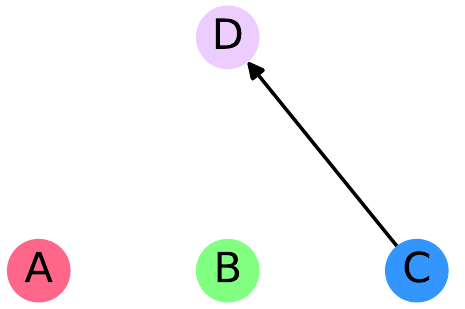}
    &
    \hspace{20mm}
    \includegraphics[height=2cm]{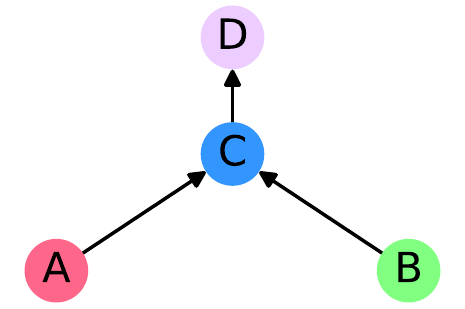}
    \\
    $\discrete{A,B} \vee \total{C,D}$
    &
    \hspace{20mm}
    $\discrete{A,B} \seqcomposeSym \total{C,D}$
    \end{tabular}
\end{center}
The following result proves that, for spaces of input histories induced by causal orders, the definition of parallel and sequential composition is compatible with that previously given for the underlying orders.

\begin{proposition}
\label{proposition:parallel-sequential-composition-spaces-and-orders}
Let $\Omega, \Omega'$ be disjoint causal orders.
Let $\Theta := \Hist{\Omega, \underline{I}}$ for some family $\underline{I}$ of non-empty input sets.
Let $\Theta' := \Hist{\Omega', \underline{I}'}$ for some family $\underline{I}'$ of non-empty input sets.
The sequential and parallel composition of spaces of input histories mirror the sequential and parallel composition of the orders that induced them:
\[
\begin{array}{rcl}
\Theta \cup \Theta'
&=&
\Hist{\Omega\vee\Omega', \underline{I}\vee\underline{I}'}
\\
\Theta \seqcomposeSym \Theta'
&=&
\Hist{\Omega\seqcomposeSym\Omega', \underline{I}\vee\underline{I}'}
\end{array}
\]
\end{proposition}
\begin{proof}
See \ref{proof:proposition:parallel-sequential-composition-spaces-and-orders}
\end{proof}

\begin{definition}
Let $\Theta$ be a space of input histories and let $\underline{\Theta'} := (\Theta'_k)_{k \in \max{\Ext{\Theta}}}$ be a family of spaces of input histories such that $\Events{\Theta} \cap \Events{\Theta'_k} = \emptyset$ for all $k \in \max{\Ext{\Theta}}$.
The \emph{conditional sequential composition} of $\Theta$ and $\underline{\Theta'}$ is defined as follows:
\begin{equation}
    \Theta \seqcomposeSym \underline{\Theta'}
    :=
    \Theta \cup \suchthat{k\vee h'}{k \in \max{\Ext{\Theta}}, h' \in \Theta'_k}
\end{equation}
Sequential composition $\Theta \seqcomposeSym \Theta'$ arises as the special case of conditional sequential composition where $\Theta'_k = \Theta'$ for all $k \in \max\Ext{\Theta}$.
\end{definition}

\begin{proposition}
\label{proposition:conditional-sequential-composition-spaces}
Let $\Theta$ be a space of input histories and let $\underline{\Theta'} := (\Theta'_k)_{k \in \max{\Ext{\Theta}}}$ be a family of spaces of input histories such that $\Events{\Theta} \cap \Events{\Theta'_k} = \emptyset$ for all $k \in \max{\Ext{\Theta}}$.
The conditional sequential composition $\Theta \seqcomposeSym \underline{\Theta'}$ is a well-defined space of input histories.
\end{proposition}
\begin{proof}
See \ref{proof:proposition:conditional-sequential-composition-spaces}
\end{proof}

As a simple example of conditional sequential composition, we compose the space induced by the discrete order on one event $\ev{A}$ (having \hist{A/0} and \hist{A/1} as its maximal extended input histories) with the spaces induced by the two total orders on two events $\ev{B}$ and $\ev{C}$:
\begin{center}
    \begin{tabular}{ccc}
    \includegraphics[height=2.5cm]{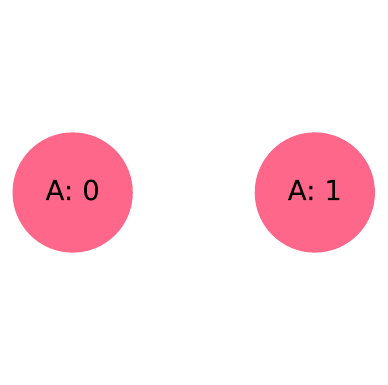}
    &
    \includegraphics[height=2.5cm]{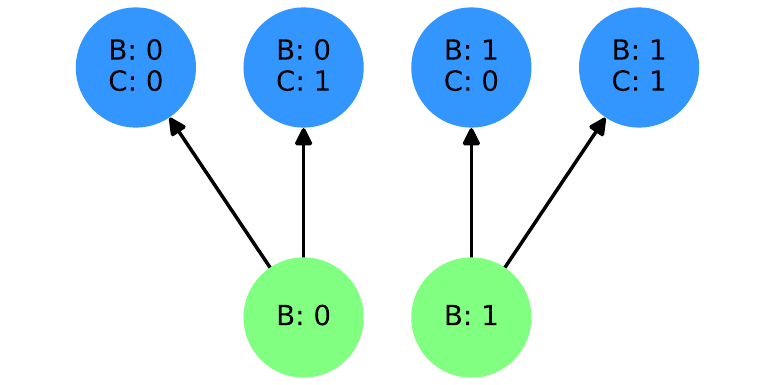}
    &
    \includegraphics[height=2.5cm]{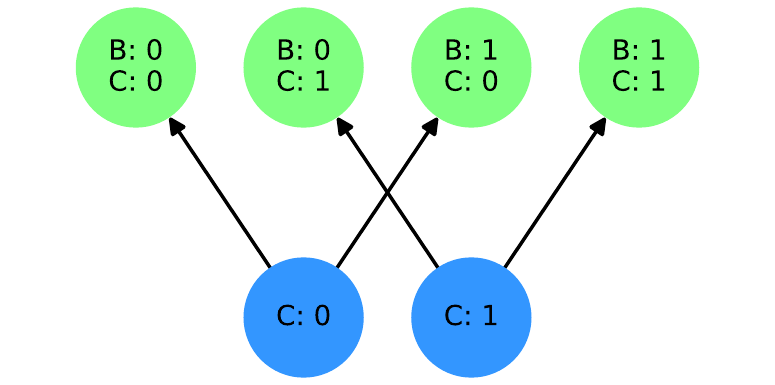}
    \\
    $\Theta$
    &
    $\Theta'_{\hist{A/0}}$
    &
    $\Theta'_{\hist{A/1}}$
    \end{tabular}
\end{center}
The result of this conditional sequential composition is the space of input histories for the 3-party causal switch which we discussed earlier on:
\begin{center}
    \begin{tabular}{c}
    \includegraphics[height=3.5cm]{svg-inkscape/space-ABC-unique-tight-101-highlighted_svg-tex.pdf}
    \\
    $\Theta \seqcomposeSym \underline{\Theta'}$
    \end{tabular}
\end{center}
In this case, $\Theta$ satisfies the free-choice condition, as do all the $\Theta'_k$, which in addition feature the exact same event set and input sets: as the following result shows, such circumstances are both necessary and sufficient for the conditional sequential composition to satisfy the free-choice condition.

\begin{proposition}
\label{proposition:conditional-sequential-composition-spaces-free-will}
Let $\Theta$ be a space of input histories and let $\underline{\Theta'} := (\Theta'_k)_{k \in \max{\Ext{\Theta}}}$ be a family of spaces of input histories such that $\Events{\Theta} \cap \Events{\Theta'_k} = \emptyset$ for all $k \in \max{\Ext{\Theta}}$.
The conditional sequential composition $\Theta \seqcomposeSym \underline{\Theta'}$ satisfies the free-choice condition if and only if the following conditions all hold:
\begin{itemize}
    \item the space $\Theta$ satisfies the free-choice condition;
    \item the space $\Theta'_k$ satisfies the free-choice condition for all $k \in \max\Ext{\Theta}$;
    \item the families of input sets $\underline{\Inputs{\Theta'_k}}$ are identical for all $k \in \max\Ext{\Theta}$.
\end{itemize}
\end{proposition}
\begin{proof}
See \ref{proof:proposition:conditional-sequential-composition-spaces-free-will}
\end{proof}

\subsection{Causally Complete Spaces}
\label{subsection:spaces-cc}

In our operational interpretation, input histories are the data upon which the output values at individual events are allowed to depend.
When the causal order is given, it is always clear which histories refer to which outputs: the output at event $\omega$ is determined by the input histories $h$ with domain $\dom{h} = \downset{\omega}$.
In the more general setting of spaces of input histories, however, a causal order might not be given: in the absence of a $\downset{\omega}$, how do we associate events to the input histories that determine their outputs?
To get ourselves started, we consider the example of the causal diamond $\Omega$.
\begin{center}
    \includegraphics[height=2.5cm]{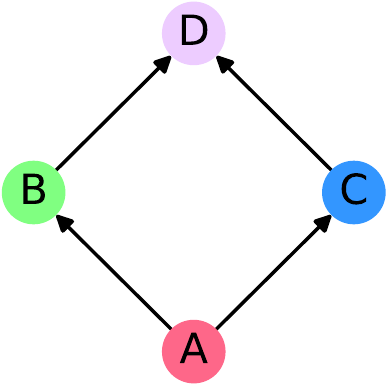}
\end{center}
Looking at the space of input histories $\Hist{\Omega,\{0,1\}}$---a shorthand by which we mean $\Hist{\Omega,(\{0,1\})_{\omega \in \Omega}}$---we observe that an association between input histories and events can be made from the order of histories alone.
Indeed, if $h$ is a history with $\dom{h} = \downset{\omega}$, then we can look at all input histories $k < h$ strictly below it and recover $\omega$ as the only event in $\dom{h}\backslash\bigcup_{k < h}\dom{k}$: this is the only event not covered by the domains of the histories strictly below $h$, which we will refer to as a ``tip event''.
In the Hasse diagram below, we have colour-coded input histories according to the tip event associated to them by this procedure.
\begin{center}
    \includegraphics[height=4cm]{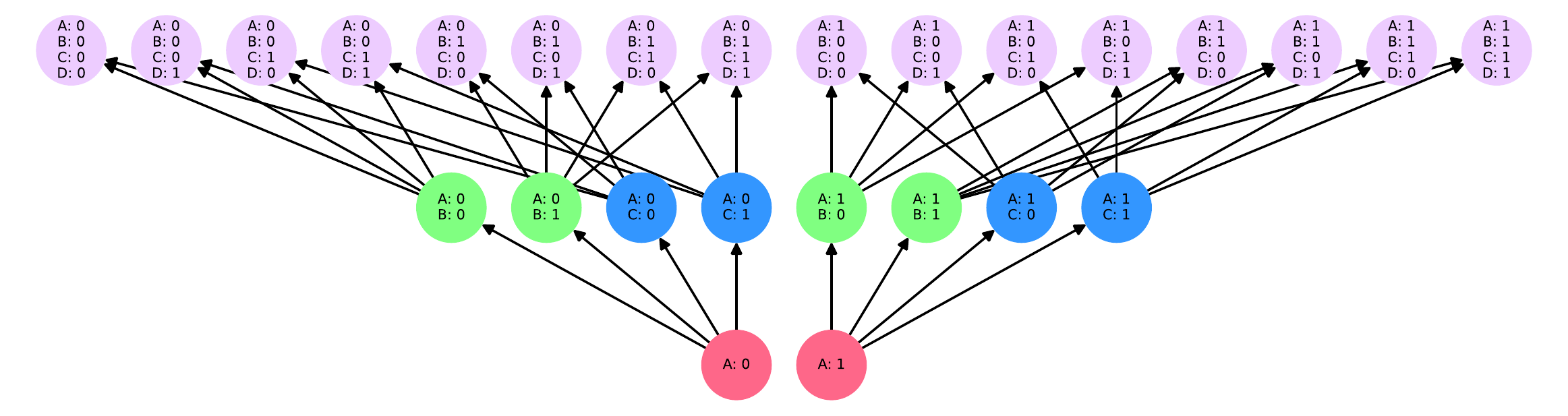}
\end{center}
The procedure works well for definite causal orders, but something goes wrong for indefinite ones: if two or more events are in indefinite causal order, they will appear together at the tip of histories.
Indeed, consider the following indefinite version of the diamond order above: the space $\total{\ev{A}, \evset{B,C}, \ev{D}}$, where the events \ev{B} and \ev{C} are in indefinite causal order rather than causally unrelated.
\begin{center}
    \includegraphics[height=3cm]{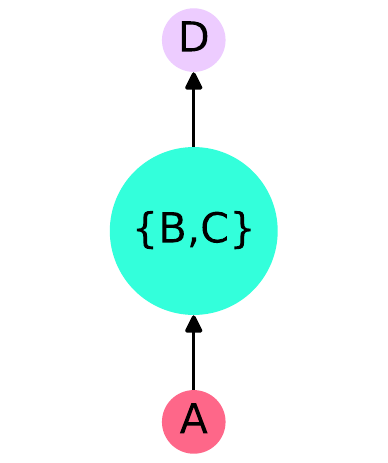}
\end{center}
Because \ev{B} and \ev{C} cannot be distinguished by input histories in the space, the histories in the middle layer now have two ``tip events'' instead of one.
\begin{center}
    \includegraphics[height=4cm]{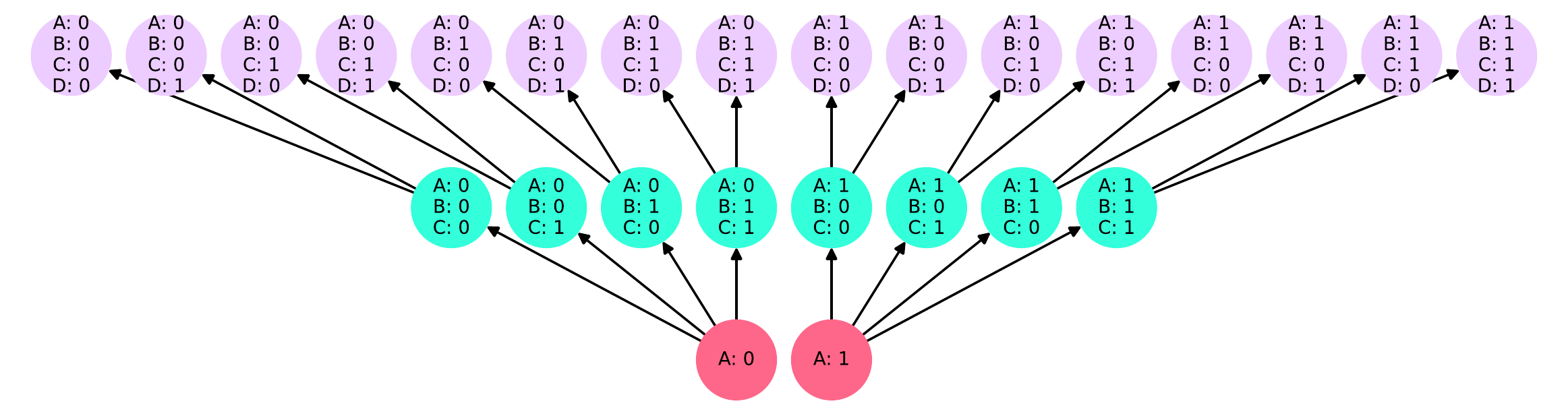}
\end{center}

The operational interpretation of multiple tip events is challenging: in a naive sense, it means that the output value at two events in indefinite causal order is to be produced ``simultaneously'', using the input values at both events.
This is problematic, because indefinite causal order should not trivialise to causal collapse: under our operational interpretation, distinct events should retain their independent local nature.
It should not, for example, be possible to perform the ``swap'' function $(b, c) \mapsto (c, b)$ on two events \ev{B} and \ev{C} in indefinite causal order: the devices would have to wait for both inputs to be given before producing their outputs, with the effect of delocalising the events.

However, there is an alternative way to look at the presence of multiple tip events, as a form of ``causal incompleteness''.
Rather than interpreting such spaces as allowing event delocalisation, we think of such spaces as not providing sufficient information for causal inference to be performed.
As such, we will focus our efforts on ``causally complete'' spaces, studying the incomplete spaces through the lens of their ``causal completions''.

\begin{definition}
Let $\Theta$ be a space of input histories.
Given an extended input history $h \in \Ext{\Theta}$, we define the \emph{tip events} of $h$ in $\Theta$ as the events which are in the domain of $h$ but not in the domain of any history strictly below it:
\begin{equation}
    \begin{array}{rcl}
    \tips{\Theta}{h}
    &:=&
    \dom{h}\backslash\bigcup_{k < h}\dom{k}
    \\
    &=&
    \suchthat{\omega \in \dom{h}}{\forall k < h.\, \omega \notin \dom{k}}
    \end{array}
\end{equation}
\end{definition}

\begin{observation}
The definition of tip events could have been equivalently formulated in terms of extended input histories, because the latter arise as joins of the former:
\[
    \begin{array}{rl}
     &\bigcup\suchthat{\dom{k}}{k \in \Ext{\Theta} \text{ s.t. } k < h}\\
    =&\bigcup\suchthat{\dom{k'}}{k \in \Ext{\Theta} \text{ s.t. } k < h, k'\in \Theta \text{ s.t. } k' \leq k}\\
    =&\bigcup\suchthat{\dom{k'}}{k' \in \Theta \text{ s.t. } k' < h}
    \end{array}
\]
\end{observation}

\begin{proposition}
\label{proposition:input-history-at-least-one-tip}
Every input history $h \in \Theta$ has at least one tip event. Every extended input history $h \in \Ext{\Theta}$ which is not an input history---i.e. one such that $h \not\in \Theta$---has no tip events.
\end{proposition}
\begin{proof}
See \ref{proof:proposition:input-history-at-least-one-tip}
\end{proof}

\begin{definition}
Let $\Theta$ be a space of input histories satisfying the free-choice condition.
We say that $\Theta$ is \emph{causally complete} if all input histories $h \in \Theta$ have exactly one tip event, and that it is \emph{causally incomplete} otherwise.
If $\Theta$ is causally complete and $h \in \Theta$, we define the \emph{tip event} of $h$ in $\Theta$ to be the unique event in $\tips{\Theta}{h}$:
\begin{equation}
    \Theta \text{ causally complete }
    \Leftrightarrow
    \forall h \in \Theta.\,
    \tips{\Theta}{h} = \{\tip{\Theta}{h}\}
\end{equation}
\end{definition}

\begin{proposition}
\label{proposition:order-induced-space-causal-completeness}
A space of input histories $\Theta = \Hist{\Omega,\underline{I}}$ induced by a causal order $\Omega$ is causally complete if and only if the causal order $\Omega$ is causally definite.
\end{proposition}
\begin{proof}
See \ref{proof:proposition:order-induced-space-causal-completeness}
\end{proof}

\begin{observation}
\label{observation:minimal-input-history-tips}
For a minimal input history $h \in \Theta$, we always have $\tips{\Theta}{h} = \dom{h}$.
If $\Theta$ is causally complete, this forces any minimal input history $h$ to have $|\dom{h}| = 1$.
\end{observation}

\begin{remark}
From a purely mathematical standpoint, the definition of causal completeness in terms of number of tip events does not involve the free-choice condition.
However, spaces not satisfying the free-choice condition are challenging in terms of operational interpretation, so in this work we made a choice to include the condition as part of the definition of causal completeness.
This allows us to talk simply about ``causally complete spaces'', avoiding the somewhat verbose alternative ``causally complete spaces satisfying the free-choice condition''.
\end{remark}

\begin{definition}
Let $\Theta$ be a space of input histories satisfying the free-choice condition.
The \emph{causal completions} of $\Theta$ are the closest refinements of $\Theta$ which are causally complete, i.e. the maxima of the set of causally complete spaces which are causal refinements of $\Theta$:
\begin{equation}
    \CausCompl{\Theta}
    :=
    \max
    \suchthat{\Theta' \leq \Theta}{\Theta' \text{ causally complete}}
\end{equation}
Since the discrete space $\Hist{\discrete{E^\Theta}, \underline{I}^\Theta}$ is always causally complete, the set of causal completions of $\Theta$ is never empty. If $\Theta$ is itself causally complete, then $\CausCompl{\Theta} = \{\Theta\}$.
\end{definition}

As an example of causal completion, we refer back to the indefinite causal order $\total{\ev{A}, \evset{B,C}}$.
The associated space of input histories is causally incomplete, because \ev{B} and \ev{C} always appear together as tip events (coloured aquamarine, at the top).
\begin{center}
    \includegraphics[height=2.5cm]{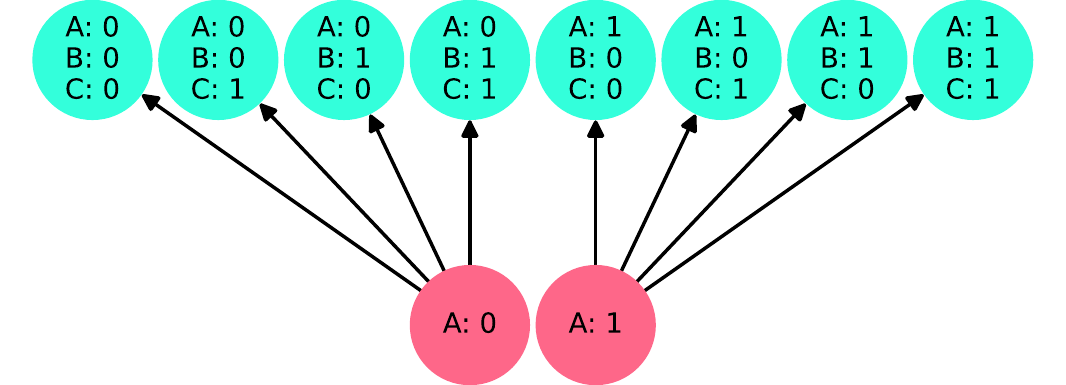}
\end{center}
There are four possible causal completions for this space.
Two of the causal completions are obtained by imposing a fixed order on events \ev{B} and \ev{C}: either \ev{B} causally precedes \ev{C} (left below) or \ev{B} causally succeeds \ev{C} (right below).
\begin{center}
\begin{tabular}{cc}
    \includegraphics[height=3cm]{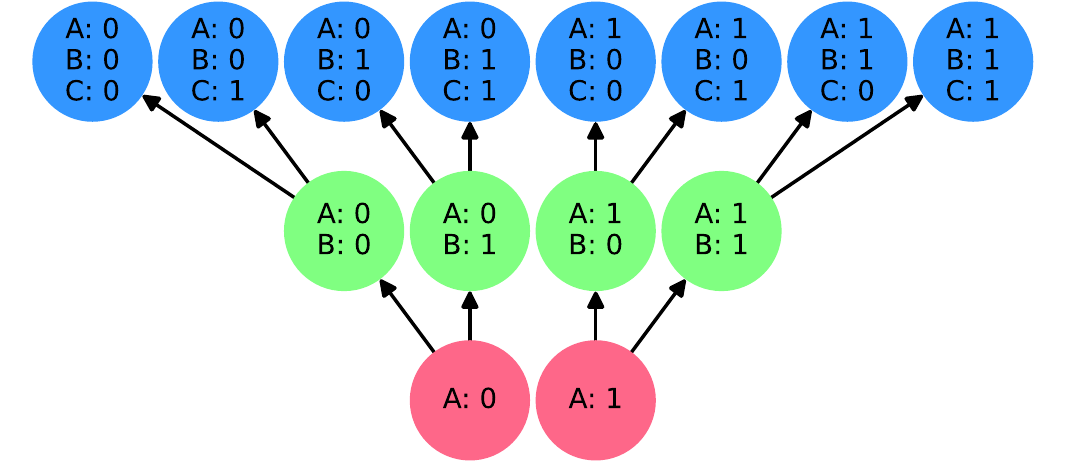}
    &
    \includegraphics[height=3cm]{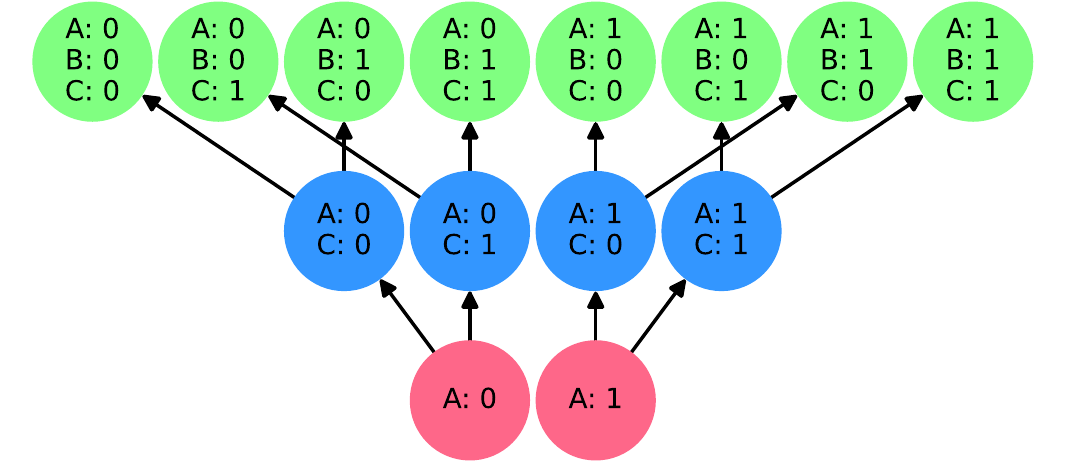}
\end{tabular}
\end{center}
The remaining two causal completions are obtained by imposing an order on events \ev{B} and \ev{C} that depends on the input at event \ev{A}: either \ev{B} causally precedes \ev{C} when the input at \ev{A} is 0 and causally succeeds \ev{C} when the input at \ev{A} is 1 (left below), or \ev{B} causally succeeds \ev{C} when the input at \ev{A} is 0 and causally precedes \ev{C} when the input at \ev{A} is 1 (right below). 
\begin{center}
\begin{tabular}{cc}
    \includegraphics[height=3cm]{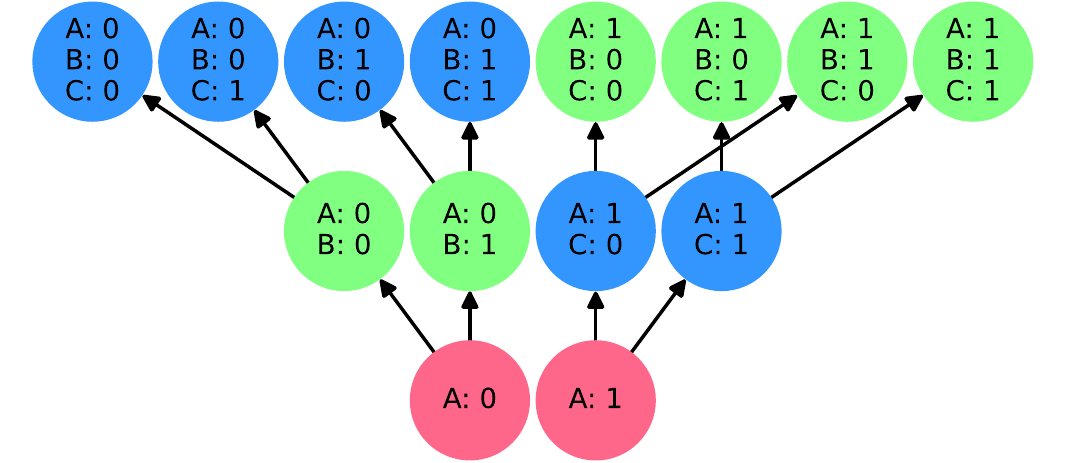}
    &
    \includegraphics[height=3cm]{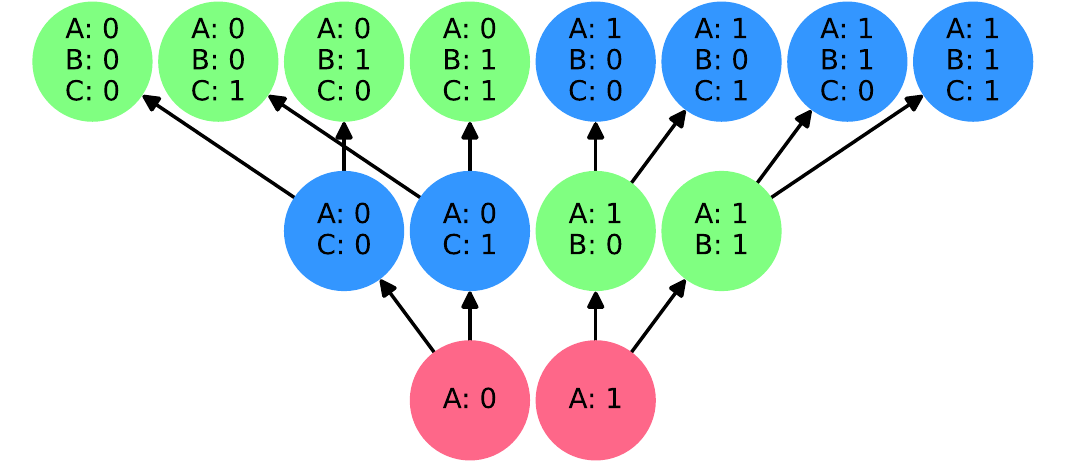}
\end{tabular}
\end{center}

To conclude this subsection, we show that both causal composition and sequential composition respect causal completeness, as does conditional sequential composition (in the circumstances under which it respects the free-choice condition).

\begin{proposition}
\label{proposition:parallel-sequential-composition-causally-complete}
Let $\Theta$ and $\Theta'$ be causally complete spaces of input histories such that $\Events{\Theta} \cap \Events{\Theta'} = \emptyset$.
The parallel composition $\Theta \cup \Theta'$ and sequential composition $\Theta \seqcomposeSym \Theta'$ are causally complete.
\end{proposition}
\begin{proof}
See \ref{proof:proposition:parallel-sequential-composition-causally-complete}
\end{proof}

\begin{proposition}
\label{proposition:conditional-sequential-composition-causally-complete}
Let $\Theta$ be a causally complete space of input histories.
Let $(\Theta'_k)_{k \in \max\Ext{\Theta}}$ be a family of causally complete spaces of input histories, with $\Events{\Theta} \cap \Events{\Theta'_k} = \emptyset$ for all $k \in \max\Ext{\Theta}$.
Assume that the families of input sets $\underline{\Inputs{\Theta'_k}}$ are identical for all $k \in \max\Ext{\Theta}$.
Then the conditional sequential composition $\Theta \seqcomposeSym \underline{\Theta'}$ is causally complete.
\end{proposition}
\begin{proof}
See \ref{proof:proposition:conditional-sequential-composition-causally-complete}
\end{proof}

\begin{proposition}
\label{proposition:causal-completion-events-inputs}
Let $\Theta$ be a space of input histories satisfying the free-choice condition and let $\hat{\Theta} \in \CausCompl{\Theta}$ be a causal completion of $\Theta$.
Then $\Events{\hat{\Theta}} = \Events{\Theta}$ and $\underline{\Inputs{\hat{\Theta}}} = \underline{\Inputs{\Theta}}$.
\end{proposition}
\begin{proof}
See \ref{proof:proposition:causal-completion-events-inputs}
\end{proof}

\begin{theorem}
\label{theorem:parallel-composition-causal-completions}
Let $\Theta$ and $\Theta'$ be spaces of input histories satisfying the free-choice condition, such that $\Events{\Theta} \cap \Events{\Theta'} = \emptyset$.
The parallel composition $\Theta \cup \Theta'$ has the following causal completions:
\begin{equation}
    \CausCompl{\Theta \cup \Theta'}
    =
    \suchthat{\hat{\Theta} \cup \hat{\Theta}'\;\;}{
        \begin{array}{l}
        \hat{\Theta} \in \CausCompl{\Theta},\\
        \hat{\Theta}' \in \CausCompl{\Theta'}
        \end{array}
    }
\end{equation}
\end{theorem}
\begin{proof}
See \ref{proof:theorem:parallel-composition-causal-completions}
\end{proof}

\begin{theorem}
\label{theorem:conditional-sequential-composition-causal-completions}
Let $\Theta$ be a space of input histories satisfying the free-choice condition.
Let $(\Theta'_k)_{k \in \max\Ext{\Theta}}$ be a family of spaces of input histories, with $\Events{\Theta} \cap \Events{\Theta'_k} = \emptyset$ for all $k \in \max\Ext{\Theta}$.
Assume that the families of input sets $\underline{\Inputs{\Theta'_k}}$ are identical for all $k \in \max\Ext{\Theta}$ and that they all satisfy the free-choice condition.
Then the conditional sequential composition $\Theta \seqcomposeSym \underline{\Theta'}$ has the following causal completions:
\begin{equation}
    \CausCompl{\Theta \seqcomposeSym \underline{\Theta'}}
    =
    \suchthat{\hat{\Theta} \seqcomposeSym \underline{\hat{\Theta}'}\;\;}{
        \begin{array}{l}
        \hat{\Theta} \in \CausCompl{\Theta},\\
        \forall k.\; \hat{\Theta}'_k \in \CausCompl{\Theta'_k}
        \end{array}
    }
\end{equation}
\end{theorem}
\begin{proof}
See \ref{proof:theorem:conditional-sequential-composition-causal-completions}
\end{proof}

\begin{corollary}
\label{corollary:sequential-composition-causal-completions}
Let $\Theta$ and $\Theta'$ be causally complete spaces of input histories such that $\Events{\Theta} \cap \Events{\Theta'} = \emptyset$ and satisfying the free-choice condition.
The sequential composition $\Theta \seqcomposeSym \Theta'$ has the following causal completions:
\begin{equation}
    \CausCompl{\Theta \seqcomposeSym \Theta'}
    =
    \suchthat{\hat{\Theta} \seqcomposeSym \hat{\Theta}'\;\;}{
        \begin{array}{l}
        \hat{\Theta} \in \CausCompl{\Theta},\\
        \hat{\Theta}' \in \CausCompl{\Theta'}
        \end{array}
    }
\end{equation}
\end{corollary}

\subsection{The Hierarchy of Causally Complete Spaces}
\label{subsection:spaces-hierarchy}

Causally complete spaces---and the associated causal polytopes---are the main focus of this work. The result below clarifies their standing within the hierarchy of spaces satisfying the free-choice condition.

\begin{proposition}
\label{proposition:hierarchy-causally-complete-spaces}
Causally complete spaces satisfying $\underline{\Inputs{\Theta}} = \underline{I}$ form a subset $\CCSpaces{\underline{I}} \subseteq \SpacesFC{\underline{I}}$ which is closed under meet but not under join (for two or more events). We refer to the $\wedge$-semilattice $\CCSpaces{\underline{I}}$ as the \emph{hierarchy of causally complete spaces} for $\underline{I}$.
\end{proposition}
\begin{proof}
See \ref{proof:proposition:hierarchy-causally-complete-spaces}
\end{proof}

As our simplest non-trivial example, we look at the hierarchy of causally complete spaces $\CCSpaces{\left(\{0,1\}\right)_{\omega \in \evset{A,B}}}$ on 2 events \ev{A} and \ev{B} with binary inputs $\{0,1\}$.
This hierarchy contains 7 causally complete spaces of input histories, ordered in 3 layers.
For additional ease of understanding, each space of input histories we examine is displayed together with the associated space of extended input histories: this way, it is easy to check whether a given space refines another.

At the bottom of the hierarchy $\CCSpaces{\left(\{0,1\}\right)_{\omega \in \evset{A,B}}}$ is the discrete space, induced by the discrete order on two events.
This space has 4 histories: because the two events are causally unrelated, the input histories \hist{A/0} and \hist{A/1} determine the output on event \ev{A}, while the input histories \hist{B/0} and \hist{B/1} determine the output on event \ev{B}.
\begin{center}
    \begin{tabular}{cc}
    \includegraphics[height=2.5cm]{svg-inkscape/space-AB-0-highlighted_svg-tex.pdf}
    &
    \includegraphics[height=2.5cm]{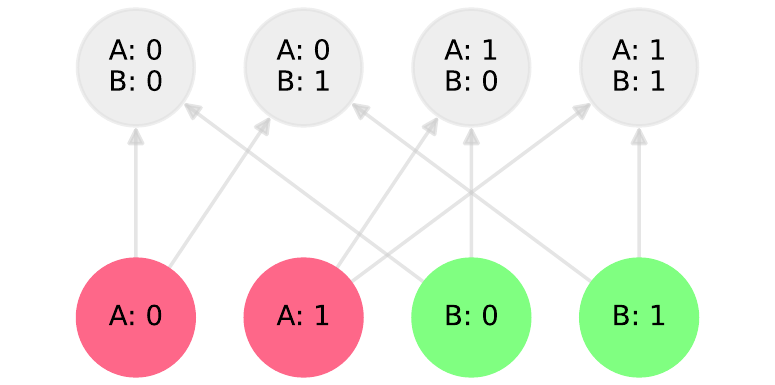}
    \\
    $\Theta$
    &
    $\Ext{\Theta}$
    \end{tabular}
\end{center}
At the top of the hierarchy are the 2 spaces, induced by the two possible total orders on two events.
Below is the space corresponding to the total order $\ev{A}\rightarrow \ev{B}$.
This space has 6 histories: the input histories \hist{A/0} and \hist{A/1} determine the output on event \ev{A}, while the remaining four histories are needed to determine the output on event \ev{B}, because the latter causally succeeds \ev{A}.
\begin{center}
    \begin{tabular}{cc}
    \includegraphics[height=2.5cm]{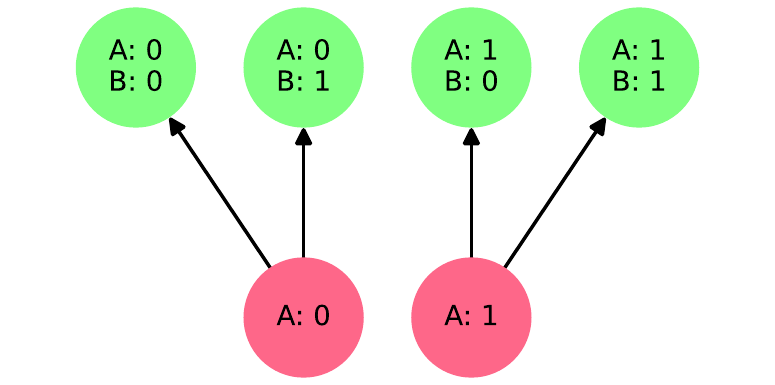}
    &
    \includegraphics[height=2.5cm]{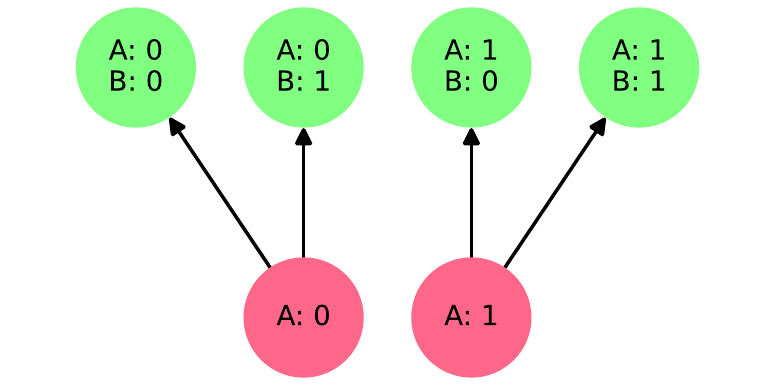}
    \\
    $\Theta$
    &
    $\Ext{\Theta}$
    \end{tabular}
\end{center}
The two spaces induced by total orders are related by event permutation symmetry $S(\evset{A,B})$.
\begin{center}
    \begin{tabular}{cc}
    \includegraphics[height=2.5cm]{svg-inkscape/space-AB-5-highlighted_svg-tex.pdf}
    &
    \includegraphics[height=2.5cm]{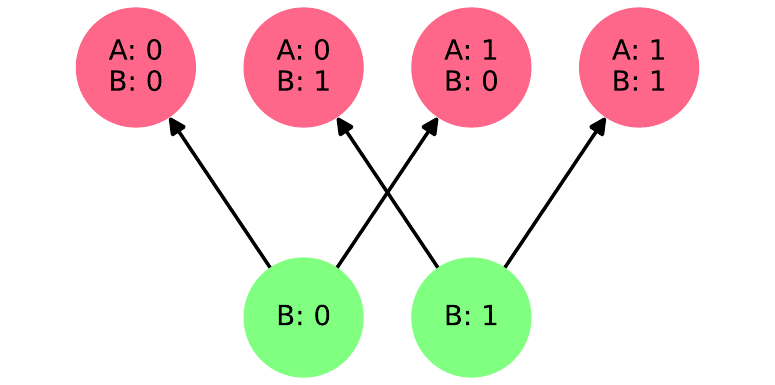}
    \end{tabular}
\end{center}
The middle layer of the hierarchy contains 4 spaces, each of them a coarsening of the discrete space and a refinement of one of the two total order spaces.
Below is one of the four spaces.
This space is a refinement of the space for the total order $\ev{A}\rightarrow \ev{B}$: by looking at the space of extended input histories, we note that the input history \hist{B/0} has been added, with tip event \ev{B}.
This means that the output at \ev{B} does not depend on the input at \ev{A} when the input at \ev{B} is 0: choosing 0 at \ev{B} causally disconnects \ev{B} from \ev{A}.
When the input at \ev{B} is 1, the output at \ev{B} can still depend on the input at \ev{A}, as demonstrated by the two input histories \hist{A/0,B/1} and \hist{A/1,B/1} with tip event \ev{B}.
\begin{center}
    \begin{tabular}{cc}
    \includegraphics[height=2.5cm]{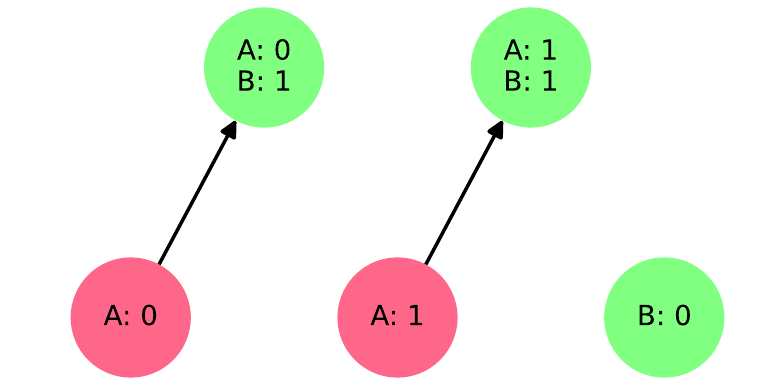}
    &
    \includegraphics[height=2.5cm]{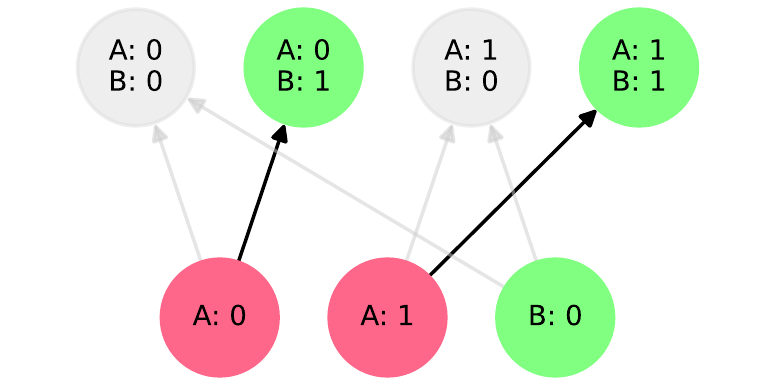}
    \\
    $\Theta$
    &
    $\Ext{\Theta}$
    \end{tabular}
\end{center}
The four spaces in the middle layer are related by event-input permutation symmetry $S(\evset{A,B})\times S(I_{\ev{A}})\times S(I_{\ev{B}})$: that is, by independently permuting the event set $\evset{A,B}$ and each of the input value sets $I_\omega$ (in fact, permuting one of the input sets is enough in this case).
\begin{center}
    \begin{tabular}{cc}
    \includegraphics[height=2.5cm]{svg-inkscape/space-AB-1-highlighted_svg-tex.pdf}
    &
    \includegraphics[height=2.5cm]{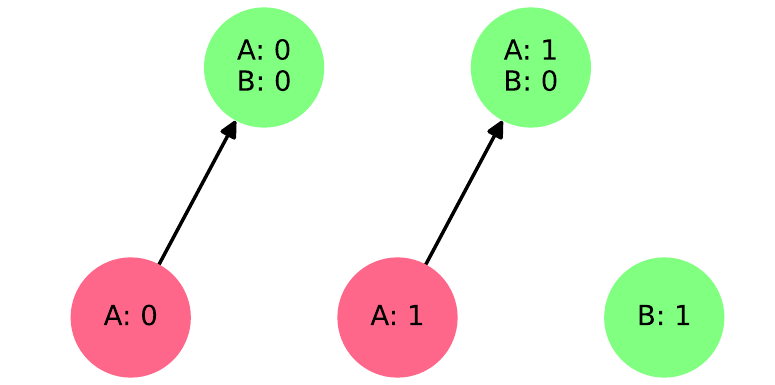}
    \\
    \includegraphics[height=2.5cm]{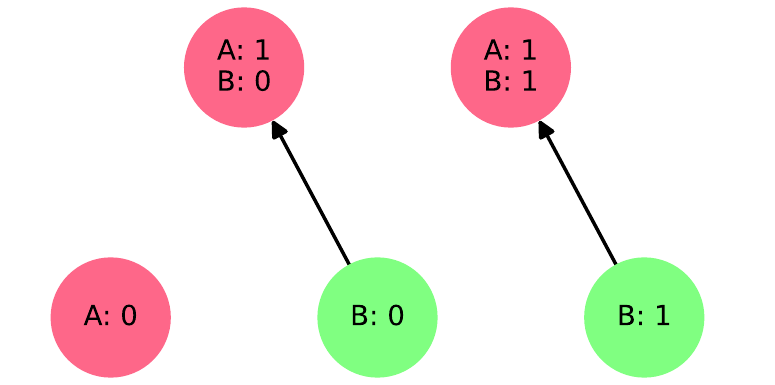}
    &
    \includegraphics[height=2.5cm]{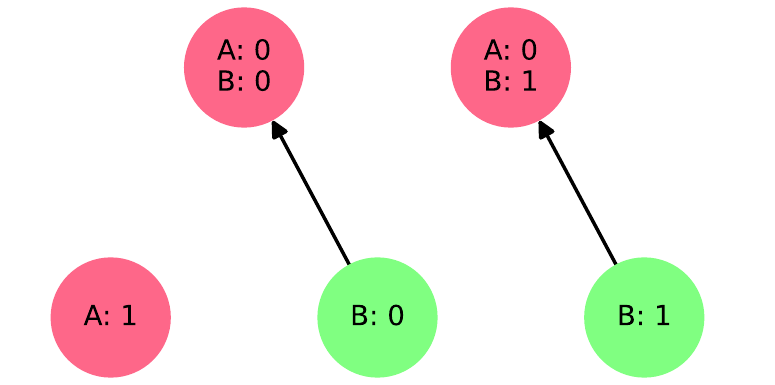}
    \end{tabular}
\end{center}

Event-input permutation symmetry is extremely helpful when classifying spaces: because the event and input labels are arbitrary, permutations don't contribute to our general understanding of causality.
For a general $\underline{I} = (I_\omega)_{\omega \in E}$, event-input permutation symmetry corresponds to the following group, where $S(X)$ is the group of permutations on a set $X$:
\[
S(E) \times \prod_{\omega \in E} S(I_\omega)
\]
Permutation symmetry is broken once an empirical model is specified, because conditional probability distributions are not, in general, invariant under its action.
In those cases where empirical models retain some symmetry, the latter can be used to reduce the computational burden for causal decomposition.
Figure \ref{fig:space-ABC-unique-tight-28-perm} (p.\pageref{fig:space-ABC-unique-tight-28-perm}) shows the action of permutation symmetry on a causally complete space on 3 events with binary inputs: the symmetry group does not act freely on this particular equivalence class (which features 24 spaces), but it does on other equivalence classes (27 such equivalence classes in total, e.g. equivalence class 30).

\begin{figure}[h]
    \centering
    \begin{tabular}{cccccc}
    \includegraphics[height=1.5cm]{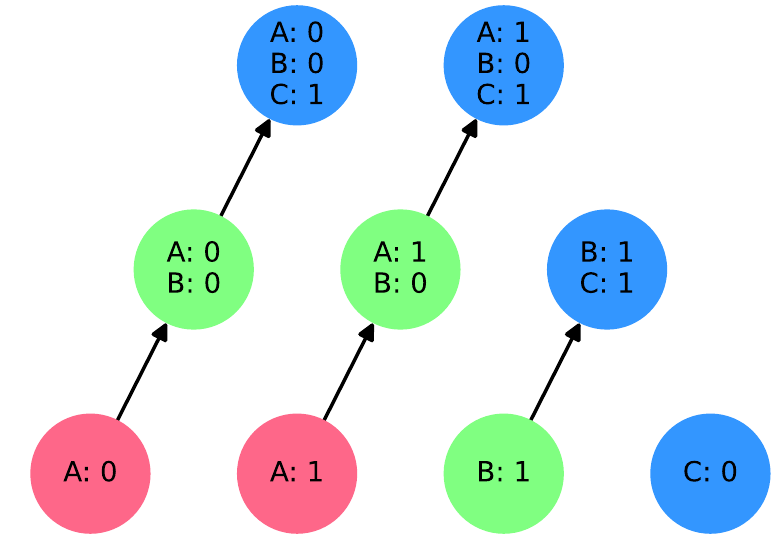}
    &
    \includegraphics[height=1.5cm]{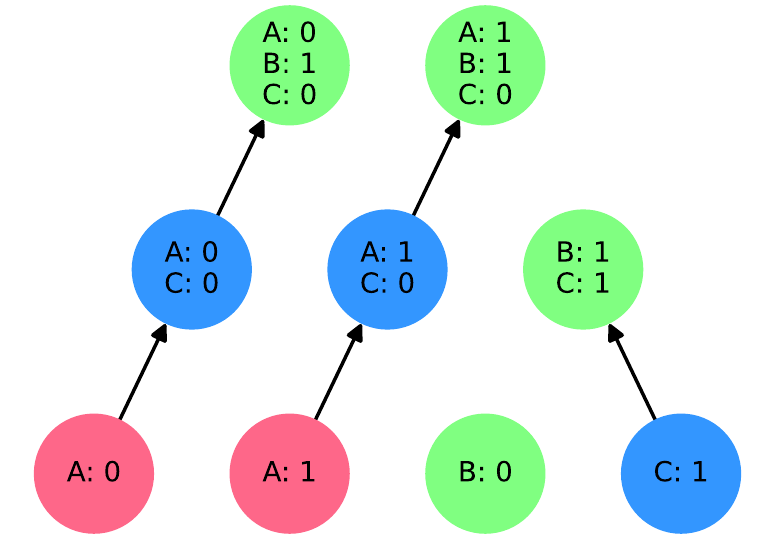}
    &
    \includegraphics[height=1.5cm]{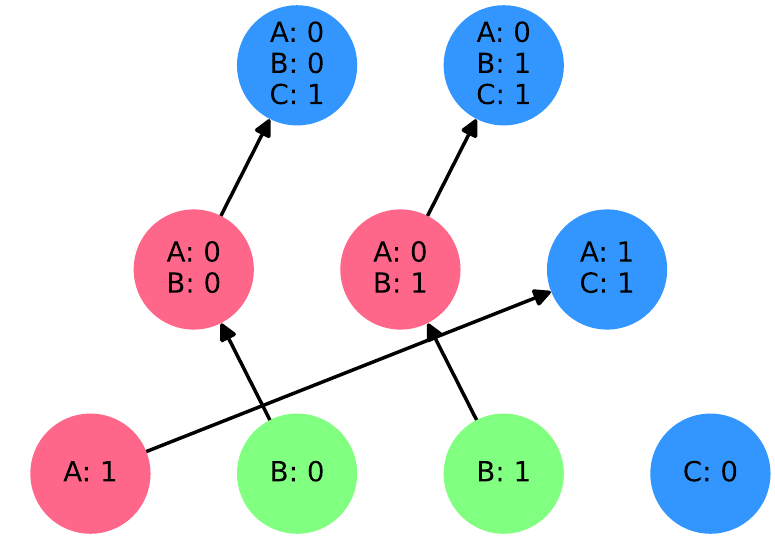}
    &
    \includegraphics[height=1.5cm]{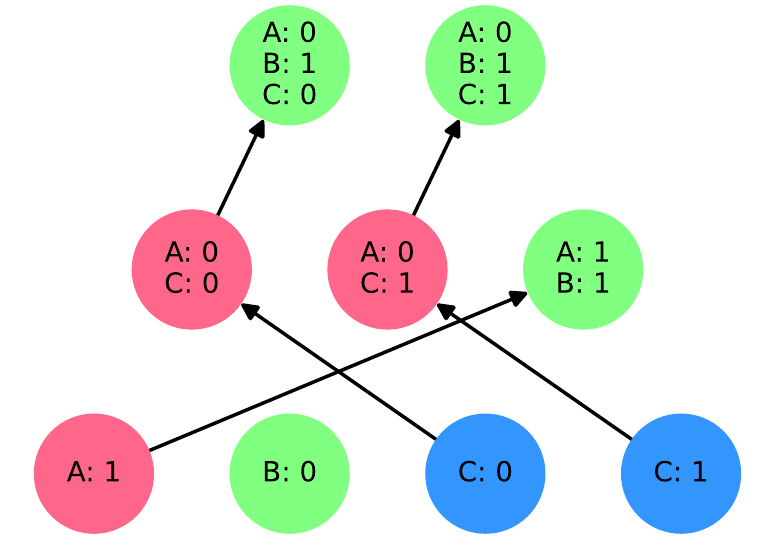}
    &
    \includegraphics[height=1.5cm]{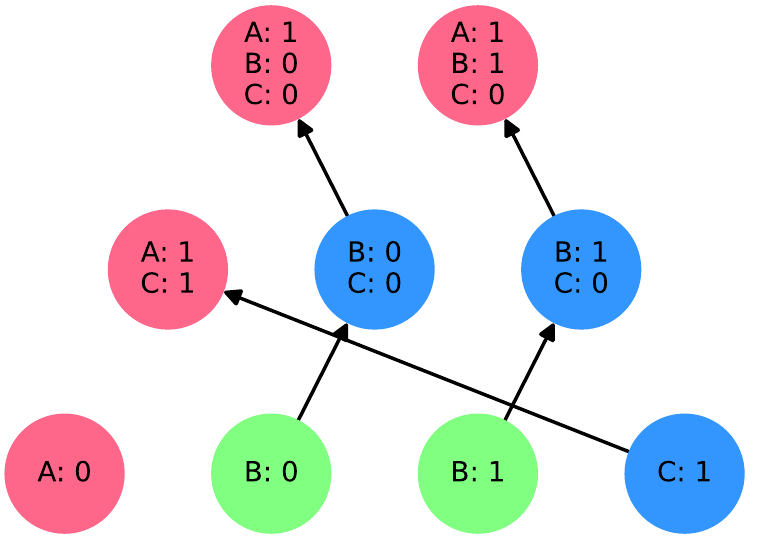}
    &
    \includegraphics[height=1.5cm]{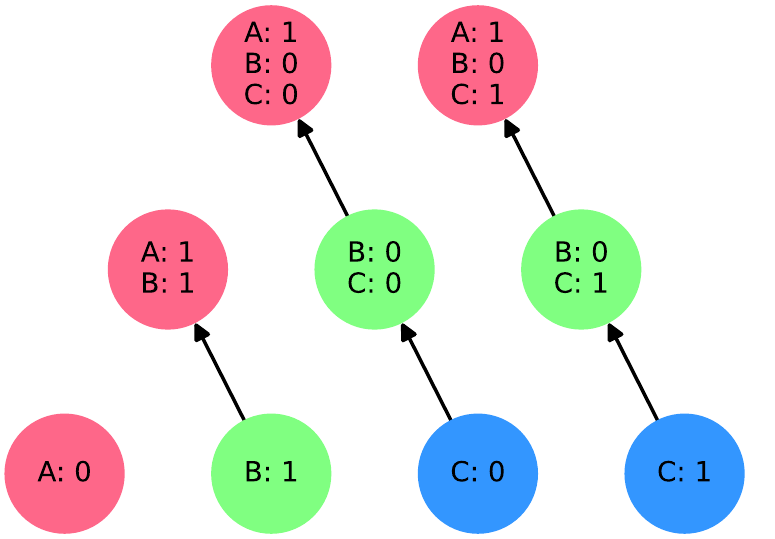}
    \\
    \includegraphics[height=1.5cm]{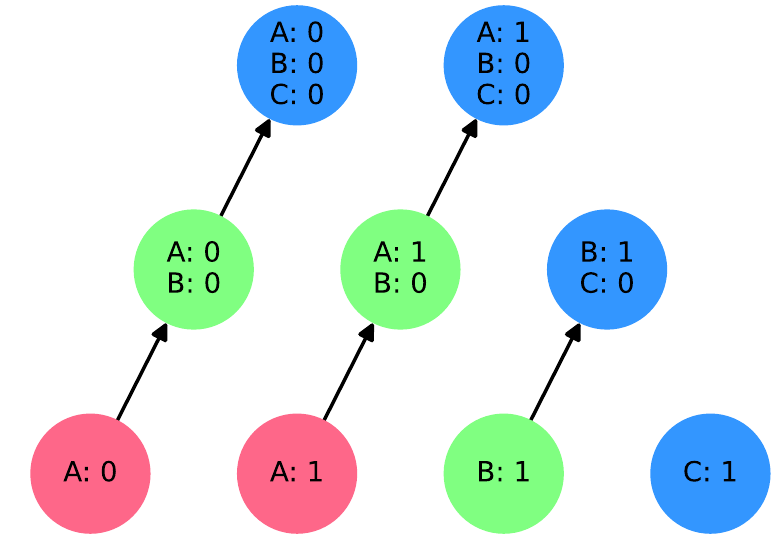}
    &
    \includegraphics[height=1.5cm]{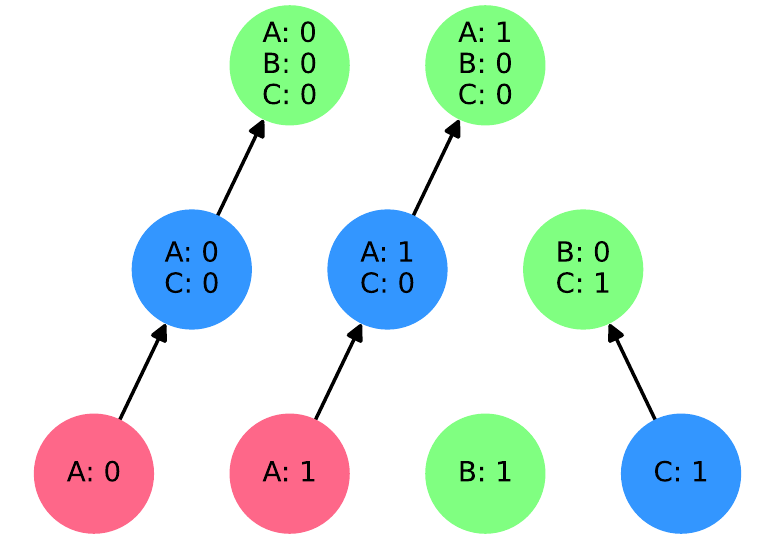}
    &
    \includegraphics[height=1.5cm]{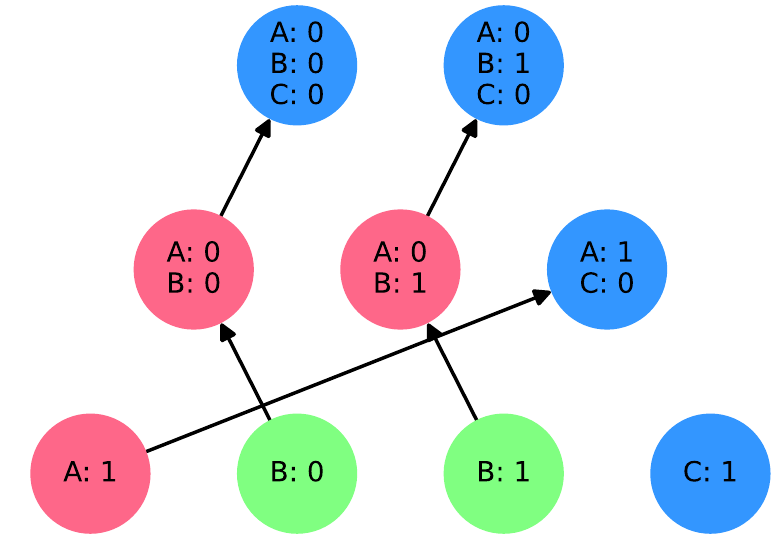}
    &
    \includegraphics[height=1.5cm]{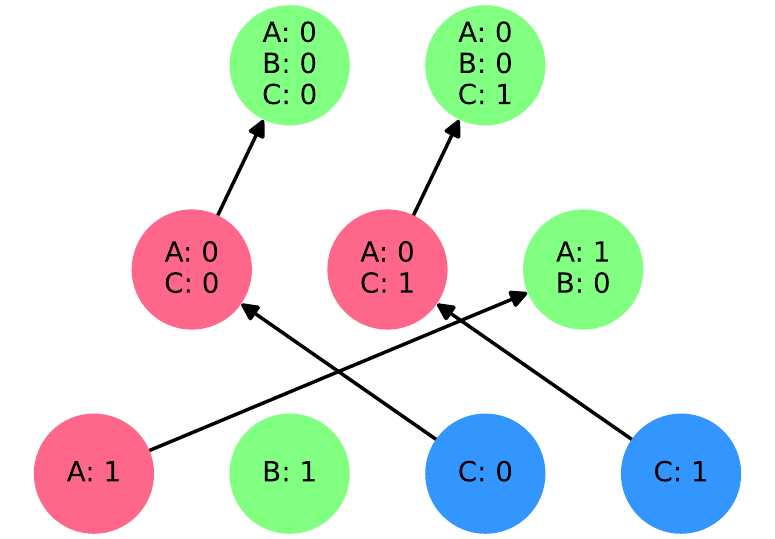}
    &
    \includegraphics[height=1.5cm]{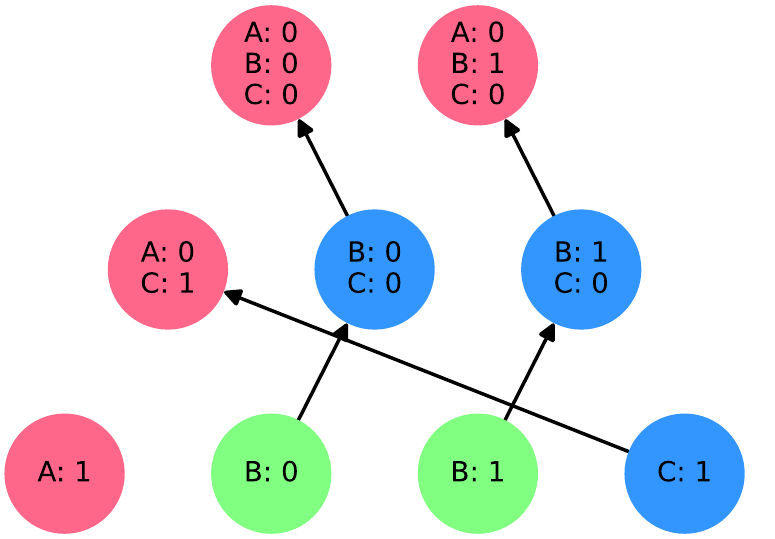}
    &
    \includegraphics[height=1.5cm]{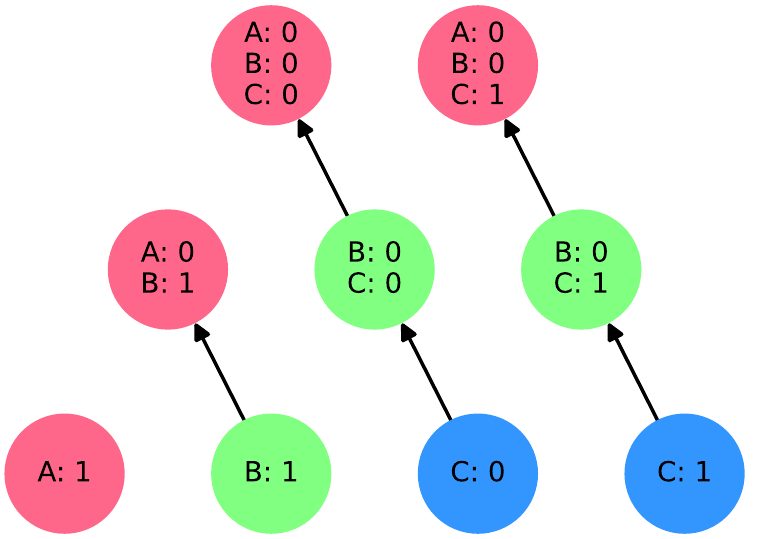}
    \\
    \includegraphics[height=1.5cm]{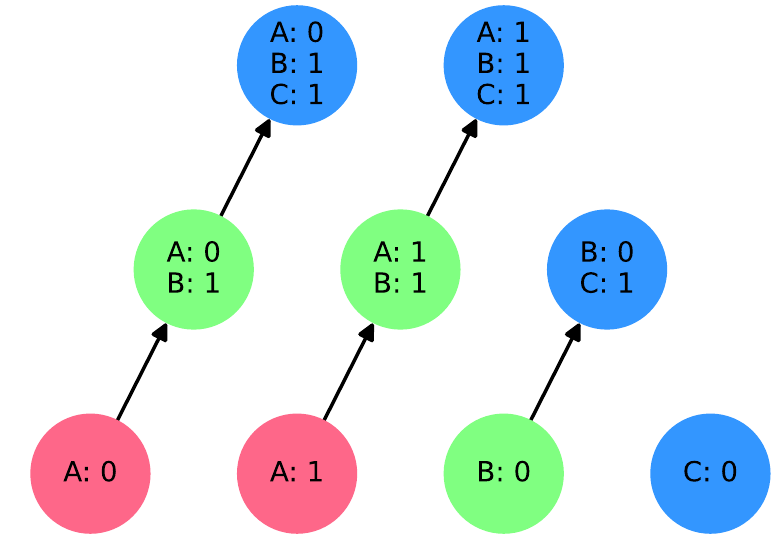}
    &
    \includegraphics[height=1.5cm]{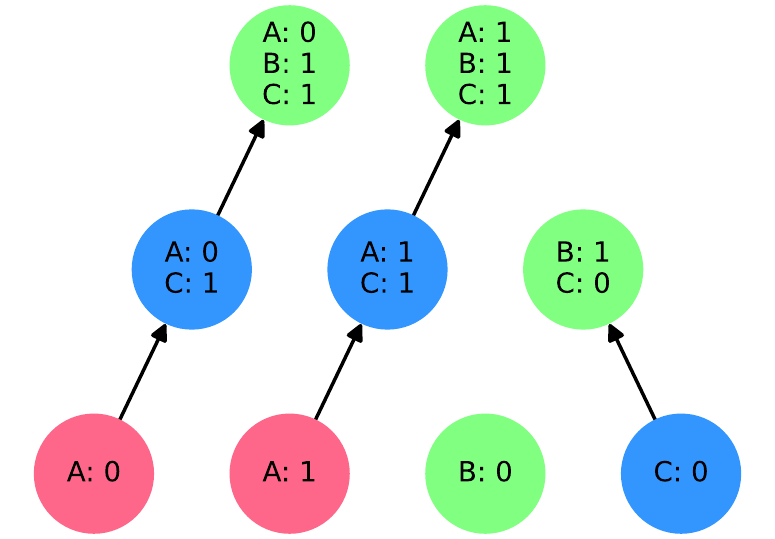}
    &
    \includegraphics[height=1.5cm]{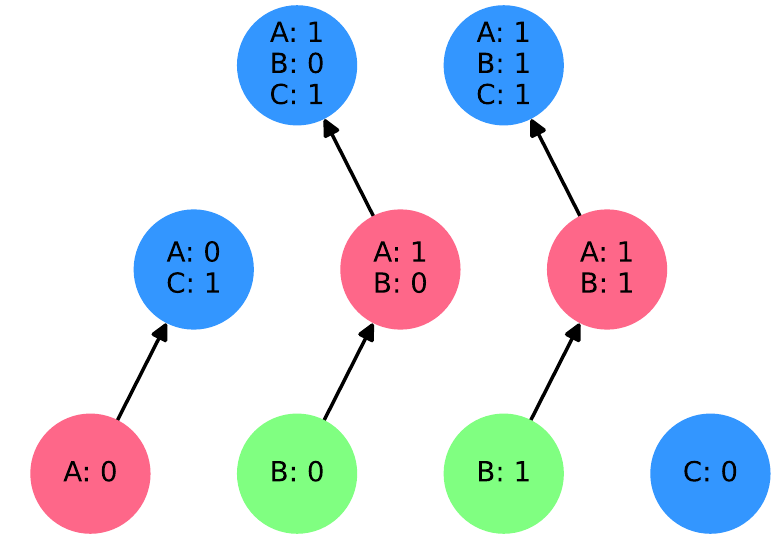}
    &
    \includegraphics[height=1.5cm]{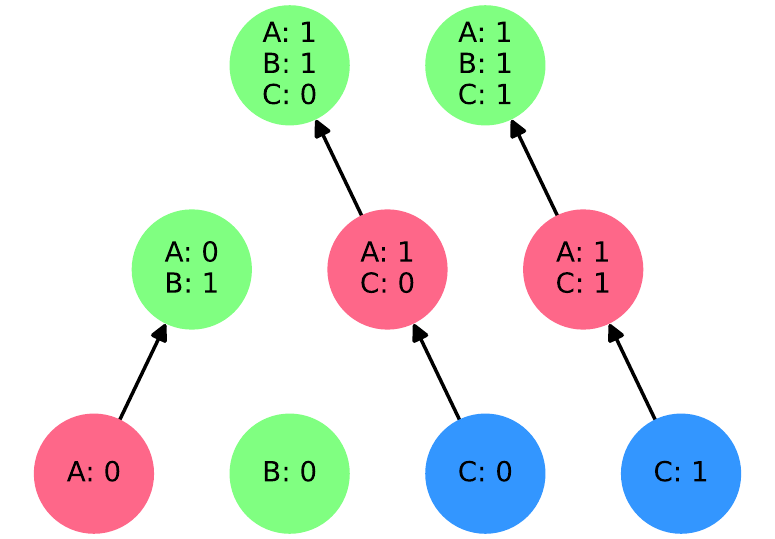}
    &
    \includegraphics[height=1.5cm]{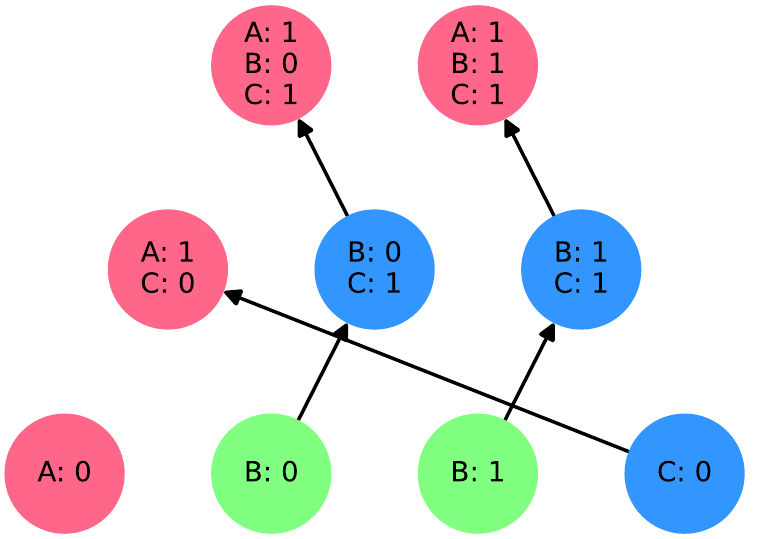}
    &
    \includegraphics[height=1.5cm]{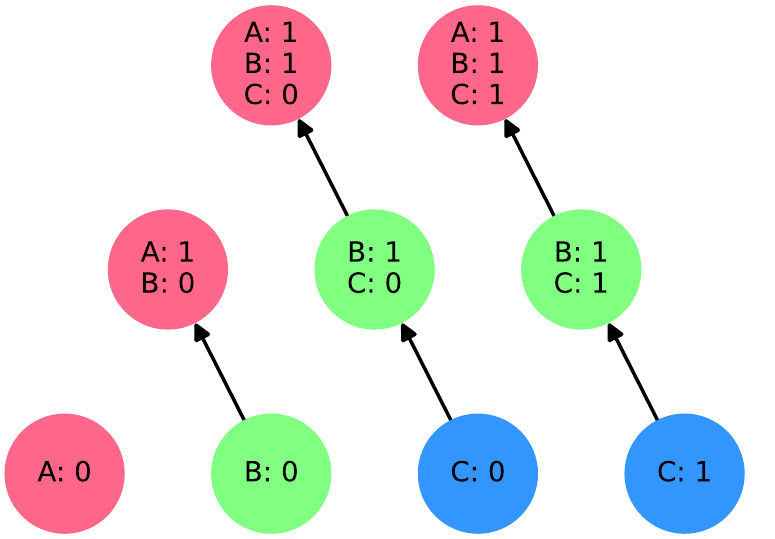}
    \\
    \includegraphics[height=1.5cm]{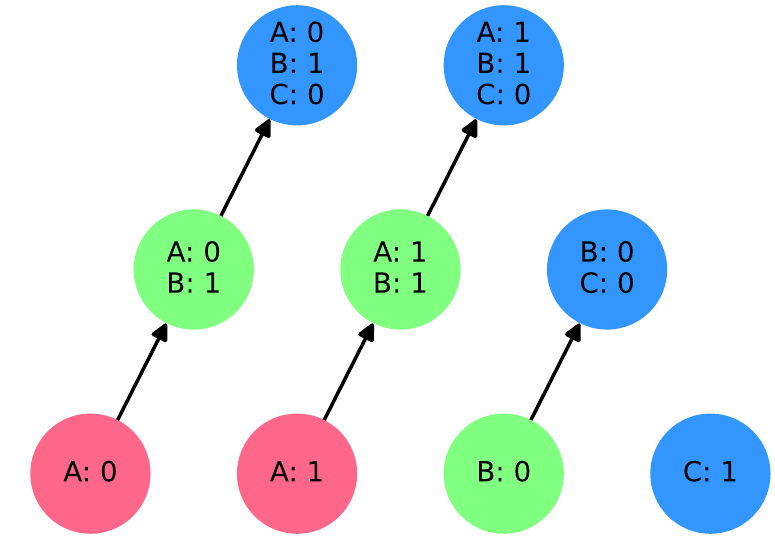}
    &
    \includegraphics[height=1.5cm]{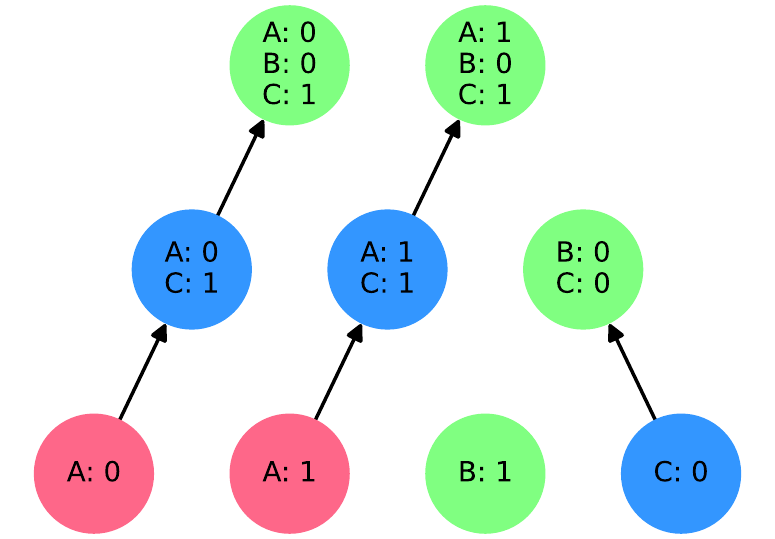}
    &
    \includegraphics[height=1.5cm]{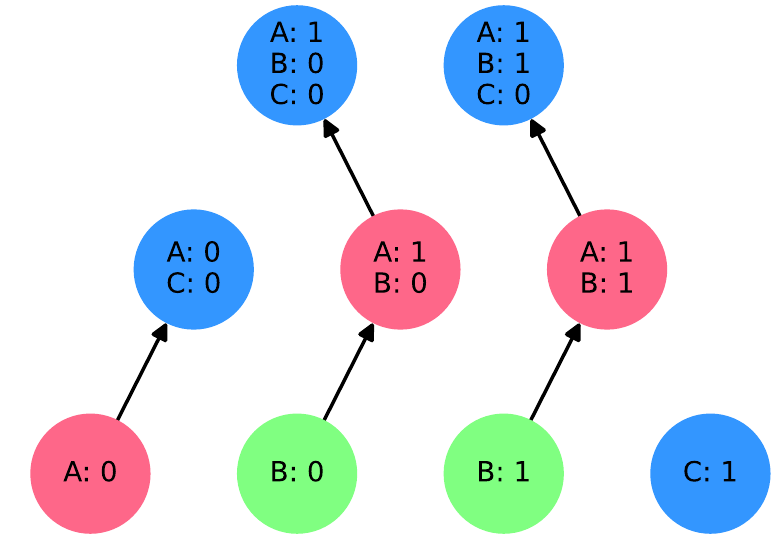}
    &
    \includegraphics[height=1.5cm]{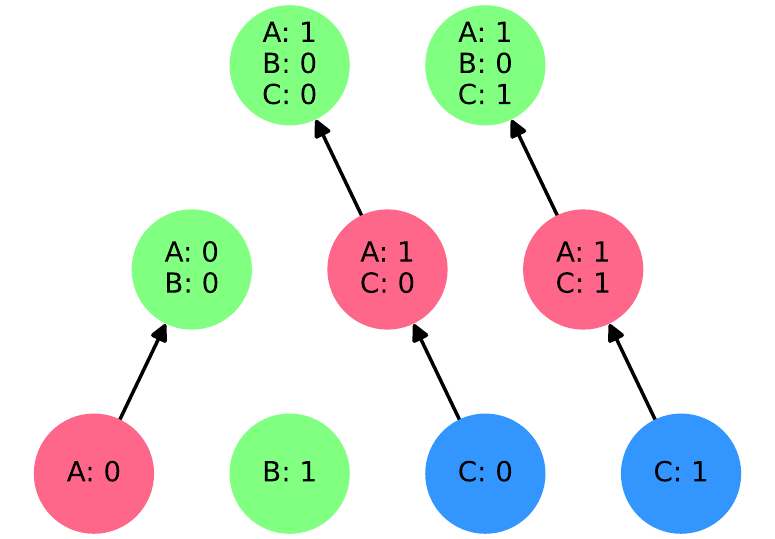}
    &
    \includegraphics[height=1.5cm]{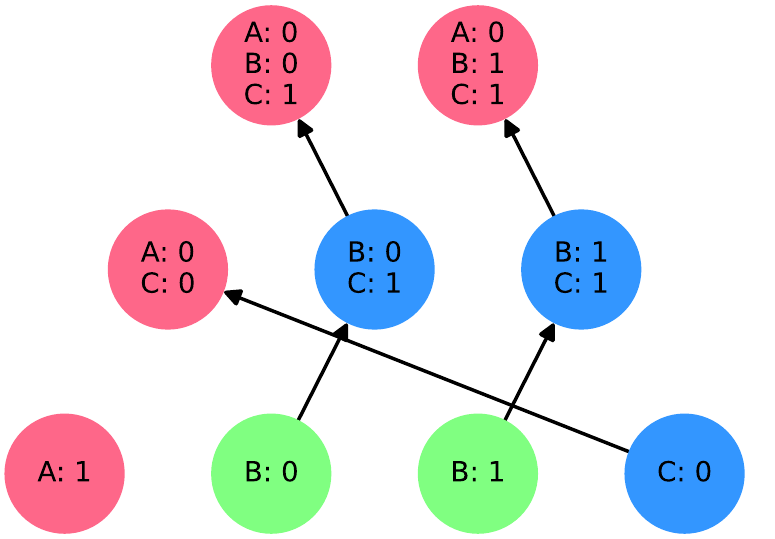}
    &
    \includegraphics[height=1.5cm]{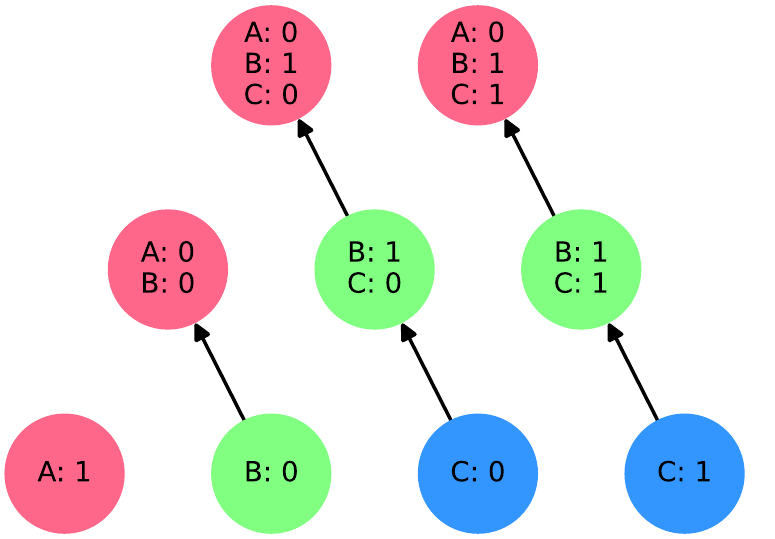}
    \end{tabular}
    \caption{
    All 24 permutations of a causally complete space on 3 events with binary inputs.
    Specifically, these are the contents of equivalence class 28 in the hierarchy, as depicted by Figure \ref{fig:hierarchy-spaces-ABC} (p.\pageref{fig:hierarchy-spaces-ABC}).
    Each row is a coset for the action of event permutation symmetry $S(\evset{A,B,C})$, which acts freely on this equivalence class.
    Each column is a coset for the action of input permutation symmetry $\prod_{\omega \in \evset{A,B,C}} S(I_\omega)$, which doesn't act freely on this equivalence class.
    }
\label{fig:space-ABC-unique-tight-28-perm}
\end{figure}

Having completed our exposition of the hierarchy of spaces on two events with binary inputs, we now move to the hierarchy $\CCSpaces{\left(\{0,1\}\right)_{\omega \in \evset{A,B,C}}}$ on three events.
This hierarchy has 2644 spaces, forming 102 equivalence classes under event-input permutation symmetry \cite{gogioso2022classification}.
While the full hierarchy is too complex to display, Figure \ref{fig:hierarchy-spaces-ABC} (p.\pageref{fig:hierarchy-spaces-ABC}) depicts the corresponding hierarchy of 102 equivalence classes: in this condensed graph, an edge $i \rightarrow j$ indicates that some space---and hence every space---in equivalence class $i$ is a closest refinement of some space of equivalence class $j$.
To get a reasonably orderly 3D view of the full hierarchy, one could imagine stacking all spaces in each equivalence class vertically: edges between spaces in equivalence classes $i$ and $j$ would line up, and their 2D vertical projections would form the edges seen in Figure \ref{fig:hierarchy-spaces-ABC}.

\begin{figure}[h]
    \centering
    \includegraphics[width=\textwidth]{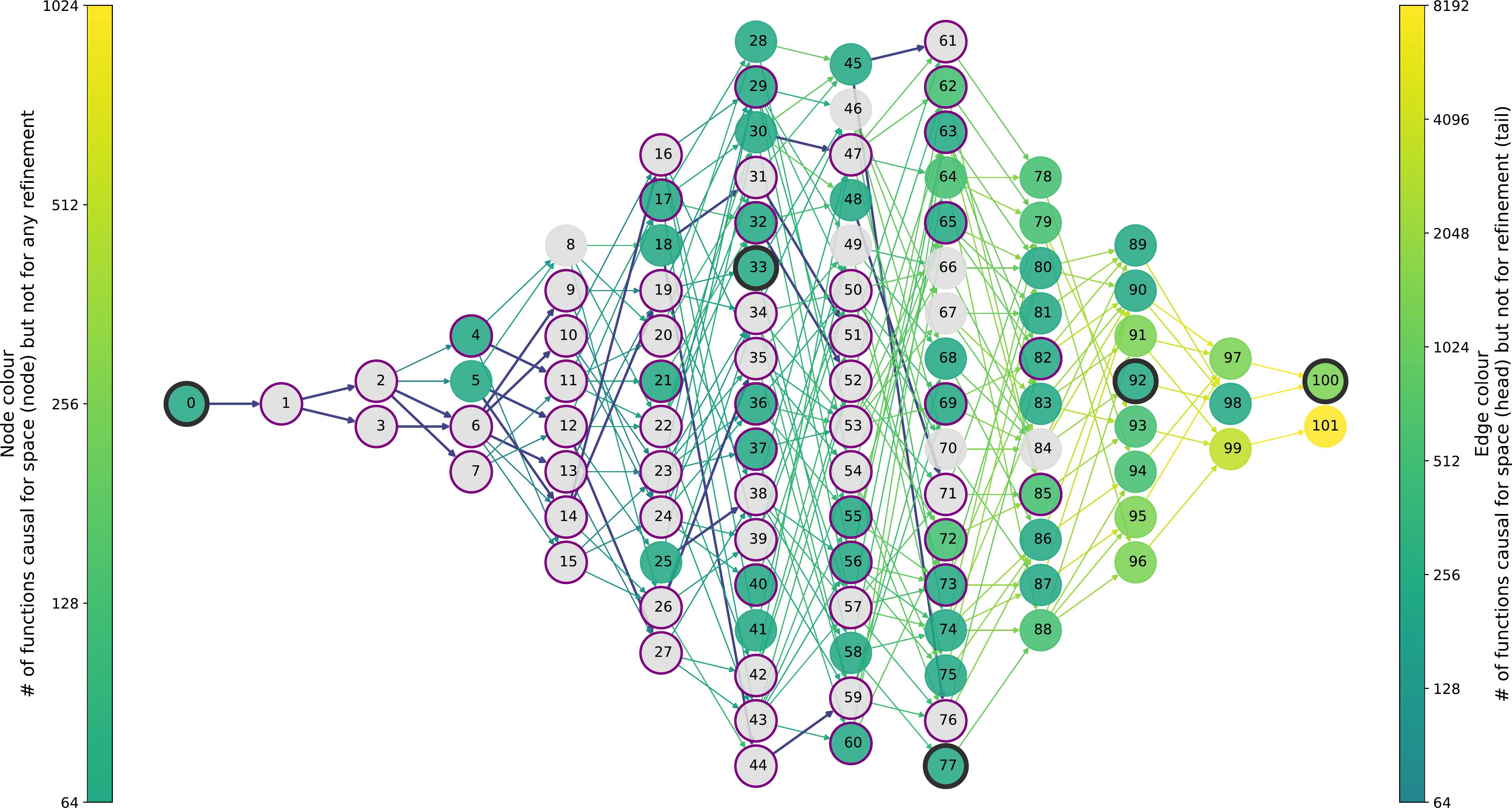}
    \caption{
    The hierarchy of causally complete spaces on 3 events with binary inputs, grouped into 102 equivalence classes under event-input permutation symmetry.
    An edge $i \rightarrow j$ indicates that some space---and hence every space---in equivalence class $i$ is a closest refinement for some space in equivalence class $j$.
    See \cite{gogioso2022classification} for description of all spaces in the hierarchy and \cite{gogioso2022topology} for the definition of causal functions.

    Node colour indicates the number of causal functions for a space which are not causal for any of its subspaces, while edge colour indicates the number of causal functions for the head space that are not causal for the tail space.
    Grey nodes (e.g. eq. class 1) indicate spaces where every causal function is also causal for some subspace, while thicker dark blue edges (e.g. edge $0 \rightarrow 1$) indicate that all causal functions for the head space are also causal a single tail space.

    Thin purple borders for nodes indicate eq. classes of non-tight spaces (e.g. eq. class 1).
    Thick black borders for nodes indicate the eq. classes of spaces induced by causal orders.
    }
\label{fig:hierarchy-spaces-ABC}
\end{figure}

At the bottom of the hierarchy we find the discrete space, induced by the discrete order $\discrete{\ev{A},\ev{B},\ev{C}}$, sitting alone in equivalence class 0.
This space has 6 histories---one for each event and input choice at that event---all unrelated: this the \emph{no-signalling space}, where the output at each event depends only on the input at that event.
The corresponding space of extended input histories contains all 26 binary-valued partial functions on the 3 events: histories supported by more than one event are not $\vee$-prime in this space.
\begin{center}
    \begin{tabular}{cc}
    \includegraphics[height=3.5cm]{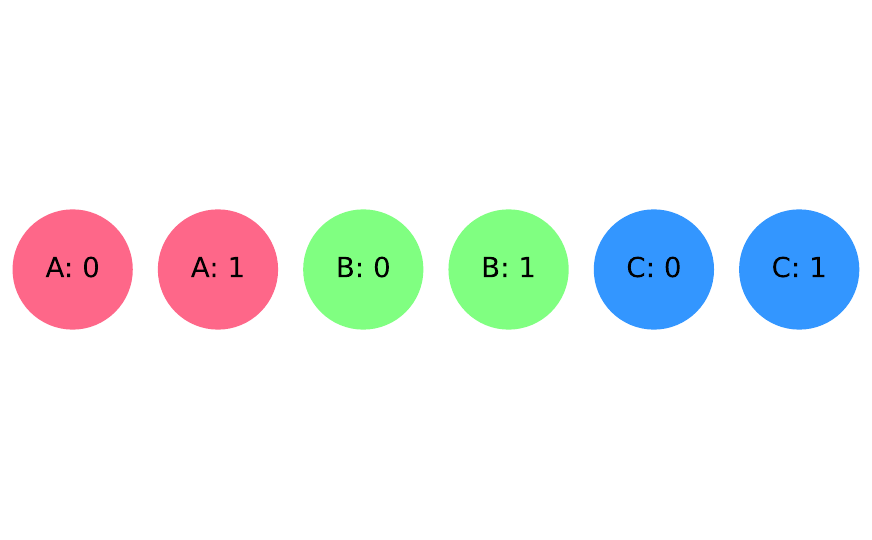}
    &
    \includegraphics[height=3.5cm]{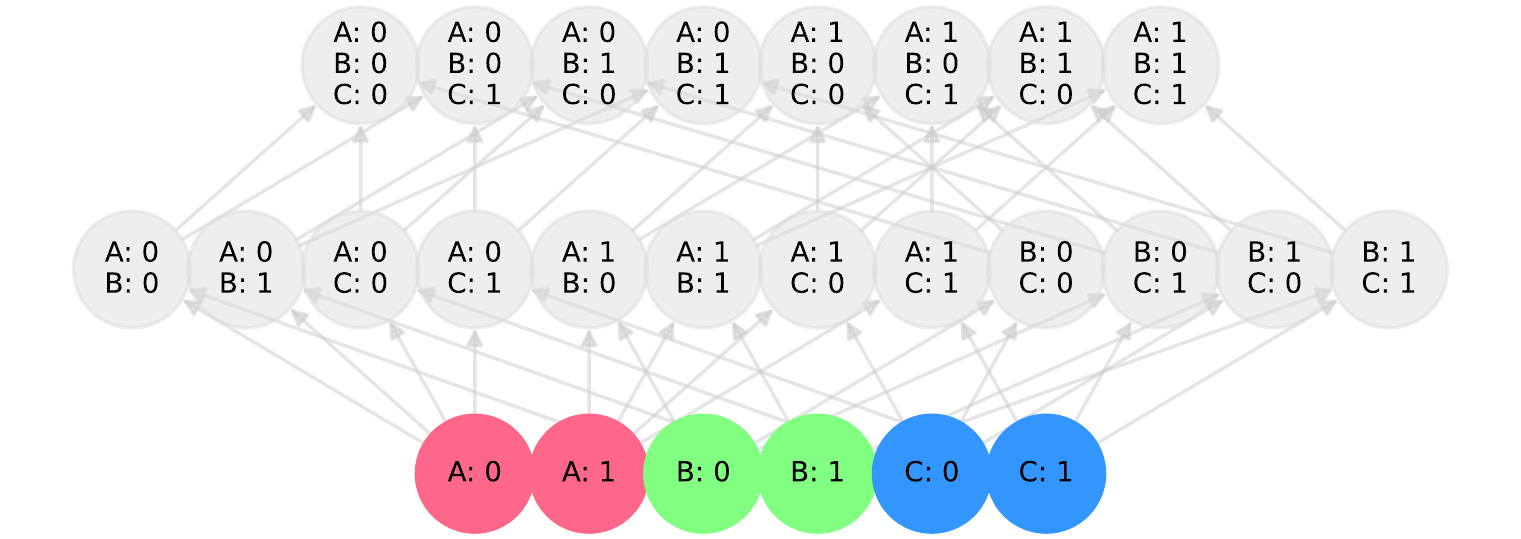}
    \\
    $\Theta_{0}$
    &
    $\Ext{\Theta_{0}}$
    \end{tabular}
\end{center}

At the top of the hierarchy we find two equivalence classes of spaces, labelled 100 and 101.
Equivalence class 100 contains the 6 spaces induced by total order: below is the space induced by $\total{\ev{A},\ev{B},\ev{C}}$.
This space has 14 histories, covering all possible combinations of inputs for event $\ev{A}$ (determining the output at event $\ev{A}$), for events $\evset{A,B}$ (determining the output at event $\ev{B}$) and for events $\evset{A,B,C}$ (determining the output at event $\ev{C}$).
This space coincides with its own space of extended input histories.
\begin{center}
    \begin{tabular}{cc}
    \includegraphics[height=3.5cm]{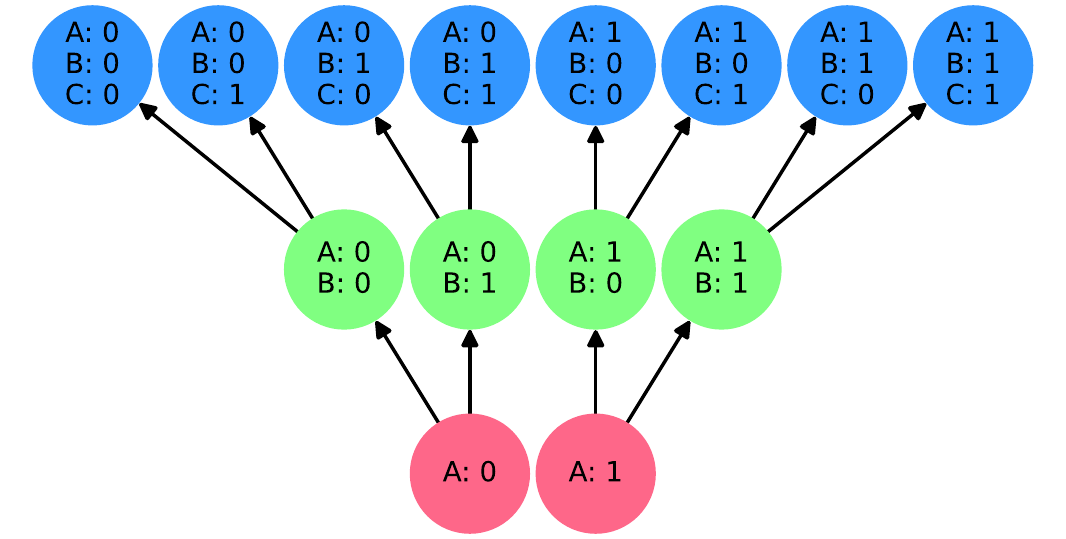}
    &
    \includegraphics[height=3.5cm]{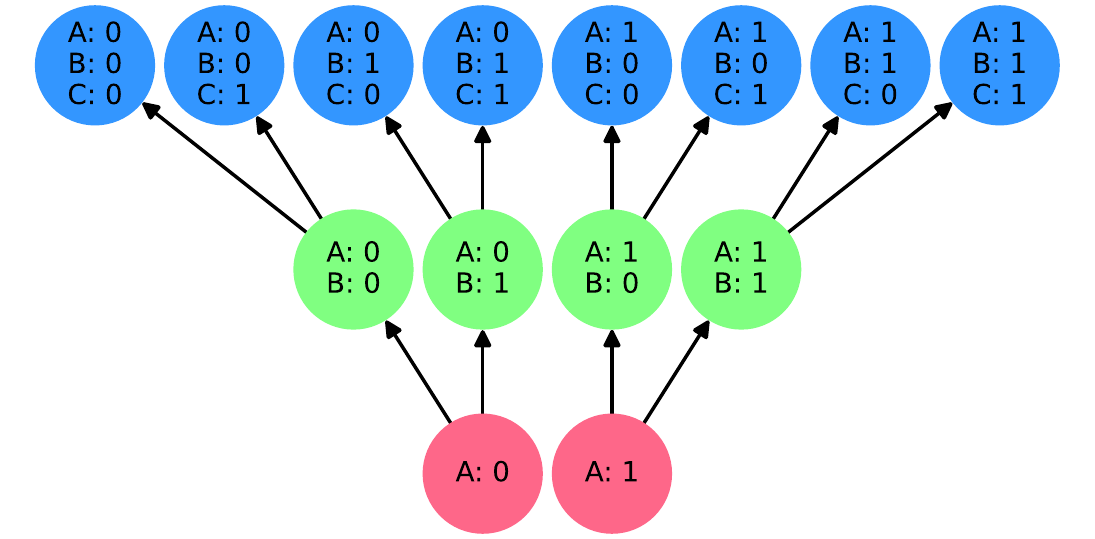}
    \\
    $\Theta_{100}$
    &
    $\Ext{\Theta_{100}}$
    \end{tabular}
\end{center}
Equivalence class 101 contains the 6 spaces for a 3-party causal switch: below is the space where the input of \ev{A} determines the total order between \ev{B} and \ev{C}, with input 0 at \ev{A} setting $\ev{B} < \ev{C}$ and input 1 at \ev{A} setting $\ev{C} < \ev{B}$.
This space has 14 histories, covering:
\begin{itemize}
    \item all inputs for event \ev{A}, determining the output at $\ev{A}$ and the total order between \ev{B} and \ev{C}
    \item all inputs for event \ev{B} when \ev{A} has input 0, determining the output at \ev{B}
    \item all inputs for events $\evset{B,C}$ when \ev{A} has input 0, determining the output at $\ev{C}$
    \item all inputs for event \ev{C} when \ev{A} has input 1, determining the output at \ev{C}
    \item all inputs for events $\evset{C,B}$ when \ev{A} has input 1, determining the output at $\ev{B}$
\end{itemize}
This space coincides with its own space of extended input histories.
\begin{center}
    \begin{tabular}{cc}
    \includegraphics[height=3.5cm]{svg-inkscape/space-ABC-unique-tight-101-highlighted_svg-tex.pdf}
    &
    \includegraphics[height=3.5cm]{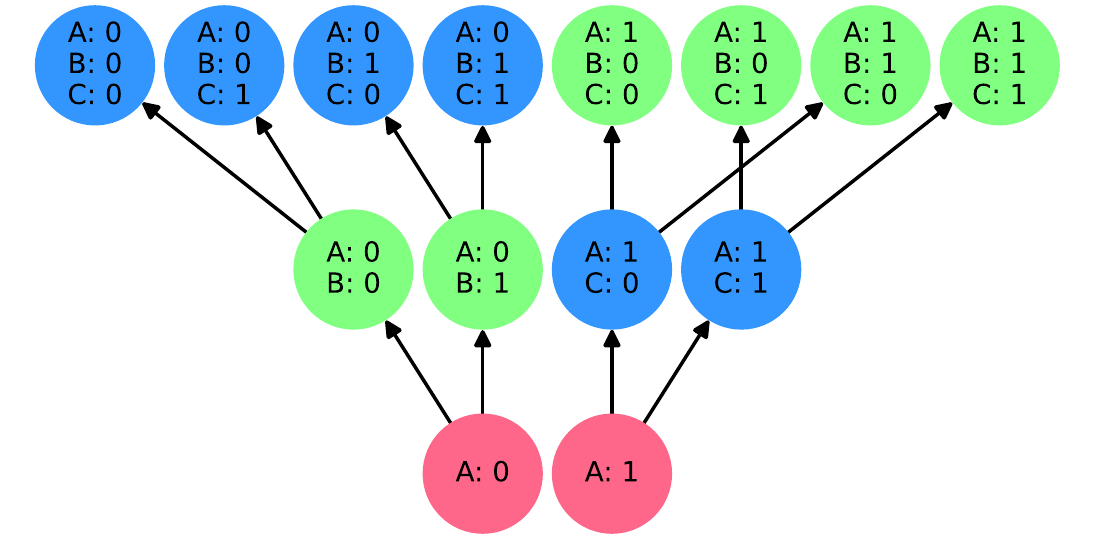}
    \\
    $\Theta_{101}$
    &
    $\Ext{\Theta_{101}}$
    \end{tabular}
\end{center}
The spaces in equivalence class 101 are examples of causally complete spaces not admitting a fixed definite causal order: they are not refinements of $\Hist{\Omega, \{0,1\}}$ for any definite causal order $\Omega$ on \ev{A}, \ev{B} and \ev{C}.
There are 13 equivalence classes consisting of spaces that don't admit a fixed definite causal order, highlighted in Figure \ref{fig:hierarchy-spaces-ABC-indefinite} (p.\pageref{fig:hierarchy-spaces-ABC-indefinite}).
\begin{figure}[h]
    \centering
    \includegraphics[width=\textwidth]{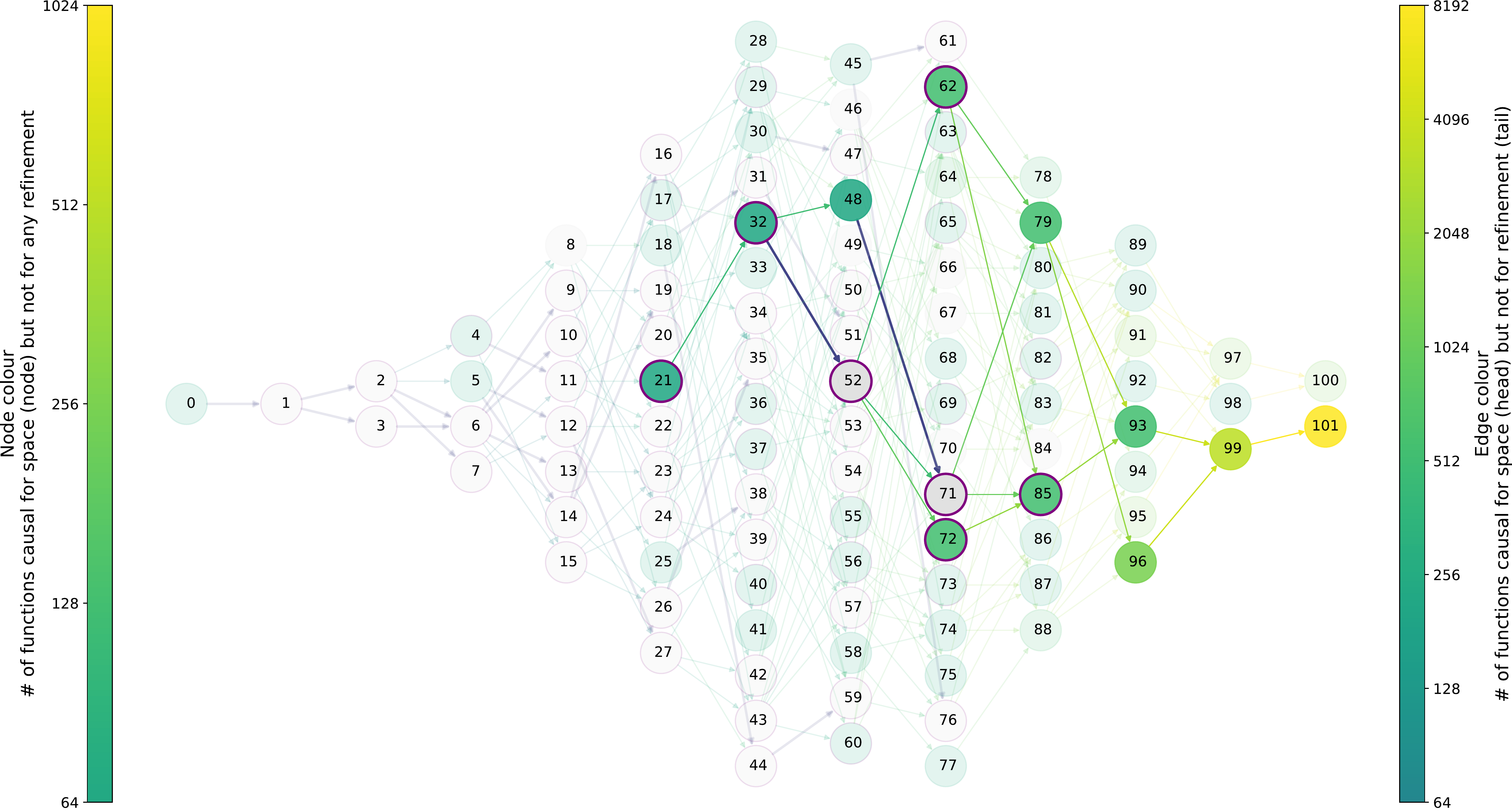}
    \caption{
    The 13 equivalence classes not admitting a fixed definite causal order, highlighted within the hierarchy of causally complete spaces on 3 events with binary inputs.
    See Figure \ref{fig:hierarchy-spaces-ABC} (p.\pageref{fig:hierarchy-spaces-ABC}) for discussion of colours and markings.
    }
\label{fig:hierarchy-spaces-ABC-indefinite}
\end{figure}

The 5 equivalence classes of spaces induced by total orders are marked by a thick black border in Figure \ref{fig:hierarchy-spaces-ABC} (p.\pageref{fig:hierarchy-spaces-ABC}).
We have already seen equivalence class 0 (for the discrete order) and equivalence class 100 (for total orders): we now look at the remaining three.
Equivalence class 92 contains the 3 spaces induced by wedge orders: below is the space induced by order $\total{\ev{A},\ev{C}}\vee\total{\ev{B},\ev{C}}$.
This space has 12 histories, covering all possible combinations of inputs for event $\ev{A}$ (determining the output at event $\ev{A}$), for event $\ev{B}$ (determining the output at event $\ev{B}$), and for events $\evset{A,B,C}$ (determining the output at event $\ev{C}$).
The extended input histories supported by $\evset{A,B}$ are not $\vee$-prime in this space.
\begin{center}
    \begin{tabular}{cc}
    \includegraphics[height=3.5cm]{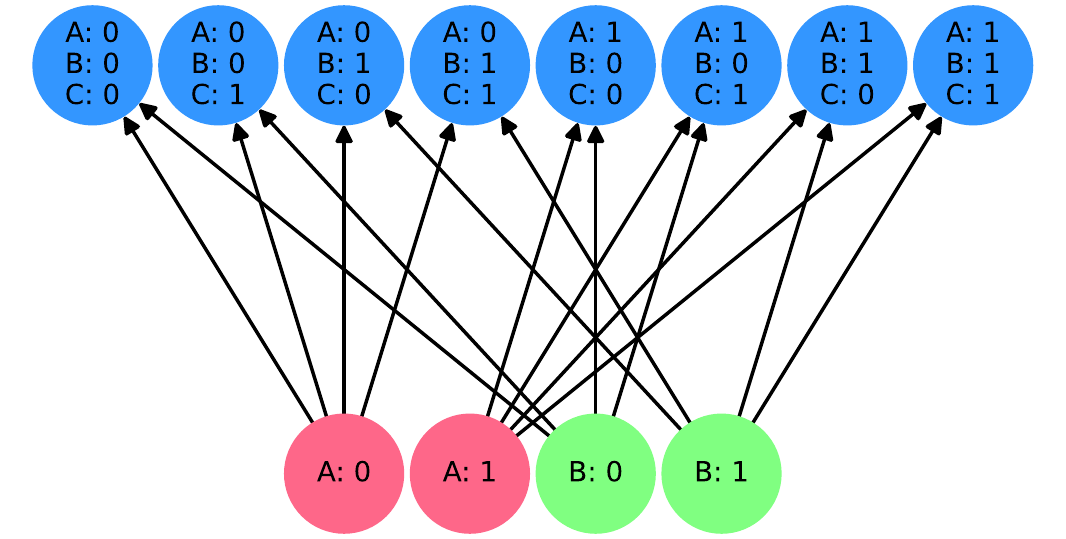}
    &
    \includegraphics[height=3.5cm]{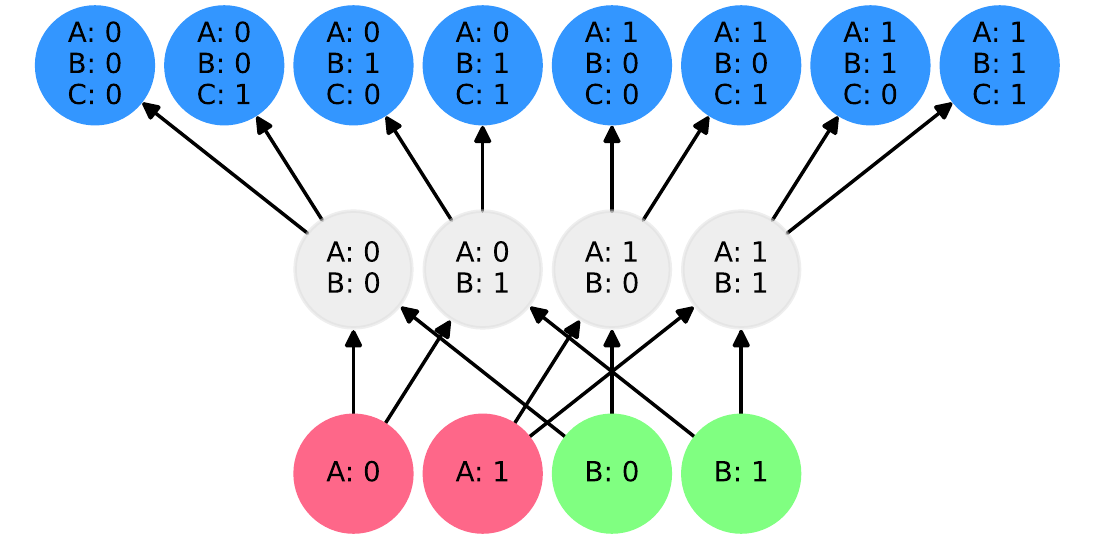}
    \\
    $\Theta_{92}$
    &
    $\Ext{\Theta_{92}}$
    \end{tabular}
\end{center}
Equivalence class 77 contains the 3 spaces induced by fork orders: below is the space induced by order $\total{\ev{A},\ev{B}}\vee\total{\ev{A},\ev{C}}$.
This space has 10 histories, covering all possible combinations of inputs for event $\ev{A}$ (determining the output at event $\ev{A}$), for events $\evset{A,B}$ (determining the output at event $\ev{B}$), and for events $\evset{A,C}$ (determining the output at event $\ev{C}$).
The extended input histories supported by all three events are not $\vee$-prime in this space.
\begin{center}
    \begin{tabular}{cc}
    \includegraphics[height=3.5cm]{svg-inkscape/space-ABC-unique-tight-77-highlighted_svg-tex.pdf}
    &
    \includegraphics[height=3.5cm]{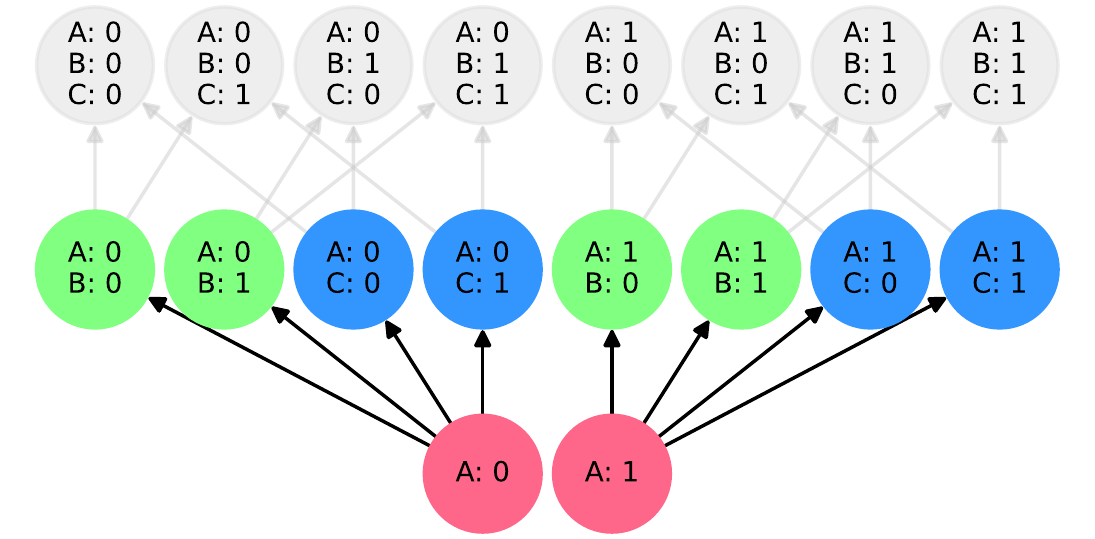}
    \\
    $\Theta_{77}$
    &
    $\Ext{\Theta_{77}}$
    \end{tabular}
\end{center}
Equivalence class 33 contains the 6 spaces induced by the disjoint join of a total order on two events with a discrete third event: below is the space induced by order $\total{\ev{A},\ev{B}}\vee\discrete{\ev{C}}$.
This space has 8 histories, covering all possible combinations of inputs for event $\ev{A}$ (determining the output at event $\ev{A}$), for events $\evset{A,B}$ (determining the output at event $\ev{B}$), and for event \ev{C} (determining the output at event $\ev{C}$).
The extended input histories supported by either $\evset{A,C}$ or by all three events are not $\vee$-prime in this space.
\begin{center}
    \begin{tabular}{cc}
    \includegraphics[height=3.5cm]{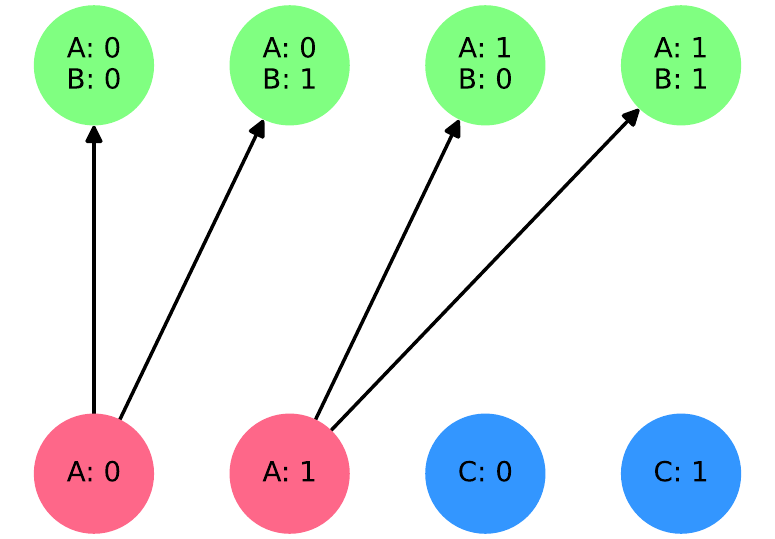}
    &
    \includegraphics[height=3.5cm]{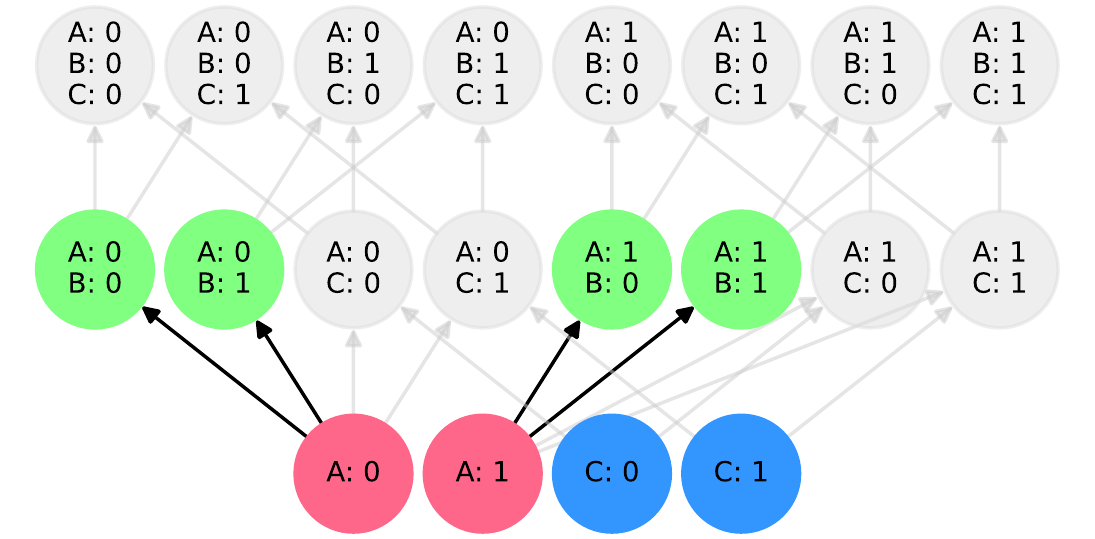}
    \\
    $\Theta_{33}$
    &
    $\Ext{\Theta_{33}}$
    \end{tabular}
\end{center}

Spaces not induced by causal orders can all be understood as introducing input-dependent causal constraints.
We already saw this in the 3-party causal switch space $\Theta_{101}$: it refines the (non causally complete) order-induced space $\Hist{\total{\ev{A},\evset{B,C}}, \{0,1\}}$, by introducing causal constrains on $\evset{B,C}$ which depend on the input at event $\ev{A}$.
The spaces in equivalence class 101 might be iconic example of this mechanism, but all 97 equivalence classes of non-order-induced spaces can be understood this way: we take some order-induced coarsening and study the additional input-dependent causal constraints.

In the most general case of this procedure, we consider a space $\Theta \in \Spaces{\underline{I}}$ and a causal order $\Omega$ such that $\Theta \leq \Hist{\Omega, \underline{I}}$, i.e. such that:
\[
    \Ext{\Theta} \supseteq \Hist{\Omega, \underline{I}}
\]
In particular, the above implies that $\Theta \in \SpacesFC{\underline{I}}$.
The extended input histories in $\Ext{\Theta} \backslash \ExtHist{\Omega, \underline{I}}$ correspond to causal constraints that $\Theta$ imposes additionally to $\Hist{\Omega, \underline{I}}$: if there is a unique minimal choice for $\Omega$ (e.g. for equivalence class 98, discussed below), then the additional constraints are truly input-dependent; if there are multiple minimal choices for $\Omega$ (e.g. equivalence class 3, discussed below), then the additional constraints might instead be those of a different causal order, independent of any input values.
For each extended input history $h \in \Ext{\Theta} \backslash \ExtHist{\Omega, \underline{I}}$, we consider the set $K_h$ of minimal extended input histories from the order-induced space which lie above $h$:
\[
    K_h 
    :=
    \min\left(\upset{h} \cap \ExtHist{\Omega, \underline{I}}\right)
    \subseteq \Ext{\Theta}
\]
We then consider the set $E_h$ of all events which are in the domain of some $k \in K_h$ but not in the domain of $h$:
\[
E_h := \bigcup_{k \in K_h}\dom{k} \backslash \dom{h}
\]
The additional constraint imposed by $h$ can then be understood as follows: when the events in $\dom{h}$ have inputs specified by $h$, the outputs at the events $\dom{h}$ are independent of the inputs at events in $E_h$.

As the simplest example of input-dependent causal constraints, we consider space $\Theta_{98}$ below, a representative from equivalence class 98 which is a closest refinement of $\Hist{\total{\ev{A},\ev{B},\ev{C}}, \{0,1\}}$.
The only additional history in this case is $\hist{B/1}$, imposing the following constraint: when the input at \ev{B} is 1, the output at \ev{B} is independent of the input at event \ev{A}.
\begin{center}
    \begin{tabular}{cc}
    \includegraphics[height=3.5cm]{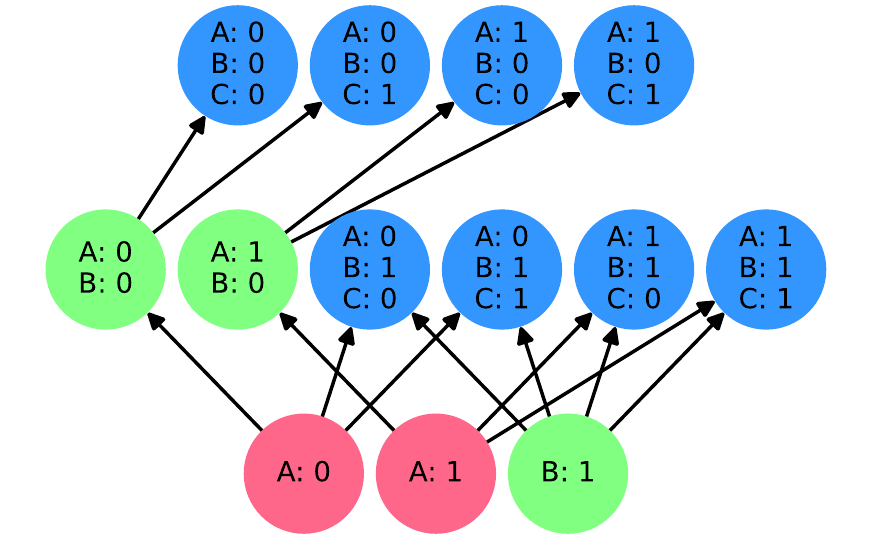}
    &
    \includegraphics[height=3.5cm]{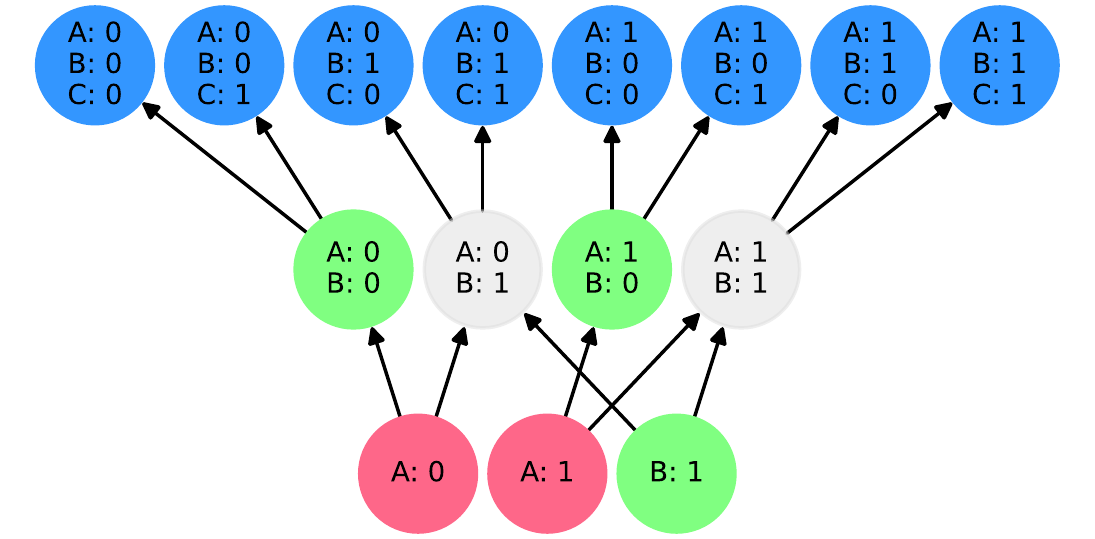}
    \\
    $\Theta_{98}$
    &
    $\Ext{\Theta_{98}}$
    \end{tabular}
\end{center}
Another simple example is given by $\Theta_{97}$ below, a representative from equivalence class 97 which is also a closest refinement of $\Hist{\total{\ev{A},\ev{B},\ev{C}}, \{0,1\}}$.
The only additional history in this case is $\hist{C/1}$, imposing the following constraint: when the inputs at events \evset{A,C} are given by $\hist{A/1, C/1}$, the outputs at \evset{A,C} are independent of the input at event \ev{B}.
\begin{center}
    \begin{tabular}{cc}
    \includegraphics[height=3.5cm]{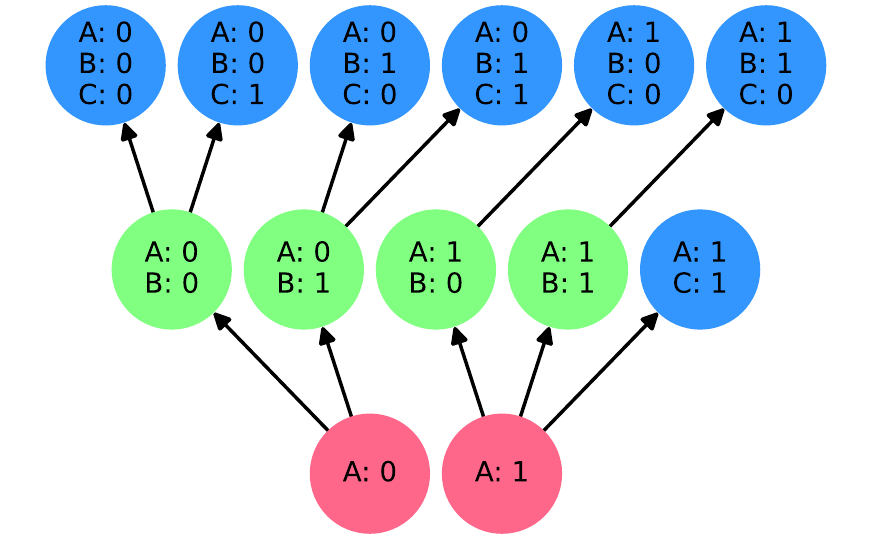}
    &
    \includegraphics[height=3.5cm]{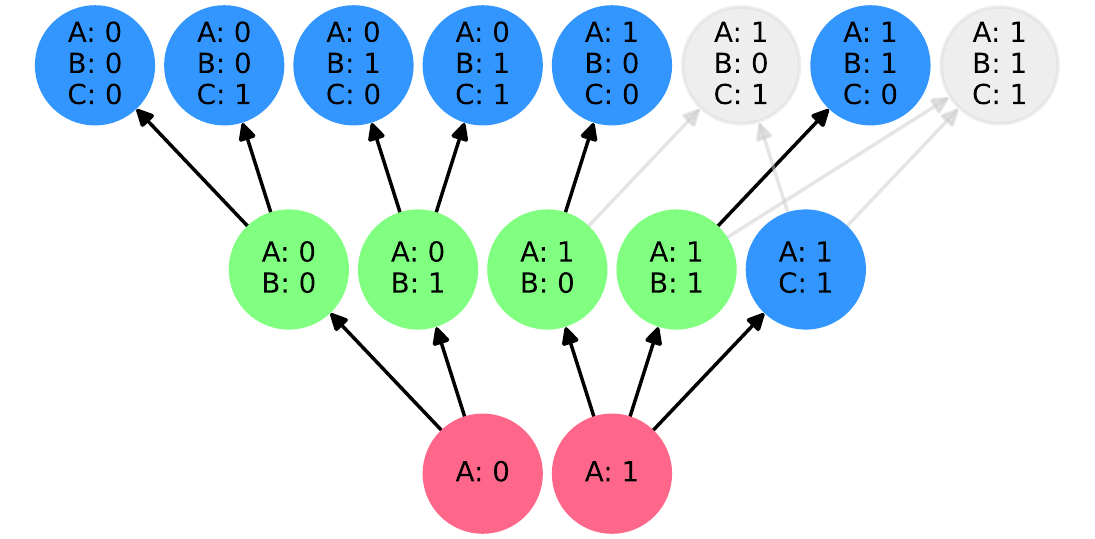}
    \\
    $\Theta_{97}$
    &
    $\Ext{\Theta_{97}}$
    \end{tabular}
\end{center}
Both examples above are clear cases of input-dependent causal constraints.
However, we mentioned that additional causal constraints need not be truly input dependent, as witnessed by our previous example on the meet of order-induced spaces for causal orders $\Omega = \total{\ev{A},\ev{B}}\vee\discrete{\ev{C}}$ and $\Omega' = \discrete{A}\vee\total{\ev{C},\ev{B}}$.
\begin{center}
    \begin{tabular}{ccc}
    \includegraphics[height=2.5cm]{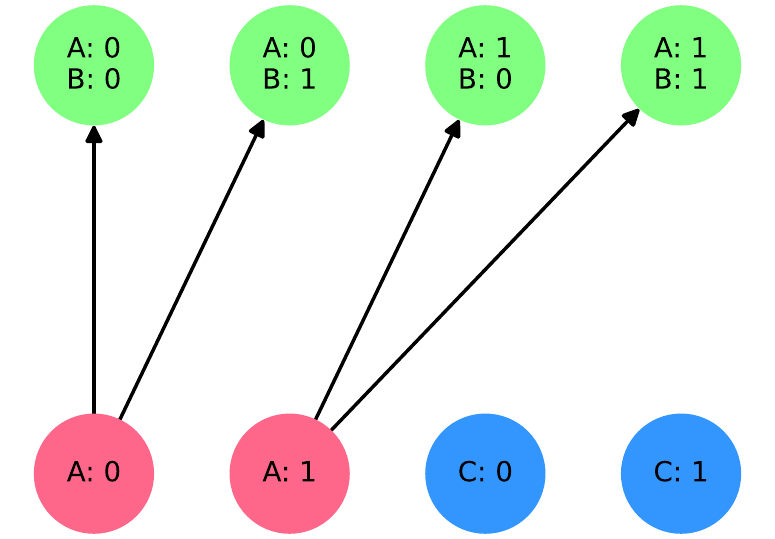}
    &
    \includegraphics[height=2.5cm]{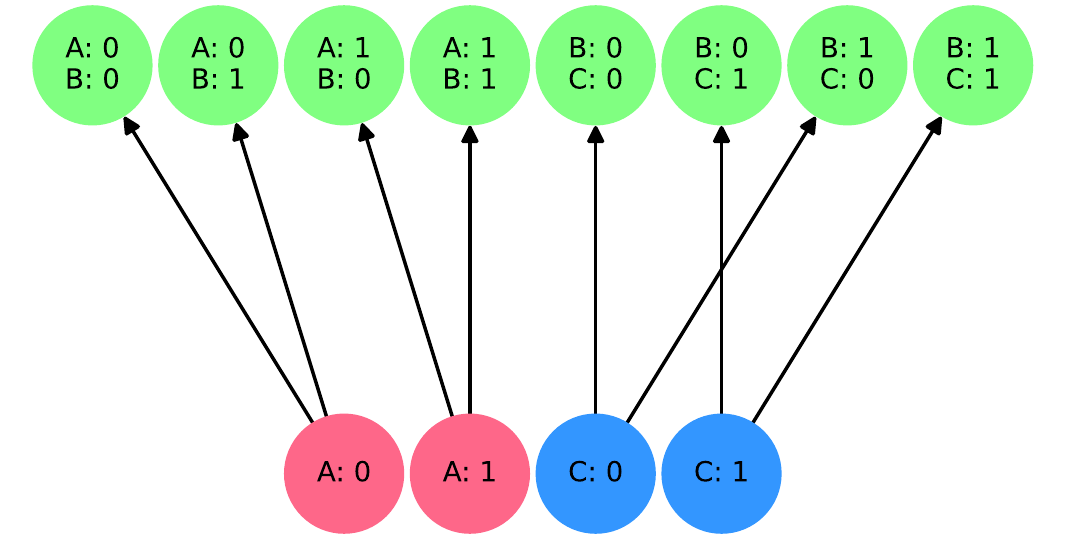}
    &
    \includegraphics[height=2.5cm]{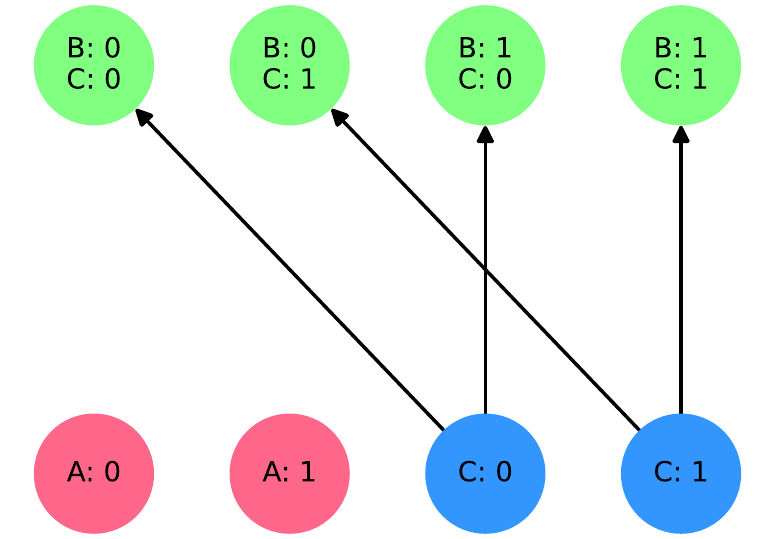}
    \\
    $\Theta_{33} = \Hist{\Omega,\{0,1\}}$
    &
    \hspace{2mm}
    $\Theta_3 = \Hist{\Omega',\{0,1\}}\wedge\Hist{\Omega',\{0,1\}}$
    \hspace{2mm}
    &
    $\Hist{\Omega',\{0,1\}}$
    \end{tabular}
\end{center}
Indeed, the spaces in equivalence class 3 are exactly the meets of 3 pairs of spaces from equivalence class 33 (the other 15 non-trivial meets of pairs in equivalence class 33 all yield the discrete space, in equivalence class 0).
For space $\Theta_{3}$, specifically, we get the following additional constraints:
\begin{itemize}
    \item as a coarsening of order-induced space $\Theta_{33} = \Hist{\total{\ev{A},\ev{B}}\vee\discrete{\ev{C}},\{0,1\}}$, the additional constraints come from the 4 histories with domain $\evset{B,C}$: they state that the outputs on \evset{B,C} are independent of the input on \ev{A} for all possible choices of inputs on $\evset{B,C}$.
    \item as a coarsening of order-induced space $\Hist{\discrete{\ev{A}}\vee\total{\ev{C},\ev{B}},\{0,1\}}$, the additional constraints come from the 4 histories with domain $\evset{A,B}$: they state that the outputs on \evset{A,B} are independent of the input on \ev{C} for all possible choices of inputs on $\evset{A,B}$.
\end{itemize}
Because the additional constraints appear for all possible choices of inputs on their common support, they are not truly input-dependent in this case.
\begin{center}
    \begin{tabular}{cc}
    \includegraphics[height=3cm]{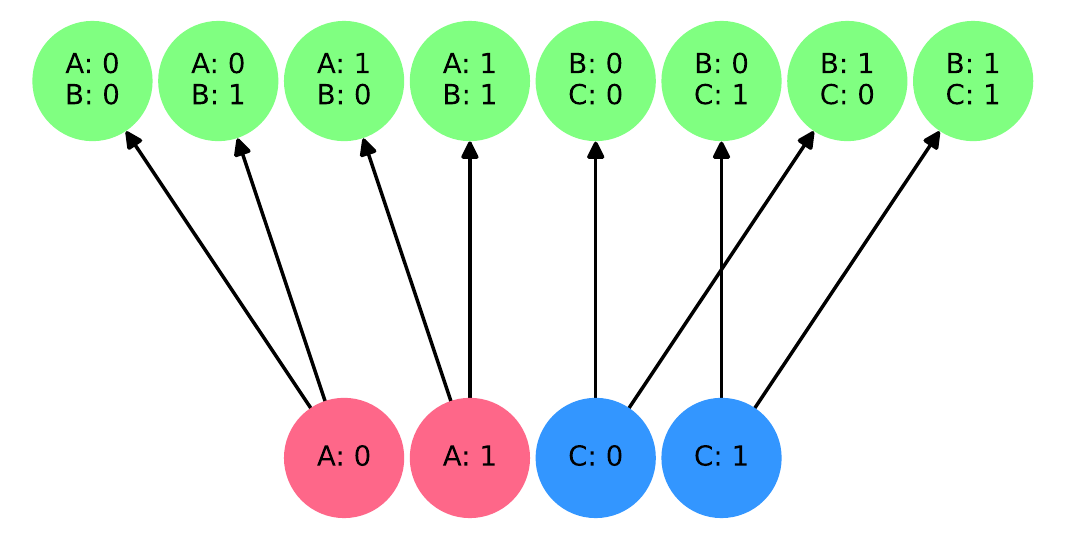}
    &
    \includegraphics[height=3cm]{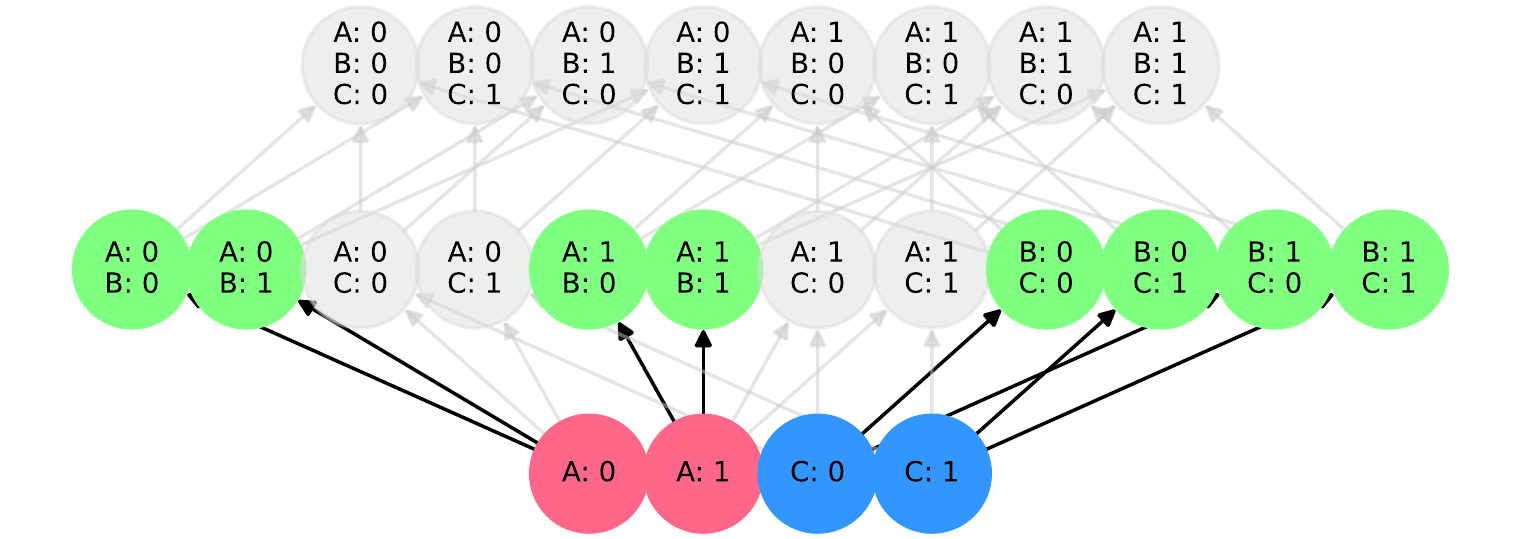}
    \\
    $\Theta_{3}$
    &
    $\Ext{\Theta_{3}}$
    \end{tabular}
\end{center}

The description of the constraints for space $\Theta_3$ is a bit confusing: one would certainly be forgiven for thinking that these constraints should be equivalent to the no-signalling ones, generated by the discrete space.
And, in a sense, they are: as discussed in the next subsection, the spaces in equivalence class 3 have exactly the same causal functions as the discrete space (as do the spaces in equivalence classes 1, 2, 6, 7, 9, 10 and 13).
Furthermore, we show in \cite{gogioso2022geometry} that the causal polytope for space $\Theta_3$ (as well as $\Theta_1$) coincides with the no-signalling polytope---the causal polytope of the discrete space $\Theta_0$---when the ``standard cover'' is considered.
This means that spaces $\Theta_3$ and $\Theta_0$ are causally equivalent when studying non-locality, which is based on the standard cover; however, the former admits strictly more contextual empirical models than the latter for other choices of cover, modelling various notions of contextuality.

Space $\Theta_3$ is also an example of a ``non-tight'' space, one where the events in some histories are constrained by multiple causal orders.
Lack of tightness is a peculiar pathology: in some cases, it implies a form of contextuality where deterministic functions defined compatibly on certain subsets of input histories cannot always be glued together into functions defined on all histories. Put it in more technical terms, we will see later on that the presheaf of causal functions on a non-tight space of input histories is not necessarily a sheaf.

\begin{definition}
\label{definition:tight-space}
Let $\Theta$ be a space of input histories.
We say that $\Theta$ is \emph{tight} if for every (maximal) extended input history $k \in \Ext{\Theta}$ and every event $\omega \in \dom{k}$ there is a unique input history $h \in \Theta$ such that $h \leq k$ and $\omega \in \tips{\Theta}{h}$.
We say that $\Theta$ is \emph{non-tight} otherwise.
\end{definition}

Non-tight spaces are indicated in Figure \ref{fig:hierarchy-spaces-ABC} (p.\pageref{fig:hierarchy-spaces-ABC}) by a thin violet border, and they constitute the majority of examples: out of 102 equivalence classes, 58 consist of non-tight spaces and 44 consist of tight spaces.
To understand what lack of tightness means concretely, let's consider space $\Theta_{17}$ below.
In the input histories below extended input history $\hist{A/1,B/1,C/2}$ (circled in blue), the event $\ev{C}$ appears as a tip event in two separate histories, namely $\hist{A/1, C/1}$ and $\hist{B/1, C/1}$; edges from the latter input histories to the former extended input histories have also been highlighted blue, for clarity.
\begin{center}
    \begin{tabular}{cc}
    \includegraphics[height=3.5cm]{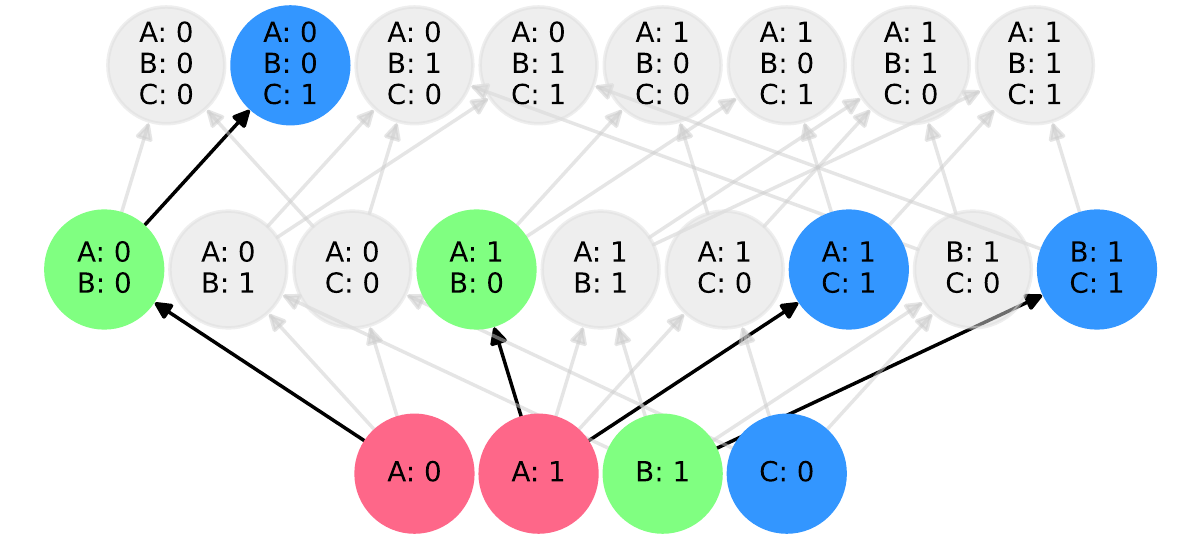}
    &
    \includegraphics[height=3.5cm]{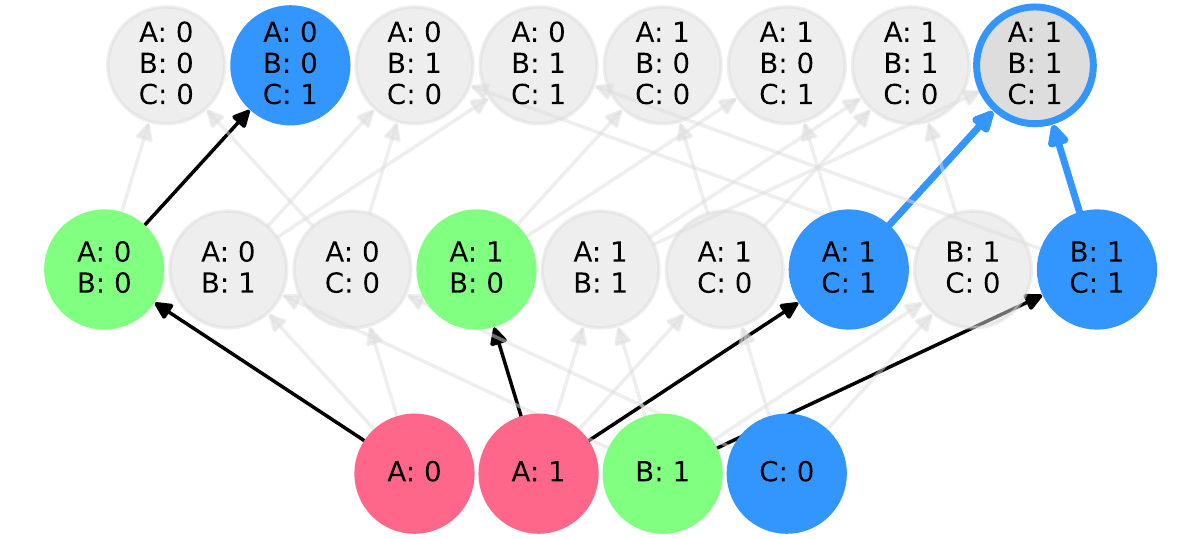}
    \\
    $\Ext{\Theta_{17}}$
    &
    $\Ext{\Theta_{17}}$ with highlights
    \end{tabular}
\end{center}
The effect of this multiple appearance of $\ev{C}$ as a tip event is that causal functions on space $\Theta_{17}$ must yield identical output values at event $\ev{C}$ for both input histories $\hist{A/1, C/1}$ and $\hist{B/1, C/1}$, which would have otherwise been unrelated.
Put in other words, in history $\hist{A/1,B/1,C/2}$ the output at event $\ev{C}$ must satisfy the constraints of two different causal orders: \total{\ev{A}, \ev{C}, \ev{B}} (from $\hist{A/1} \rightarrow \hist{A/1, C/1} \rightarrow \hist{A/1, B/1, C/1}$) and \total{\ev{B}, \ev{C}, \ev{A}} (from $\hist{B/1} \rightarrow \hist{B/1, C/1} \rightarrow \hist{A/1, B/1, C/1}$).

A further example of non-tight space is given by space $\Theta_{21}$, which doesn't admit a fixed definite causal order: \ev{B} causally precedes \ev{C} when the input at \ev{B} is 0 or the input at \ev{A} is 1, while it causally succeeds \ev{C} when the input at \ev{C} is 0 or the input at \ev{A} is 1.
In this space, there are two extended input histories with ``tip event conflicts'' below them: the extended input history $\hist{A/0,B/1,C/0}$ (circled in green) sees $\ev{B}$ appearing as tip event in the two input histories $\hist{A/0,B/1}$ and $\hist{B/1,C/0}$ below it, while the extended input history $\hist{A/1,B/0,C/1}$ (circled in blue) sees $\ev{C}$ appearing as tip event in the two input histories $\hist{A/1,C/1}$ and $\hist{B/0,C/1}$ below it.
\begin{center}
    \begin{tabular}{cc}
    \includegraphics[height=3cm]{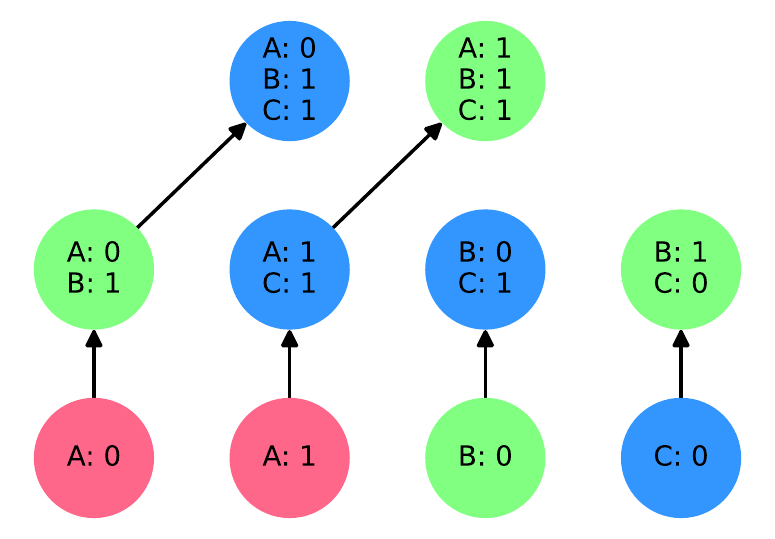}
    &
    \includegraphics[height=3.5cm]{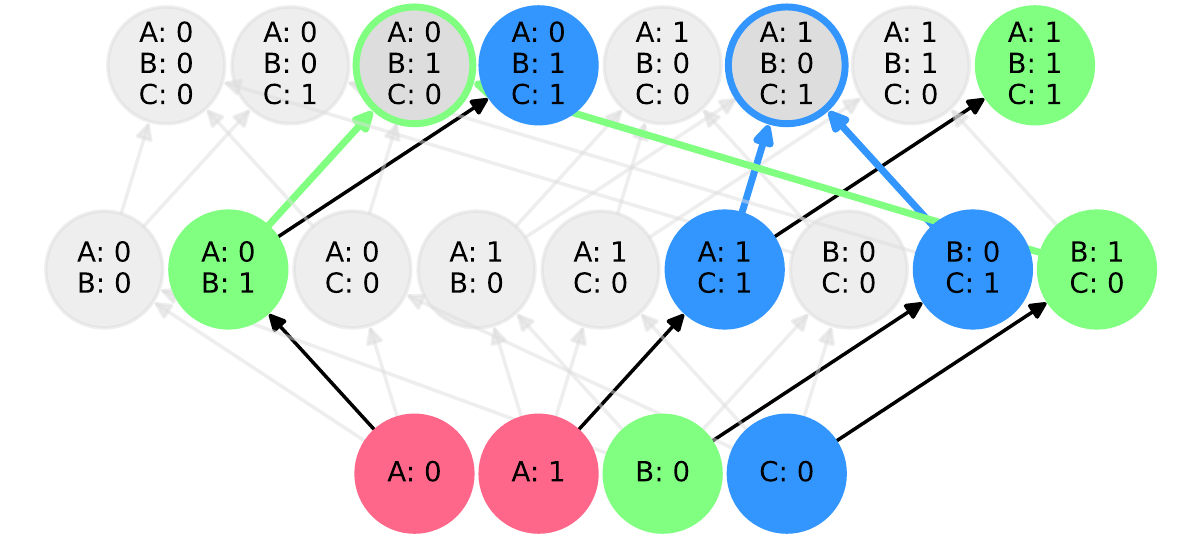}
    \\
    $\Theta_{21}$
    &
    $\Ext{\Theta_{21}}$ with highlights
    \end{tabular}
\end{center}
The (unique) closest causal coarsening of $\Theta_{21}$ which is tight is the space in equivalence class 48 obtained by removing the ``conflicting'' input histories $\hist{B/1,C/0}$ (for event \ev{B}) and $\hist{B/0,C/1}$ (for event \ev{C}).
The space is displayed below as $\Theta_{48,1}$---to differentiate it from the representative $\Theta_{48}$ used in the hierarchy from \cite{gogioso2022classification}---and it also doesn't admit a fixed definite causal order.
\begin{center}
    \begin{tabular}{cc}
    \includegraphics[height=3cm]{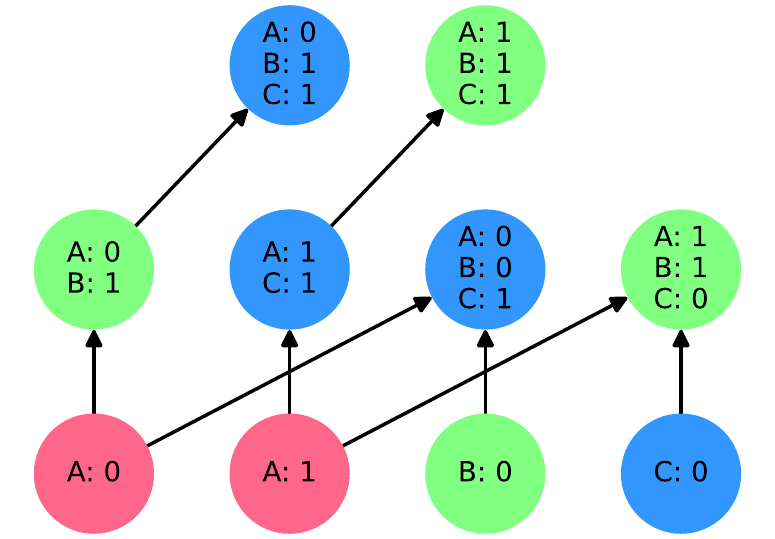}
    &
    \includegraphics[height=3cm]{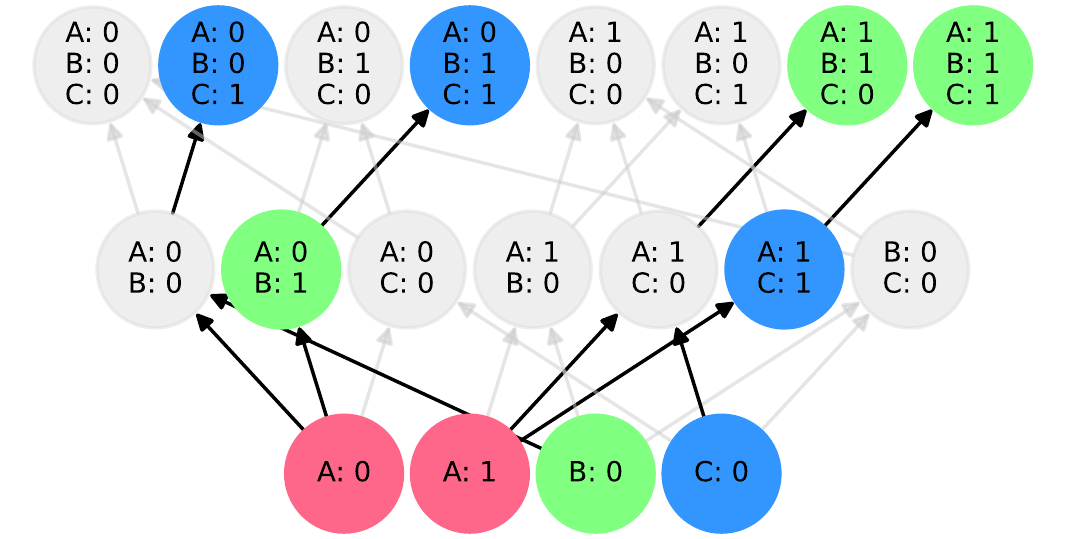}
    \\
    $\Theta_{48,1}$
    &
    $\Ext{\Theta_{48,1}}$
    \end{tabular}
\end{center}

A more complicated example of tight space---imposing multiple input-dependent causal constraints---is given $\Theta_{80}$, a representative of equivalence class 80 and causal refinement of $\Hist{\total{\ev{A},\ev{B},\ev{C}}, \{0,1\}}$.
\begin{center}
    \begin{tabular}{cc}
    \includegraphics[height=3cm]{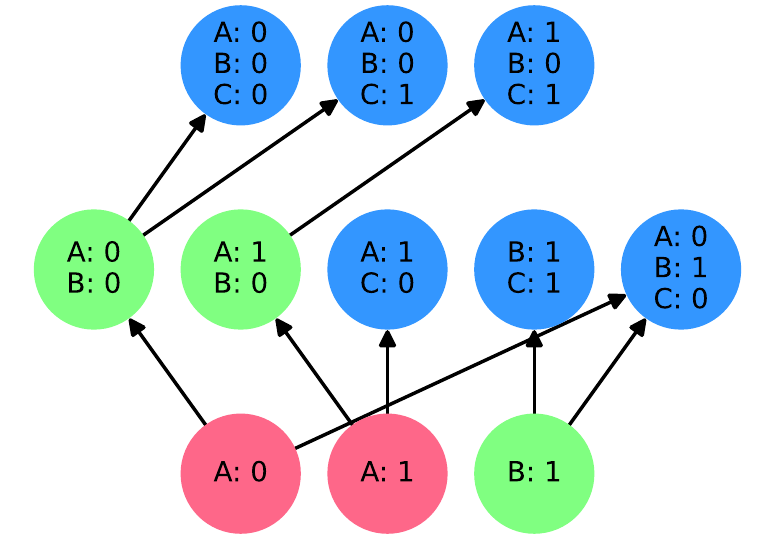}
    &
    \includegraphics[height=3cm]{svg-inkscape/space-ABC-unique-tight-100-highlighted_svg-tex.pdf}
    \\
    $\Theta_{80}$
    &
    $\Hist{\total{\ev{A},\ev{B},\ev{C}}, \{0,1\}}$
    \end{tabular}
\end{center}
In addition to the causal constraints associated with the total order $\total{\ev{A},\ev{B},\ev{C}}$, space $\Theta_{80}$ imposes the following input-dependent causal constraints:
\begin{itemize}
    \item From the additional history \hist{B/1} we get that the input at \ev{B} is independent of the input at \ev{A} when the input at \ev{B} is 1.
    \item From the additional history \hist{A/1,C/0} we get that the outputs at \evset{A,C} are independent of the input at \ev{B} when the input at \ev{A} is 1 and the input at \ev{C} is 0.
    \item From the additional history \hist{B/1,C/1} we get that the outputs at \ev{B,C} are independent of the input at \ev{A} when the input at \ev{B} is 1 and the input at \ev{C} is 1.
\end{itemize}

To conclude our exploration of tight spaces, we prove results about tightness of parallel composition, sequential composition and conditional sequential composition.

\begin{proposition}
\label{proposition:parallel-sequential-composition-tight}
Let $\Theta, \Theta'$ be tight spaces of input histories such that $\Events{\Theta} \cap \Events{\Theta'} = \emptyset$.
The parallel composition $\Theta \cup \Theta'$ and sequential composition $\Theta \seqcomposeSym \Theta'$ are tight.
\end{proposition}
\begin{proof}
See \ref{proof:proposition:parallel-sequential-composition-tight}
\end{proof}

\begin{proposition}
\label{proposition:conditional-sequential-composition-tight}
Let $\Theta$ be a causally complete space of input histories.
Let $(\Theta'_k)_{k \in \max\Ext{\Theta}}$ be a family of causally complete spaces of input histories, with $\Events{\Theta} \cap \Events{\Theta'_k} = \emptyset$ for all $k \in \max\Ext{\Theta}$.
The conditional sequential composition $\Theta \seqcomposeSym \underline{\Theta'}$ is tight.
\end{proposition}
\begin{proof}
See \ref{proof:proposition:conditional-sequential-composition-tight}
\end{proof}

We also show that non-tight spaces arise inevitably when meets of causally complete spaces are considered, even in the simplest case of order-induced spaces (with at least 3 events).
Indeed, the (causally complete) space $\Theta_{3}$ on 3 events which originally sparked our investigation gives an example of such a non-tight meet of order-induced (causally complete) spaces.
\begin{center}
    \begin{tabular}{cc}
    \includegraphics[height=3cm]{svg-inkscape/space-ABC-unique-untight-3-highlighted_svg-tex.pdf}
    &
    \includegraphics[height=3.5cm]{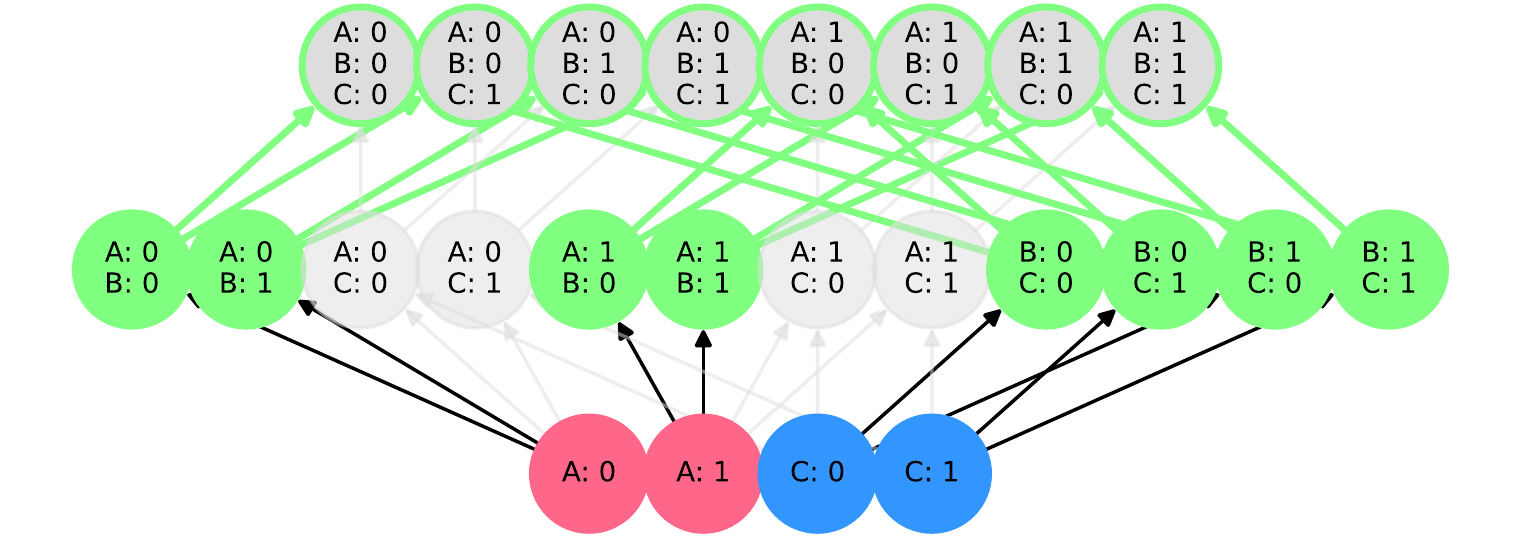}
    \\
    $\Theta_{3}$
    &
    $\Ext{\Theta_{3}}$ with highlights
    \end{tabular}
\end{center}

\begin{proposition}
\label{proposition:hist-space-tight}
Let $\Omega$ be a causal order and let $\underline{I} = \left(I_\omega\right)_{\omega \in \Omega}$ be a family of non-empty input sets. 
The space of input histories $\Theta := \Hist{\Omega, \underline{I}}$ is tight.
\end{proposition}
\begin{proof}
See \ref{proof:proposition:hist-space-tight}
\end{proof}

\begin{theorem}
\label{theorem:non-tight-order-induced}
Let $\Omega$ and $\Omega'$ be two causal orders on the same set of events $E := |\Omega| = |\Omega'|$.
Let $\Theta:=\Hist{\Omega, \underline{I}}$ and $\Theta':=\Hist{\Omega', \underline{I}}$ be the spaces of input histories induced by the two causal orders, for the same family of input sets $\underline{I}$.
The meet $\Theta \wedge \Theta'$ is tight if and only if for all $\omega \in E$ we have $\downset{\omega}_{\Omega}\subseteq\downset{\omega}_{\Omega'}$ or $\downset{\omega}_{\Omega'}\subseteq\downset{\omega}_{\Omega}$.
\end{theorem}
\begin{proof}
See \ref{proof:theorem:non-tight-order-induced}
\end{proof}

Clearly, the hierarchy of causally complete spaces is full of complicated examples.
However, its ``canopy'' is significantly more tranquil, consisting only of the ``causal switch spaces''.
This is what Figure \ref{fig:hierarchy-spaces-ABC} (p.\pageref{fig:hierarchy-spaces-ABC}) shows for the 3-event case and it is consistent with the approach taken by previous literature on indefinite causality \cite{oreshkov2016causal}.

We define causal switch spaces inductively using conditional sequential composition, and then prove that there is a property---namely, the coincidence of input histories with extended input histories---which singles out such inductively-defined conditional sequential compositions, and in particular characterises the causal switch spaces within the hierarchy of causally complete spaces for given events $E$ and input sets $\underline{I}$.
Finally, we prove that the maximal causally complete spaces are all causal switch spaces.

\begin{definition}
Let $E$ be a set of events and $\underline{I} = (I_\omega)_{\omega \in E}$ be a family of non-empty input sets.
The \emph{causal switch spaces} $\CSwitchSpaces{\underline{I}}$ are defined as follows.
If $E = \emptyset$, then $\CSwitchSpaces{\underline{I}} = \emptyset$.
Otherwise, for each $\omega_1 \in E$ we can consider:
\[
\begin{array}{rcl}
\restrict{\underline{I}}{\{\omega_1\}} &=& (I_\omega)_{\omega \in \{\omega_1\}}\\
\restrict{\underline{I}}{E\backslash\{\omega_1\}} &=& (I_\omega)_{\omega \in E\backslash\{\omega_1\}}
\end{array}
\]
Then the set $\CSwitchSpaces{\underline{I}}$ is defined inductively as follows:
\begin{equation}
    \bigcup\limits_{\omega_1 \in E}
    \suchthat{
    \Hist{\{\omega_1\}, \restrict{\underline{I}}{\{\omega_1\}}}
    \seqcomposeSym
    \underline{\Theta}
    }
    {
    \underline{\Theta}
    \in
    \CSwitchSpaces{\restrict{\underline{I}}{E\backslash\{\omega_1\}}}^{I_{\omega_1}}
    }    
\end{equation}
\end{definition}

From the inductive definition, we can immediately derive a recursive formula to compute the number of causal switch spaces on given events $E$ and inputs $\underline{I}$:
\[
\begin{array}{rcl}
\left|\CSwitchSpaces{\underline{I}}\right|
&=& 1 \text{ in the base case where } E = \emptyset
\\
\left|\CSwitchSpaces{\underline{I}}\right|
&=& \sum_{\omega_1 \in E} \left|\CSwitchSpaces{\restrict{\underline{I}}{E\backslash\{\omega_1\}}}\right|^{I_{\omega_1}}
\end{array}
\]
In the special case where $|I_{\omega}|=k \geq 2$ for all $\omega \in E$, the number $S_{n,k}$ of switch spaces as a function of the number $n:=|E|$ of events satisfies the following recursion:
\[
\begin{array}{rcl}
S_{0,k} &:=& 1\\
S_{n,k} &:=& n \left(S_{n-1,k}\right)^k \text{ for } n \geq 1
\end{array}
\]
The recursion above can be solved to yield the following closed form:
\[
S_{n,k} = \prod_{j=1}^{n} j^{k^{n-j}}
\]
It is then easy to see that the number of switch spaces grows more than doubly exponentially:
\[
\log S_{n,k} = \sum_{j=2}^{n} k^{n-j}\log j
\Rightarrow
\frac{k^{n-1}-1}{k-1} \log 2
\leq \log S_{n,k}
\leq \frac{k^{n-1}-1}{k-1} \log n
\]

\begin{theorem}
\label{theorem:hist-equals-exthist}
Let $\Theta$ be a space of input histories such that $\Theta = \Ext{\Theta}$ and such that for all $h \in \Theta$ we have $|\tips{\Theta}{h}| = 1$.
Either $\Theta = \emptyset$ or $\Theta$ takes the following form:
\begin{equation}
    \Theta = \Hist{\{\omega_1\}, \restrict{\underline{\Inputs{\Theta}}}{\{\omega_1\}}}
    \seqcomposeSym \underline{\Theta'}
\end{equation}
where $\omega_1 \in \Events{\Theta}$ is some event and $\underline{\Theta'} = (\Theta'_{i_1})_{i_1 \in \Inputs{\Theta}_{\omega_1}}$ is a family of spaces of input histories such that, for all $i_1 \in \Inputs{\Theta}_{\omega_1}$:
\begin{itemize}
    \item $\omega_1 \notin \Events{\Theta'_{i_1}}$
    \item $\Theta'_{i_1} = \Ext{\Theta'_{i_1}}$
    \item for all $h \in \Theta'_{i_1}$ we have $\tips{\Theta'_{i_1}}{h} = \tip{\Theta}{\{\omega_1:i_1\}\vee h}$
\end{itemize}
That is, we can recursively apply this proposition to each $\Theta'_{i_1}$ in the family.
\end{theorem}
\begin{proof}
See \ref{proof:theorem:hist-equals-exthist}
\end{proof}

\begin{corollary}
\label{corollary:switch-spaces-characterisation}
Let $E$ be a set of events and $\underline{I} = (I_\omega)_{\omega \in E}$ be a family of non-empty input sets.
The switch spaces $\CSwitchSpaces{\underline{I}}$ are exactly the causally complete spaces $\Theta \in \CCSpaces{\underline{I}}$ such that $\Theta = \Ext{\Theta}$.
\end{corollary}
\begin{proof}
See \ref{proof:corollary:switch-spaces-characterisation}
\end{proof}

\begin{theorem}
\label{theorem:causal-canopy}
Let $E$ be a set of events and $\underline{I} = (I_\omega)_{\omega \in E}$ be a family of non-empty input sets.
The maxima of $\CCSpaces{\underline{I}}$ are exactly the causal switch spaces $\CSwitchSpaces{\underline{I}}$. 
\end{theorem}
\begin{proof}
See \ref{proof:theorem:causal-canopy}
\end{proof}

\subsection{The search for causally complete spaces}
\label{subsection:spaces-search}

The algorithm used to find all causally complete spaces on 2 and 3 events is based on the following alternative characterisation of causal completeness.
See \cite{gogioso2022classification} for listing and description of the algorithm.

\begin{theorem}
\label{theorem:causal-completeness-characterisation}
A space of input histories $\Theta$ is causally complete if and only if for every $k \in \Ext{\Theta}$ with $|\dom{k}| \geq 2$ there exists an $\omega \in \dom{k}$ such that $\restrict{k}{\dom{k}\backslash\{\omega\}} \in \Ext{\Theta}$.
\end{theorem}
\begin{proof}
See \ref{proof:theorem:causal-completeness-characterisation}
\end{proof}

An advanced version of the algorithm---using event-input permutation symmetry to reduce the size of the search space---is used in the search for causally complete spaces on 4 events.
See \cite{gogioso2022classification} for listing and description of the advanced algorithm.

After 106 days of computation on an AWS EC2 \texttt{m6g.large} instance---2 virtual CPUs and 8GiB memory on a AWS Graviton2 processor with 64-bit Arm Neoverse cores---the algorithm has discovered 869529223 causally complete spaces on 4 events, divided into 2312000 equivalence classes under event-input permutation symmetry, occupying around 4.85GiB in RAM and 1.07GiB on disk.
To obtain a rough estimate of the current search status and the final number of spaces and equivalence classes, we fit the following 3-parameter power-law to our data:
\[
f(x|m, e, s) = m\left(1-\left(1+\frac{x}{s}\right)^{-e}\right)
\]
Below is the data (solid blue) together with the fitted curve (dashed orange) projected to 180 days, for both spaces (left) and equivalence classes (right).
Each plot includes a constant plot (solid green) of the the estimated final number, i.e. the $m$ parameter of the fitted curve.
\begin{center}
\includegraphics[height=5cm]{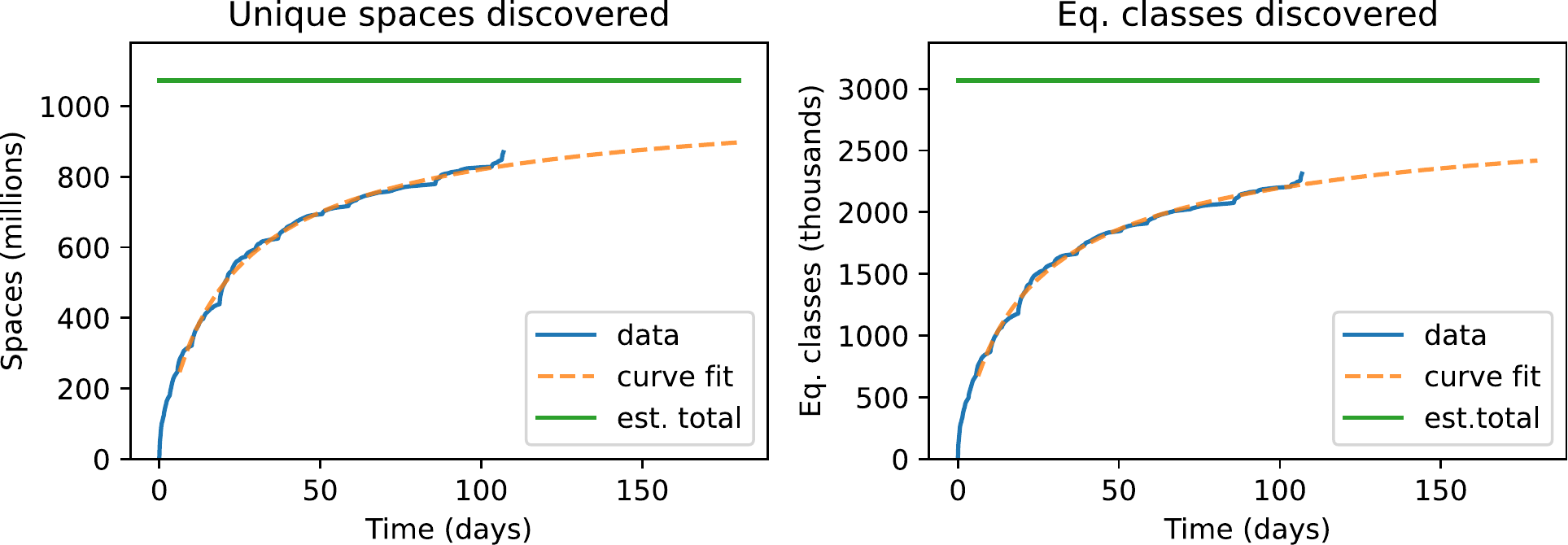}
\end{center}
Based on the fitted trend, we currently predict that there will be around 1 billion causally complete spaces of input histories on 4 events with binary outputs, in around 3 million equivalence classes under event-input permutation.
Whatever the case may be, neither version of the algorithm presented in this work is suitable for a search of all causally complete spaces on 5 events: even a heavily optimised version running on a super-computer would take decades (or more) to complete the task.
The authors have full confidence that significantly more efficient ways to enumerate causally complete spaces will be developed by future literature.

\newpage
\subsection{Proofs for Section \ref{section:spaces}}

\subsubsection{Proof of Proposition \ref{proposition:hist-construction-is-join-prime-subset}}
\label{proof:proposition:hist-construction-is-join-prime-subset}
\begin{proof}
If $h \vee k$ exists and is different from $h$ and $k$, then its domain $\dom{h \vee k}$ must be a union of the causal pasts $\dom{h}$ and $\dom{k}$ for two causally unrelated events---otherwise, we'd have $\dom{h} \subseteq \dom{k}$ or $\dom{k} \subseteq \dom{h}$, implying that $h \vee k = k$ and $h \vee k = h$ respectively.
As the union of the causal pasts of two causally unrelated events, $\dom{h \vee k}$ cannot itself be the causal past of some event, and hence $h \vee k$ cannot an input history in $\Hist{\Omega, \underline{I}}$.
\end{proof}

\subsubsection{Proof of Proposition \ref{proposition:order-induced-space-free-choice}}
\label{proof:proposition:order-induced-space-free-choice}
\begin{proof}
Let $h \in \prod_{\omega \in \Omega} I_\omega$ be a maximal history.
The subset of $\Hist{\Omega, \underline{I}}$ defined by $\mathcal{H}=\suchthat{\restrict{h}{\downset{\omega}}}{\omega \in \dom{h}}$ is compatible by definition, and we have $h = \bigvee \mathcal{H}$ (because $\dom{h} = \cup_{\omega \in \dom{h}} \downset{\omega}$).
\end{proof}

\subsubsection{Proof of Proposition \ref{proposition:induced-spaces-order}}
\label{proof:proposition:induced-spaces-order}
\begin{proof}
We know from the previous section that $\Omega \leq \Xi$ if and only if $\Lsets{\Omega}\supseteq\Lsets{\Xi}$.
Assume that $\Omega \leq \Xi$, let $h \in \ExtHist{\Xi, \underline{I}}$ and consider its domain $\dom{h}$: we have $\dom{h} \in \Lsets{\Xi} \subseteq \Lsets{\Omega}$, and hence $h \in \prod_{\omega \in \dom{h}} I_\omega \subseteq \ExtHist{\Omega, \underline{I}}$.
Conversely, assume that $\ExtHist{\Omega, \underline{I}} \supseteq \ExtHist{\Xi, \underline{I}}$, let $D \in \Lsets{\Xi}$ and consider any $h \in \prod_{\omega \in D}I_\omega$: by assumption, we have $\ExtHist{\Omega, \underline{I}}$, so that $D = \dom{h} \in \Lsets{\Omega}$, proving that $\Lsets{\Omega} \subseteq \Lsets{\Xi}$ and hence that $\Omega \leq \Xi$.
\end{proof}

\subsubsection{Proof of Proposition \ref{proposition:subspace-input-histories-events-inputs}}
\label{proof:proposition:subspace-input-histories-events-inputs}
\begin{proof}
For the event sets $\Events{\Theta'}$ and $\Events{\Theta}$ we have:
\[
\Events{\Theta'}
=
\bigcup_{h \in \Ext{\Theta'}} \dom{h}
\supseteq
\bigcup_{h \in \Ext{\Theta}} \dom{h}
=
\Events{\Theta}
\]
For the input sets $\Inputs{\Theta'}_\omega$ and $\Inputs{\Theta}_\omega$, where $\omega$ is any $\omega \in \Events{\Theta}$, we have:
\[
\begin{array}{rcl}
\Inputs{\Theta'}_\omega
&=&
\suchthat{h_\omega}{h \in \Ext{\Theta'}, \omega \in \dom{h}}
\\
&\supseteq&
\suchthat{h_\omega}{h \in \Ext{\Theta}, \omega \in \dom{h}}
=
\Inputs{\Theta}_\omega
\end{array}
\]
\end{proof}

\subsubsection{Proof of Proposition \ref{proposition:spaces-input-histories-hierarchies}}
\label{proof:proposition:spaces-input-histories-hierarchies}
\begin{proof}
This is all rather straightforward, going through the inclusion order of spaces of extended input histories. Recall that $\Prime{\Ext{\Theta}} = \Theta$ for all $\vee$-prime $\Theta$ and that $\Ext{\Prime{W}} = W$ for all $\vee$-closed $W$.
\begin{itemize}
    \item For the join.
    We have that $\Ext{\Theta}, \Ext{\Theta'} \supseteq \Ext{\Theta} \cap \Ext{\Theta'}$ and $\Ext{\Theta} \cap \Ext{\Theta'}$ is $\vee$-closed, so that $\Theta, \Theta' \leq \Prime{\Ext{\Theta} \cap \Ext{\Theta'}}$.
    Furthermore, if $\Theta, \Theta' \leq \Theta''$, then $\Ext{\Theta}, \Ext{\Theta'} \supseteq \Ext{\Theta''}$, so that $\Ext{\Theta} \cap \Ext{\Theta'} \supseteq \Ext{\Theta''}$ and hence $\Prime{\Ext{\Theta} \cap \Ext{\Theta'}} \leq \Theta''$.
    
    \item For the meet.
    We have that $\Ext{\Theta} \cup \Ext{\Theta'} \supseteq \Ext{\Theta}, \Ext{\Theta'}$ and the $\vee$-closure of $\Ext{\Theta} \cup \Ext{\Theta'}$ is (by definition) its smallest $\vee$-closed superset, so that $\Prime{\text{$\vee$-closure of }\Ext{\Theta} \cup \Ext{\Theta'}} \leq \Theta, \Theta'$.
    Furthermore, if $\Theta'' \leq \Theta, \Theta'$, then $\Ext{\Theta''} \supseteq \Ext{\Theta}, \Ext{\Theta'}$ and $\Ext{\Theta''}$ is $\vee$-closed, so that $\Ext{\Theta''} \supseteq \text{$\vee$-closure of }\Ext{\Theta} \cup \Ext{\Theta'}$ and hence $\Theta'' \leq \Prime{\text{$\vee$-closure of }\Ext{\Theta} \cup \Ext{\Theta'}}$. This proves that $\Theta \wedge \Theta' = \Prime{\text{$\vee$-closure of }\Ext{\Theta} \cup \Ext{\Theta'}}$. The elements added by $\vee$-closure are all not $\vee$-prime by definition, so we also have $\Prime{\text{$\vee$-closure of }\Ext{\Theta} \cup \Ext{\Theta'}} = \Prime{\Ext{\Theta} \cup \Ext{\Theta'}}$, proving our claim.

    \item For the minimum of $\Spaces{\underline{I}}$.
    We have that $\ExtHist{\discrete{E},\underline{I}} = \PFun{\underline{I}}$, so that $\ExtHist{\discrete{E},\underline{I}} \supseteq \Ext{\Theta}$ for all $\Theta$ and hence $\Hist{\discrete{E},\underline{I}} \leq \Theta$.

    \item For $\Spaces{\underline{I}}$ as a full sub-lattice.
    Enough to observe that for a generic $\Theta \in \AllSpaces$ the condition $\Hist{\discrete{E},\underline{I}} \leq \Theta$ implies $\Theta \subseteq \PFun{\underline{I}}$.
    
    \item For the maximum of $\SpacesFC{\underline{I}}$.
    We have that $\Hist{\indiscrete{E}, \underline{I}} = \ExtHist{\indiscrete{E}, \underline{I}} = \prod_{\omega \in E}I_\omega$, so that $\Ext{\Theta} \supseteq \ExtHist{\indiscrete{E},\underline{I}}$ for all $\Theta$ (because of the free-choice condition) and hence $\Theta \leq \Hist{\discrete{E},\underline{I}}$.
\end{itemize}
\end{proof}

\subsubsection{Proof of Proposition \ref{proposition:order-induced-spaces-closed-under-join}}
\label{proof:proposition:order-induced-spaces-closed-under-join}
\begin{proof}
Recall that the join is defined as:
\[ 
    \Hist{\Omega,\underline{I}}\vee\Hist{\Omega',\underline{I}}
    :=
    \Prime{\ExtHist{\Omega,\underline{I}}\cap\ExtHist{\Omega',\underline{I}}}
\]
Because the same inputs are used in both spaces, the extended input histories in the two spaces are entirely determined by the lowersets $\Lsets{\Omega}$ and $\Lsets{\Omega'}$: the common histories are those on common lowersets, i.e. those determined by the intersection $\Lsets{\Omega}\cap\Lsets{\Omega'}$.
However, we know from the previous Section that $\Lsets{\Omega}\cap\Lsets{\Omega'} = \Lsets{\Omega \vee \Omega'}$, from which we conclude that:
\[
    \ExtHist{\Omega,\underline{I}}\cap\ExtHist{\Omega',\underline{I}}
    =
    \ExtHist{\Omega\vee\Omega',\underline{I}}
\]
By taking the prime elements, we obtain our desired statement.
\end{proof}

\subsubsection{Proof of Proposition \ref{proposition:order-induced-spaces-not-closed-under-meet}}
\label{proof:proposition:order-induced-spaces-not-closed-under-meet}
\begin{proof}
Consider $\Omega = \total{\ev{A},\ev{B}}\vee\discrete{\ev{C}}$ and $\Omega' = \discrete{\ev{A}}\vee\total{\ev{C},\ev{B}}$:
\begin{center}
    \begin{tabular}{cc}
    \includegraphics[height=3cm]{svg-inkscape/space-ABC-unique-33-0-highlighted_svg-tex.pdf}
    \hspace{10mm}
    &
    \hspace{10mm}
    \includegraphics[height=3cm]{svg-inkscape/space-ABC-unique-33-3-highlighted_svg-tex.pdf}
    \\
    $\Hist{\Omega,\{0,1\}}$
    \hspace{10mm}
    &
    \hspace{10mm}
    $\Hist{\Omega',\{0,1\}}$
    \end{tabular}
\end{center}
The meet of the causal orders is the discrete space $\Omega \wedge \Omega' = \discrete{\ev{A}, \ev{B}, \ev{C}}$. The meet of the spaces of input histories, on the other hand, is not induced by any causal order (with reference to Figure \ref{fig:hierarchy-spaces-ABC} (p.\pageref{fig:hierarchy-spaces-ABC}), the meet lies in equivalence class 3):
\begin{center}
    \begin{tabular}{cc}
    \includegraphics[height=3cm]{svg-inkscape/space-ABC-unique-33-0-meet-3-highlighted_svg-tex.pdf}
    \hspace{10mm}
    &
    \hspace{10mm}
    \includegraphics[height=3cm]{svg-inkscape/space-ABC-unique-tight-0-highlighted_svg-tex.pdf}
    \\
    $\Hist{\Omega,\{0,1\}}\wedge\Hist{\Omega',\{0,1\}}$
    \hspace{10mm}
    &
    \hspace{10mm}
    $\Hist{\Omega\wedge\Omega',\{0,1\}}$
    \end{tabular}
\end{center}
It is immediately evident that the two spaces are different.
\end{proof}

\subsubsection{Proof of Proposition \ref{proposition:parallel-sequential-composition-spaces}}
\label{proof:proposition:parallel-sequential-composition-spaces}
\begin{proof}
Regarding parallel composition, note that we can never have $h \leq h'$ or $h' \leq h$ for $h \in \Theta$ and $h' \in \Theta'$: as a consequence, all partial functions in $\Theta \cup \Theta'$ are $\vee$-prime, making it a well-defined space of input histories.
Regarding sequential composition, we start by observing that any non-trivial join $h \vee h'$ of two compatible $h, h' \in \Theta$ is neither in $\Theta$ nor in the form $k \vee h'$ for some $k \in \max{\Ext{\Theta}}$ and $h' \in \Theta'$.
If $h \in \Theta$, $k \in \max{\Ext{\Theta}}$ and $h' \in \Theta'$, then the only compatible joins $h \vee (k \vee h')$ are those with $h \leq k$, which are necessarily trivial.
Finally, consider a compatible join $(k\vee h) \vee (k' \vee h')$ for $k, k' \in \max{\Ext{\Theta}}$ and $h, h' \in \Theta'$: compatibility forces $k=k'$, so the join takes the form $k \vee (h \vee h')$, which cannot be in the form $k \vee h''$ for any $h'' \in \Theta'$ other than $h''=h$ or $h''=h'$.
Hence $\Theta \seqcomposeSym \Theta'$ is a well-defined space of input histories.
\end{proof}

\subsubsection{Proof of Proposition \ref{proposition:parallel-sequential-composition-free-choice}}
\label{proof:proposition:parallel-sequential-composition-free-choice}
\begin{proof}
The parallel composition $\Theta \cup \Theta'$ and sequential composition $\Theta \seqcomposeSym \Theta'$ have the same maximal extended input histories:
\[
\begin{array}{rcl}
    \max\Ext{\Theta \cup \Theta'}
    &=&
    \max\Ext{\Theta \seqcomposeSym \Theta'}
    \\
    &=&
    \max{\Ext{\Theta}}\allJoinsSym\max{\Ext{\Theta'}}
    \\
    &=&
    \suchthat{k \vee k'}{k \in \max\Ext{\Theta}, k' \in \max\Ext{\Theta'}}
\end{array}
\]
If $\Theta$ and $\Theta'$ satisfy the free-choice condition, then we have:
\[
    \max{\Ext{\Theta}}\allJoinsSym\max{\Ext{\Theta'}}
    =
    \left(
    \prod\limits_{\omega \in \Events{\Theta}} \Inputs{\Theta}_\omega
    \right)
    \times
    \left(
    \prod\limits_{\omega \in \Events{\Theta'}} \Inputs{\Theta'}_\omega
    \right)
\]
The previous two Observations immediately allow us to conclude.
\end{proof}

\subsubsection{Proof of Proposition \ref{proposition:parallel-sequential-composition-spaces-and-orders}}
\label{proof:proposition:parallel-sequential-composition-spaces-and-orders}
\begin{proof}
For parallel composition, we get:
\[
\fl\begin{array}{rcl}
\Hist{\Omega, \underline{I}}
\cup
\Hist{\Omega', \underline{I}'}
&=&
\bigcup\limits_{\xi \in \Omega}
\prod\limits_{\omega \in \downset{\xi}} I_\omega
\cup
\bigcup\limits_{\xi \in \Omega'}
\prod\limits_{\omega \in \downset{\xi}} I'_\omega
\\
&=&
\bigcup\limits_{\xi \in \Omega \vee \Omega'}
\prod\limits_{\omega \in \downset{\xi}} (\underline{I}\vee\underline{I}')_\omega
\\
&=&
\Hist{\Omega\vee\Omega', \underline{I}\vee\underline{I}'}
\end{array}
\]
For sequential composition, we get:
\[
\fl\begin{array}{rcl}
\Hist{\Omega, \underline{I}}
\seqcomposeSym
\Hist{\Omega', \underline{I}'}
&=&
\bigcup\limits_{\xi \in \Omega}
\prod\limits_{\omega \in \downset{\xi}} I_\omega
\cup\hspace{-2mm}
\bigcup\limits_{\lambda \in \max\Lsets{\Omega}}
\bigcup\limits_{\xi \in \Omega'}
\left(
\prod\limits_{\omega \in \lambda} I_\omega
\right)
\times
\left(
\prod\limits_{\omega' \in \downset{\xi}} I'_{\omega'}
\right)
\\
&=&
\bigcup\limits_{\xi \in \Omega \seqcomposeSym \Omega'}
\prod\limits_{\omega \in \downset{\xi}} (\underline{I}\vee\underline{I}')_\omega
\\
&=&
\Hist{\Omega\seqcomposeSym\Omega', \underline{I}\vee\underline{I}'}
\end{array}
\]
\end{proof}

\subsubsection{Proof of Proposition \ref{proposition:conditional-sequential-composition-spaces}}
\label{proof:proposition:conditional-sequential-composition-spaces}
\begin{proof}
The proof is a straightforward generalisation of that for sequential composition: every instance of $\Theta'$ in the proof for the latter appears in the context of some $k \in \max\Ext{\Theta}$, and it suffices to replace it with $\Theta'_k$.
\end{proof}

\subsubsection{Proof of Proposition \ref{proposition:conditional-sequential-composition-spaces-free-will}}
\label{proof:proposition:conditional-sequential-composition-spaces-free-will}
\begin{proof}
In one direction, presume that all three conditions hold:
\[
\begin{array}{rcl}
\max\Ext{\Theta} &=& \prod\limits_{\omega \in \Events{\Theta}} \Inputs{\Theta}_\omega\\
\max\Ext{\Theta'_k} &=& \prod\limits_{\omega \in E'} I'_\omega\\
\underline{\Inputs{\Theta'_k}} &=& \left(I'_\omega\right)_{\omega \in E'}
\end{array}
\]
where $\underline{I}' = \left(I'_\omega\right)_{\omega \in E'}$ is a fixed family of non-empty input sets.
The maximal extended input histories for the conditional sequential composition take the following form:
\[
\bigcup\limits_{k \in \max\Ext{\Theta}} \suchthat{k \vee k'}{k' \in \max\Ext{\Theta'_k}}
\]
We use the three condition to simplify the expression into a product:
\[
\begin{array}{rl}
&
\bigcup\limits_{k \in \max\Ext{\Theta}} \suchthat{k \vee k'}{k' \in \max\Ext{\Theta'_k}}
\\
=&
\bigcup\limits_{k \in \!\!\prod\limits_{\omega \in \Events{\Theta}}\!\!\!\Inputs{\Theta}_\omega}
\suchthat{k \vee k'}{k' \in \!\prod\limits_{\omega \in E'}\!\! I'_\omega}
\\
=&
\left(\prod\limits_{\omega \in \Events{\Theta}}\!\!\!\Inputs{\Theta}_\omega\right)
\times
\left(\prod\limits_{\omega \in E'}\!\! I'_\omega\right)
\end{array}
\]
Hence the conditional sequential composition satisfies the free-choice condition.
To prove the other direction, we will show that violating any one of the three conditions above results in the conditional sequential composition violating the free-choice condition, i.e. that there is some partial function $\hat{k}$ in the set $P$ below such that $\hat{k} \notin \max\Ext{\Theta \seqcomposeSym \underline{\Theta'}}$:
\[
P
:=
\left(\prod\limits_{\omega \in \Events{\Theta}}\!\!\!\Inputs{\Theta}_\omega\right)
\times
\left(\prod\limits_{\omega \in E'}\!\! I'_\omega\right)
\]
If the space $\Theta$ violates the free-choice condition, then there is some $k \in \!\!\prod_{\omega \in \Events{\Theta}}\!\!\!\Inputs{\Theta}_\omega$ such that $k \notin \max\Ext{\Theta}$: this implies that $\hat{k} \notin \max\Ext{\Theta \seqcomposeSym \underline{\Theta'}}$ for all $\hat{k} \in P$ with $k \leq \hat{K}$, proving that $\Theta \seqcomposeSym \underline{\Theta'}$ violates the free-choice condition.
If the space $\Theta'_k$ violates the free-choice condition, then there is some $k' \in \!\prod_{\omega \in E'}\!\!I'_\omega$ such that $\hat{k} := k\vee k' \notin \max\Ext{\Theta \seqcomposeSym \underline{\Theta'}}$, proving that $\Theta \seqcomposeSym \underline{\Theta'}$ violates the free-choice condition.
If there are $k_1, k_2 \in \max\Ext{\Theta}$ such that $\omega \in \Events{\Theta'_{k_1}}$ and $\omega \notin \Events{\Theta'_{k_1}}$, then $\hat{k} \notin \max\Ext{\Theta \seqcomposeSym \underline{\Theta'}}$ for all $\hat{k} \in P$ with $k_2 \leq \hat{K}$ and $\omega \in \dom{\hat{k}}$, proving that $\Theta \seqcomposeSym \underline{\Theta'}$ violates the free-choice condition.
If there are $k_1, k_2 \in \max\Ext{\Theta}$ such that $i \in \Inputs{\Theta'_{k_1}}_\omega$ and $i \notin \Inputs{\Theta'_{k_2}}_\omega$, then $\hat{k} \notin \max\Ext{\Theta \seqcomposeSym \underline{\Theta'}}$ for all $\hat{k} \in P$ with $k_2 \leq \hat{K}$ and $\hat{k}_{\omega} = i$, proving that $\Theta \seqcomposeSym \underline{\Theta'}$ violates the free-choice condition.
\end{proof}

\subsubsection{Proof of Proposition \ref{proposition:input-history-at-least-one-tip}}
\label{proof:proposition:input-history-at-least-one-tip}
\begin{proof}
Let $h \in \Theta$ be an input history. If we had $\tips{\Theta}{h} = \emptyset$, then we'd have $\bigcup_{k < h}\dom{k} = \dom{h}$, and hence $h = \bigvee_{k < h} k$, which would contradict $\vee$-primality of $h$.
Now let $h \in \Ext{\Theta}$ be an extended input history such that $h \not\in \Theta$. Then $h = \bigvee_{k < h} k$ implies that $\tips{\Theta}{h}=\dom{h}\backslash\bigcup_{k < h}\dom{k} = \emptyset$.
\end{proof}

\subsubsection{Proof of Proposition \ref{proposition:order-induced-space-causal-completeness}}
\label{proof:proposition:order-induced-space-causal-completeness}
\begin{proof}
For all $\omega \in \Omega$ and all $h \in \Theta$ with $\dom{h} = \downset{\omega}$, we must have:
\[
\tips{\Theta}{h}
\;:=\;
\dom{h}\backslash\bigcup_{k < h} \dom{k}
\;=\;
\downset{\omega}\backslash\bigcup_{\omega' \prec \omega} \downset{\omega'}
\;=\;
\causeqcls{\omega}
\]
Hence, $\Theta$ is causally complete iff all tips $\tips{\Theta}{h}$ have size 1, iff all causal equivalence classes $\causeqcls{\omega}$ have size 1, iff $\Omega$ is causally definite.
\end{proof}

\subsubsection{Proof of Proposition \ref{proposition:parallel-sequential-composition-causally-complete}}
\label{proof:proposition:parallel-sequential-composition-causally-complete}
\begin{proof}
We already know that if $\Theta$ and $\Theta'$ both satisfy the free-choice condition, so does their parallel and sequential composition.
First consider $h \in \Theta$.
In both parallel and sequential composition, the input histories $h'$ which satisfy $h' \leq h$ are exactly the $h' \in \Theta$, so that:
\[
\tips{\Theta \cup \Theta'}{h} = \tips{\Theta}{h} = \tips{\Theta \seqcomposeSym \Theta'}{h}
\]
An analogous reasoning applies to $h \in \Theta'$ in the case of parallel composition, which is symmetric in $\Theta$ and $\Theta'$.
Now consider $h = k \vee h'$ in the case of sequential composition, where $k \in \max\Ext{\Theta}$ and $h' \in \Theta'$: it is enough to show that $\tips{\Theta \seqcomposeSym \Theta'}{h} \subseteq \tips{\Theta'}{h}$, because the LHS is guaranteed to have at least one element ($h$ is an input history) and the RHS is guaranteed to have exactly one element ($\Theta'$ is causally complete).
For all $h'' \in \Theta'$, we have $k \vee h'' \in \Theta \seqcomposeSym \Theta'$ and:
\[
\begin{array}{rcl}
\dom{h}\backslash \dom{k \vee h''}
&=&
\dom{k \vee h'} \backslash \dom{k \vee h''}
\\
&=&
\dom{h'} \backslash \dom{h''}
\end{array}
\]
This shows that $\tips{\Theta \seqcomposeSym \Theta'}{h} \subseteq \tips{\Theta'}{h}$, completing our proof.
\end{proof}

\subsubsection{Proof of Proposition \ref{proposition:conditional-sequential-composition-causally-complete}}
\label{proof:proposition:conditional-sequential-composition-causally-complete}
\begin{proof}
We already know that under the assumptions above the conditional sequential composition $\Theta \seqcomposeSym \underline{\Theta'}$ satisfies the free-choice condition.
The remainder of the proof proceeds like that of the previous Proposition in the sequential composition case, where $\Theta'$ is replaced by $\Theta'_k$.
\end{proof}

\subsubsection{Proof of Proposition \ref{proposition:causal-completion-events-inputs}}
\label{proof:proposition:causal-completion-events-inputs}
\begin{proof}
Consider the following subset:
\[
S := \suchthat{
    k \in \Ext{\hat{\Theta}}
}{
    k \in \PFun{\underline{\Inputs{\Theta}}}
} \supseteq \Ext{\Theta}
\]
The set $S$ is $\vee$-closed, hence $\Prime{S} \leq \Theta$.
The set $S$ is also closed downward, so that the tips of a history $h$ in $\Prime{S}$ coincide with the tips of $h$ in $\hat{\Theta}$: this implies that $\Prime{S}$ is causally complete.
Because $S \subseteq \Ext{\hat{\Theta}}$, we have $\Prime{S} \geq \hat{\Theta}$: by maximality of $\hat{\Theta}$ we conclude $\hat{\Theta} = S$, and hence $\Events{\hat{\Theta}} = \Events{\Theta}$ and $\underline{\Inputs{\hat{\Theta}}} = \underline{\Inputs{\Theta}}$.
\end{proof}

\subsubsection{Proof of Theorem \ref{theorem:parallel-composition-causal-completions}}
\label{proof:theorem:parallel-composition-causal-completions}
\begin{proof}
    By Proposition \ref{proposition:parallel-sequential-composition-causally-complete},
    $\hat{\Theta} \cup \hat{\Theta}' \leq \Theta \cup \Theta'$ for all $\hat{\Theta} \in \CausCompl{\Theta}$ and all $\hat{\Theta}' \in \CausCompl{\Theta'}$.
    Let $\Theta'' \leq \Theta \cup \Theta'$ be a maximal causally complete subspace of $\Theta \cup \Theta'$.
    Consider the following subsets:
    \[
    \begin{array}{rcl}
        T &:=& \suchthat{
            k \in \Ext{\Theta''}
        }{
            \dom{k} \subseteq \Events{\Theta}
        } \supseteq \Ext{\Theta}
        \\
        T' &:=& \suchthat{
            k' \in \Ext{\Theta''}
        }{
            \dom{k'} \subseteq \Events{\Theta'}
        } \supseteq \Ext{\Theta'}
    \end{array}
    \]
    The sets $T$ and $T'$ are both $\vee$-closed, whence we have $\Prime{T} \leq \Theta$ and $\Prime{T'} \leq \Theta'$.
    Because $T$ and $T$ are also closed downwards, the tips of a history $h$ in $\Prime{T}$ or in $\Prime{T'}$ coincide with its tips in $\Theta''$: hence, $\Prime{T}$ and $\Prime{T'}$ are both causally complete.
    Now take $\hat{\Theta} \in \CausCompl{\Theta}$ and $\hat{\Theta}' \in \CausCompl{\Theta'}$ such that $\Prime{T} \leq \hat{\Theta}$ and $\Prime{T'} \leq \hat{\Theta}'$:
    \[
        \begin{array}{rcl}
        \Ext{\hat{\Theta} \cup \hat{\Theta}'}
        &=&
        \text{$\vee$-closure of } \Ext{\hat{\Theta}} \cup \Ext{\hat{\Theta}'}
        \\
        &\subseteq&
        \text{$\vee$-closure of } T \cup T'
        \\
        &\subseteq&
        \Ext{\Theta''}
        \end{array}
    \]
    Hence we have $\Theta'' \leq \hat{\Theta} \cup \hat{\Theta}'$, which by maximality of $\Theta''$ implies $\Theta'' = \hat{\Theta} \cup \hat{\Theta}'$.
\end{proof}

\subsubsection{Proof of Theorem \ref{theorem:conditional-sequential-composition-causal-completions}}
\label{proof:theorem:conditional-sequential-composition-causal-completions}
\begin{proof}
    By Proposition \ref{proposition:conditional-sequential-composition-causally-complete},
    $\hat{\Theta} \seqcomposeSym \underline{\hat{\Theta}'} \leq \Theta \seqcomposeSym \underline{\Theta'}$ for all $\hat{\Theta} \in \CausCompl{\Theta}$ and all $\hat{\Theta}'_k \in \CausCompl{\Theta'_k}$.
    Let $\Theta'' \leq \Theta \seqcomposeSym \underline{\Theta'}$ be a maximal causally complete subspace of $\Theta \Theta \seqcomposeSym \underline{\Theta'}$.
    Consider the following subset:
    \[
        T := \suchthat{
            k \in \Ext{\Theta''}
        }{
            \dom{k} \subseteq \Events{\Theta}
        } \supseteq \Ext{\Theta}
    \]
    The set $T$ is $\vee$-closed, whence we have $\Prime{T} \leq \Theta$.
    Because $T$ is also closed downwards, the tips of a history $h$ in $\Prime{T}$ coincide with its tips in $\Theta'$: hence $\Prime{T}$ is causally complete.
    Now take $\hat{\Theta} \in \CausCompl{\Theta}$ such that $\Prime{T} \leq \hat{\Theta}$: we have $\Ext{\Theta''} \supseteq \Ext{\hat{\Theta} \seqcomposeSym T'}$, and hence $\Theta'' \leq \hat{\Theta} \seqcomposeSym T'$.
    By maximality of $\Theta''$, we must have $\Theta'' = \hat{\Theta} \seqcomposeSym T'$, implying $T = \Ext{\hat{\Theta}}$.

    For all $k \in \max\Ext{\Theta}$, consider the following subsets:
    \[
        T'_k := \suchthat{
            k'\;\;
        }{
            \begin{array}{l}
            \dom{k'} \subseteq \Events{\Theta'},\\
            k \vee k' \in \Ext{\Theta''}
            \end{array}
        } \supseteq \Ext{\Theta'_k}
    \]
    The sets $T'_k$ are all $\vee$-closed, whence we have $\Prime{T'_k} \leq \Theta'_k$ for all $k \max\Ext{\Theta}$.
    For every $k' \in T'_K$ and every $k'' \in \Theta''$ such that $k'' \leq k \vee k'$, we have that $k \vee k'' \in \Theta''$ and $k \vee k'' \leq k \vee k'$: the tips of $k \vee k'$ in $\Theta''$ must therefore be the same as the tips of $k'$ in $\Prime{T'_k}$, from which we conclude that $\Prime{T'_k}$ is causally complete.
    For each $k$, take $\hat{\Theta}'_k \in \CausCompl{\Theta'_k}$ such that $\Prime{T'_k} \leq \hat{\Theta}'_k$: by the same reasoning as before, we have $\Theta'' \leq \hat{\Theta} \seqcomposeSym \underline{\hat{\Theta}'}$.
    By maximality of $\Theta''$, we must have $\Theta'' = \hat{\Theta} \seqcomposeSym \underline{\hat{\Theta}'}$, completing our proof.
\end{proof}

\subsubsection{Proof of Proposition \ref{proposition:hierarchy-causally-complete-spaces}}
\label{proof:proposition:hierarchy-causally-complete-spaces}
\begin{proof}
Regarding closure under meet, consider two causally complete spaces $\Theta, \Theta' \in \SpacesFC{\underline{I}}$.
By Proposition \ref{proposition:spaces-input-histories-hierarchies}, the meet $\Theta \wedge \Theta'$ is obtained by taking the $\vee$-prime elements in $\Ext{\Theta} \cup \Ext{\Theta'}$: this means that every input history $h \in \Theta \wedge \Theta'$ (i.e. a $\vee$-prime element in $\Ext{\Theta} \cup \Ext{\Theta'}$) is either an input history in $h \in \Theta$ (i.e. a $\vee$-prime element in $\Ext{\Theta}$) or an input history $h \in \Theta'$ (i.e. a $\vee$-prime element in $\Ext{\Theta'}$). Without loss of generality, assume $h \in \Theta$. We have:
\[
\begin{array}{rcl}
\tips{\Theta\wedge\Theta'}{h}
&=&
\dom{h}\backslash\bigcup_{k \in \Theta\wedge\Theta'\text{ s.t. }k < h} \dom{k}\\
&\subseteq&
\dom{h}\backslash\bigcup_{k \in \Theta\text{ s.t. }k < h} \dom{k}
\;=
\tips{\Theta}{h}
\end{array}
\]
Because $\Theta$ is causally complete, $\tips{\Theta}{h}$ is a singleton, which forces $\tips{\Theta\wedge\Theta'}{h}$ to also be a singleton (by Proposition \ref{proposition:input-history-at-least-one-tip}, $h \in \Theta\wedge\Theta'$ has at least one tip event in $\Theta\wedge\Theta'$).

Regarding the lack of closure under join, Proposition \ref{proposition:order-induced-spaces-closed-under-join} shows that $\Hist{\Omega,\underline{I}}\vee\Hist{\Omega',\underline{I}}=\Hist{\Omega\vee\Omega',\underline{I}}$.
When there are two or more events, we can consider $\Omega = \total{..., \ev{A}, \ev{B}, ...}$ and $\Omega' = \total{..., \ev{A}, \ev{B}, ...}$, so that $\Omega \vee \Omega'$ is an indefinite causal order (with $\ev{A}$ and $\ev{B}$ falling into the same causal equivalence class).
Proposition \ref{proposition:order-induced-space-causal-completeness} proves that $\Hist{\Omega\vee\Omega',\underline{I}}$ is not causally complete, allowing us to conclude that causally complete spaces on two or more events are not closed under join.
\end{proof}

\subsubsection{Proof of Proposition \ref{proposition:parallel-sequential-composition-tight}}
\label{proof:proposition:parallel-sequential-composition-tight}
\begin{proof}
For parallel composition, let $k \in \Ext{\Theta \cup \Theta'}$ and $\omega \in \dom{k}$.
Because events are disjoint, $\omega \in \Events{\Theta}$ and $\omega \in \tips{\Theta \cup \Theta'}{h}$ implies $h \in \Theta$.
Similarly $\omega \in \Events{\Theta'}$ and $\omega \in \tips{\Theta \cup \Theta'}{h}$ implies $h \in \Theta'$.
Hence $\Theta \cup \Theta'$ is tight.
For parallel composition, let $k \in \Ext{\Theta \seqcomposeSym \Theta'}$ and $\omega \in \dom{k}$.
Because events are disjoint, $\omega \in \Events{\Theta}$ and $\omega \in \tips{\Theta \cup \Theta'}{h}$ again implies $h \in \Theta$.
Similarly $\omega \in \Events{\Theta'}$ and $\omega \in \tips{\Theta \cup \Theta'}{h}$ implies $h=k' \vee h'$ with $h' \in \Theta'$ and $k' \in \max\Ext{\Theta}$, with $\omega \in \tips{\Theta'}{h'}$.
Hence $\Theta \seqcomposeSym \Theta'$ is also tight.
\end{proof}

\subsubsection{Proof of Proposition \ref{proposition:conditional-sequential-composition-tight}}
\label{proof:proposition:conditional-sequential-composition-tight}
\begin{proof}
let $k \in \Ext{\Theta \seqcomposeSym \underline{\Theta'}}$ and $\omega \in \dom{k}$.
Because events between $\Theta$ and the $\Theta'_{k'}$ are disjoint, $\omega \in \Events{\Theta}$ and $\omega \in \tips{\Theta \seqcomposeSym \underline{\Theta'}}{h}$ implies $h \in \Theta$.
Now let $\omega \in \cup_{k' \in \max\Ext{\Theta}}\Events{\Theta'}$ and $\omega \in \tips{\Theta \seqcomposeSym \underline{\Theta'}}{h}$.
Necessarily, $k = k' \vee k''$ for some $k' \in \max\Ext{\Theta}$ and some $k'' \in \Ext{\Theta'_{k'}}$, which implies that $h = k' \vee h'$ for the unique $h' \in \Theta'_{k'}$ such that $h' \leq k''$ and $\omega \in \tips{\Theta'_{k'}}{h'}$.
Hence $\Theta \seqcomposeSym \underline{\Theta'}$ is tight.
\end{proof}

\subsubsection{Proof of Proposition \ref{proposition:hist-space-tight}}
\label{proof:proposition:hist-space-tight}
\begin{proof}
Let $k \in \ExtHist{\Omega, \underline{I}}$ and $\omega \in \dom{k}$.
The input history $h := \restrict{k}{\downset{\omega}} \in \Theta$ is the unique input history $h \leq k$ with $\omega \in \tips{\Theta}{h}$.
As a consequence, $\Theta$ is tight.
\end{proof}

\subsubsection{Proof of Theorem \ref{theorem:non-tight-order-induced}}
\label{proof:theorem:non-tight-order-induced}
\begin{proof}
For every input history $h \leq k$ in the meet $\Theta \wedge \Theta'$ we must have that $h \in \Theta$ or $h \in \Theta'$, because the extended input histories in $\Ext{\Theta \wedge \Theta'}$ arise the the compatible joins of input histories in the set $\Theta \cup \Theta'$; for the same reason, we must also have that $\tips{\Theta \wedge \Theta'}{h} \subseteq \tips{\Theta}{h}$ and $\tips{\Theta \wedge \Theta'}{h} \subseteq \tips{\Theta'}{h}$.
Let $k \in \prod_{\omega \in E}I_\omega$, which is a maximal extended input history for $\Theta$, $\Theta'$ and $\Theta \wedge \Theta'$.

In one direction, assume that $\omega \in \tips{\Theta \wedge \Theta'}{h}$ and $\omega \in \tips{\Theta \wedge \Theta'}{h}$ for two distinct input histories $h,h' \leq k$: then $h$ and $h'$ cannot be both in $\Theta$ or both in $\Theta'$, because the two spaces are tight, and without loss of generality we can assume that $h \in \Theta$ and $h' \in \Theta'$.
Since $h = \restrict{k}{\dom{h}}$ and $h' = \restrict{k}{\dom{h'}}$, we must have $\dom{h} \neq \dom{h'}$; since $h$ and $h'$ both have $\omega$ as a tip event, we must furthermore have $\dom{h}\not \subseteq \dom{h'}$ and $\dom{h'} \not\subseteq \dom{h}$.
Because $\dom{h} = \downset{\omega}_{\Omega}$ and $\dom{h'} = \downset{\omega}_{\Omega'}$, we conclude that $\downset{\omega}_{\Omega} \not \subseteq \downset{\omega}_{\Omega'}$ and $\downset{\omega}_{\Omega} \not \subseteq \downset{\omega}_{\Omega'}$.

In the other direction, assume that $\downset{\omega}_{\Omega} \not \subseteq \downset{\omega}_{\Omega'}$ and $\downset{\omega}_{\Omega} \not \subseteq \downset{\omega}_{\Omega'}$ for some $\omega \in E$.
Let $h$ be any input history $h \in \Theta \wedge \Theta'$ such that $h \leq \restrict{k}{\downset{\omega}_{\Omega}}$ and $\omega \in \tips{\Theta \wedge \Theta'}{h}$: one must exist, because $\omega \in \dom{\restrict{k}{\downset{\omega}_{\Omega}}}$; analogously let $h'$ be any input history $h' \in \Theta \wedge \Theta'$ such that $h' \leq \restrict{k}{\downset{\omega}_{\Omega'}}$ and $\omega \in \tips{\Theta \wedge \Theta'}{h'}$.
If it were the case that $h \in \Theta'$, then $\omega \in \tips{\Theta \wedge \Theta'}{h} \subseteq \tips{\Theta'}{h}$ would imply that $\dom{h} = \downset{\omega}_{\Omega'}$: this would contradict the definition of $h \leq \restrict{k}{\downset{\omega}_{\Omega}}$, and hence we must have $h \in \Theta$; analogously, we must have $h' \in \Theta'$.
We conclude that there exist distinct $h, h' \leq k$ such that $\omega \in \tips{\Theta \wedge \Theta'}{h}$ and $\omega \in \tips{\Theta \wedge \Theta'}{h'}$, making $\Theta \wedge \Theta'$ non-tight.
\end{proof}

\subsubsection{Proof of Theorem \ref{theorem:causal-completeness-characterisation}}
\label{proof:theorem:causal-completeness-characterisation}
\begin{proof}
In one direction, assume that for every extended input history $k \in \Ext{\Theta}$ with $|\dom{k}| \geq 2$ there exists an $\omega_{k} \in \dom{k}$ such that $\restrict{k}{\dom{k}\backslash\{\omega_{k}\}} \in \Ext{\Theta}$: this implies $\tips{\Theta}{k} = \{\omega_k\}$ for all input histories $k \in \Theta$ with $|\dom{k}| \geq 2$, making $\Theta$ causally complete (because $\tips{\Theta}{k} = \dom{k}$ always holds when $|\dom{k}| = 1$).

In the other direction, assume that $\Theta$ is causally complete, and let $k \in \Ext{\Theta}$ be any extended input history with $|\dom{k}| \geq 2$.
For every $\omega \in \dom{k}$, we define:
\[
\begin{array}{rcl}
H_{k,\omega}
&:=&
\suchthat{h \in \Ext{\Theta}}{h \leq k, \omega \in \dom{h}}
\\
\bar{H}_{k,\omega}
&:=&
\suchthat{h \in \Ext{\Theta}}{h \leq k, \omega \notin \dom{h}}
\end{array}
\]
By causal completeness, there exists an $\omega_k$ such that $\bar{H}_{k, \omega_k} \neq \emptyset$: otherwise, $k$ is a minimal input history with $|\dom{k}| \geq 2$, contradicting Observation \ref{observation:minimal-input-history-tips}.
For every $\omega' \in \dom{k}\backslash\omega_k$, causal completeness also implies that there exists an $h \in \bar{H}_{k, \omega_k}$ with $\omega' \in \dom{h}$: if this were not the case, then any $h \in H_{k, \omega'}$ of minimal domain size would be a minimal input history with $\{\omega_k, \omega'\} \subseteq \dom{h} = \tips{\Theta}{h}$, contradicting Observation \ref{observation:minimal-input-history-tips}.
As a consequence, $\restrict{k}{\dom{k}\backslash\{\omega_k\}} = \bigvee \bar{H}_{k,\omega_k} \in \Ext{\Theta}$.
\end{proof}

\subsubsection{Proof of Theorem \ref{theorem:hist-equals-exthist}}
\label{proof:theorem:hist-equals-exthist}
\begin{proof}
For convenience, we set $E := \Events{\Theta}$ and $\underline{I} := \underline{\Inputs{\Theta}}$.
Our proof is by induction on the number of events in $E$, and the base case $E = \emptyset$ is trivial.
By assumption, all extended input histories $k \in \Ext{\Theta}$ are input histories $k \in \Theta$.

If $h \in \Theta$ is a non-minimal input history, then we can take the join $h'$ of the input histories strictly below it:
\[
h' := \bigvee \left(\downset{h} \backslash \{h\}\right) \in \Theta
\] 
Ordinarily, we would only be guaranteed that $h'$ is an extended input history, but here it is necessarily also an input history: by $\vee$-primality, it must then be the case that $h' = \restrict{h}{\dom{h}\backslash\tips{\Theta}{h}}$.
Hence, every non-minimal input history $h \in \Theta$ has a unique predecessor in the Hasse diagram for $\Theta$.

If $h, h' \in \Theta$ are two distinct minimal input histories, then $h$ and $h'$ cannot be compatible: otherwise, $h\vee h' \in \Ext{\Theta}=\Theta$ would have two distinct predecessors.
Since all input histories have a single tip event, minimal histories have a single event in their domain: if all distinct minimal histories are to be incompatible, then they must all have the same event in their domain.
As a consequence, minimal input histories take the form $\{\omega_1:i_1\}$ for a unique $\omega_1 \in \Events{\Theta}$ and all $i_1 \in \Inputs{\Theta}_{\omega_1}$.
Furthermore, every non-minimal input history $h \in \Theta$ can be written in the form $\{\omega_1:i_1\} \vee h'$ for a unique $i_1 \in \Inputs{\Theta}_{\omega_1}$ and some partial function $h' \neq \emptyset$ such that $\omega_1 \notin \dom{h'}$.
Hence, $\Theta$ arises as conditional sequential composition:
\[
\Theta = \Hist{\{\omega_1\}, \restrict{\Inputs{\Theta}}{\{\omega_1\}}}
\seqcomposeSym \underline{\Theta'}
\]
where we define the family $\underline{\Theta'} = \left(\Theta'_{i_1}\right)_{i_1 \in \Inputs{\Theta}_{\omega_1}}$ to consist of the following spaces (possibly empty):
\[
\Theta'_{i_1}
:= \suchthat{
    h'
}{
    \{\omega_1:i_1\} \vee h' \in \Theta, h' \neq \emptyset
}
\]
Finally, we have to show that $\Theta'_{i_1}$ our three desired properties.
By construction, $\omega_1 \notin \Events{\Theta'_{i_1}}$.
If $k \in \Ext{\Theta'_{i_1}}$, then we must have $k = \bigvee \mathcal{F}$ for some non-empty set $\mathcal{F} \subseteq \Theta'_{i_1}$ of input histories, so that:
\[
\{\omega_1:i_1\} \vee k
= \bigvee \suchthat{
    \{\omega_1:i_1\} \vee h'   
}{
    h' \in \mathcal{F}
} \in \Theta
\]
This means that $k \in \Theta'_{i_1}$. Hence, $\Theta'_{i_1} = \Ext{\Theta'_{i_1}}$.
Finally, we have that $h \leq h'$ in $\Theta'_{i_1}$ if and only if $\{\omega_1:i_1\} \vee h\leq\{\omega_1:i_1\} \vee h'$ in $\Theta$, so that necessarily:
\[
\tips{\Theta'_{i_1}}{h}
=\tips{\Theta}{\{\omega_1:i_1\} \vee h}
\]
This concludes our proof.
\end{proof}

\subsubsection{Proof of Corollary \ref{corollary:switch-spaces-characterisation}}
\label{proof:corollary:switch-spaces-characterisation}
\begin{proof}
Theorem \ref{theorem:hist-equals-exthist} provides the inductive conditional sequential composition structure: in order to match the definition of switch spaces, all that remains to be shown is that each $\Theta'_{i_1}$ satisfies the free-choice condition:
\[
\max\Ext{\Theta'_{i_1}} = \prod_{\omega \in \Events{\Theta} \backslash \{\omega_1\}} \Inputs{\Theta}_\omega 
\]
Observe that $h \leq h'$ in $\Theta'_{i_1}$ if and only if $\{\omega_1:i_1\} \vee h\leq\{\omega_1:i_1\} \vee h'$ in $\Theta$: hence $h \in \max\Ext{\Theta'_{i_1}} = \max\Theta'_{i_1}$ if and only if $\{\omega_1:i_1\} \vee h \in \max\Ext{\Theta} = \max\Theta$.
Because $\Theta$ satisfies the free-choice condition, we must have:
\[
\{\omega_1:i_1\} \vee h
\in \prod_{\omega \in \Events{\Theta}} \Inputs{\Theta}_\omega 
\]
By removing $\omega_1$ from the domain, we conclude.
\end{proof}

\subsubsection{Proof of Theorem \ref{theorem:causal-canopy}}
\label{proof:theorem:causal-canopy}
\begin{proof}
Let $\Theta \in \CCSpaces{\underline{I}}$ be a non-empty causally complete space.
The main body of this proof will show that there exists an event $\omega_1 \in \Events{\Theta}$ which can be made to ``come first'', i.e. one such that:
\begin{equation}
\label{equation:proof:theorem:causal-canopy-1}
\Theta \leq \Hist{\{\omega_1\}, \Inputs{\Theta}_{\omega_1}}
\seqcomposeSym \underline{\Theta'}    
\end{equation}
where $\Theta'_{\{\omega_1:i_1\}} \in \CCSpaces{\restrict{\underline{I}}{\Events{\Theta}\backslash\{\omega_1\}}}$ is a causally complete space for all $i_1 \in \Inputs{\Theta}_{\omega_1}$.
If $\Events{\Theta} = \{\omega_1\}$, then we are done.
Otherwise, by induction on the number of events we get that $\Theta'_{\{\omega_1:i_1\}} \leq \hat{\Theta}'_{\{\omega_1:i_1\}}$, where $\hat{\Theta}'_{\{\omega_1:i_1\}} \in \CSwitchSpaces{\restrict{\underline{I}}{\Events{\Theta}\backslash\{\omega_1\}}}$ is a causally switch space, for each $i_1 \in \Inputs{\Theta}_{\omega_1}$.
As a consequence, $\Theta$ is a sub-space of a causal switch space:
\[
\Theta \leq \Hist{\{\omega_1\}, \Inputs{\Theta}_{\omega_1}}
\seqcomposeSym \underline{\hat{\Theta}'}
\]
It remains to show is that Equation \ref{equation:proof:theorem:causal-canopy-1} holds for some event $\omega_1 \in \Events{\Theta}$, with all $\Theta'_{\{\omega_1:i_1\}}$ causally complete.

By contradiction, presume that for all $\omega \in \Events{\Theta}$ there exists an input $i_\omega \in \Inputs{\Theta}_{\omega}$ such that $\{\omega: i_\omega\} \notin \Theta$.
Because $\Theta$ satisfies the free-choice condition, the following partial function $k \in \prod_{\omega \in \Events{\Theta}} \Inputs{\Theta}_{\omega}$ must be a maximal extended input history for $\Theta$:
\[
k := \omega \mapsto i_\omega
\]
Now let $h \in \Theta$ be a minimal input history such that $h \leq k$.
Because $\Theta$ is causally complete, Observation \ref{observation:minimal-input-history-tips} implies that $\dom{h} = \{\xi\}$ for some $\xi \in \Events{\Theta}$: hence $\{\xi:i_\xi\} \in \Theta$, contradicting our hypothesis.
Hence, $\exists\omega_1 \in \Events{\Theta}$ such that $\forall i_1 \in \Inputs{\Theta}_{\omega_1}$ we have $\{\omega_1 : i_1\} \in \Theta$.

Let $h_1,...,h_n$ be a total order on the minimal input histories $h \in \Theta$ such that $\omega_1 \notin \dom{h}$ and let $h_{n+1},...,h_{n+m}$ be a total order on the input histories $\{\omega_1: i_1\}$, where $m := |\Inputs{\Theta}_{\omega_1}|$.
Starting at $\Theta_0 := \Theta$, and proceeding by induction in $j$, we create a sequence $(\Theta_j)_{j=0}^n$ of spaces satisfying the following properties:
\begin{enumerate}
    \item for all $j = 1,...,n$, we have that $\Theta_{j-1} \leq \Theta_{j}$.
    \item for all $j = 1,...,n$, we have that $h_{j'} \in \Theta_j$ if and only if $j' < j$.
    \item for all $j = 1,...,n$, we have that $\Theta_{j}$ is causally complete.
\end{enumerate}
At the end of the process, each one of the three properties (i), (ii) and (iii) above implies the corresponding property below: 
\begin{enumerate}
    \item $\Theta = \Theta_0 \leq \Theta_n$
    \item $\Theta_n = \Hist{\{\omega_1\}, \Inputs{\Theta}_{\omega_1}} \seqcomposeSym \underline{\Theta'}$
    \item $\Theta'_{\{\omega_1:i_1\}}$ is causally complete for all $i_1 \in \Inputs{\Theta}_{\omega_1}$.
\end{enumerate}
Given $\Theta_{j-1}$, we define $\Ext{\Theta_{j}}$ by removing $h_j$ and all input histories which don't contain event $\omega_1$ in their domain and don't contain any $h_i$ as a sub-history for $i > j$:
\[
\Ext{\Theta_{j}}
:=
\suchthat{
    k \in \Ext{\Theta_{j-1}}
}{
    \exists j'>j.\; h_{j'} \leq k 
}
\]
Properties (i) and (ii) hold by construction and inductive hypothesis, so it remains to prove that $\Theta_{j}$ is causally complete.

Let $k \in \Ext{\Theta_{j}}$ be an extended input history and let $j'_k > j$ be such that $h_{j'_k} \leq k$.
If $\downset{k}_\Ext{\Theta_{j}} = \downset{k}_\Ext{\Theta_{j-1}}$, then we have $\tips{\Theta_{j}}{k} = \tips{\Theta_{j-1}}{k}$ and hence $|\tips{\Theta_{j}}{k}| \leq 1$ because $\Theta_{j-1}$ is causally complete.
If  $k' \in \downset{k}_\Ext{\Theta_{j-1}}\backslash\downset{k}_\Ext{\Theta_{j}}$, it is possible for $k$ to have gained one or more tip events in the passage from $\Ext{\Theta_{j-1}}$ to $\Ext{\Theta_j}$: we must show that, when this is the case, $k \notin \Theta_{j-1}$ (i.e. $\tips{\Theta_{j-1}}{k} = \emptyset$) and $|\tips{\Theta_{j}}{k}| = 1$.
So, consider a $k$ which has gained tip events, pick one such event $\xi \in \tips{\Theta_{j}}{k}\backslash \tips{\Theta_{j-1}}{k}$ and let $k'_{\xi}$ be a sub-history of $k$ in $\Ext{\Theta_{j-1}}\backslash\Ext{\Theta_j}$ such that $\tip{\Theta_{j-1}}{k'_\xi} = \xi$.
Then we can consider the extended input history $h_{j'_k} \vee k'_{\xi} \in \Ext{\Theta_{j}}$, which satisfies:
\begin{itemize}
    \item $k \geq h_{j'_k} \vee k'_{\xi}$, because $k \geq h_{j'_k}$ and $k \geq k'_{\xi}$.
    \item $\tips{\Theta_j}{h_{j'_k} \vee k'_{\xi}} = \{\xi\}$, because the immediate predecessors of $h_{j'_k} \vee k'_{\xi}$ in $\Ext{\Theta_{j-1}}$ were $h_{j'_k}$ and $k'_{\xi}$, and hence the only predecessor of $h_{j'_k} \vee k'_{\xi}$ is $h_{j'_k}$.
\end{itemize}
Since $\xi \in \tips{\Theta_{j}}{k}$, we must have $k = h_{j'_k} \vee k'_{\xi}$, implying that $|\tips{\Theta_{j}}{k}| = |\{\xi\}| = 1$ as desired.
This completes our proof.
\end{proof}











\ack
Financial support from EPSRC, the Pirie-Reid Scholarship and Hashberg Ltd is gratefully acknowledged.
This publication was made possible through the support of the ID\#62312 grant from the John Templeton Foundation, as part of the project `The Quantum Information Structure of Spacetime' (QISS), https://www.templeton.org/grant/the-quantum-information-structure-ofspacetime-qiss-second-phase.
The opinions expressed in this project/publication are those of the author(s) and do not necessarily reflect the views of the John Templeton Foundation.

\section*{Bibliography}

\bibliographystyle{unsrt}
\bibliography{biblio}

\end{document}